\numberwithin{equation}{section}
\theoremstyle{plain}
\newtheorem{thm}{Theorem}[section]
\newtheorem{lem}[thm]{Lemma}
\newtheorem{prop}[thm]{Proposition}
\newtheorem{cor}[thm]{Corollary}
\newtheorem{rem}{Remark}[section]
\newtheorem{hyp}{Assumption}[section]
\newtheorem{ex}{Example}[section]
\numberwithin{equation}{section}
\definecolor{labelkey}{rgb}{0.6,0,1}
\def\disp{\displaystyle}
\newcommand{\dB}{\ensuremath{\mathbb{B}}}
\newcommand{\dF}{\ensuremath{\mathbb{F}}}
\newcommand{\I}{\ensuremath{\mathbb{I}}}
\newcommand{\dR}{\ensuremath{\mathbb{R}}}
\newcommand{\dT}{\ensuremath{\mathbb{T}}}
\newcommand{\dZ}{\ensuremath{\mathbb{Z}}}
\newcommand{\cA}{\ensuremath{\mathcal{A}}}
\newcommand{\cC}{\ensuremath{\mathcal{C}}}
\newcommand{\cF}{\ensuremath{\mathcal{F}}}
\newcommand{\cG}{\ensuremath{\mathcal{G}}}
\newcommand{\cN}{\ensuremath{\mathcal{N}}}
\newcommand{\cO}{\ensuremath{\mathcal{O}}}
\newcommand{\cP}{\ensuremath{\mathcal{P}}}
\newcommand{\cQ}{\ensuremath{\mathcal{Q}}}
\newcommand{\cR}{\ensuremath{\mathcal{R}}}
\def\i1{ [-\infty,\infty]}
\def\ve{\varepsilon}
\def\var{ {\rm var~}}
\def\sn{\sqrt{n}\; }
\def\sN{\sqrt{N}\; }
\def\unif{{\rm U}(0,1)}
\def\ba{\mathbf{a}}
\def\bb{\mathbf{b}}
\def\be{\mathbf{e}}
\def\bh{\mathbf{h}}
\def\bs{\mathbf{s}}
\def\bu{\mathbf{u}}
\def\bv{\mathbf{v}}
\def\bx{\mathbf{x}}
\def\by{\mathbf{y}}
\def\bz{\mathbf{z}}
\def\bG{\mathbf{G}}
\def\bU{\mathbf{U}}
\def\bV{\mathbf{V}}
\def\bX{\mathbf{X}}
\def\bY{\mathbf{Y}}
\def\bZ{\mathbf{Z}}
\def\balpha{{\boldsymbol\alpha}}
\def\bbbeta{{\boldsymbol\beta}}
\def\bbeta{{\boldsymbol\eta}}
\def\bgamma{{\boldsymbol\gamma}}
\def\bGamma{{\boldsymbol\Gamma}}
\def\bmu{{\boldsymbol\mu}}
\def\bphi{{\boldsymbol\phi}}
\def\btheta{{\boldsymbol\theta}}
\def\bzeta{{\boldsymbol\zeta}}
\def\bxi{{\boldsymbol\xi}}
\def\bTheta{{\boldsymbol\Theta}}
\def\bvarphi{{\boldsymbol\varphi}}
\def\setK{\{1,\ldots,K\}}
\newcommand{\BB}[1]{\textcolor{black}{#1}}
\begin{document}

\title{On factor copula-based mixed regression models}

\title{On factor copula-based mixed regression models}
\author{Pavel Krupskii}
\address{School of Mathematics and Statistics, University of Melbourne, Australia.}
\email{pavel.krupskiy@unimelb.edu.au}
\author{Bouchra R. Nasri}%
\address{Département de médecine sociale et préventive, École de santé publique, Université de Montréal, C.P. 6128, succursale Centre-ville
Montréal (Québec)  H3C 3J7 }
\email{bouchra.nasri@umontreal.ca}
\author{Bruno N. R\'emillard}
\address{Department of Decision Sciences,
HEC Montr\'eal, 3000 chemin de la C\^ote Sainte-Catherine,
Montr\'eal (Qu\'ebec), Canada H3T 2A7}\email{bruno.remillard@hec.ca}

\begin{abstract}
 In this article, a copula-based method for mixed regression models is proposed, where the conditional distribution of the response variable, given covariates, is modelled by a parametric family of continuous or discrete distributions, and a latent variable models the dependence between observations in each cluster. We demonstrate the estimation of copula and margin parameters, outlining the procedure for determining the asymptotic behaviour of the estimation errors. Numerical experiments are performed to assess the precision of the estimators for finite samples. An example of its application is given using COVID-19 vaccination hesitancy data from several countries. All developed methodologies are implemented in \href{https://cran.r-project.org/web/packages/CopulaGAMM/index.html}{CopulaGAMM},  available in CRAN.
\end{abstract}

\thanks{$^*$ The three authors contributed equally to all aspects of this paper. Partial funding in support of this work was provided by Mathematics for Public Health, the Natural Sciences and Engineering Research Council of Canada, the Fonds de recherche du Qu\'ebec -- Nature et technologies, and the Fonds de recherche du Qu\'ebec -- Sant\'e.}
%\subjclass{}%
\keywords{Copula; clustered data;  mixed linear regression; mixed additive regression; covariates; latent variables.}

\maketitle
%\tableofcontents

\section{Introduction}\label{sec:intro}

Clustered data naturally arises in various fields, such as actuarial science, hydrology, clinical medicine, and public health \citep{Goldstein:2011, Dias/Garcia/Schmidt:2013, Schmidt/deMoraes/Migon:2017, Velozo/Alves/Schmidt:2014,Leyland/Groenewegen:2020} where the observations, represented by a response variable and covariates, exhibit dependence inherited from a hierarchical structure. Statistical analyses are carried out with respect to a hierarchical structure of the data while considering the characteristics of the sampled units or individuals. For example, in public health, when analyzing the number of deaths due to cardiovascular diseases during heat waves, it becomes crucial to scrutinize the distribution of these deaths, taking into account factors such as geographical areas (e.g., province, region, city, neighborhood), medical resources (e.g., hospitals, accommodation centers), as well as epidemiological, climatic,  and sociological factors. The shared characteristics among data points (same hospital, same neighborhood, same city) inherently makes the data dependent (clustered) due to common underlying characteristics (factors); see Figure \ref{fig:figure1}.

A commonly used approach for analyzing clustered data is through a mixed linear regression, i.e., a linear regression with random coefficients \citep{Goldstein:2011}. In the simplest of models,  the regression equation has a random intercept common to all the observations in the same cluster, creating dependence among observations within each cluster. Suppose the clustered data comes from a single variable of interest $Y$ and its associated covariates $ \bX $. A simple 2-level mixed regression model can be expressed as
\begin{equation}\label{eq:mixed-reg1}
Y_{ki} = \bb^\top \bX_{ki}+\eta_k +\ve_{ki}, \qquad  i\in\{1,\ldots,n_k\}, \quad k \in \setK,
\end{equation}
where $Y_{ki}$ is the $i$-th observation of the  variable of interest from cluster $ k $, $ \bX_{k i} $ is the corresponding vector of covariates,
$ \eta_k $ is a latent (non-observable) random factor common to the cluster $ k $, and $ \ve_{k i} $ is an error term. Additionally, we assume that the error terms $ \ve_{k i} $ and the latent variables $ \eta_k $, $ i \in \{1, \ldots, n_k \} $, $ k \in \setK $, are independent. This implies that the correlation between two observations within the same cluster is $\frac{\sigma_\eta^2}{\sigma_\eta^2+\sigma_\epsilon^2}$, where $\sigma_\eta^2$ is the variance of $\eta_k$ and $\sigma_\epsilon^2$  is the variance of $\epsilon_{ki}$.
In cases where the latent variables are not independent, more complex models such as 3-level models may be considered. See, e.g., Figure \ref{fig:figure1} for an illustration of a 3-level model, where the latent variables in clusters 11 and 21 are dependent.

\begin{figure}
    \centering
    \includegraphics[height=1in,width=4.5in]{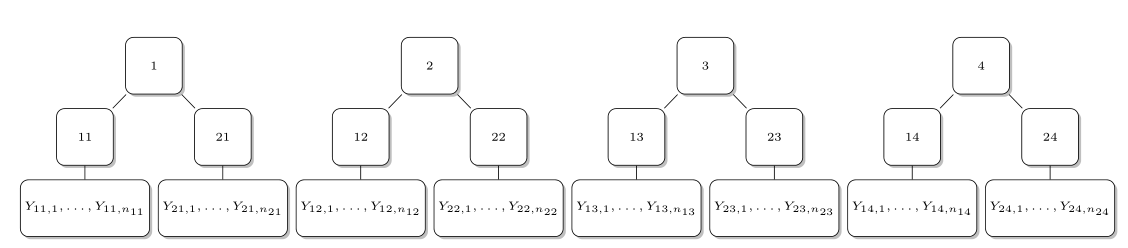}
    \caption{An example of clustered data: 3-level model}
    \label{fig:figure1}
\end{figure}

Other generalized models can also be used, such as generalized linear mixed models (GLMM) and generalized additive mixed models (GAMM). On the one hand, when the intercept is random, the model is overly restrictive, and the difference in estimations between the clusters is a random translation. On the other hand, when the intercept and the slope are random,  the model requires additional parameters to be estimated. A better trade-off that ensures at least as much complexity with fewer parameters' estimation can be achieved by copula models. It has been shown that copula-based approaches offer advantages and provide flexible and good alternatives to classical regression approaches for monotonic and non-monotonic dependence structures \citep{Song/Li/Yuan:2009, Noh/ElGhouch/Bouezmarni:2013, Remillard/Nasri/Bouezmarni:2017, Nasri/Remillard/Bouezmarni:2019, Nasri:2020}. Despite the relevance of these approaches, there has been very little work combining copulas and clustered data \citep{Shi/Feng/Boucher:2016, Zhuang/Diao/Yi:2020,Pereda-Fernandez:2021,Nikoloulopoulos:2017,Nikoloulopoulos:2022}. In \citet{Shi/Feng/Boucher:2016}, the authors used Gaussian copulas to model the dependence between all variables of interest for a given cluster; in their setting, the covariates are binary, and the copula is independent of the covariates. In contrast, in \citet{Zhuang/Diao/Yi:2020}, the authors proposed a Bayesian clustering approach for multivariate data without covariates, where the dependence between variables is modelled for each cluster by a copula with a random multivariate parameter. In \citet{Pereda-Fernandez:2021}, the author proposed a random intercept linear model for a binary variable of interest, where Archimedean copulas are used to capture the dependence between the random intercepts. Finally, in \citet{Nikoloulopoulos:2017, Nikoloulopoulos:2022}, the author considered a specific case of clustered data with a mixture of binomial distributions that can be applied for joint meta-analysis designs for multiple tests.

While each of these approaches is applied to a specific data structure, such as a particualar type of variables, dependence model, or data design, none of them comprehensively covers or generalizes GAMMs. The primary aim of this article is to present a general copula-based regression approach for clustered data. This approach encompasses GAMMs as specific cases  and  can account for any dependence structure. In addition, the variable of interest may be either continuous or discrete. Our proposed approach is described in Section \ref{sec:model}, along with the main assumptions concerning the joint distribution of the observations. In Section \ref{sec:est}, the estimation method is presented, including the asymptotic behaviour of the estimators, while the precision of the proposed method is assessed in Section \ref{ssec:simul} through finite sample simulations, together with comparisons with GAMMs in Section \ref{ssec:comp}.
%, numerical experiments are performed to compare the proposed methodology with existing ones.
Finally, in Section \ref{sec:ex}, we apply the proposed methodology to explain factors contributing to COVID-19 vaccination hesitancy across multiple countries.

%%%%%%%%%%%%%%%%%%%%%%%%%%%%%%%%%%%%%%%%%%
\section{Description of the model and its properties}\label{sec:model}
%%%%%%%%%%%%%%%%%%%%%%%%%%%%%%%%%%%%%%%%%%

In the subsequent sections, for each cluster $k\in \setK$, the observations are denoted by $(Y_{ki},\bX_{ki})\in \dR^{1+d}$, $ i \in \{1, \ldots, n_k \} $.
The model is based on the following set of assumptions:\\

\begin{hyp}\label{hyp:margin}
For a fixed $k$, the observations  $(Y_{ki},\bX_{ki})\in \dR^{1+d}$, $ i \in \{1, \ldots, n_k \}$,
are exchangeable and the distribution of $ (Y_{k, 1}, \bX_{k, 1})$ is consistent across all $k$. Furthermore, for any $k\in\setK$ and $ i \in \{1, \ldots, n_k \} $, the conditional distribution of $ Y_{k i}$ given $\bX_{k i}=\bx $ is
\begin{equation}\label{eq:hyp-cdf-uncond}
P(Y_{ki}\le y|\bX_{ki}=\bx) = G_{\bh(\balpha, \bx)  }(y),
\end{equation}
where $G_\ba, \ba\in \cA\subset \dR^m$ is a parametric family of univariate distributions, % with respect to a (parameter-free) reference measure $\mu$,
$\bh: \dR^m \times \dR^d \mapsto \cA$ is a known link function, $\balpha$ is an unknown parameter, and each component \BB{of the $m$-dimensional vector $\bh(\balpha,\bx)$} can be a linear or a non-linear function of $\bx$.
\end{hyp}

\begin{hyp}\label{hyp:copula}
For the copula-based models, we assume that the latent variables $V_1,\ldots, V_K$  are independent and uniformly distributed over $(0,1)$, and are  independent of the covariates  $\bX = (\bX_{11}=\bx_{11},\ldots, \bX_{K n_K}=\bx_{K n_K})$. It is important to note that this assumption is similar to one of GAMM. Furthermore,  assume that  conditionally on $(V_1,\ldots, V_K)$ and  $\bX$, the responses $ Y_{k i}$, $ i \in \{1, \ldots, n_k \} $ are independent. Finally, using \BB{a conditional version of Sklar's theorem due to \cite{Patton:2004}, we assume that
$$
P(Y_{ki}\le y, V_k\le v|\bX_{ki}=\bx) = C_{\bphi(\btheta,\bx)}\left\{G_{\bh(\balpha,\bx}(y),v)\right\},\quad (y,v)\in\dR\times (0,1),
$$
}
so one gets
\begin{eqnarray}
P(Y_{ki}\le y|\bX_{ki}=\bx,V_k=v)& = &\partial_v C_{\bphi(\btheta,  \bx)}\left\{G_{\bh(\balpha, \bx)}(y), v\right\}\nonumber\\
&=&
\cC_{\bphi(\btheta,  \bx)}\left\{G_{\bh(\balpha, \bx)}(y), v\right\}, \label{eq:hyp-cdf}
\end{eqnarray}
where $C_\bvarphi, \bvarphi \in \cO\subset \dR^p$ is a parametric family of bivariate copulas with density $c_\bvarphi$,
$\cC_\bvarphi(u,v)= \partial_v C_\bvarphi(u,v)$, $\bvarphi :\dR^p \times \dR^d \mapsto \cO$ is a known link function, each component of \BB{the $p$-dimensional vector $\phi(\btheta,\bx)$} can be a linear or non-linear function of $\bx$, and $\btheta = (\balpha,\bbbeta)$  is the vector of unknown parameters. \BB{Note that} the margin's parameter $\balpha$ can also appear in the copula \BB{parameter, if} necessary, as exemplified by the case of a mixed linear Gaussian regression with a random slope discussed in Appendix \ref{app:gaussian}.
\end{hyp}
% Furthermore, as a by-product of \eqref{eq:hyp-cdf}, one gets that
% \begin{equation}\label{eq:cdf-latent}
% P(V_k\le v|\ Y_{ki}= y, \bX_{ki}=\bx) = \partial_u C_{\bphi(\btheta,  \bx)}\left\{G_{\bh(\balpha, \bx)}(y), v\right\}.
% \end{equation}
% In particular, the conditional density of $V_k$ given $Y_{ki}=y$ and $\bX_{ki}=\bx$ is
% \begin{equation}\label{eq:pdf-latent}
% c_{\bphi(\btheta,  \bx)}\left\{G_{\bh(\balpha, \bx)}(y), v\right\}, \quad v\in (0,1).
% \end{equation}
% Both \eqref{eq:cdf-latent} and \eqref{eq:pdf-latent} can be used to predict the value of the latent variable $V_k$, using e.g., the median, the mean, or the mode.

Under Assumptions \ref{hyp:margin}--\ref{hyp:copula},
 the conditional distribution function of $\bY = (Y_{11},\ldots, Y_{K n_k})$ given $\bX=\bx$ is
 \begin{equation}\label{eq:cdftotale}
\prod_{k=1}^K  \left[ \int_0^1 \left\{ \prod_{i=1}^{n_k} \cC_{\bphi(\btheta, \bx_{ki}) }\{ G_{\bh(\balpha, \bx_{ki})} (y_{ki}),v_k\} \right\} dv_k \right] ,
 \end{equation}
 which is an extension of the so-called 1-factor copula model introduced by \citep{Krupskii/Joe:2013}.
As a result, the log-likelihood $L(\btheta)$
 can be written  as
$\disp
L(\btheta) = \sum_{k=1}^K \log f_{\btheta,k}(\by_k)$, where
\begin{equation}\label{eq:LL-unified}
f_{\btheta,k}(\by_k) = \int_0^1 \left\{\prod_{i=1}^{n_k} f_{\btheta,ki}(y_{ki},v)\right\}dv,
\end{equation}
and for $v\in (0,1)$,
$f_{\btheta,ki}(y_{ki},v)$ is a density with respect to a reference measure $\nu$ (Lebesgue's measure or counting measure on $\dZ$), i.e., $\disp \int_{\dR} f_{\btheta,ki}(y,v)\nu(dy) =1 $. In particular, in the continuous case,
$$
f_{\btheta,ki}(y,v) = g_{\bh(\balpha, \bx_{ki})}(y)
c_{\bphi(\btheta, \bx_{ki})}\left\{ G_{\bh(\balpha,\bx_{ki})}(y),v\right\},
$$
where $g_\ba$ is the density of $G_\ba$, while in the discrete case,
$$
f_{\btheta,ki}(y,v) = \cC_{\bphi(\btheta, \bx_{ki})}\left\{ G_{\bh(\balpha,\bx_{ki})}(y),v\right\}-\cC_{\bphi(\btheta, \bx_{ki})}\left\{ G_{\bh(\balpha,\bx_{ki})}(y-),v\right\}.
$$
As a by-product of \eqref{eq:LL-unified},  the conditional density of $V_k$, given the observations in cluster $k$, is
\begin{equation}\label{eq:Vfivenobs}
\prod_{i=1}^{n_k} f_{\btheta,ki}(y_{ki},v) \Big{/}f_{\btheta,k}(\by_k),
\qquad v \in (0,1).
\end{equation}
Hence, \BB{replacing $\btheta$ by its estimation in} \eqref{eq:Vfivenobs}, \BB{the latter can be used to estimate} $V_k$, \BB{e.g., using the mean or the median of \eqref{eq:Vfivenobs}}. We note that in the continuous case, \eqref{eq:Vfivenobs} simplifies to
\begin{equation}\label{eq:Vfivenobs-cont}
\prod_{i=1}^{n_k} c_{\bphi(\btheta, \bx_{ki})}\left\{ G_{\bh(\balpha,\bx_{ki})}(y),v\right\}  \Big{/}\int_0^1 \left[\prod_{i=1}^{n_k}  c_{\bphi(\btheta, \bx_{ki})}\left\{ G_{\bh(\balpha,\bx_{ki})}(y),s\right\}\right]ds,
\end{equation}
$v \in (0,1)$.
%Note also that if $n_k$ is large, then one could also used the proxy defined in Section \ref{ssec:fast}.
Subsequently, for observations  in cluster $k$, the conditional distribution, the conditional quantile, and the conditional expectation of $Y$ given $\bX=\bx, V_k=v$, are expressed respectively for $y\in\dR$ and $v\in (0,1)$ by
\begin{equation}\label{eq:conddist}
P(Y\le y|\bX=\bx,V_k=v) =  \cF(y,\bx,v) = \cC_{\bphi(\btheta,  \bx)}\left\{G_{\bh(\balpha, \bx)}(y),v\right\},
\end{equation}
\begin{equation}\label{eq:condquantile}
\cQ(u,\bx,v) =
G_{\bh(\balpha, \bx)}^{-1} \left\{ \cC_{\bphi(\btheta,  \bx)}^{-1}\left(u, v\right)\right\},\quad u\in (0,1),
\end{equation}
\begin{eqnarray}
E(Y|\bX=\bx,V_k=v) &=&\int_0^1 \cQ(u,\bx,v) du  \nonumber\\
&=& \int_0^\infty \left\{1- \cF(y,\bx,v) \right\}dy
-\int_{-\infty}^0 \cF(y,\bx,v) dy , \label{eq:condexp}
\end{eqnarray}
where the inverse functions are the usual left-inverse. \BB{More precisely, for any $u,v\in(0,1)$,
$$
G_{\bh(\balpha, \bx)}^{-1}(v) =\inf\left\{y \in \dR:\;  G_{\bh(\balpha, \bx)}(y)\ge v\right\},
$$
$$
\cC_{\bphi(\btheta,\bx)}^{-1} (u,v)=\inf\left\{w\in [0,1]: \; C_{\bphi(\btheta,\bx)} (u,w)\ge v\right\}.
$$}
In particular, if $G_\ba$ is a location-scale distribution, i.e., $G_{\bh(\balpha, \bx)}(y) = G_0\left(\frac{y-\mu_{\balpha,\bx}}{\sigma_{\balpha,\bx}}\right)$, then
\begin{equation}\label{eq:condexpnew}
E(Y|\bX=\bx,V_k=v)=\mu_{\balpha,\bx} + \sigma_{\balpha,\bx} \kappa_{\btheta,\bx,v},
\end{equation}
where
$\disp \kappa_{\btheta,\bx,v} = \int_0^\infty \left[ 1- \cC_{\bphi(\btheta,  \bx)} \left\{G_0(z),v\right\} \right] dz  - \int_{-\infty}^0 \cC_{\bphi(\btheta,  \bx)} \left\{G_0(z),v\right\} dz$. \\
%To illustrate the difference between GAMM and copula-based mixed regression models, an example
%A (bivariate) copula, well-known for modelling the dependence between random variables, is the joint distribution function of a random vector $(U_1,U_2)$, where $U_1$ and $U_2$ are uniform over $(0,1)$. The utility of copulas for modelling dependence follows from Sklar's representation \citep{Sklar:1959}. Indeed, for any joint distribution function $\cH$ with margins $F_1, F_2$, there exists a non-empty set $S_\cH$ of copulas $C$ such that for all $y_1,y_2$ and for each copula $C \in S_\cH$, one has $\cH(y_1,y_2)=C\{F_1 (y_1),F_2 (y_2)\}$. In the present setting, without loss of generality, the latent variables  $V_1,\ldots, V_K$ are  assumed to be uniformly distributed, and the copula joins the conditional distribution of $Y$ given $\bX$ and $V$ viz. $\disp
%P(Y\le y|\bX=\bx,V=v) = \cC\left\{P(Y\le y|\bX=\bx),v\right\}$, where
%$\cC(u,v) = P(U_1\le u|U_2=v) = \partial_v C(u,v)$. In general,  the dependence between $Y$ and $V$ is non-linear, which provides more flexibility.
%\BB{It is also shown that GAMMs are particular cases of our models.}
The proposition below demonstrates that GAMMs are specific instances within our models.
\begin{prop}\label{prop:gamm}
Suppose the latent variable $\eta$ has a density $f_\bb$ and cdf $F_\bb$,
\begin{itemize}
    \item[(a)]
Consider the GAMM with continuous margin and latent variable $\eta$ defined by
$$
P(Y\le y|\bX=\bx, \eta=z) = \tilde G_{\tilde h\{\balpha^\top \bs(\bx)+z\},\sigma}(y),
$$
where $\tilde h(a)$ is the mean of $\tilde G_{\tilde h(a),\sigma}$ and $\sigma$ is a scaling parameter. Consequently, the associated copula-based model is given by the margin  $G_{\balpha^\top \bs(\bx),\sigma,\bb}$ and copula $C_{\balpha^\top \bs(\bx),\sigma,\bb}$, where
$$
G_{a,\sigma,\bb}(y) = \int_{-\infty}^\infty f_\bb(z) \tilde G_{\tilde h(a+z),\sigma}(y)dz
$$
and $C_{a,\sigma,\bb}(u,v)=\int_0^v \cC_{a,\sigma,\bb}(u,s)ds$,
with
$
\cC_{a,\sigma,\bb}(u,v) = \tilde G_{\tilde h\{a+F_\bb^{-1}(v)\},\sigma}\circ G_{a, \sigma,\bb}^{-1}(u)$.
\item[(b)]
Consider a GAMM with discrete margin and latent variable $\eta$ defined by
$$
P(Y=k|\bX=\bx, \eta=z) = \tilde g_{\tilde h\{\balpha^\top\bs(\bx)+z\},\sigma}(k),
$$
where $\tilde h(a)$ is the mean of $\tilde g_{\tilde h(a),\sigma}$ and $\sigma$ is a scaling parameter. Consequently, we can find an associated copula-based model with margin  $G_{\balpha^\top \bs(\bx),\sigma,\bb}$ and copula $C_{\balpha^\top \bs(\bx),\sigma,\bb}$, where
$$
G_{a,\sigma,\bb}(k) = P(Y\le k|\bX=\bx) = \sum_{j=0}^k \int_{-\infty}^\infty  \tilde g_{\tilde h\{a+z\},\sigma}(j) f_\bb(z)dz,
$$
and $C_{a,\sigma,\bb}(u,v)=\int_0^v \cC_{a,\sigma,\bb}(u,s)ds$,
with
\begin{equation}\label{eq:copdisgamm}
\cC_{a,\sigma,\bb}\{G_{a,\sigma,\bb}(k),v\} = \sum_{j=0}^k  \tilde g_{\tilde h\{\balpha^\top\bs(\bx)+F_\bb^{-1}(v)\},\sigma}(j).
\end{equation}
\end{itemize}
\end{prop}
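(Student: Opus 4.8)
The plan is to realize the GAMM as an instance of the factor-copula model \eqref{eq:cdftotale} by an explicit change of latent variable. The starting observation is that, since $\eta$ has density $f_\bb$, its cdf $F_\bb$ is continuous, so $V_k:=F_\bb(\eta_k)$ is uniform on $(0,1)$; being a deterministic function of $\eta_k$, it inherits independence across clusters and independence of $\bX$, and on the support of $\eta$ the map $\eta_k\mapsto F_\bb(\eta_k)$ is a bijection, so conditioning on $V_k$ coincides, up to null sets, with conditioning on $\eta_k=F_\bb^{-1}(V_k)$. Hence the within-cluster conditional independence of the $Y_{ki}$ given $\eta_k$ and $\bX$ — and the exchangeability and consistency of the $(Y_{ki},\bX_{ki})$ — transfer verbatim, so Assumption \ref{hyp:copula} will hold with $V_k$ as the latent variable once the margin and the conditional copula are identified.

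For part (a), I would first integrate out $\eta$: writing $a=\balpha^\top\bs(\bx)$, Tonelli's theorem gives $P(Y_{ki}\le y\mid\bX_{ki}=\bx)=\int\tilde G_{\tilde h(a+z),\sigma}(y)f_\bb(z)\,dz=G_{a,\sigma,\bb}(y)$, which is continuous since it is a mixture of continuous cdfs. Next, using $\{V_k=v\}=\{\eta_k=F_\bb^{-1}(v)\}$, the conditional cdf of $Y_{ki}$ given $\bX_{ki}=\bx,V_k=v$ is $\tilde G_{\tilde h\{a+F_\bb^{-1}(v)\},\sigma}(y)$, and writing this as $\tilde G_{\tilde h\{a+F_\bb^{-1}(v)\},\sigma}\circ G_{a,\sigma,\bb}^{-1}\circ G_{a,\sigma,\bb}(y)$ and comparing with \eqref{eq:hyp-cdf} identifies $\cC_{a,\sigma,\bb}$ as exactly the stated formula; then $C_{a,\sigma,\bb}(u,v):=\int_0^v\cC_{a,\sigma,\bb}(u,s)\,ds$ is the unique antiderivative with $\partial_v C_{a,\sigma,\bb}=\cC_{a,\sigma,\bb}$, and substituting these margin and copula choices back into \eqref{eq:cdftotale} reproduces the GAMM law of $\bY$ given $\bX$. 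It then remains only to verify that $C_{a,\sigma,\bb}$ is a genuine copula.

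That verification I expect to be routine: groundedness and $C_{a,\sigma,\bb}(u,0)=0$ are immediate; $G_{a,\sigma,\bb}^{-1}(0)=-\infty$, $G_{a,\sigma,\bb}^{-1}(1)=+\infty$ and $\tilde G(\pm\infty)\in\{0,1\}$ give $C_{a,\sigma,\bb}(0,v)=0$ and $C_{a,\sigma,\bb}(1,v)=v$; the uniform margin $C_{a,\sigma,\bb}(u,1)=u$ follows from the substitution $z=F_\bb^{-1}(s)$, $ds=f_\bb(z)\,dz$, which turns $\int_0^1\cC_{a,\sigma,\bb}(u,s)\,ds$ into $G_{a,\sigma,\bb}\circ G_{a,\sigma,\bb}^{-1}(u)=u$; and $2$-increasingness follows from monotonicity of $u\mapsto\cC_{a,\sigma,\bb}(u,v)$, since $G_{a,\sigma,\bb}^{-1}$ is nondecreasing and $\tilde G$ is a cdf (existence of the copula density $c_{a,\sigma,\bb}$ required by Assumption \ref{hyp:copula} is inherited from the differentiability built into the continuous-margin GAMM). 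Part (b) is handled identically, with sums over $j\le k$ replacing integrals: $G_{a,\sigma,\bb}(k)=\sum_{j=0}^k\int\tilde g_{\tilde h(a+z),\sigma}(j)f_\bb(z)\,dz$ and $P(Y\le k\mid\bX=\bx,V_k=v)=\sum_{j=0}^k\tilde g_{\tilde h\{a+F_\bb^{-1}(v)\},\sigma}(j)$, which is precisely \eqref{eq:copdisgamm}. The single point needing care — more a bookkeeping nuisance than a genuine obstacle — is that in part (b), \eqref{eq:copdisgamm} pins $\cC_{a,\sigma,\bb}$ down only on the grid $\{G_{a,\sigma,\bb}(k)\}$, so one must extend it to a copula off the grid (any monotone interpolation in the first argument works, and the extension is immaterial because the discrete-case likelihood in \eqref{eq:LL-unified} evaluates $\cC_{a,\sigma,\bb}$ only at those grid points), together with the parallel left-inverse subtlety in part (a) when $G_{a,\sigma,\bb}$ or $F_\bb$ fails to be strictly increasing; both are absorbed by the $\nu$-a.e.\ conventions already adopted around \eqref{eq:conddist}.
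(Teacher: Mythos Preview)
Your proposal is correct and follows essentially the same route as the paper: set $V_k=F_\bb(\eta_k)$ so that $\eta_k=F_\bb^{-1}(V_k)$ in distribution, read off the marginal $G_{a,\sigma,\bb}$ by integrating out $\eta$, and identify $\cC_{a,\sigma,\bb}$ from the conditional cdf given $V_k$. The paper's proof is considerably terser, simply noting that the defining formula for $\cC$ yields $\cC_{a,\sigma,\bb}\{G_{a,\sigma,\bb}(y),v\}=\tilde G_{\tilde h\{a+F_\bb^{-1}(v)\},\sigma}(y)$ and that $\eta\stackrel{d}{=}F_\bb^{-1}(V)$; your additional verifications (that $C_{a,\sigma,\bb}$ is a bona fide copula, and the grid-extension remark in the discrete case) are not in the paper but are welcome and correct elaborations.
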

\begin{proof}
%\subsection{Proof of Proposition \ref{prop:gamm}}
Without loss of generality, let us assume that the scaling parameter is implicitly defined in the margins.
For the proof of (a), from the very definitions of $G$ and $\cC$, we obtain that
$\cC_{\balpha^\top \bs(\bx),\bb}\left\{G_{\balpha^\top \bs(\bx),\bb}(y),v\right\} = \tilde G_{\tilde h\{\balpha^\top \bs(\bx)+F_\bb^{-1}(v)\}}(y)$. To complete the proof, we note that $\eta$ has the same distribution as $F_\bb^{-1}(V)$. For the proof of (b), we obtain
$
\cC_{a, \bb}\{G_{\balpha^\top \bs(\bx),\bb}(k),v\} -\cC_{a, \bb}\{G_{\balpha^\top \bs(\bx),\bb}(k-1),v\}  =
\tilde g_{\tilde h\{\balpha^\top\bs(\bx)+F_\bb^{-1}(v)\}}(k)$.
\end{proof}
% It is important to note that there is an infinite number of copulas $C$ satisfying \eqref{eq:copdisgamm} since $\cC(u,v)$ is only determined for values of $u=G_{a,\sigma,\bb}(k)$.
% In particular, we can consider the multilinear copula defined in \cite{Genest/Neslehova/Remillard:2017}, which, in this context, represents a simple linear interpolation.
Appendix \ref{difference} provides graphical examples to illustrate the difference between GAMM and our proposed approach.

\section{Estimation of parameters}\label{sec:est}

Let $N_K = \disp \sum_{k=1}^K n_k$ be the total number of observations. For the estimation of $\btheta = (\alpha,\bbbeta)$,
% one can consider two estimation methods: (i) the two-step method, also called IFM \citep{Joe:2015}, where $\balpha$ is estimated first, as if the observations $Y_{ki}$ were independent, i.e., by setting
% $$
% \balpha_K = \arg\max_{\balpha\in \cP_1} \sum_{k=1}^K \sum_{i=1}^{n_k}\log g_{\bh(\balpha, \bx_{ki})} (y_{ki}),
% $$
% and then setting $\bbbeta_K = \disp \arg\max_{\bbbeta\in\cP_2}L(\balpha_K,\bbbeta)$, and (ii)
we consider the fully parametric maximum likelihood estimator, i.e.,
$$
\btheta_K=(\balpha_K,\bbbeta_K) = \disp \arg\max_{(\balpha,\bbbeta)\in  \cP_1\times \cP_2}L(\balpha,\bbbeta).
$$
%In what follows we will concentrate on the full likelihood since for moderate sample sizes $n_k$, it is as fast as the two-step method, and it is more precise in general.
%Furthermore,   for large sample sizes $n_k$, one could use the fast-estimation approach.
%It worth mentioning that when the margin does not depend on the covariates, one could also use a non-parametric estimate (Appendix \ref{app:cdf-est}).

In addition to Assumptions \ref{hyp:margin}--\ref{hyp:copula}, we need the following hypotheses:
\begin{hyp}\label{hyp:density}
The link function $\bh(\balpha,\bx)$ is twice continuously differentiable with respect to $\balpha$, and the density $g_\ba$ (in the continuous case) or the cdf $G_\ba$ (in the discrete case) is twice continuously differentiable with respect to $\ba$. The first and second order derivatives are denoted respectively by $\dot g_\ba$ and $\ddot g_\ba$, or $\dot G_\ba$ and $\ddot G_\ba$. Furthermore, we assume that $\bphi(\btheta,\bx)$ is twice continuously differentiable with respect to $\btheta = (\balpha,\bbbeta)$ and $c_\bvarphi$ is twice continuously differentiable with respect to $\bvarphi$, with first and second order derivatives denoted respectively by $\dot c_\bvarphi$ and $\ddot c_\bvarphi$.
Using \eqref{eq:LL-unified}, we obtain
$\disp \dot L(\btheta) = \nabla_{\btheta}L(\btheta) = \sum_{k=1}^K \frac{\dot f_{\btheta,k}(\by_{k})}{f_{\btheta,k}(\by_k)}$,
where
\begin{equation}\label{eq:gradient-unified}
    \dot f_{\btheta,k}(\by_k) = \sum_{i=1}^{n_k} \int \left\{\prod_{j\neq i} f_{\btheta,kj}(y_{kj},v)\right\} \dot f_{\btheta,ki}(y_{ki},v) \; dv.
\end{equation}
\end{hyp}
%\subsection{Additional assumptions}\label{app:cond}

\begin{hyp}\label{hyp:mu}
Set $\disp \bbeta_{\btheta,ki}= \frac{\int_0^1 \left\{\prod_{j\neq i} f_{\btheta,kj}(Y_{kj},v)\right\}\dot f_{\btheta,ki}(Y_{ki},v)\; dv}{f_{\btheta,k}(\bY_k)}$. In some bounded neighborhood $\cN$ of $\btheta_0$, $\disp \bmu(\btheta) = E_{\btheta_0} \left[ \bbeta_{\btheta,ki}\right] =0$ iff $\btheta=\btheta_0$, its Jacobian $\dot \bmu(\btheta)$ exists, is continuous, and is positive definite or negative definite at $\btheta_0$.
Additionally,
%\begin{equation}\label{eq:l2bound}
$\disp \limsup_{K\to\infty} \max_{1\le k\le K} \max_{1\le i\le n_k}  E_{\btheta_0}\left(\sup_{\btheta\in \cN}\|\bbeta_{\btheta,ki}\|^2\right) <\infty$,
 for some neighborhood $\cN$ of $\btheta_0$.
\end{hyp}

The following condition might also be needed.
\begin{hyp}\label{hyp:l4bound}
$\disp
\limsup_{K\to\infty} \max_{1\le k\le K} \max_{1\le i\le n_k} E_{\btheta_0}\left(
\sup_{\btheta\in \cN}\|\bbeta_{\btheta,ki}\|^4\right) <\infty$,  for some neighborhood $\cN$ of $\btheta_0$.
\end{hyp}

%\BB{Note that, because of Assumption \ref{hyp:mu}, we can differentiate under the integral sign.}
%\BB{A Newton-Raphson-type algorithm can be used to obtain parameter estimates together with Gaussian quadrature method to integrate over the latent factors. The main computation difficulty may arise when $n_k$ is large, so the function to integrate can be either very small or very large. Analytical derivatives of the log-likelihood function with respect to parameters can be obtained to speed up the computations.}

The proof of the following convergence result is presented in Appendix \ref{app:pf-main}. It relies on auxiliary results outlined in Appendix \ref{app:aux}.
To simplify notations, set $\bbeta_{ki} =\bbeta_{\btheta_0,ki} $,
% Before stating the result, set
% $$
% \bbeta_{ki} = \frac{1}{f_{\btheta_0,k}(\by_k)}\int \left\{\prod_{j\neq i} f_{\btheta_0,kj}(y_{kj},v)\right\}\left. \nabla_\btheta f_{\btheta,ki}(y_{ki},v)\right|_{\btheta=\btheta_0}dv,
% $$
where $\btheta_0$ is the true parameter.
Recall that $N_K = \disp \sum_{k=1}^K n_k$.

\begin{rem}\label{rem:Ksmall} For clustered data, it is essential that the number of clusters tends to infinity. In fact, in the simple Gaussian case $y_{ki}=\eta_k+\mu+\epsilon_{ki}$, with $n_k\equiv n$, the estimation of $\mu$ is $\bar{\bar y}= \mu+\bar \eta +\bar{\bar \epsilon}$ which does not converge to $\mu$ if $K$ is fixed. Its variance is \BB{$\frac{1}{K}\left(\frac{\sigma_\epsilon^2}{n}+ \sigma_\eta^2\right)$}, which converges to $\frac{\sigma_\eta^2}{K}$, as $n\to\infty$.
\end{rem}

\begin{thm}\label{thm:main}
We assume that as $K\to\infty$,
$\disp \lambda_K = \frac{1}{N_K}\sum_{k=1}^{K} n_k^2 \to\lambda \in [1,\infty)$ and $\disp \frac{1}{N_K^2}\sum_{k=1}^K n_k^4\to 0$. We further assume that Assumptions \ref{hyp:density}--\ref{hyp:l4bound} hold. Then $\bTheta_K = N_K^{1/2}(\btheta_K-\btheta_0)$ converges in distribution to a centered Gaussian random vector with covariance matrix $\Sigma^{-1}$, where $\Sigma = \Sigma_{11}+(\lambda-1)\Sigma_{12}$, with $\Sigma_{11} = E\left(\bbeta_{k1}\bbeta_{k1}^\top \right)$ and
$\Sigma_{12} = E\left(\bbeta_{k1}\bbeta_{k2}^\top \right)$. Moreover, $\Sigma$ can be estimated by
$\disp
    \hat\Sigma = \frac{1}{N_K}\sum_{k=1}^K \frac{\dot f_{\btheta_K,k} \dot f_{\btheta_K,k}^\top }{f_{\btheta_K,k}^2}$.
\BB{When $n_k\equiv n$, convergence holds under Assumptions \ref{hyp:density}--\ref{hyp:mu}. }
\end{thm}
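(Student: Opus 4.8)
The plan is to treat $\btheta_K$ as a $Z$-estimator, i.e.\ a zero of the score $\dot L(\cdot)$, and to run the usual Taylor-expansion argument; the only genuinely new features are the exchangeable-within-cluster / independent-across-cluster structure of the summands $\bbeta_{ki}$, which yields the two-block limiting variance $\Sigma=\Sigma_{11}+(\lambda-1)\Sigma_{12}$, and the triangular-array nature of the sums when the cluster sizes $n_k$ are unbounded. Throughout, write $U_k(\btheta)=\dot f_{\btheta,k}(\bY_k)/f_{\btheta,k}(\bY_k)$ for the score of cluster $k$; by \eqref{eq:gradient-unified} one has $U_k(\btheta)=\sum_{i=1}^{n_k}\bbeta_{\btheta,ki}$, so $\dot L(\btheta)=\sum_{k=1}^K U_k(\btheta)$ and $\ddot L(\btheta)=\sum_{k=1}^K\dot U_k(\btheta)=\sum_{k=1}^K\sum_{i=1}^{n_k}\dot\bbeta_{\btheta,ki}$; the vectors $U_1(\btheta),\dots,U_K(\btheta)$ are independent, and $E\bigl(U_k(\btheta_0)\bigr)=n_k\,\bmu(\btheta_0)=0$ by Assumption \ref{hyp:mu}.

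\emph{Step 1: asymptotic normality of the score.} By exchangeability of $(Y_{ki},\bX_{ki})_i$ within a cluster and consistency of the cluster law in $k$ (Assumption \ref{hyp:margin}), the matrices $\Sigma_{11}=E(\bbeta_{k1}\bbeta_{k1}^\top)$ and $\Sigma_{12}=E(\bbeta_{k1}\bbeta_{k2}^\top)$ do not depend on $k$, so $\mathrm{Var}\,U_k(\btheta_0)=n_k\Sigma_{11}+n_k(n_k-1)\Sigma_{12}$ and
\[
\mathrm{Var}\bigl(N_K^{-1/2}\dot L(\btheta_0)\bigr)=\frac1{N_K}\sum_{k=1}^K\bigl\{n_k\Sigma_{11}+n_k(n_k-1)\Sigma_{12}\bigr\}=\Sigma_{11}+(\lambda_K-1)\Sigma_{12}\;\longrightarrow\;\Sigma .
\]
I then apply the Lyapunov central limit theorem for a triangular array of independent, centred random vectors with exponent $4$: Jensen's inequality gives $\|U_k(\btheta_0)\|^4\le n_k^{3}\sum_{i=1}^{n_k}\|\bbeta_{ki}\|^4$, hence $E\|U_k(\btheta_0)\|^4\le C\,n_k^4$ by Assumption \ref{hyp:l4bound}, and therefore $N_K^{-2}\sum_{k=1}^K E\|U_k(\btheta_0)\|^4\le C\,N_K^{-2}\sum_{k=1}^K n_k^4\to0$ by hypothesis; thus $N_K^{-1/2}\dot L(\btheta_0)\stackrel{{\rm d}}{\longrightarrow}N(0,\Sigma)$.

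\emph{Step 2: consistency, Hessian limit, and conclusion.} The $L^2$ bound in Assumption \ref{hyp:mu} yields $\mathrm{Var}\bigl(N_K^{-1}\dot L(\btheta)\bigr)=O\bigl(N_K^{-2}\sum_k n_k^2\bigr)=O(\lambda_K/N_K)\to0$, so $N_K^{-1}\dot L(\btheta)\pr\bmu(\btheta)$ for each $\btheta$; after upgrading to uniform convergence on the neighbourhood $\cN$ (via the $C^2$ smoothness of Assumption \ref{hyp:density} together with the moment estimates of Appendix \ref{app:aux}) and using that $\bmu$ vanishes on $\cN$ only at $\btheta_0$, with $\dot\bmu(\btheta_0)$ nonsingular, one obtains a consistent solution of $\dot L=0$ in $\cN$, which is identified with the global maximiser $\btheta_K$ in the standard way. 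Differentiating $\int f_{\btheta,k}(\by_k)\,\nu^{\otimes n_k}(d\by_k)\equiv1$ twice (justified by Assumption \ref{hyp:density}) gives the per-cluster Bartlett identity $-E\,\dot U_k(\btheta_0)=\mathrm{Var}\,U_k(\btheta_0)$, hence $N_K^{-1}E\,\ddot L(\btheta_0)=-\{\Sigma_{11}+(\lambda_K-1)\Sigma_{12}\}\to-\Sigma$; combined with a uniform law of large numbers for $N_K^{-1}\ddot L(\cdot)$ on $\cN$ and the consistency of any point $\bar\btheta_K$ on the segment from $\btheta_K$ to $\btheta_0$, this gives $N_K^{-1}\ddot L(\bar\btheta_K)\pr -\Sigma$, which is nonsingular by Assumption \ref{hyp:mu}. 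The one-term Taylor expansion $0=\dot L(\btheta_K)=\dot L(\btheta_0)+\ddot L(\bar\btheta_K)(\btheta_K-\btheta_0)$ then yields, by Slutsky's theorem,
\[
\bTheta_K=-\bigl\{N_K^{-1}\ddot L(\bar\btheta_K)\bigr\}^{-1}N_K^{-1/2}\dot L(\btheta_0)\;\stackrel{{\rm d}}{\longrightarrow}\;\Sigma^{-1}N(0,\Sigma)=N(0,\Sigma^{-1}).
\]
For the variance estimator, $\hat\Sigma=N_K^{-1}\sum_k U_k(\btheta_K)U_k(\btheta_K)^\top$ has, at $\btheta_0$, mean $\Sigma_{11}+(\lambda_K-1)\Sigma_{12}\to\Sigma$ and variance $O\bigl(N_K^{-2}\sum_k E\|U_k(\btheta_0)\|^4\bigr)=O\bigl(N_K^{-2}\sum_k n_k^4\bigr)\to0$; a uniform LLN on $\cN$ plus consistency of $\btheta_K$ handle the plug-in, so $\hat\Sigma\pr\Sigma$. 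When $n_k\equiv n$ all of the above are i.i.d.\ averages over $k$, $N_K=Kn$, $\lambda_K\equiv n$, and $N_K^{-2}\sum_k n_k^4=n^2/K\to0$ automatically, so the score CLT is the ordinary multivariate CLT and the Hessian and $\hat\Sigma$ limits are ordinary laws of large numbers, needing only $E\|\bbeta_{ki}\|^2<\infty$; hence Assumptions \ref{hyp:density}--\ref{hyp:mu} suffice and Assumption \ref{hyp:l4bound} can be dropped.

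\emph{Main obstacle.} The probabilistic skeleton above is routine; the work lies in the two uniform laws of large numbers and in the moment control of the Hessian terms $\dot\bbeta_{\btheta,ki}$, since the hypotheses bound only $\|\bbeta_{\btheta,ki}\|$ in $L^2$ (resp.\ $L^4$), and one must propagate the $C^2$ smoothness of $g_\ba$, $G_\ba$, $c_\bvarphi$, $\bh$ and $\bphi$ through the integral-ratio structure \eqref{eq:LL-unified}--\eqref{eq:gradient-unified} to control the first and second $\btheta$-derivatives of $\log f_{\btheta,k}$ uniformly near $\btheta_0$; this is precisely the role of the auxiliary results of Appendix \ref{app:aux}. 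A secondary, essentially notational nuisance is the case of unbounded cluster sizes, where $\Sigma_{11}$, $\Sigma_{12}$ and $\dot\bmu$ must be read as limiting (or uniform-in-$n_k$) quantities.
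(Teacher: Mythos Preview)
Your proposal is correct and follows the standard $Z$-estimator template: CLT for the score, ULLN for the Hessian, Taylor expansion, Slutsky. The paper takes a somewhat different route. Instead of controlling $N_K^{-1}\ddot L(\cdot)$ directly, it establishes weak convergence of the centred score \emph{process} $\dT_K(\btheta)=N_K^{-1/2}\{\dot L(\btheta)-N_K\,\bmu(\btheta)\}$ in $C(\bar\cN)$ (tightness via a second-moment Lipschitz bound $E\|\dT_K(\btheta)-\dT_K(\btheta')\|^2\le C\|\btheta-\btheta'\|^2$), passes to almost-sure convergence by Skorokhod's representation, obtains consistency by a compactness argument (any limit point of $\btheta_K$ is a zero of $\bmu$, hence $\btheta_0$), and then Taylor-expands the \emph{deterministic} mean $\bmu$ rather than the empirical Hessian, using $\dot\bmu(\btheta_0)=-\Sigma$ and the asymptotic equicontinuity $\dT_K(\btheta_K)-\dT_K(\btheta_0)\to 0$. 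This sidesteps exactly the obstacle you identify: the paper never needs a LLN or moment bounds for $\ddot L$ or for $\dot\bbeta_{\btheta,ki}$, only $L^2$/$L^4$ control of the score itself plus enough smoothness to get tightness, which is closer to what Assumptions~\ref{hyp:mu}--\ref{hyp:l4bound} literally provide. Your route, by contrast, needs integrable envelopes for the Hessian summands to run the ULLN, which is not explicitly assumed. Both arguments ultimately lean on the $C^2$ smoothness of Assumption~\ref{hyp:density} to fill in the missing regularity, so the difference is one of packaging rather than substance; but the paper's functional-convergence approach is slightly more economical with the hypotheses as stated.
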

\begin{rem}
Note that, because of Assumption \ref{hyp:mu}, we can differentiate under the integral sign in \eqref{eq:gradient-unified}. A Newton-Raphson-type algorithm can be used to obtain parameter estimates together with Gaussian quadrature method to integrate over the latent factors. The main computational difficulty may arise when $n_k$ is large, implying that the function to integrate in \eqref{eq:gradient-unified} can be either very small or very large. To adrress this, we can multiply the $f_{\btheta,kj}(y_{kj},v)$s by a specific constant to avoid the underflow problem.
%nalytical derivatives of the log-likelihood function with respect to parameters can be obtained to speed up the computations.
 Additional details about the computation of the log-likelihood function are provided in the supplementary material. These computations are implemented in the R package CopulaGAMM \citep{Krupskii/Nasri/Remillard:2023b} available on CRAN.
\end{rem}

Finally, it is important to note that the likelihood-based criteria such as AIC or BIC could be used to select the best margins and the copula family that best fit the data. In Section \ref{ssec:comp}, we use these two criteria for model selection.

\section{Numerical experiments}

We now propose numerical experiments to evaluate the performance of both the proposed estimators and the copula-based models in comparison with GAMMs.

\subsection{Performance of the estimators}\label{ssec:simul}

In this section, we assess the performance of the maximum likelihood estimator proposed in Section \ref{sec:est} in eight numerical experiments. It is worth noting that the running time for each sample is 1-2 seconds, demonstrating a relatively fast execution. For each of the eight models we considered, we generated 1000 samples from Equation \eqref{eq:cdftotale} and considered two scenarios. In the first scenario, $n_k\equiv 5$ and $K\in\{5, 20, 100, 500\}$, while in the second scenario, $K=5$, and $n_k\equiv n \in \{5, 20, 100, 500\}$.
For the first numerical experiment, we considered Clayton copula with parameter $2e^{\beta_1}= 2$, so that the Kendall's tau equals $0.5$, and the margins $G_\ba$ are normal, with mean and variance parameters $\ba = (10, 1)$.
Figure \ref{fig:exp1} displays the boxplots of the estimated parameters for the two scenarios.
%, and Table \ref{tab:exp1} gives the estimated RMSEs.
\begin{figure}[ht!]
    \centering
    \includegraphics[scale=0.28]{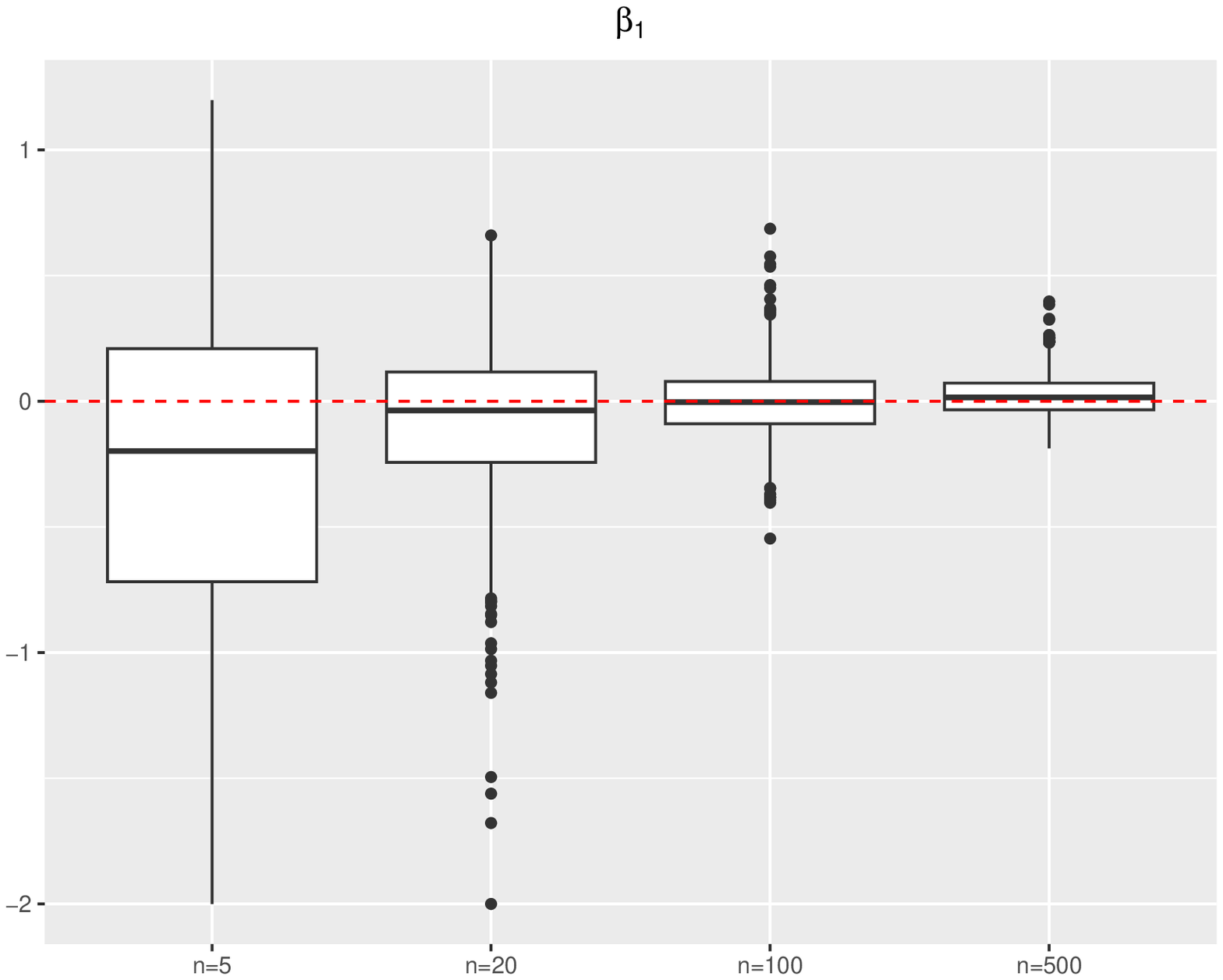}   \includegraphics[scale=0.28]{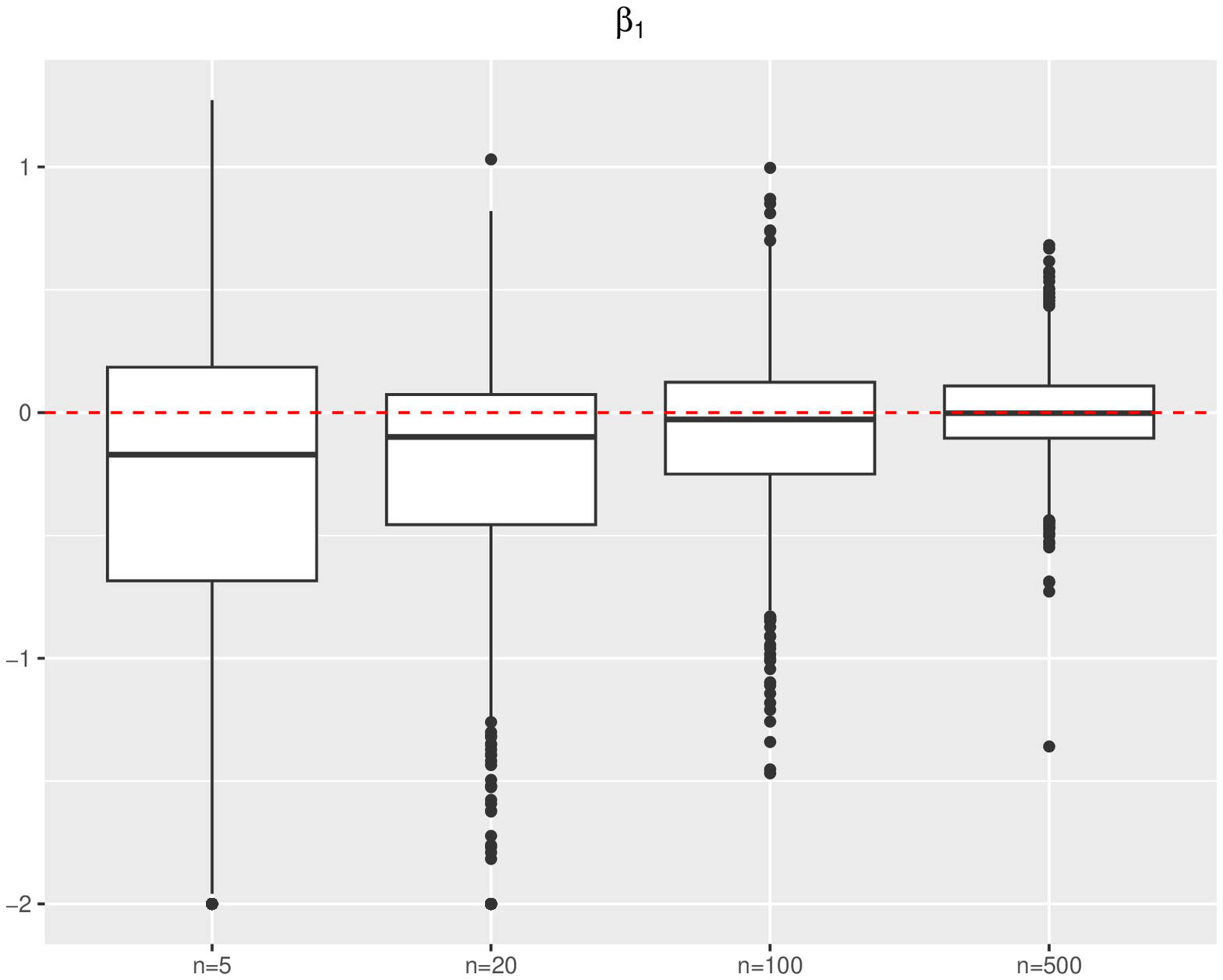}
    \includegraphics[scale=0.28]{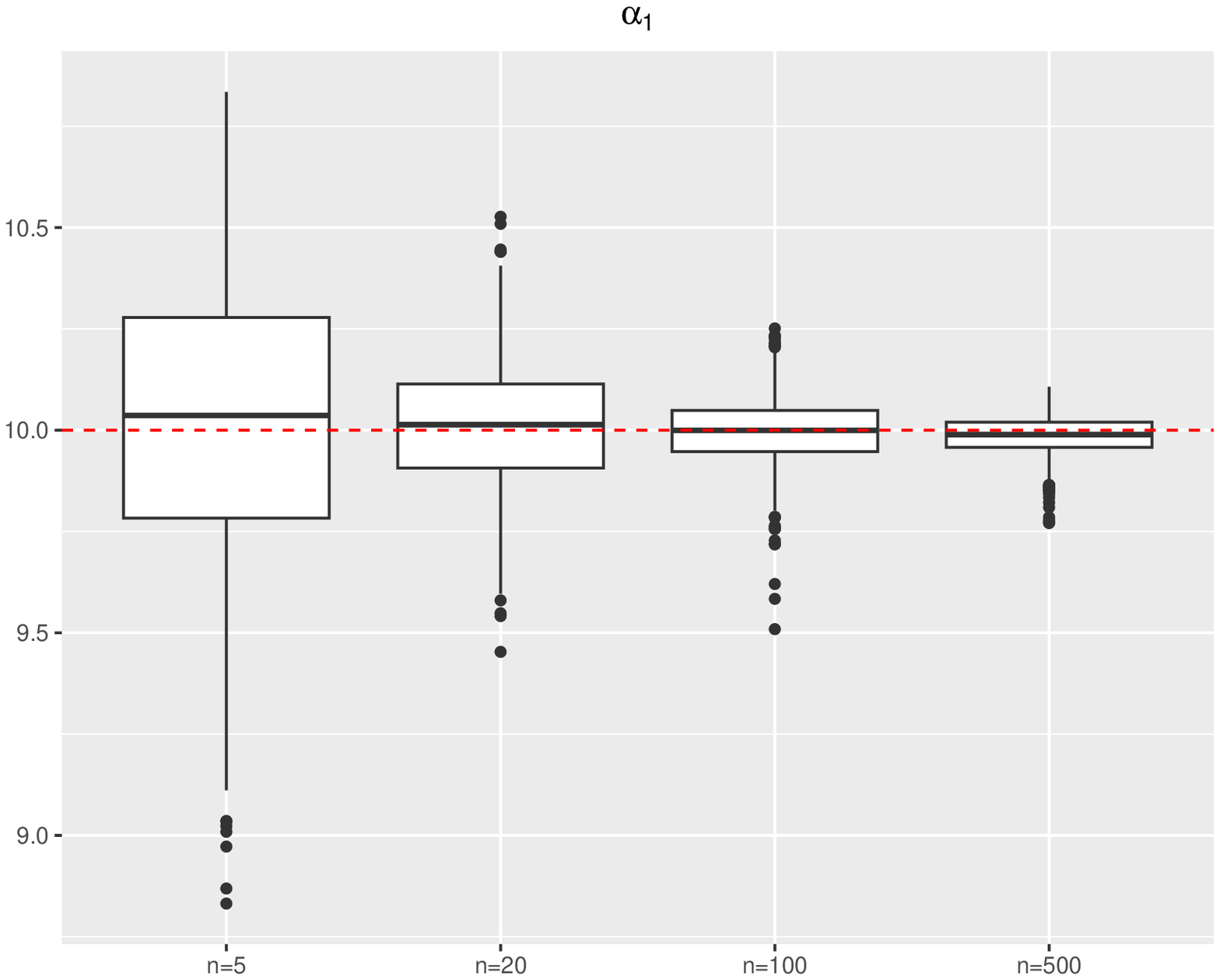}    \includegraphics[scale=0.28]{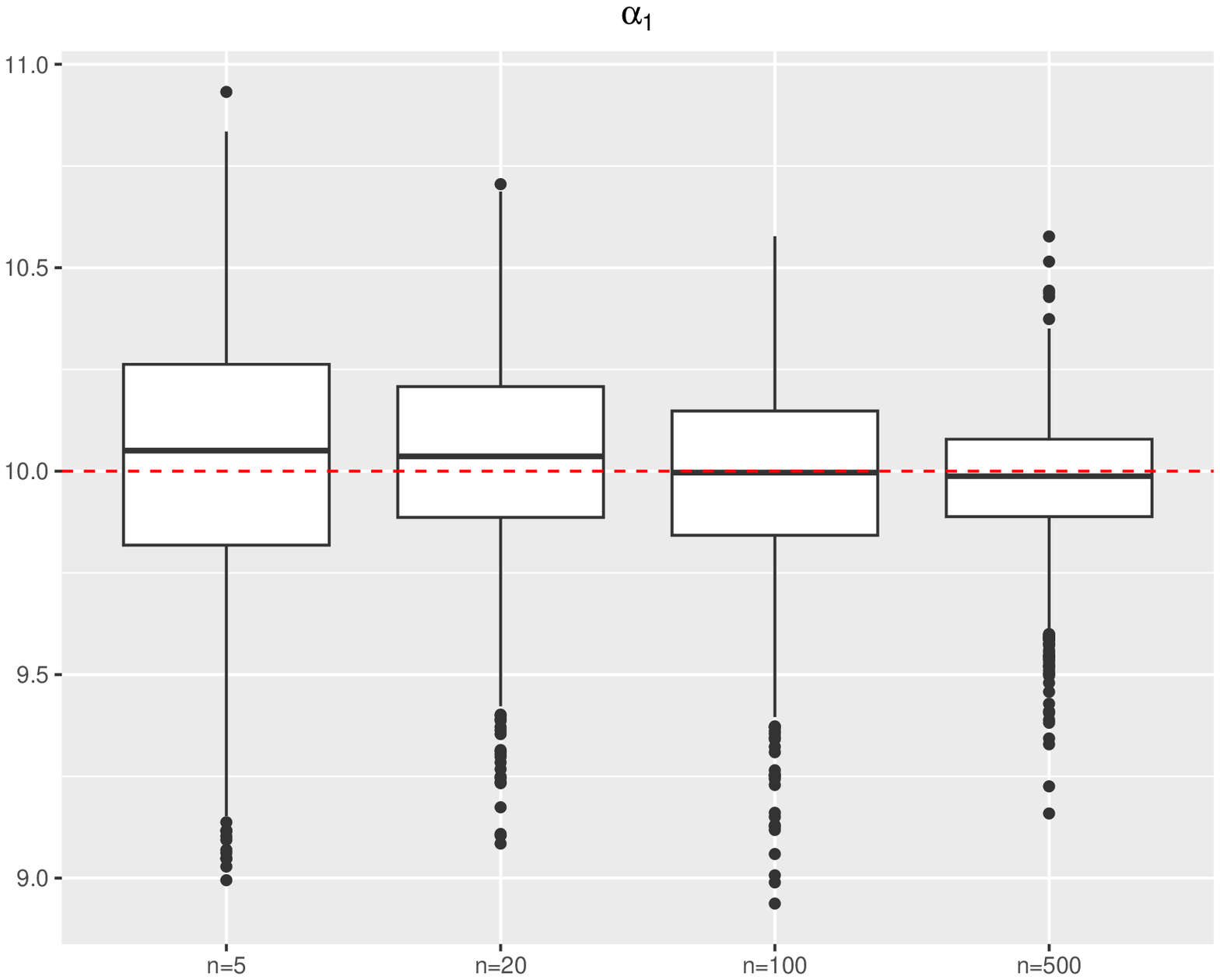}
    \includegraphics[scale=0.28]{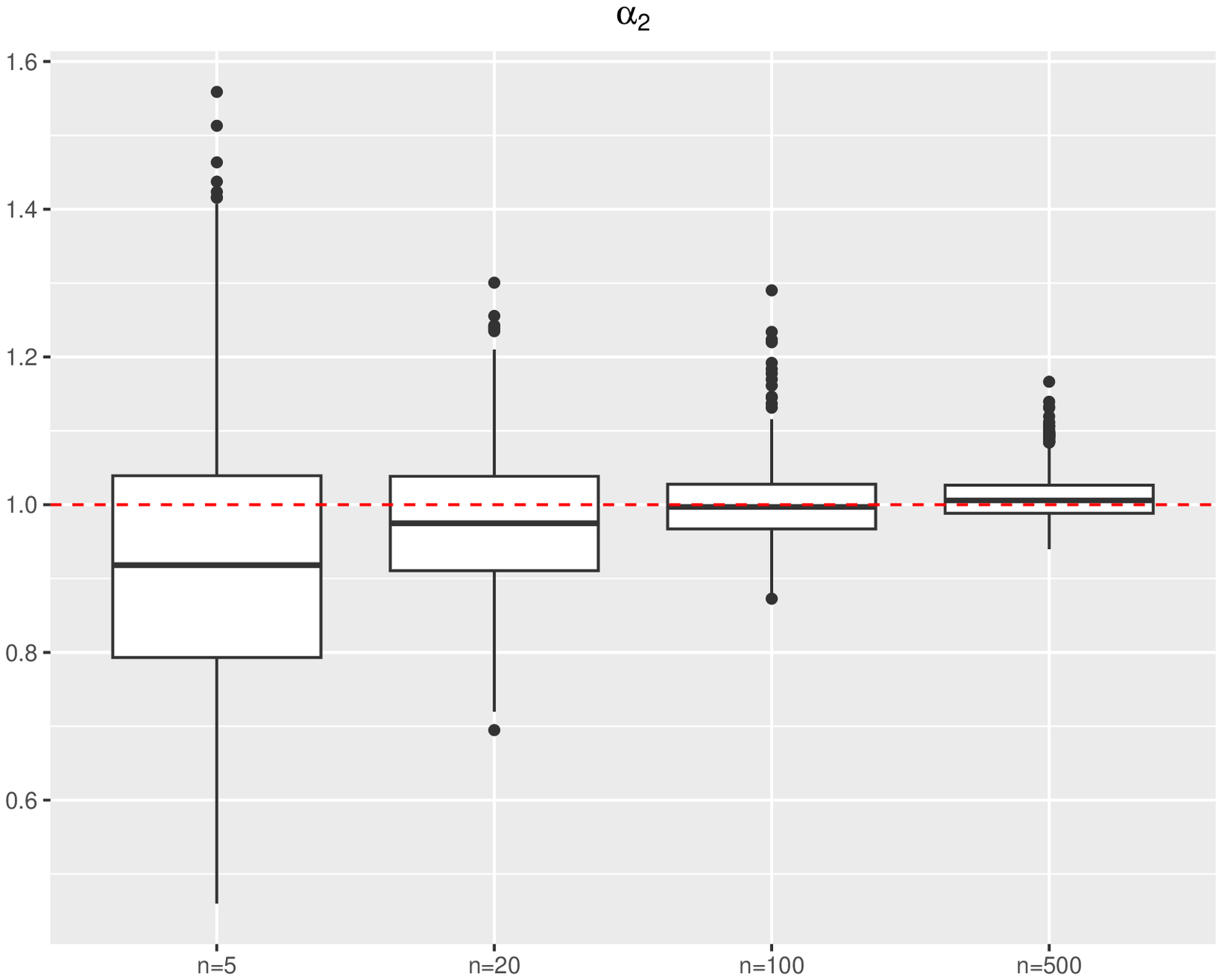}    \includegraphics[scale=0.28]{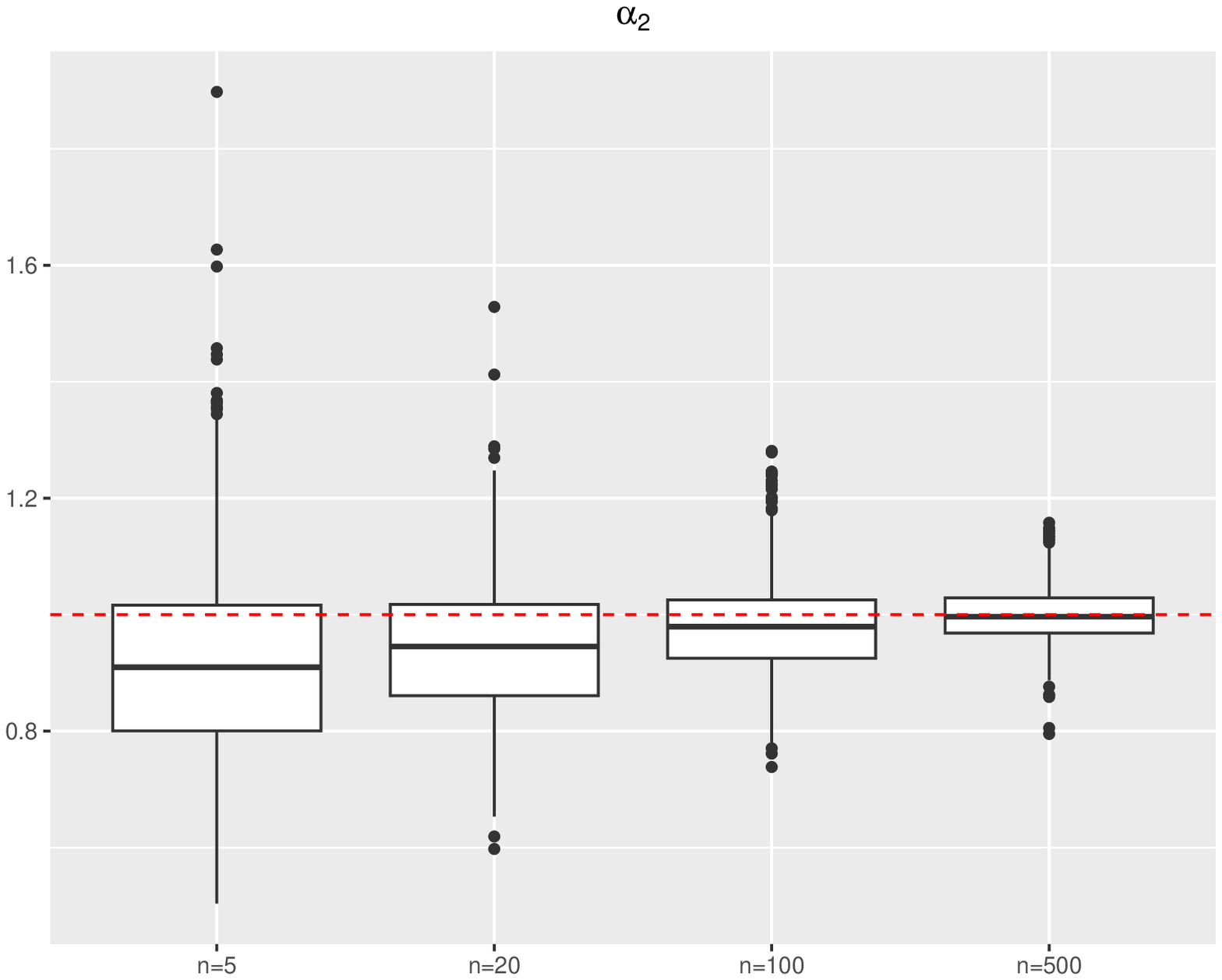}%\vspace{-0.5cm}

       \caption{Exp1: Boxplots of parameter estimates for scenario 1 (left) and scenario 2 (right), based on 1000 samples  from  Clayton copula with parameter $2$ and $N(10,1)$ margins.}
    \label{fig:exp1}
\end{figure}
Next, in the second numerical experiment, we used the same margins as in the first experiment, but we considered Clayton copulas $C_{\bphi(\bbbeta, \bX_{ki})}$ with parameter $2e^{\bphi(\bbbeta,\bX_{ki}) }$, where $\bphi(\bbbeta,\bX_{ki}) = 1-1.5U_{ki}$, and $U_{ki}$ are iid $\unif$, $k\in\setK$, $i\in \{1,\ldots, n_k\}$.  Figure \ref{fig:exp2} shows the corresponding results.
 \begin{figure}[ht!]
    \centering
     \includegraphics[scale=0.3]{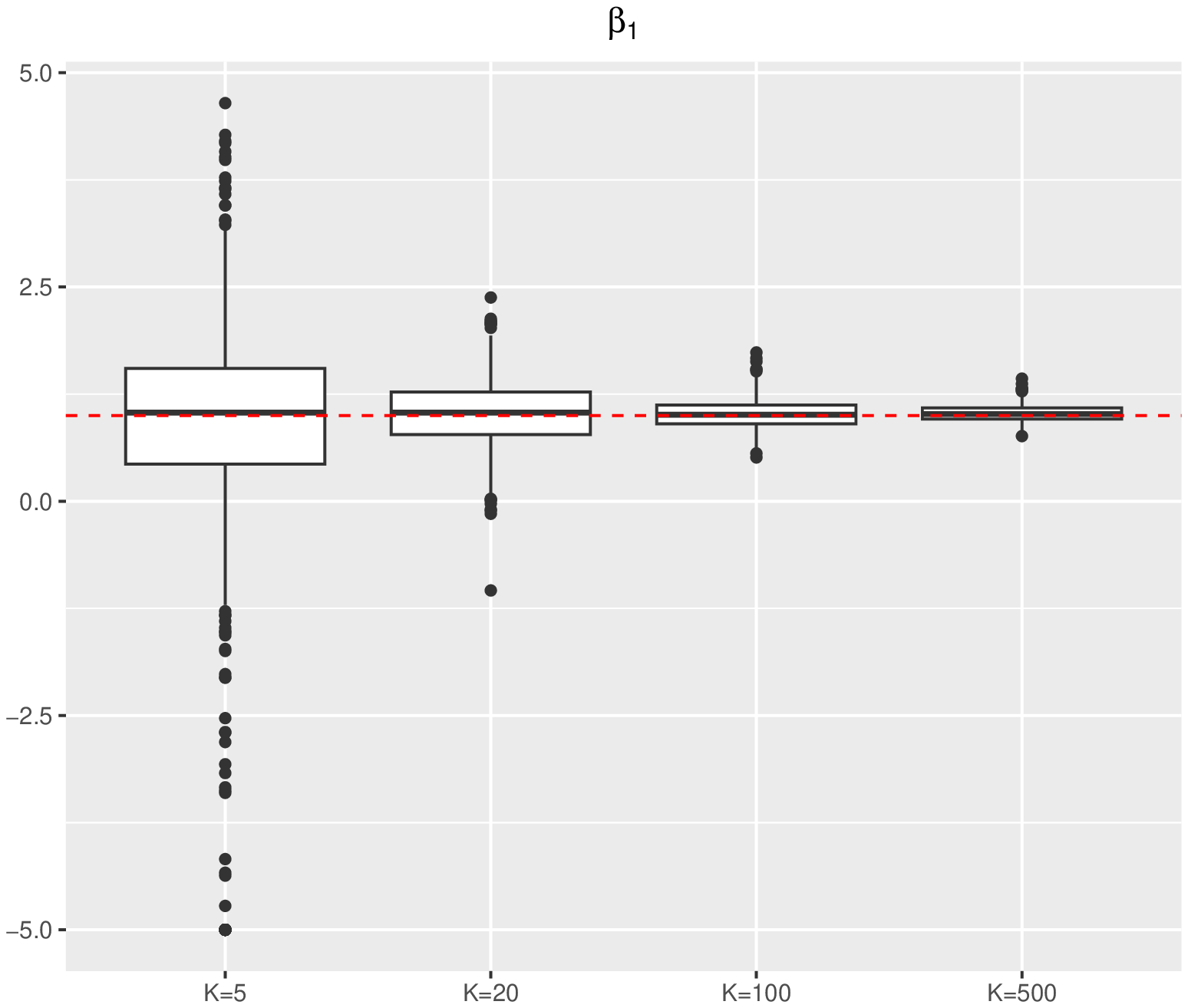}   \includegraphics[scale=0.3]{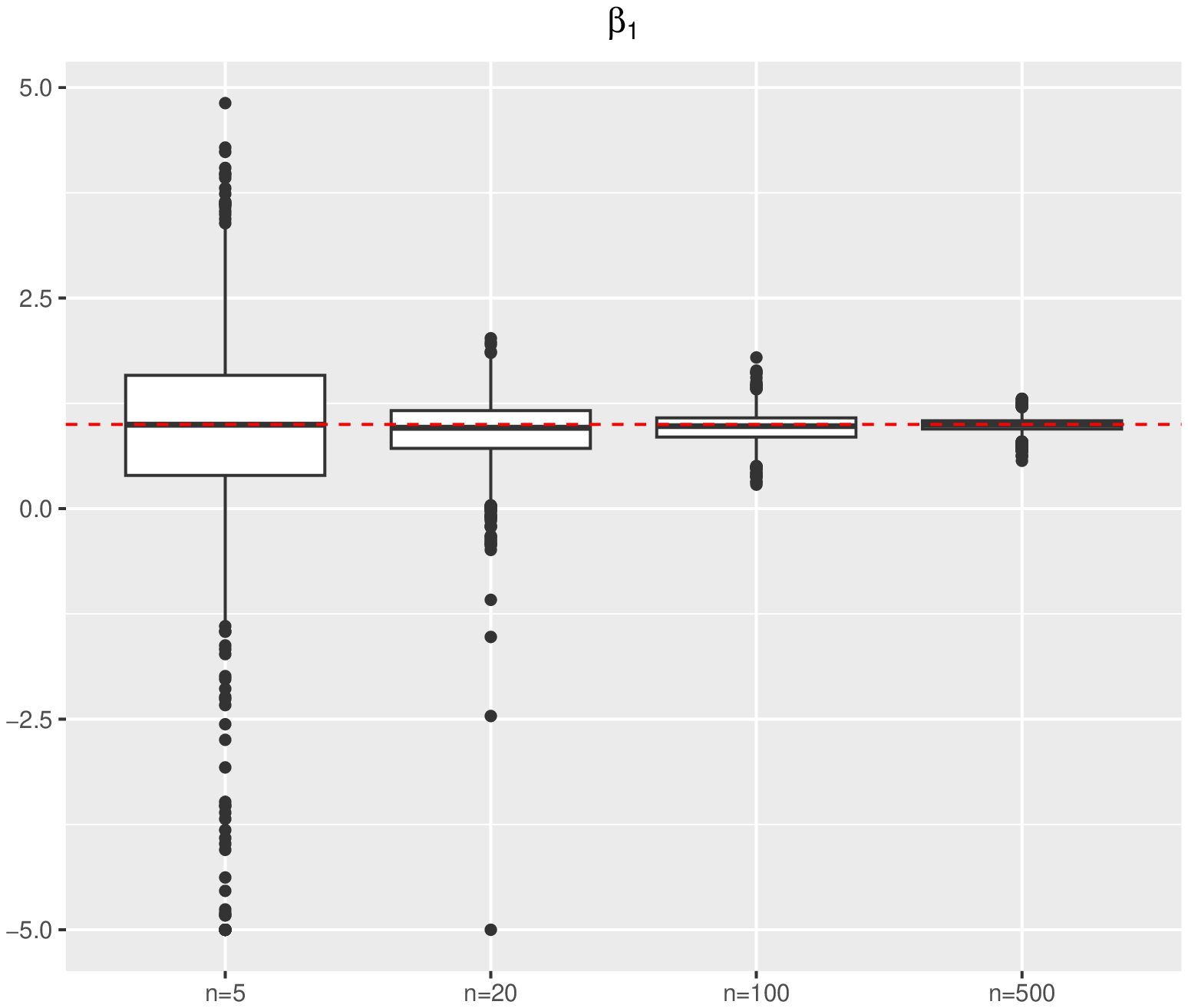}
     \includegraphics[scale=0.3]{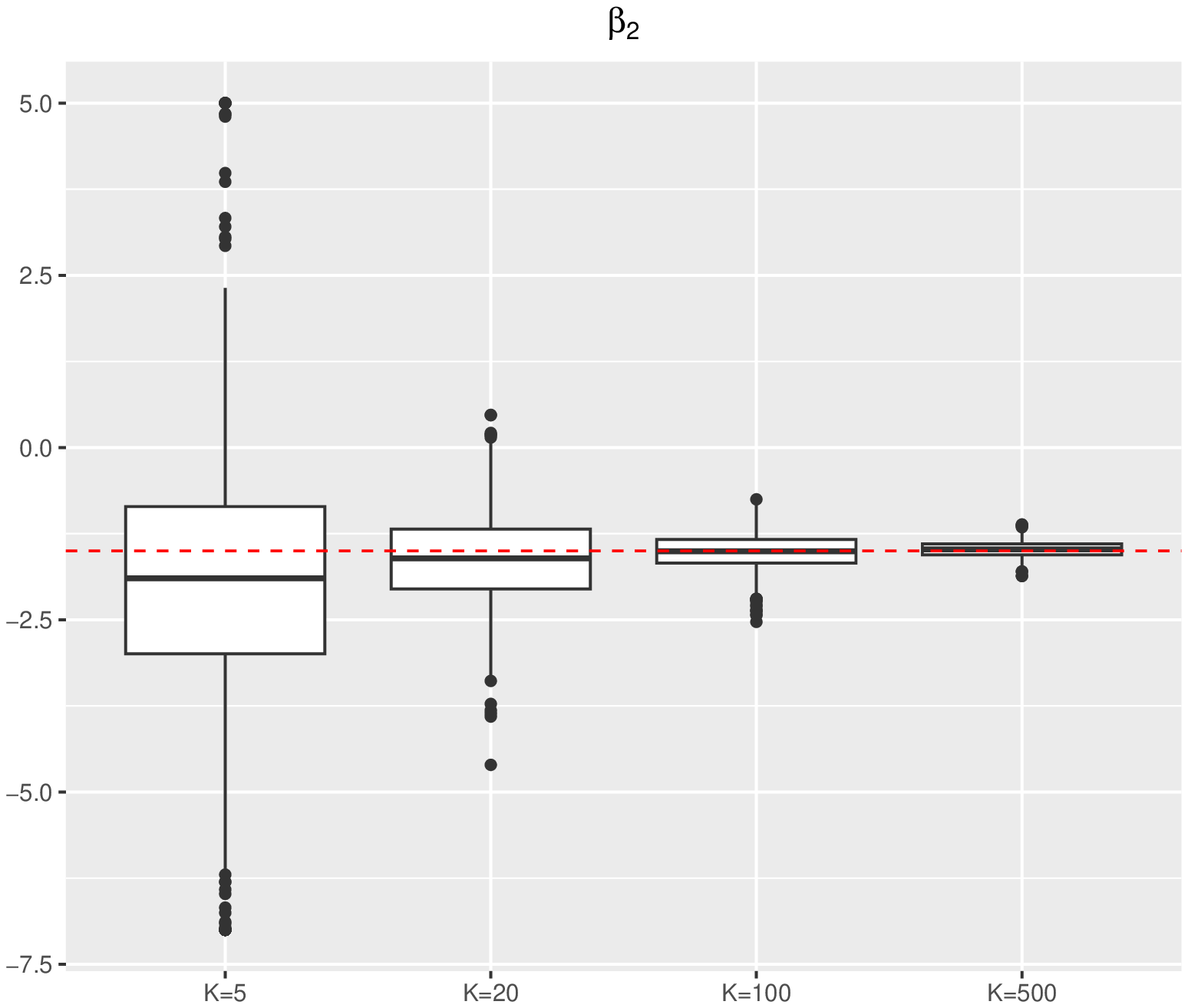}   \includegraphics[scale=0.3]{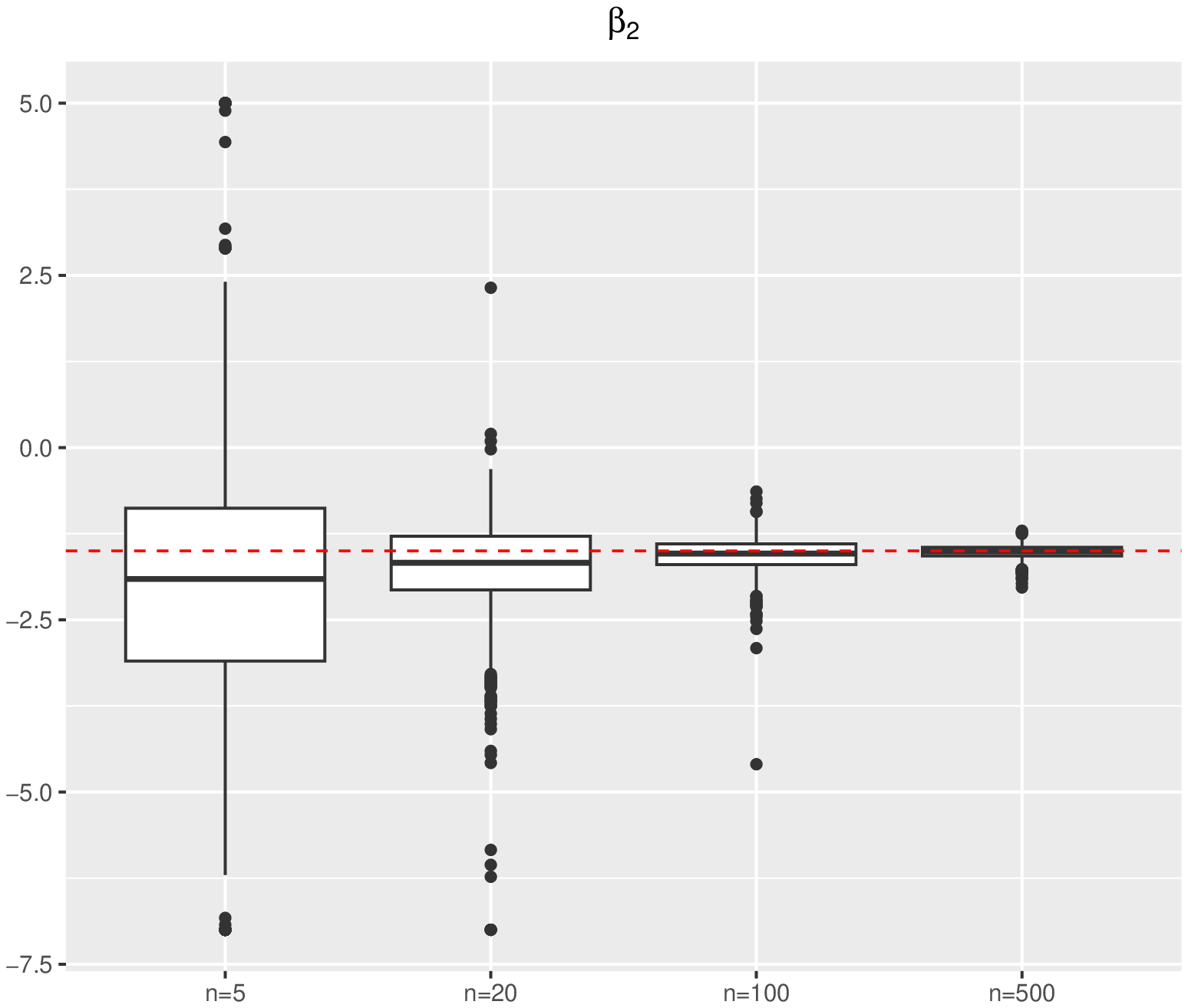}
     \includegraphics[scale=0.3]{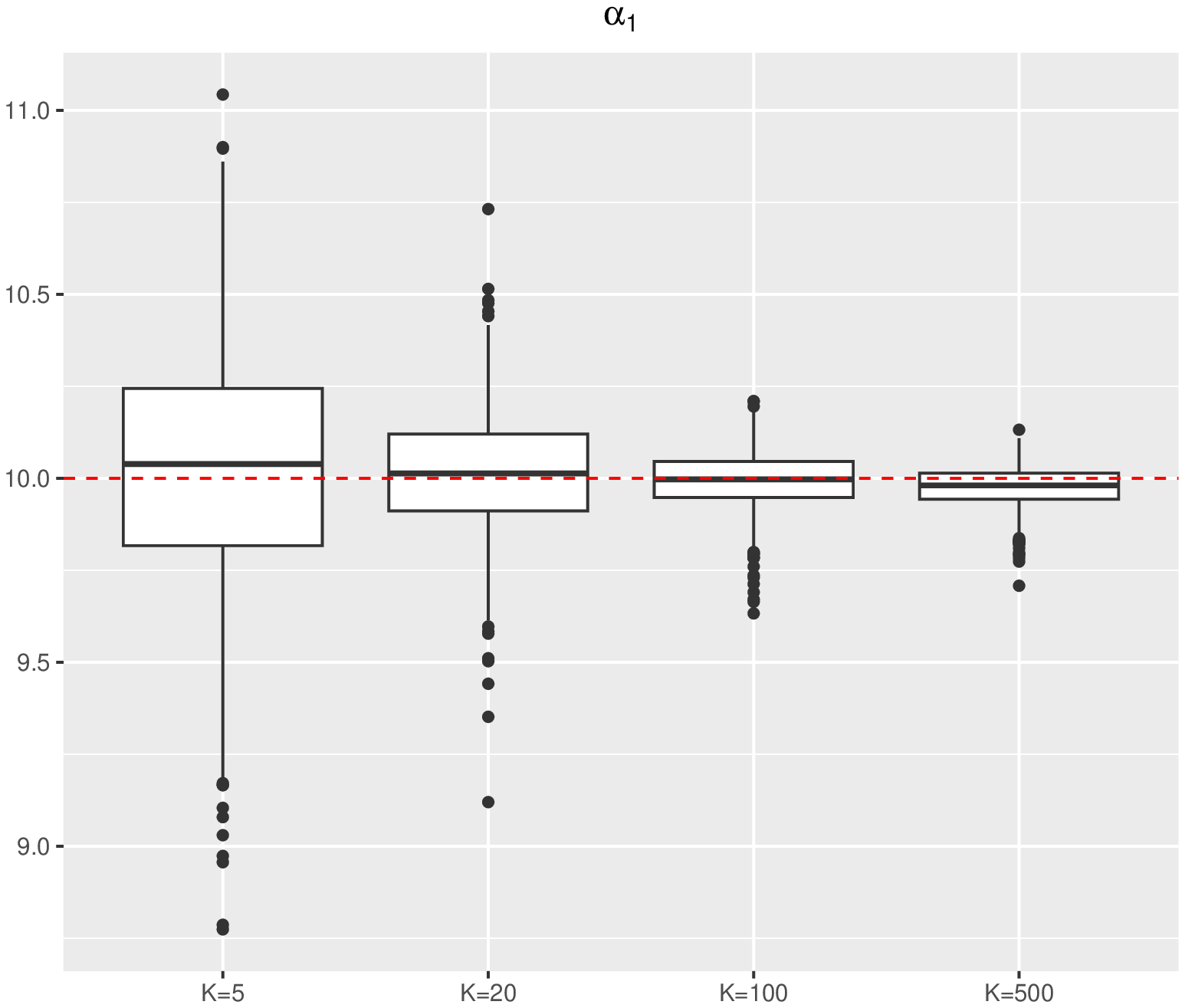}   \includegraphics[scale=0.3]{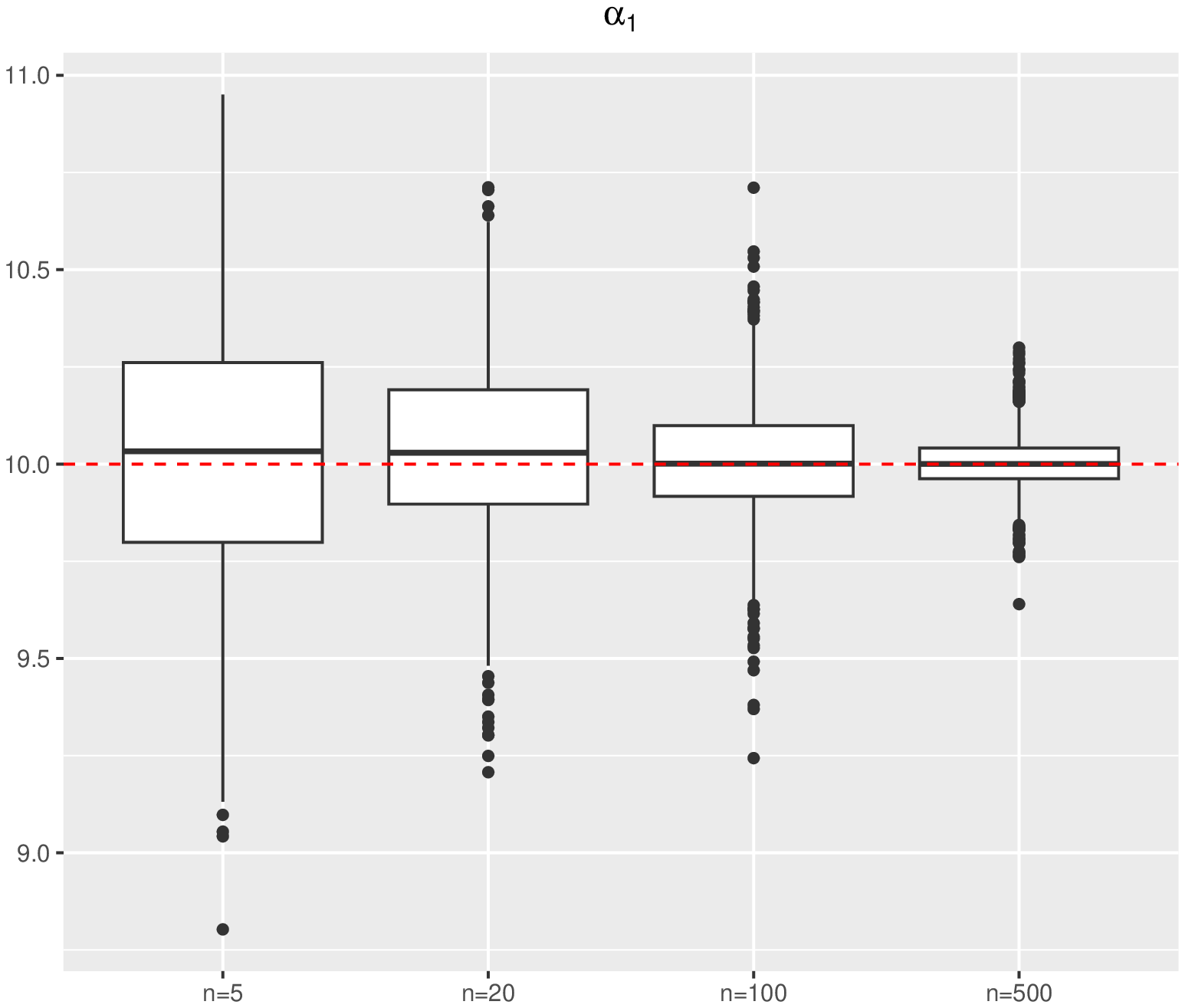}
     \includegraphics[scale=0.3]{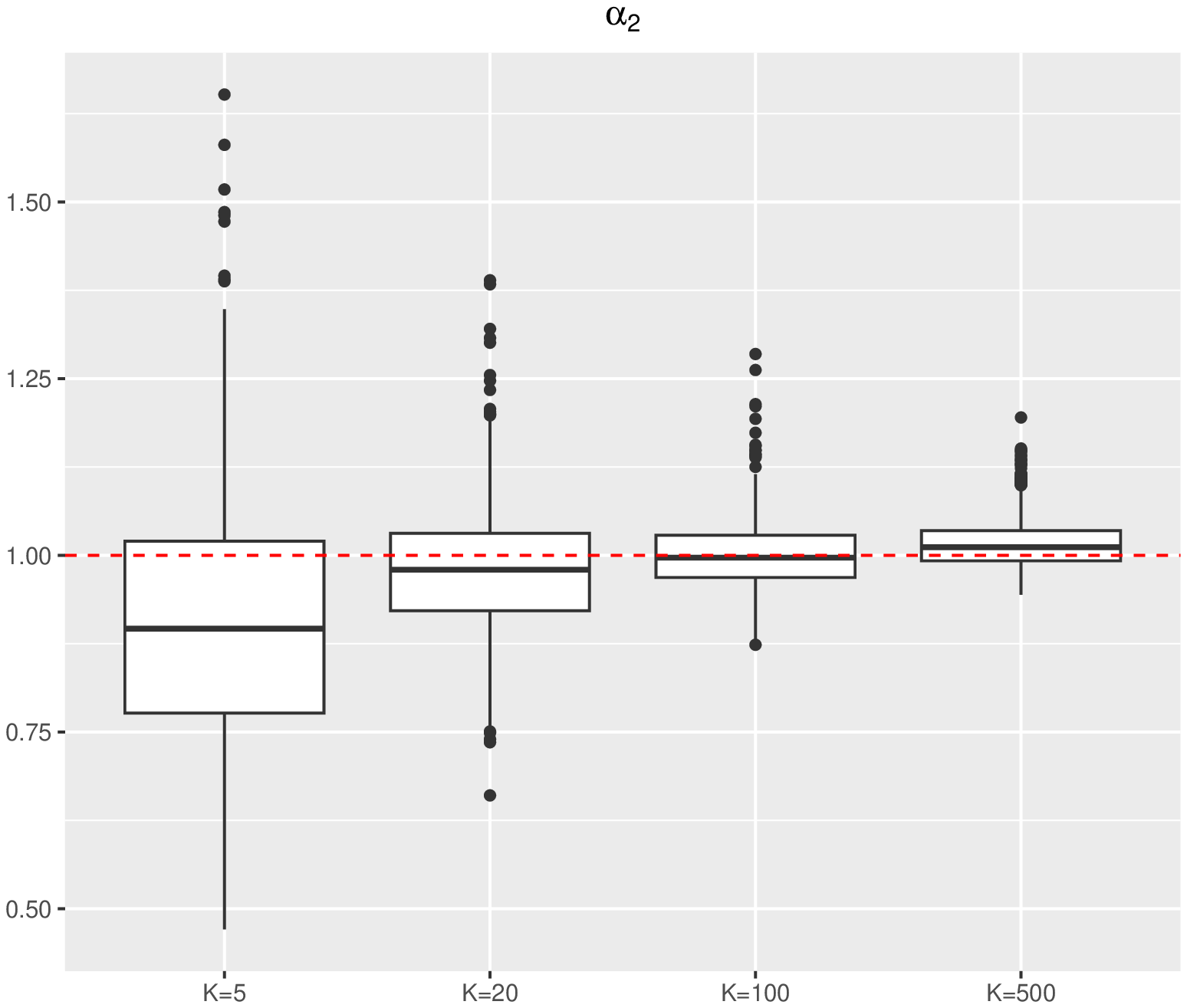}   \includegraphics[scale=0.3]{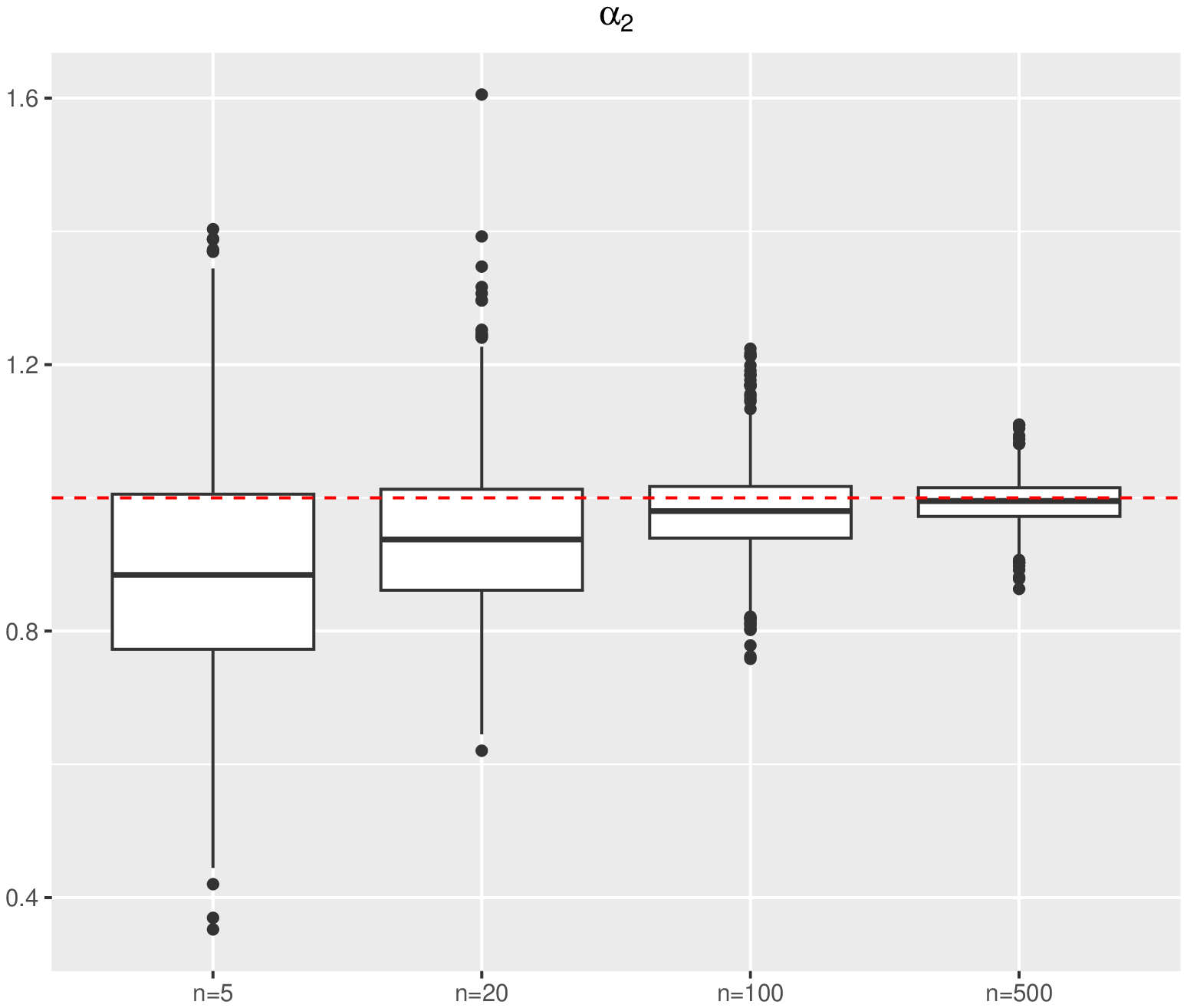}

    \caption{Exp2: Boxplots of parameter estimates for scenario 1 (left) and scenario 2 (right), based on 1000 samples from Clayton bivariate copulas with parameters $2e^{1 - 1.5 U_{ki}}$ and $N(10,1)$ margins.}
    \label{fig:exp2}
\end{figure}
For the third numerical experiment,
we used  Gumbel copulas  with parameters $1+e^{\phi(\bbbeta, \bX_{ki})}$, where $\phi(\bbbeta, \bX_{ki}) = 1-1.5 U_{ki}$, and the marginal distribution $G_{\bh(\balpha,\bX_{ki})}$ is  normal with  mean $5+5Z_{ki}$ and variance $1$, where $U_{ki}$ are iid $ \unif$, and $Z_{ki}$ are iid $ {\rm Exp}(1)$, $k\in\setK$, $i\in \{1,\ldots, n_k\}$.
 The results of this experiment for the two scenarios are displayed in  Figure \ref{fig:exp3}.
 \begin{figure}[ht!]
    \centering
    \includegraphics[scale=0.3]{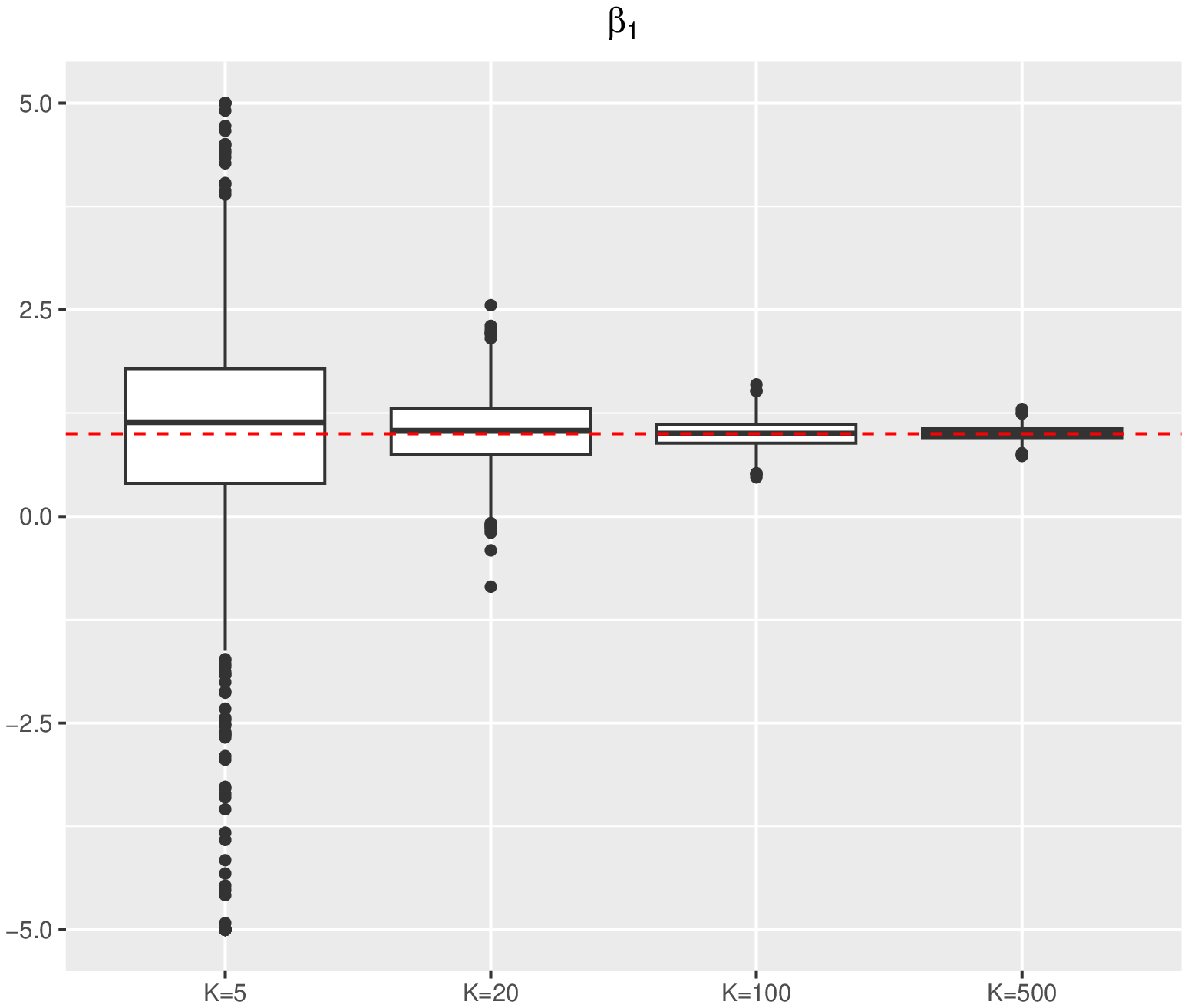}   \includegraphics[scale=0.3]{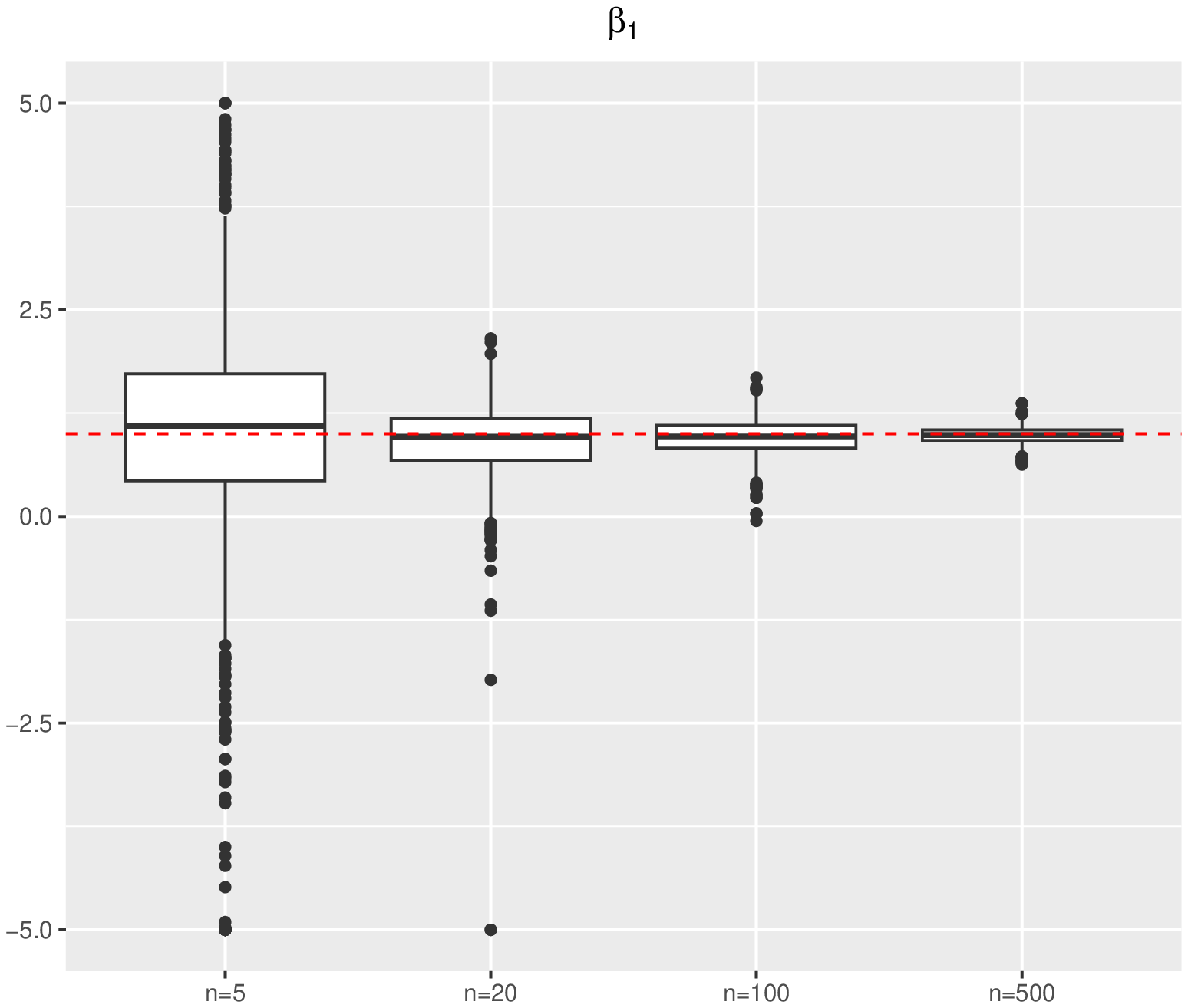}
    \includegraphics[scale=0.3]{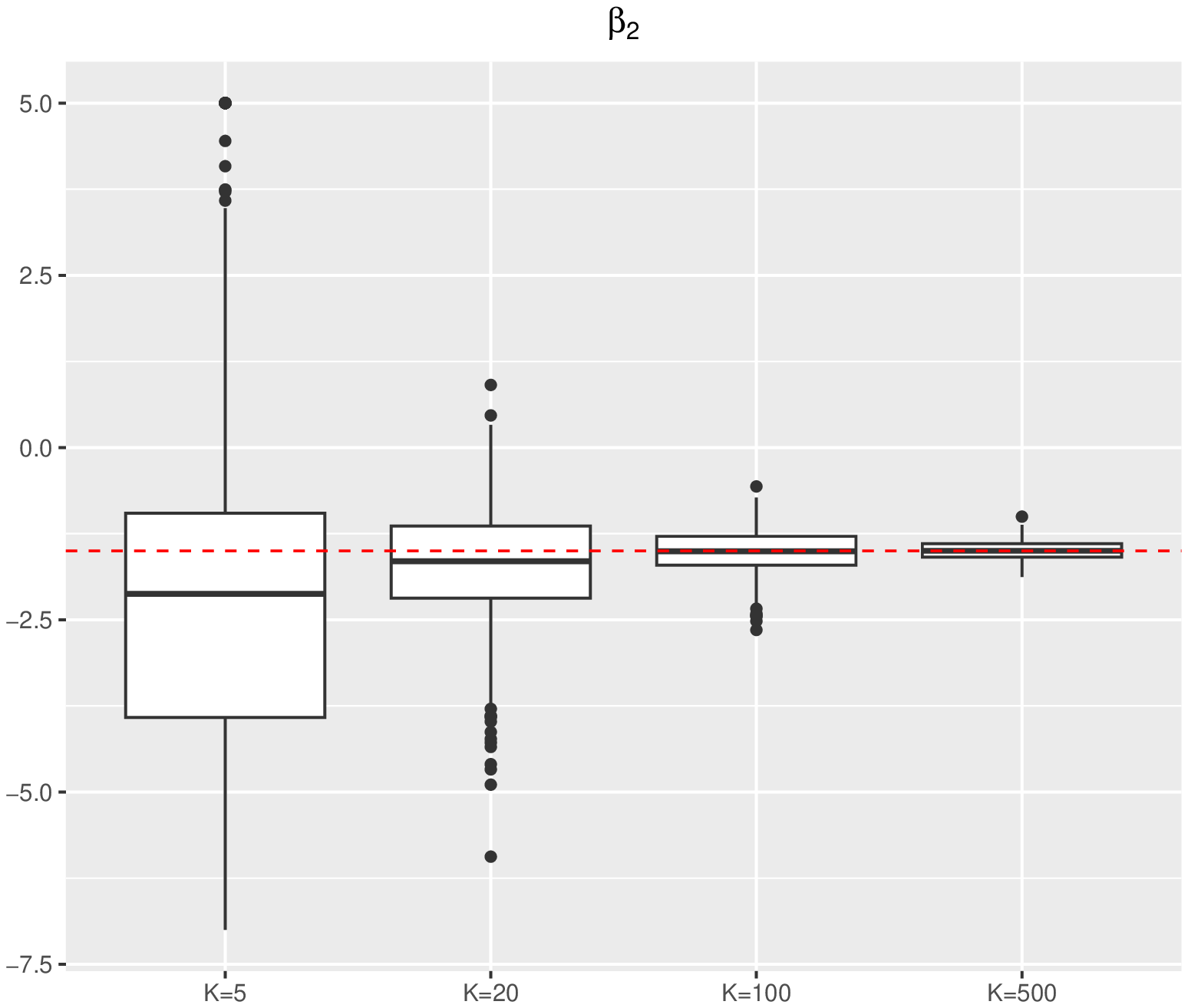}   \includegraphics[scale=0.3]{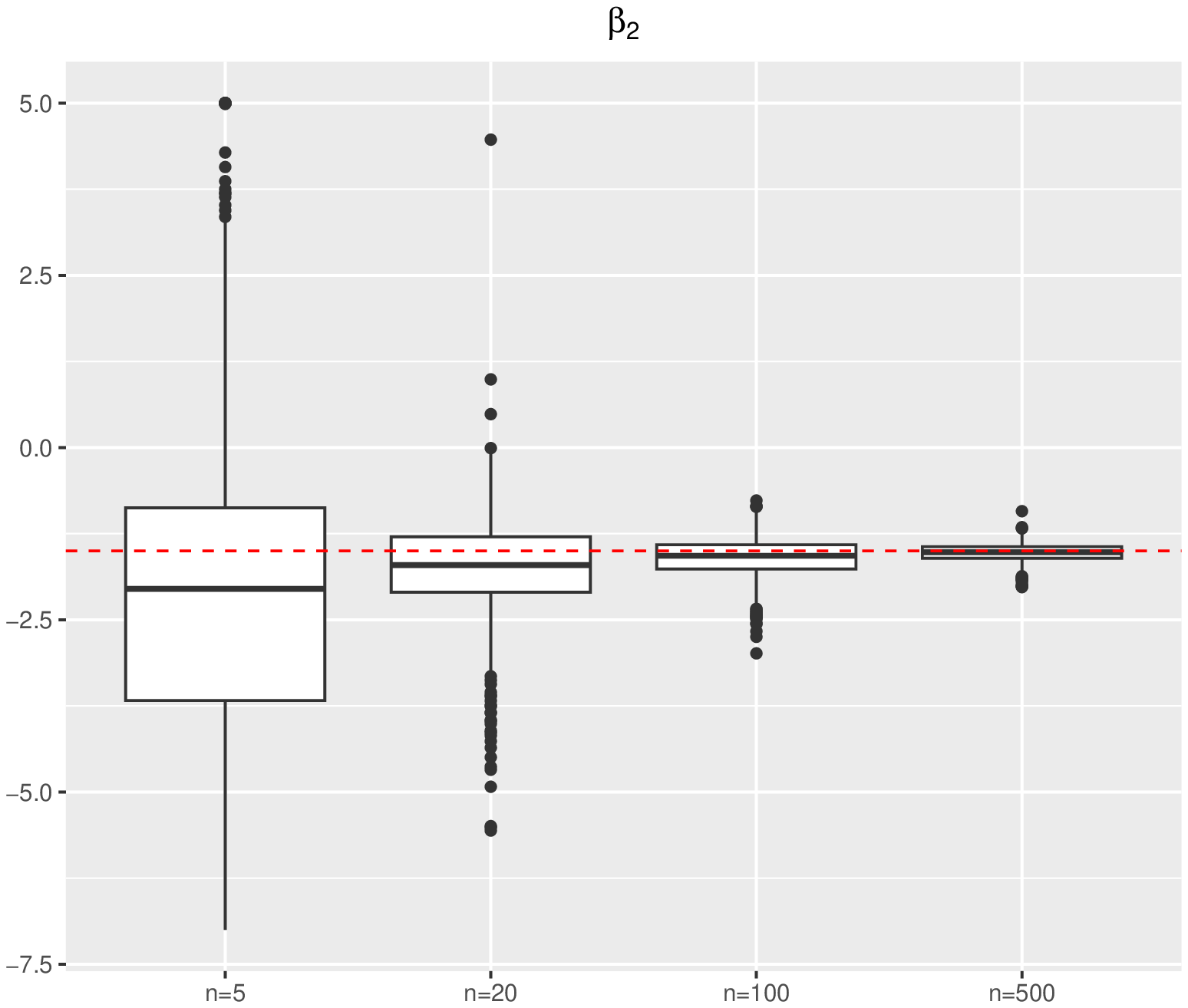}
    \includegraphics[scale=0.3]{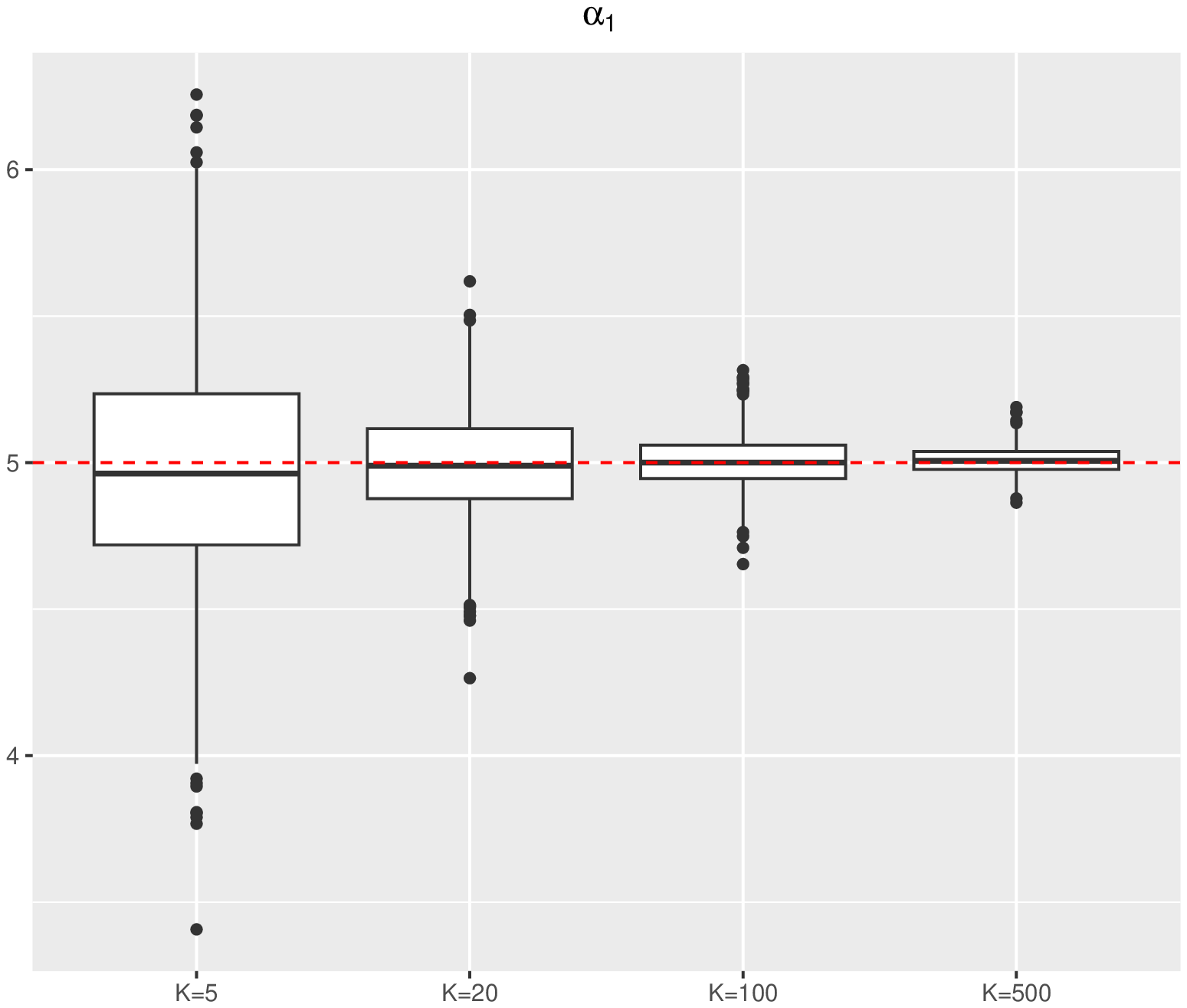}  \includegraphics[scale=0.3]{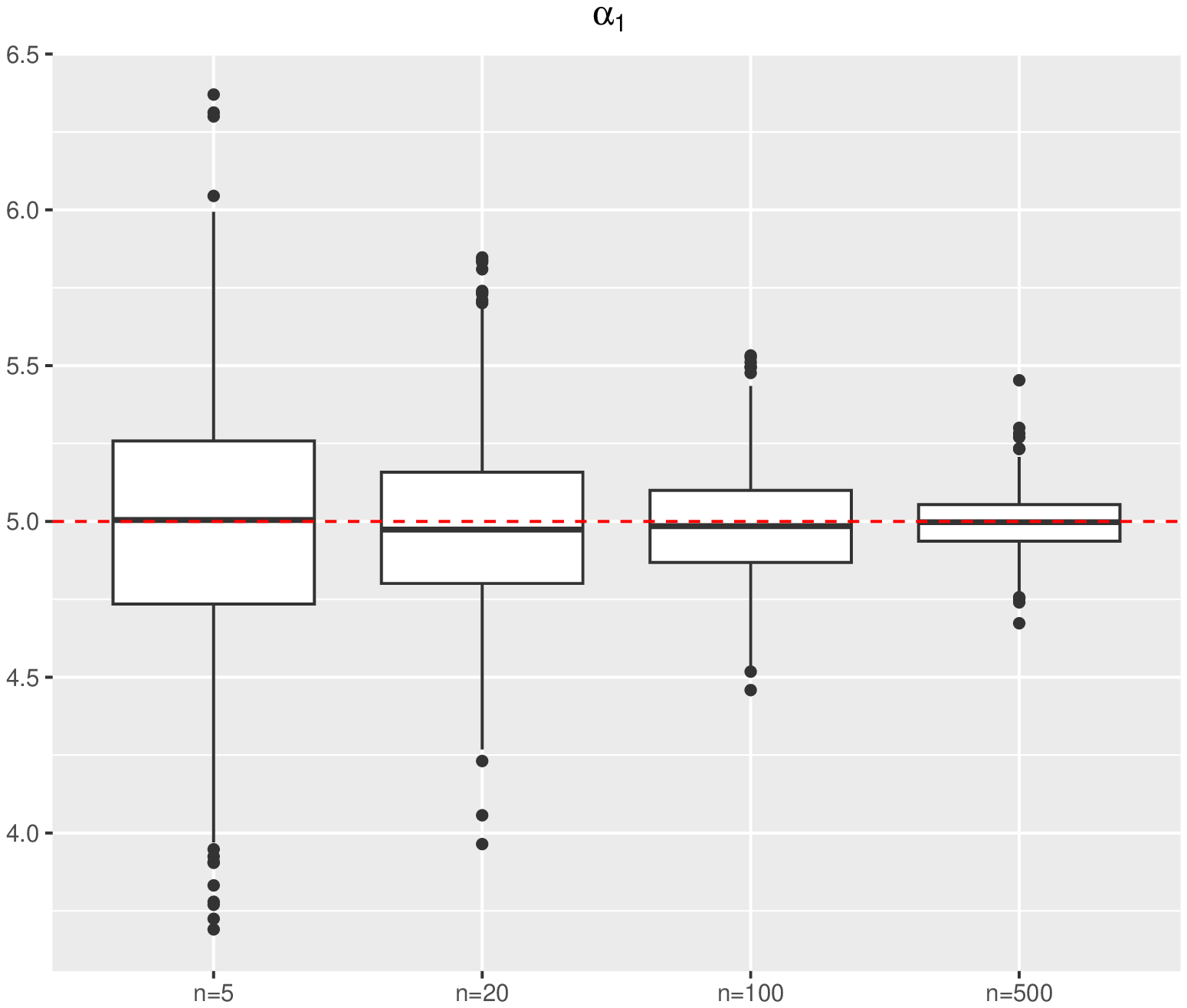}
    \includegraphics[scale=0.3]{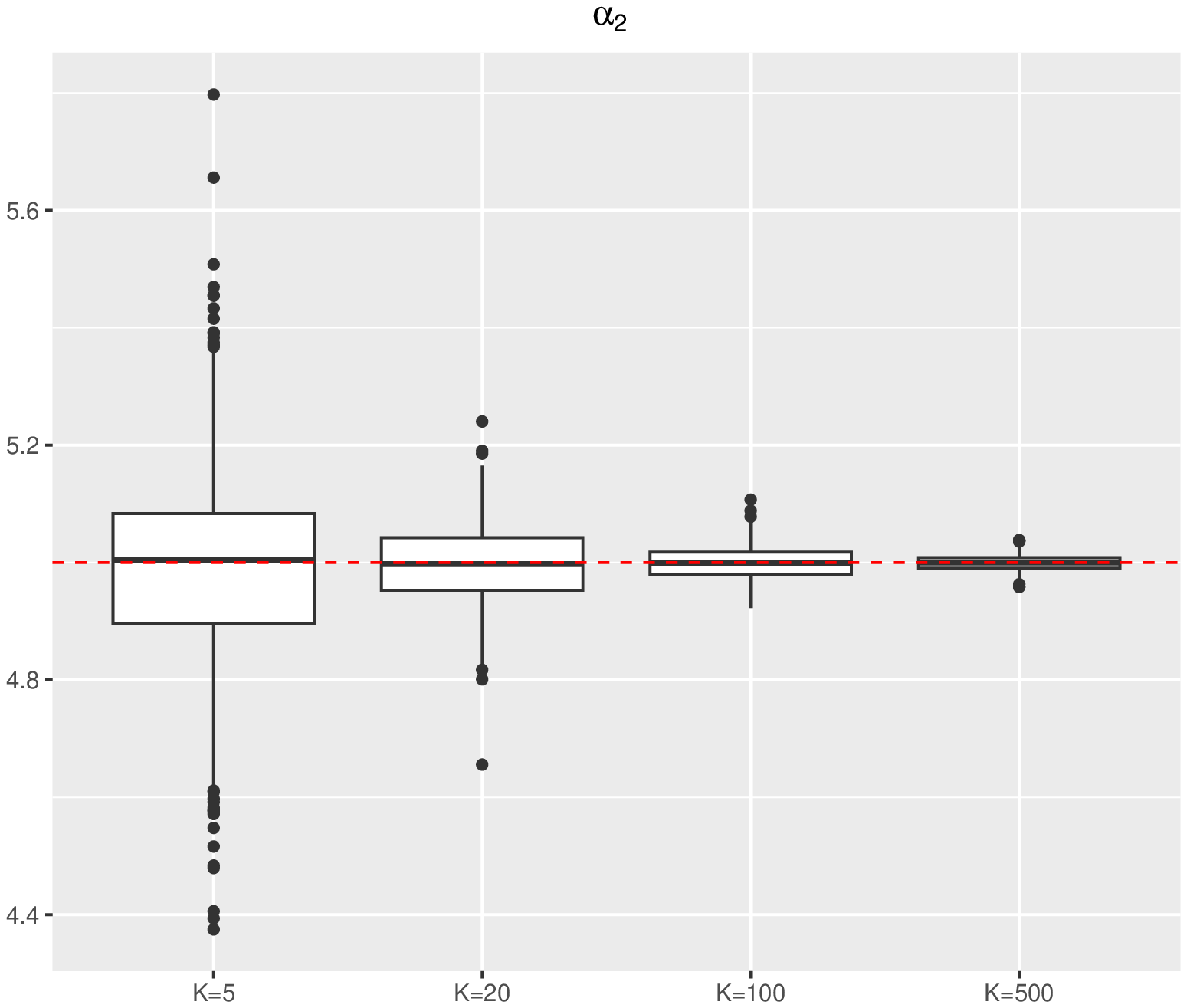}  \includegraphics[scale=0.3]{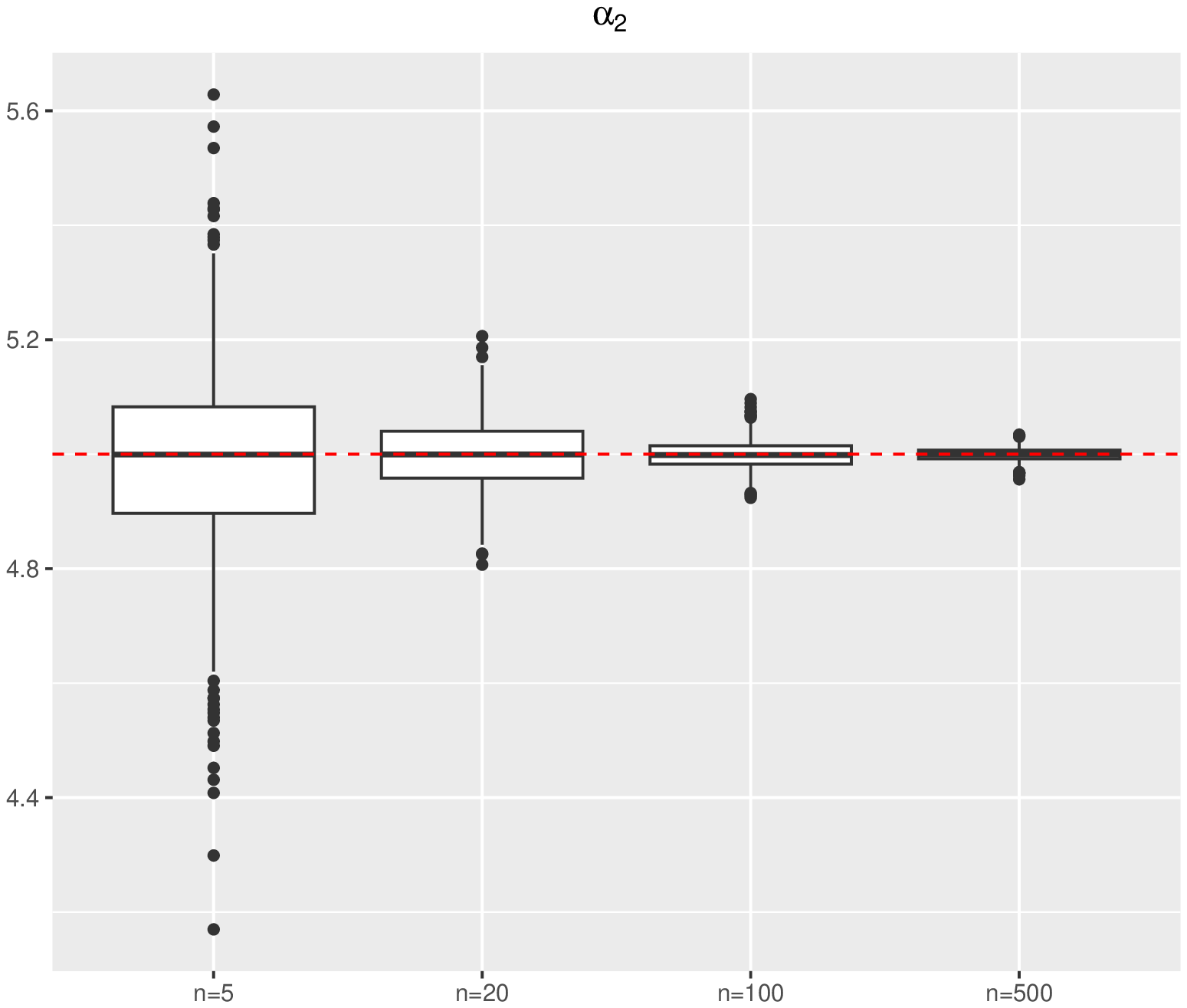}
    \includegraphics[scale=0.3]{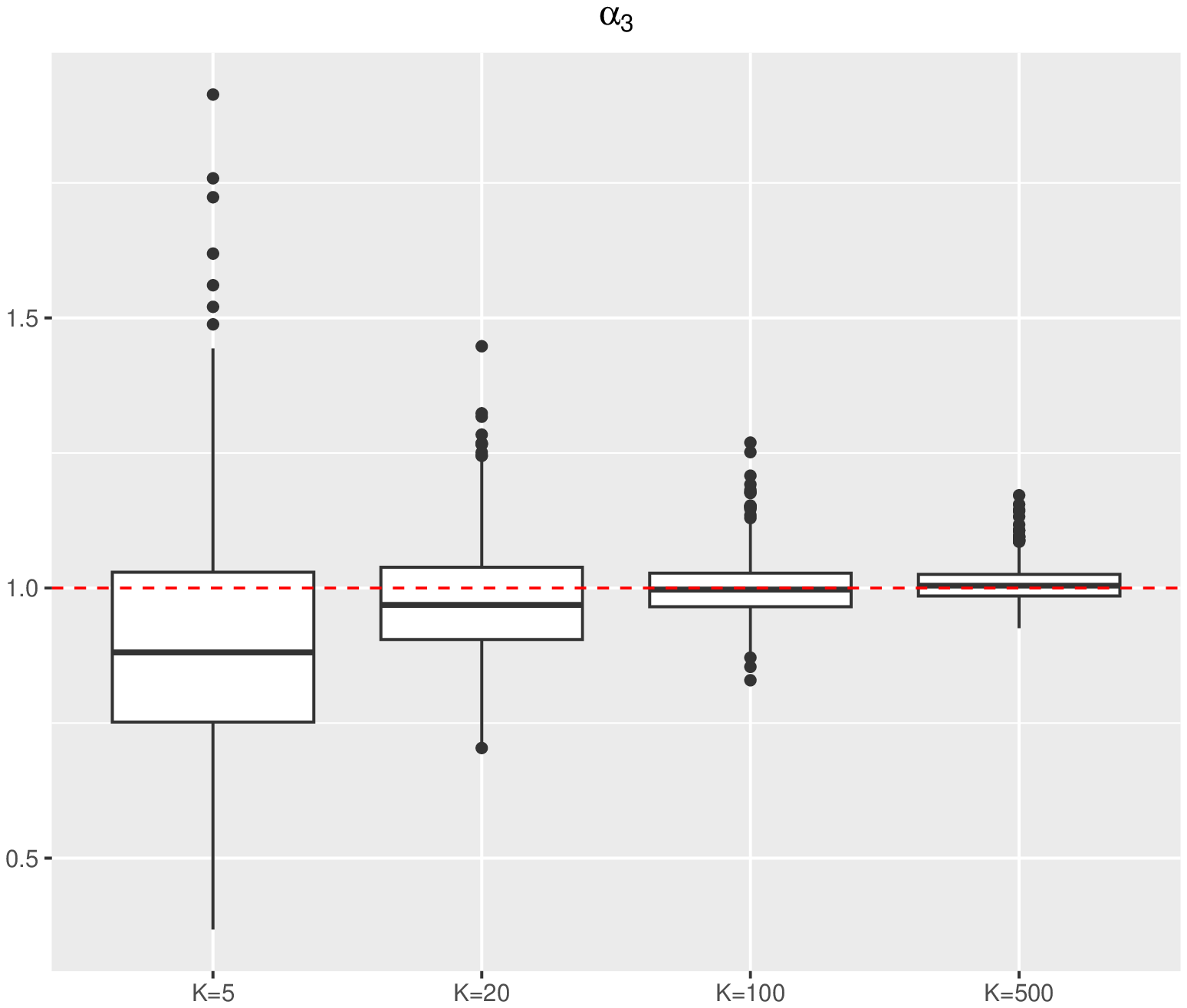}   \includegraphics[scale=0.3]{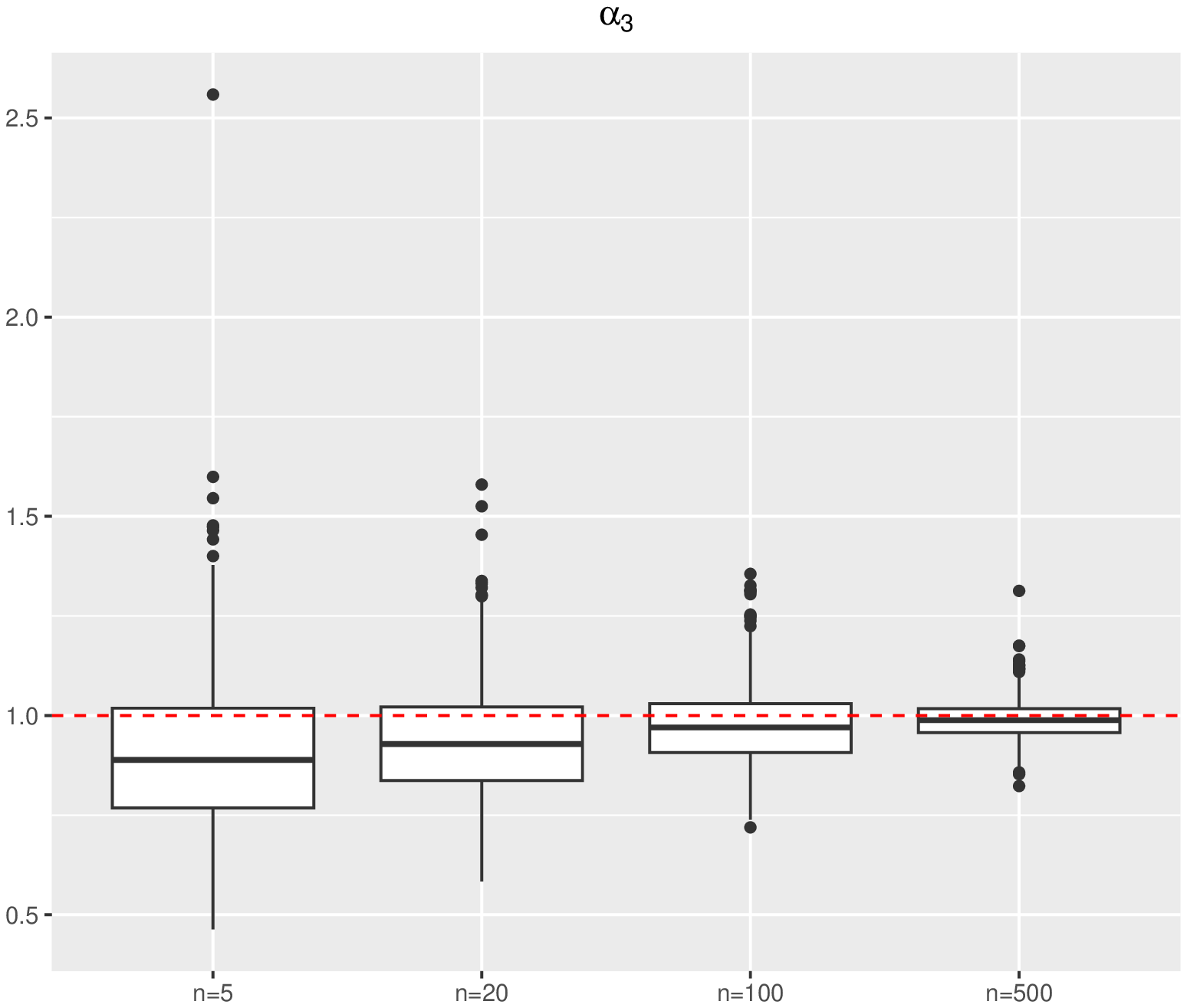}

    \caption{Exp3: Boxplots of parameter estimates for scenario 1 (left) and scenario 2 (right), based on 1000 samples from Gumbel bivariate copula with parameters $1+e^{\phi(\bbbeta,\bX_{ki})}$, where $\phi(\bbbeta,\bX_{ki}) =1 - 1.5U_{ki}$ and $N(5 + 5Z_{ki},1)$ margins.}
    \label{fig:exp3}
\end{figure}
\begin{figure}[ht!]
    \centering
    \includegraphics[scale=0.5]{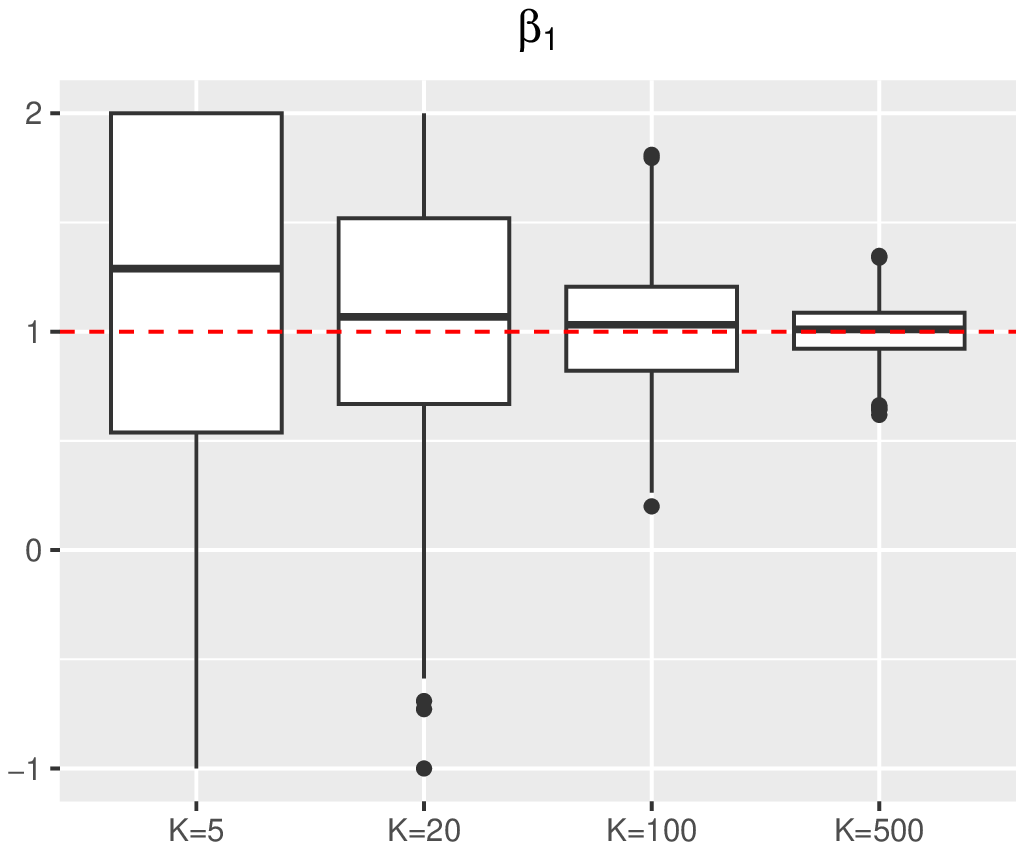}      \includegraphics[scale=0.5]{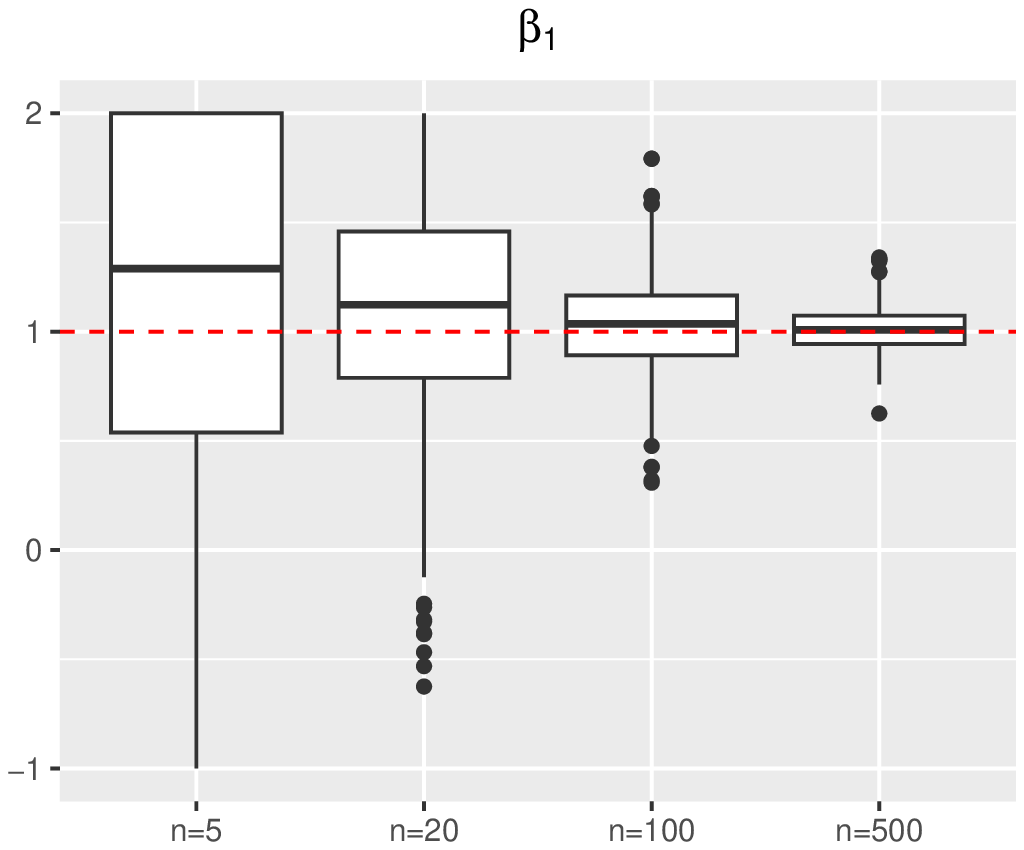}
    \includegraphics[scale=0.5]{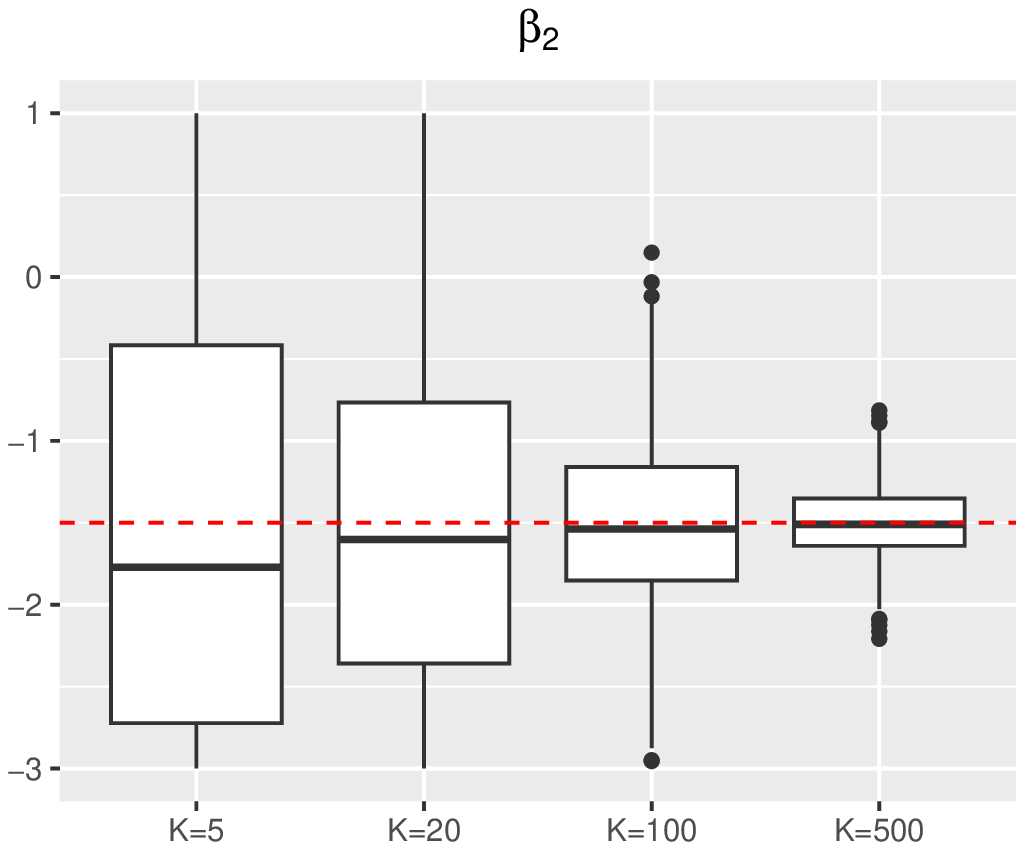}      \includegraphics[scale=0.5]{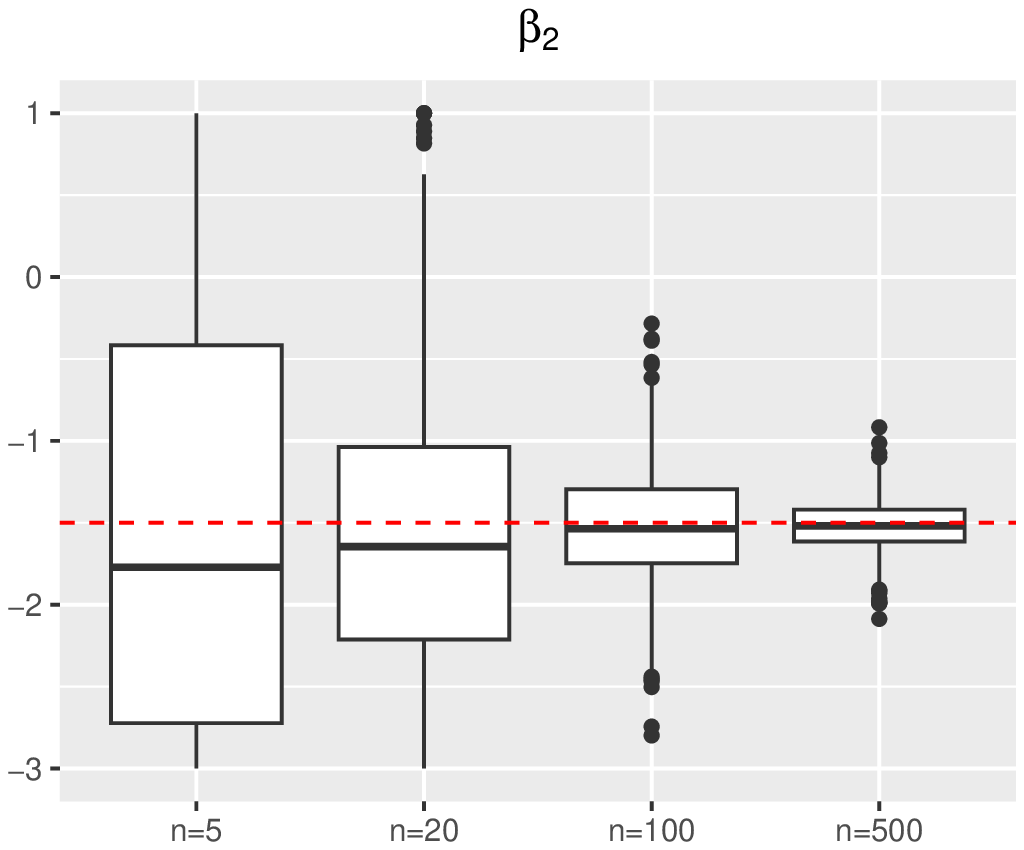}
    \includegraphics[scale=0.5]{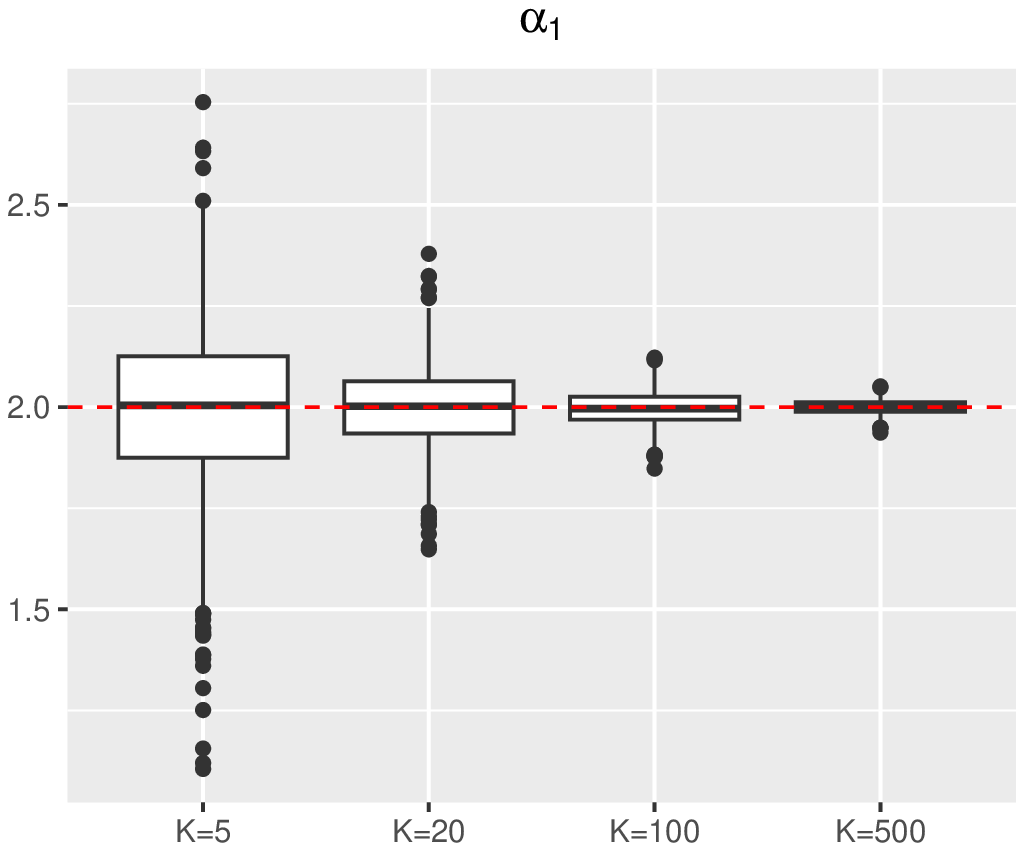}      \includegraphics[scale=0.5]{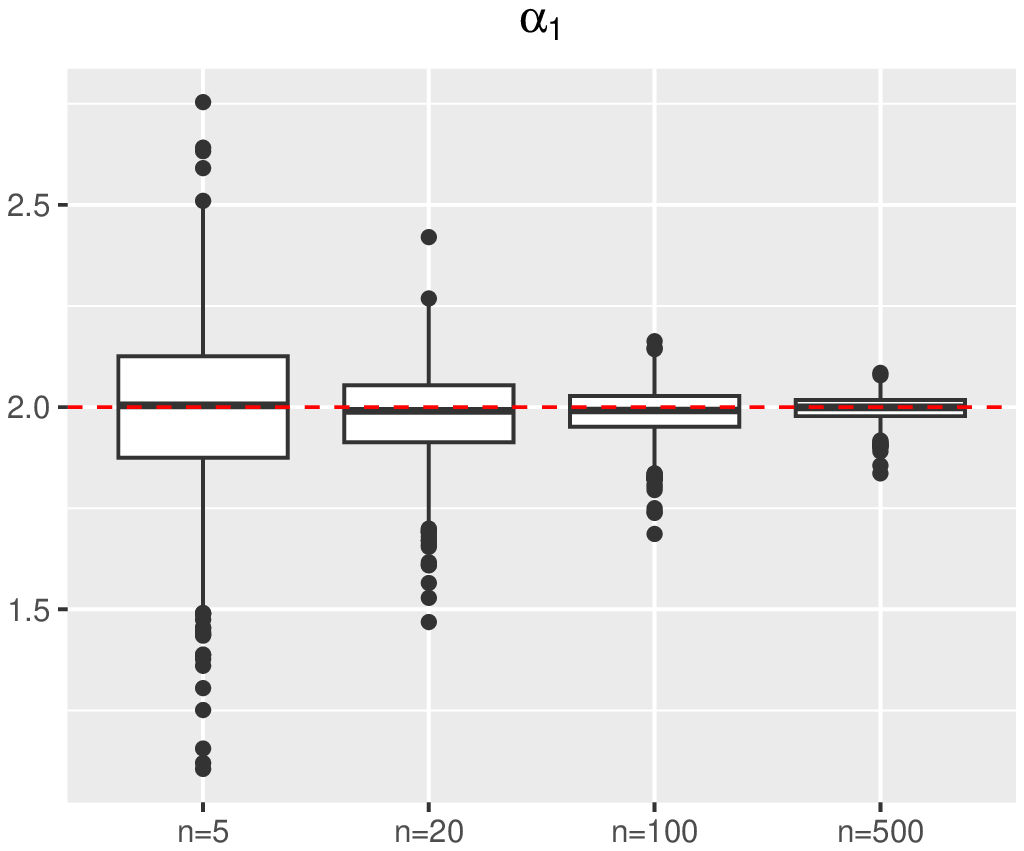}
    \includegraphics[scale=0.5]{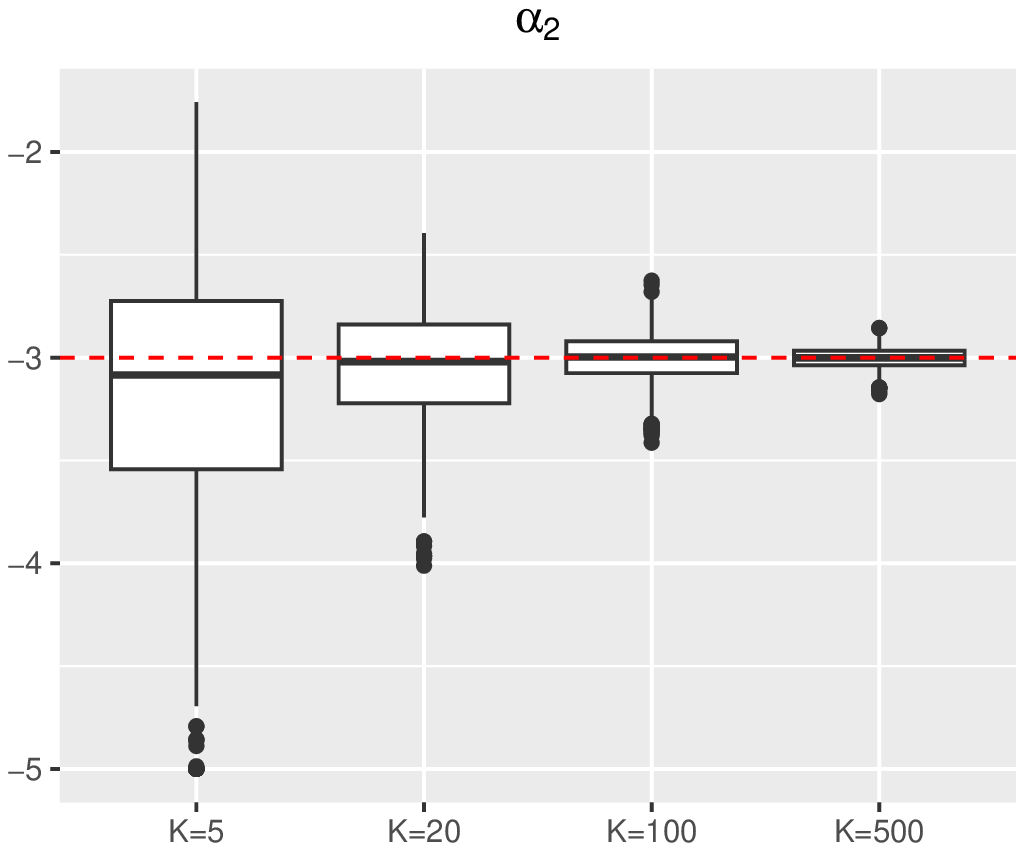}      \includegraphics[scale=0.5]{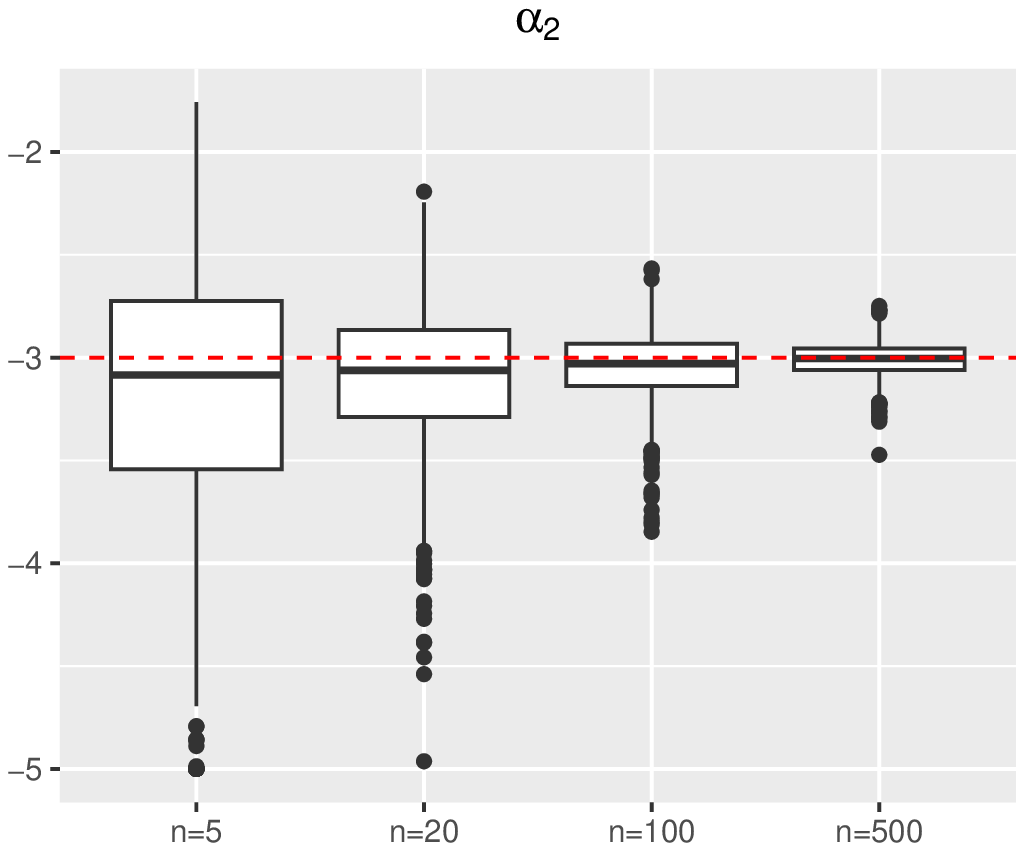}
\caption{Exp4: Boxplots of parameter estimates for scenario 1 (left) and scenario 2 (right), based on 1000 samples from Clayton bivariate copulas with parameters $2e^{\phi(\bbbeta,\bX_{ki})}$, where $\phi(\bbbeta,\bX_{ki})=1 - 1.5 U_{ki}$, and Poisson margins with rates $e^{h(\balpha, \bX_{ki})}$, where $h(\balpha, \bX_{ki}) = 2 - 3Z_{ki}$, $k\in\setK$, $i\in \{1,\ldots, n_k\}$.}
    \label{fig:exp4}
\end{figure}
\begin{figure}[ht!]
    \centering
    \includegraphics[scale=0.5]{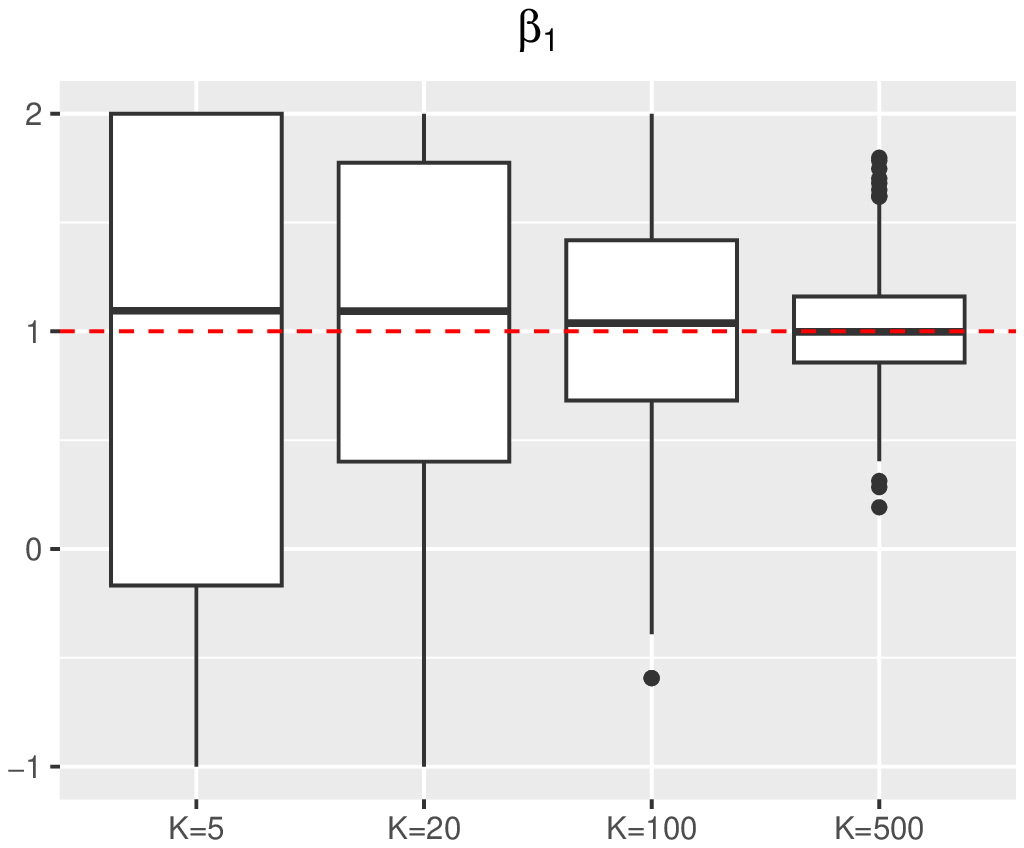}     \includegraphics[scale=0.5]{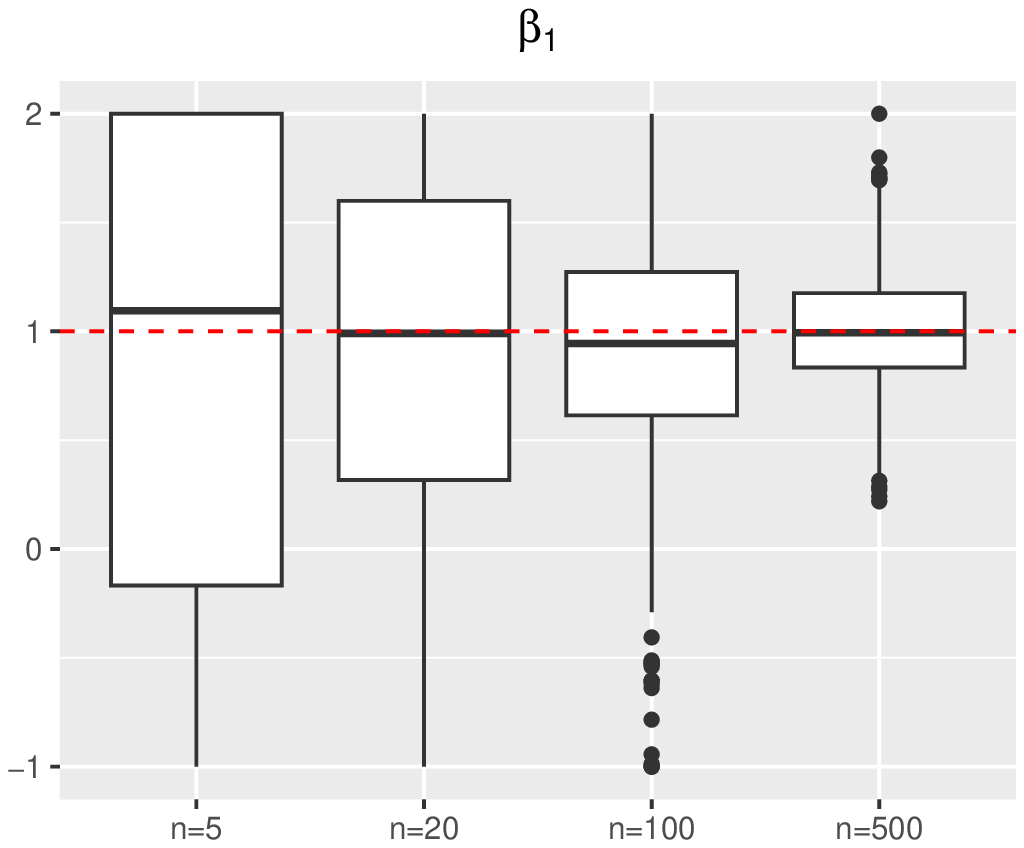}
    \includegraphics[scale=0.5]{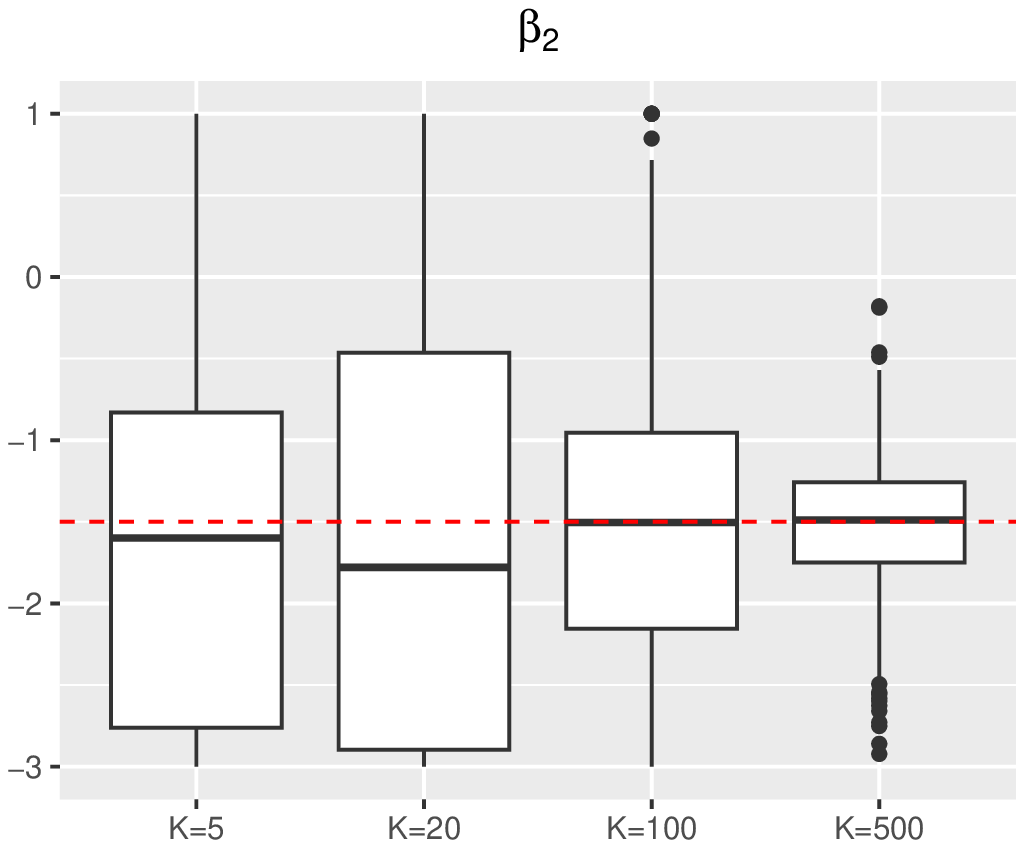}     \includegraphics[scale=0.5]{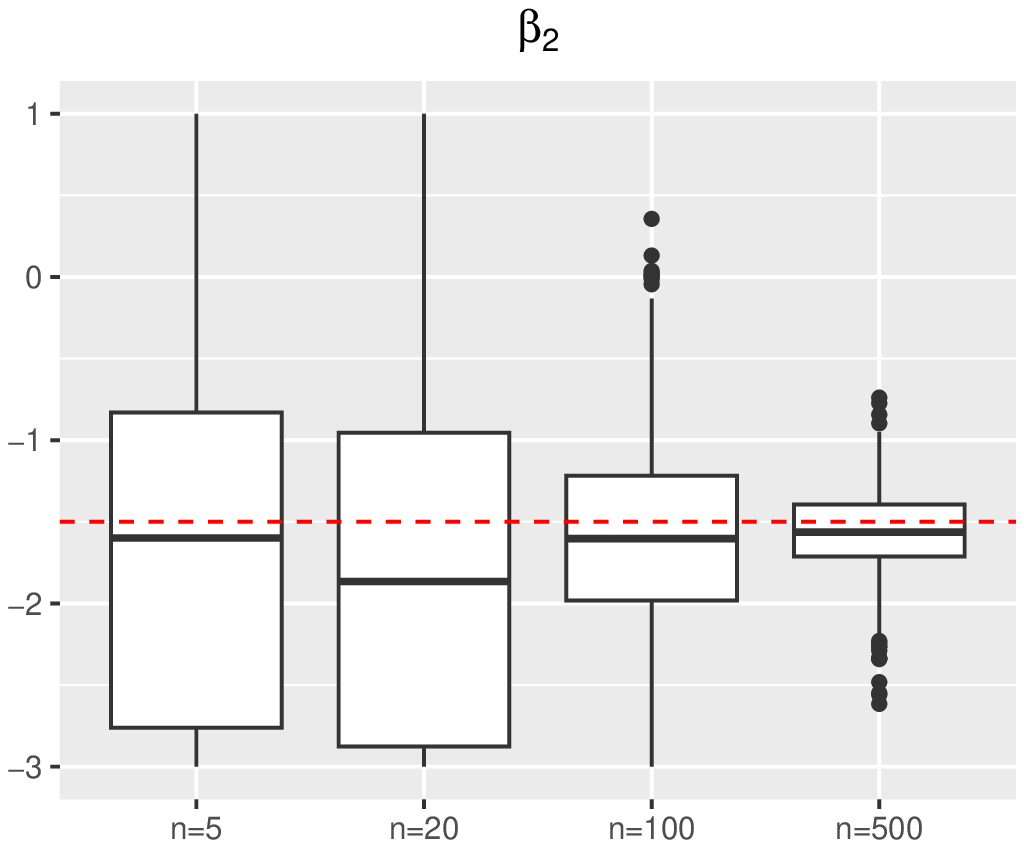}
    \includegraphics[scale=0.5]{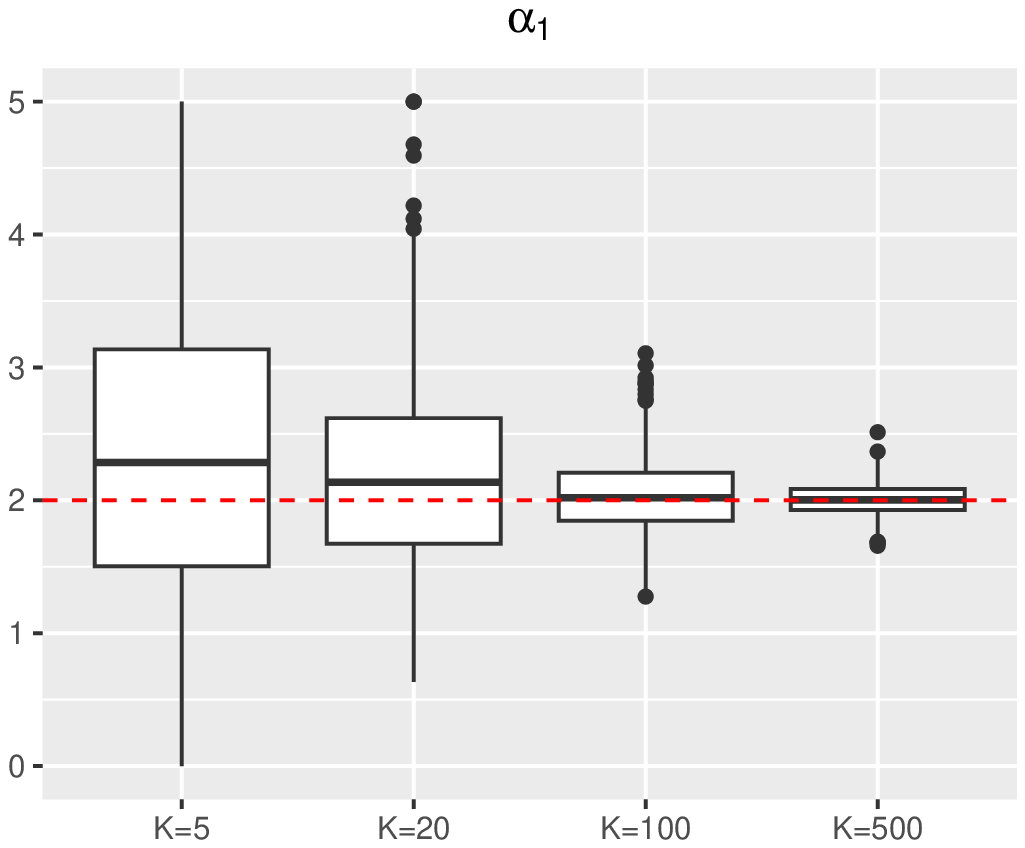}     \includegraphics[scale=0.5]{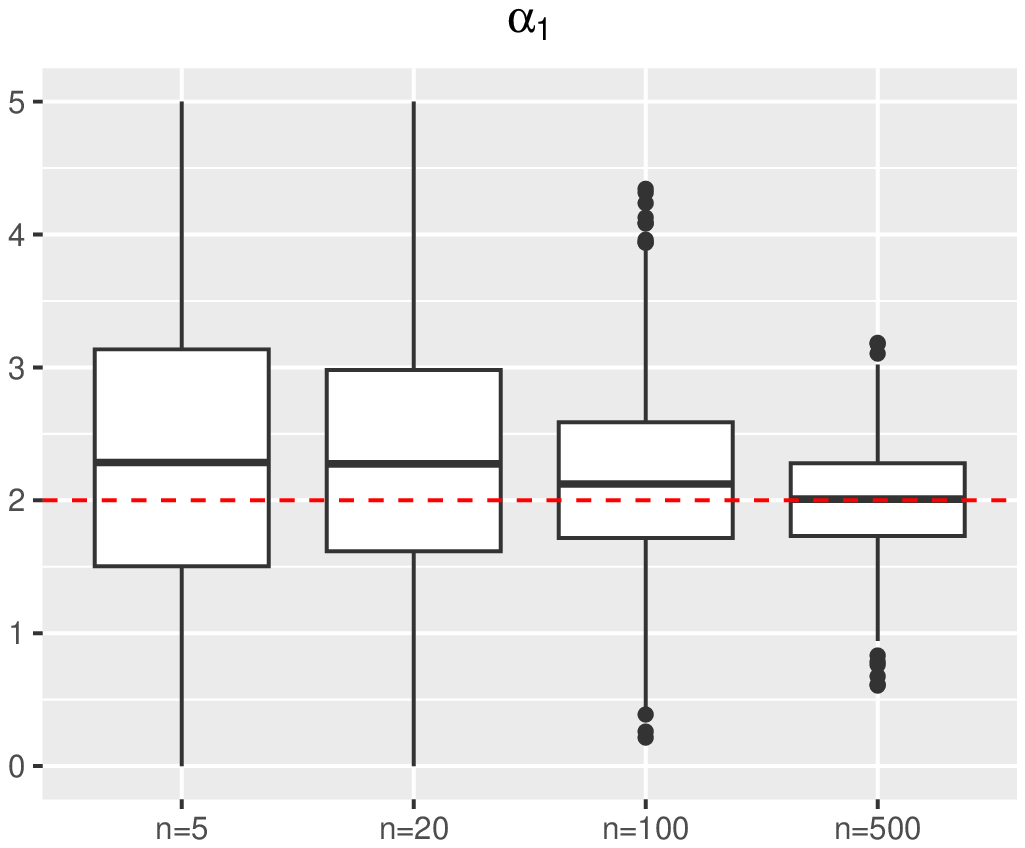}
    \includegraphics[scale=0.5]{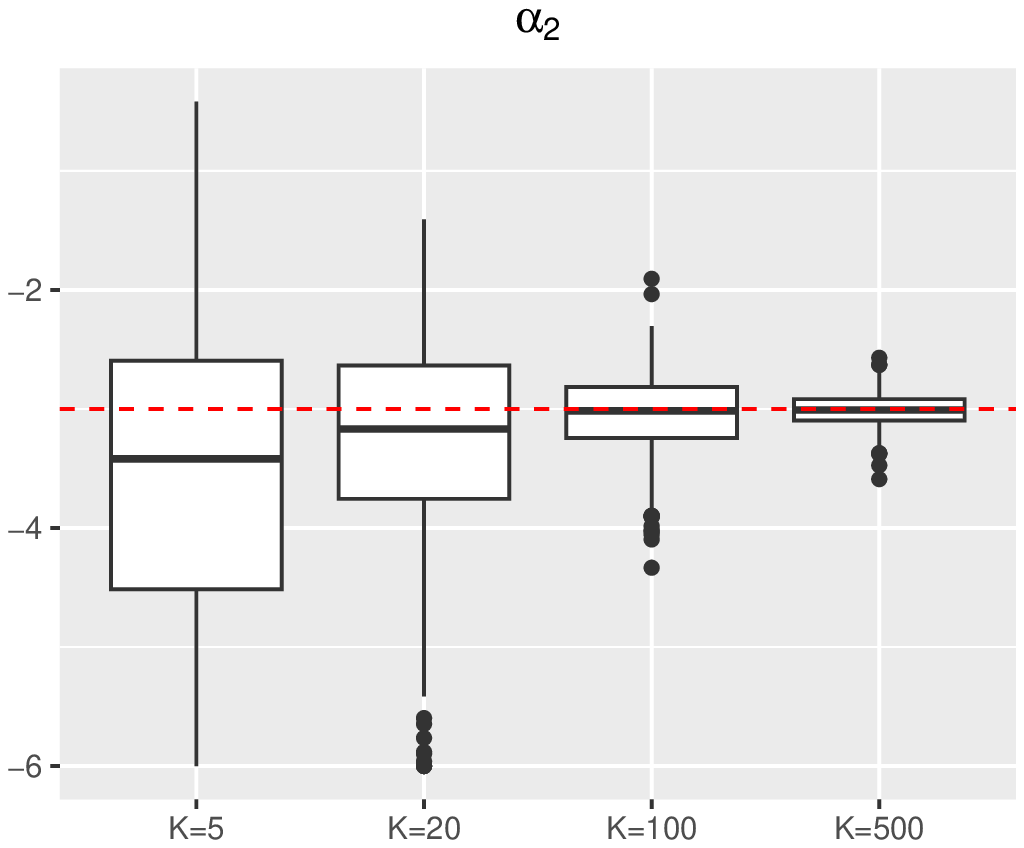}     \includegraphics[scale=0.5]{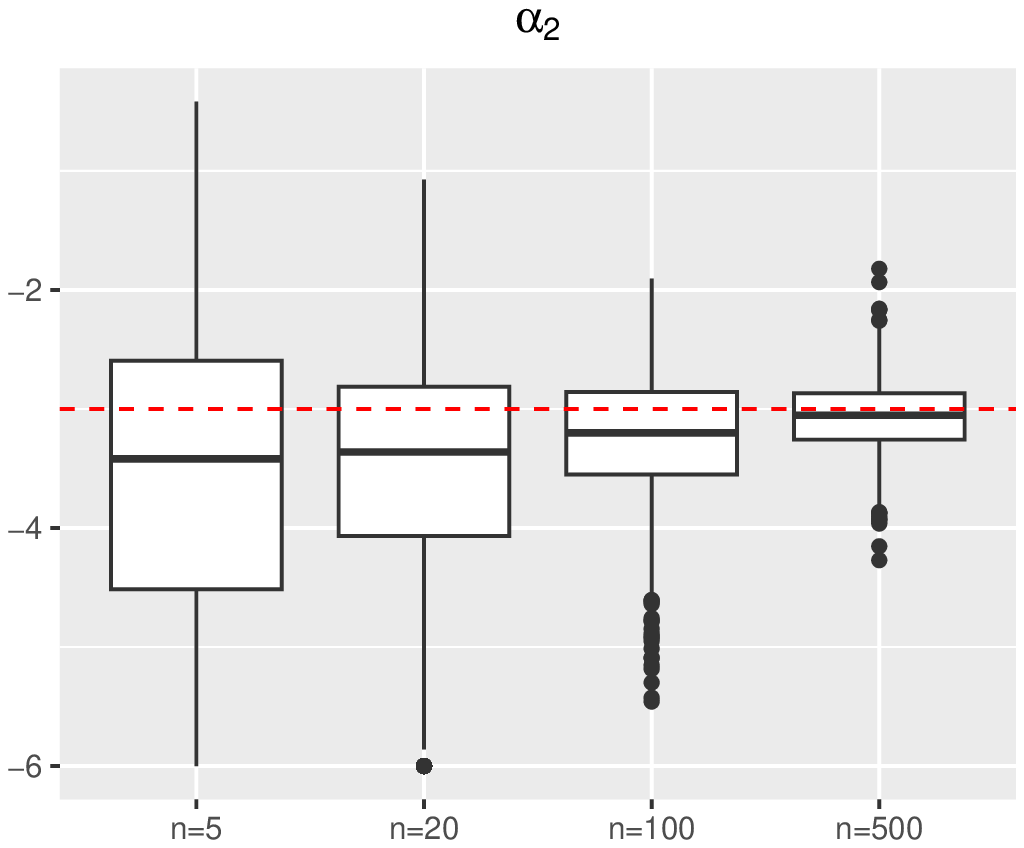}

    \caption{Exp5: Boxplots of parameter estimates for scenario 1 (left) and scenario 2 (right), based on 1000 samples from Clayton bivariate copulas with parameters $2e^{\phi(\bbbeta,\bX_{ki})}$, where $\phi(\bbbeta,\bX_{ki})=1 - 1.5 U_{ki}$,  and Bernoulli margins with parameters $
    \left\{1+e^{-h(\balpha,\bX_{ki})}\right\}^{-1}$, where $h(\balpha,\bX_{ki})=2-3Z_{ki}$, $k\in\setK$, $i\in \{1,\ldots, n_k\}$.}
    \label{fig:exp5}
\end{figure}
For the fourth and fifth numerical experiments, we used the same copula families as in the second experiment, but we considered discrete margins: Poisson margins with rate parameters $e^{2-3Z_{ki}}$, and Bernoulli margins with parameters
$\left\{1+e^{-(2-3Z_{ki})}\right\}^{-1}$, where  $Z_{ki}$ are iid $ {\rm Exp}(1)$, $k\in\setK$, $i\in \{1,\ldots, n_k\}$.
 The results of these experiments are displayed in  Figures \ref{fig:exp4}--\ref{fig:exp5}.
We also included three numerical experiments with Frank copula and four covariates. For the sixth experiment, we used Frank copula with parameters $\phi(\bbbeta,\bX_{ki}) = 2+ 8 Z_{1,ki} + 3 U_{2,ki}$ and normal margins with a variance of $1$ and means $h(\balpha,\bX_{ki})= 5+5Z_{3,ki} +3U_{4ki}$, where $Z_{1,ki}, Z_{3ki}$ are iid $N(0,1)$ and $U_{2,ki},U_{4,ki}$ are iid ${\rm U}(-1,1)$, $k\in\setK$, $i\in \{1,\ldots, n_k\}$.  The results are displayed in Figures \ref{fig:franknor-cop}-\ref{fig:franknor-mar}. For the seventh and eighth experiments, we used Frank copula with parameters $\phi(\bbbeta,\bX_{ki}) = 2+ 8 U_{1,ki} + 3 U_{2,ki}$, where $U_{1,ki}$ are iid $\unif$ and $U_{2,ki}$ are iid ${\rm U}(-1,1)$, $k\in\setK$, $i\in \{1,\ldots, n_k\}$. For the seventh experiment, Poisson margins were employed with rates $e^{3-U_{3,ki}-0.5U_{4,ki}}$, where $U_{3,ki}$ are iid $\unif$ and $U_{4,ki}$ are iid ${\rm U}(-1,1)$, $k\in\setK$, $i\in \{1,\ldots, n_k\}$, while for the eighth experiment, we used Bernoulli margins with parameters $\left\{1+e^{-(1.5-2U_{3,ki}-0.5U_{4,ki})}\right\}^{-1}$, where  $U_{3,ki},U_{4,ki} $ are iid $\unif$, $k\in\setK$, $i\in \{1,\ldots, n_k\}$. The results are displayed in Figures \ref{fig:frankpoi-cop}-\ref{fig:frankber-mar}.

\begin{figure}[ht!]
    \centering
 \includegraphics[scale=0.3]{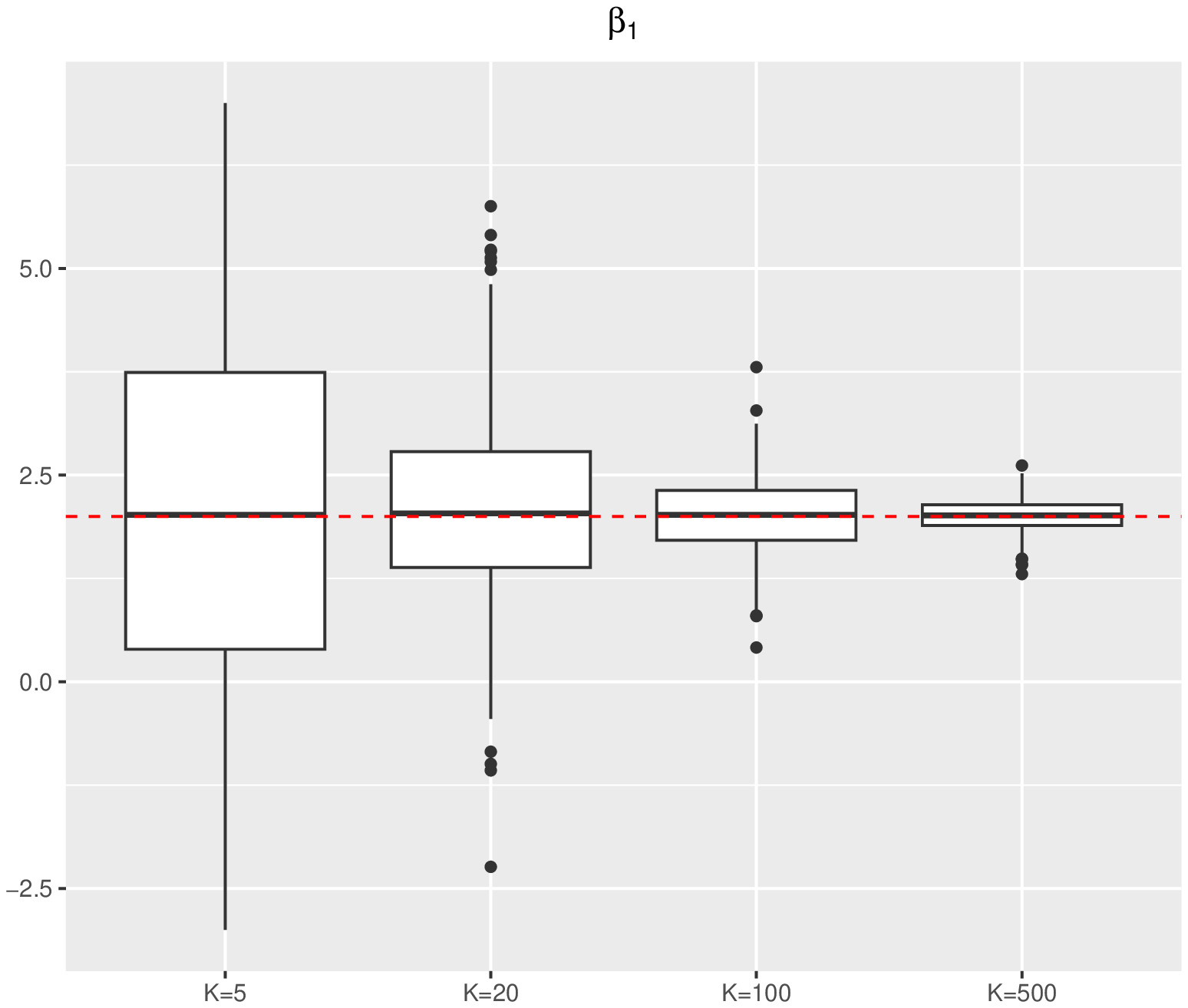}    \includegraphics[scale=0.3]{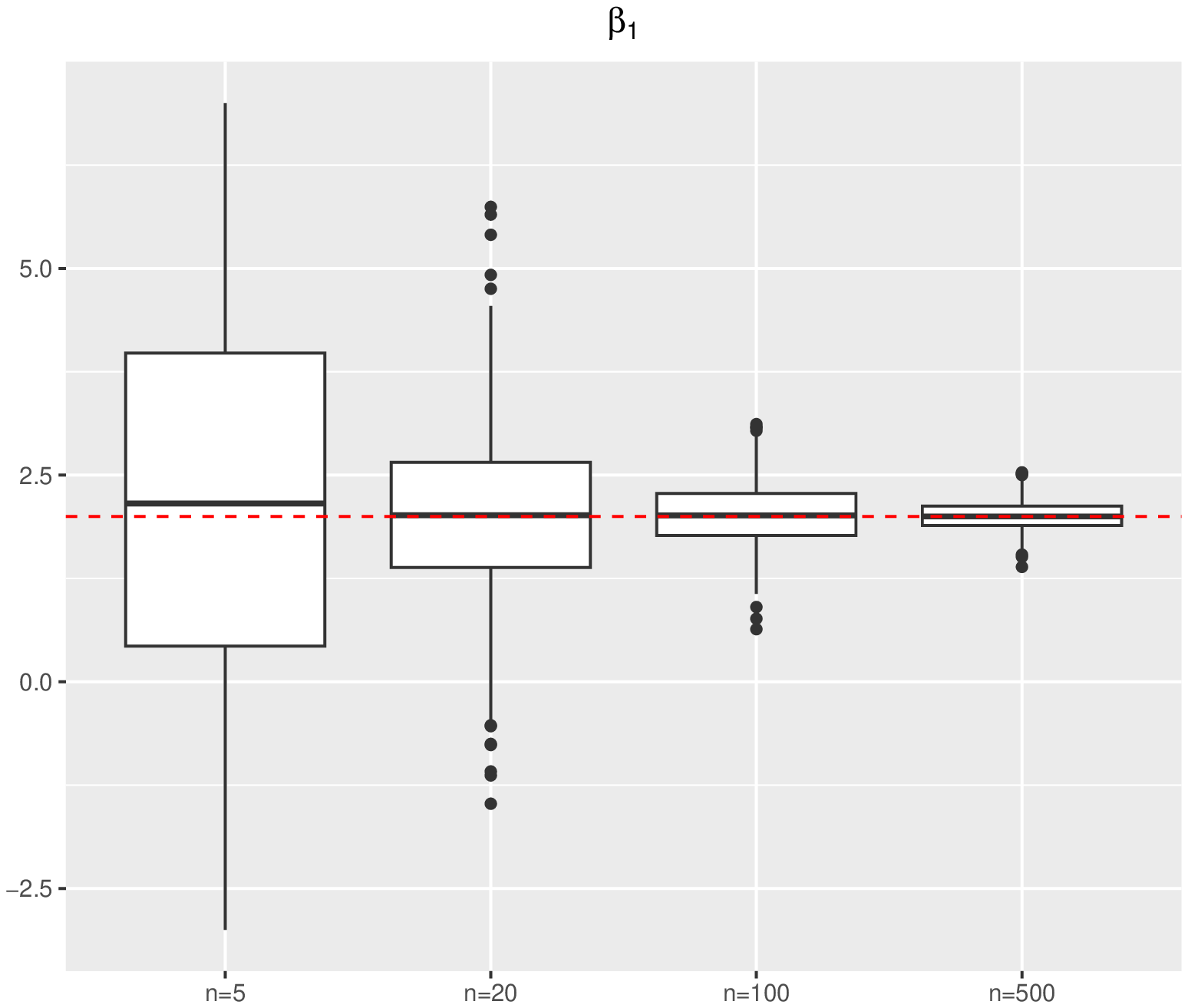}
 \includegraphics[scale=0.3]{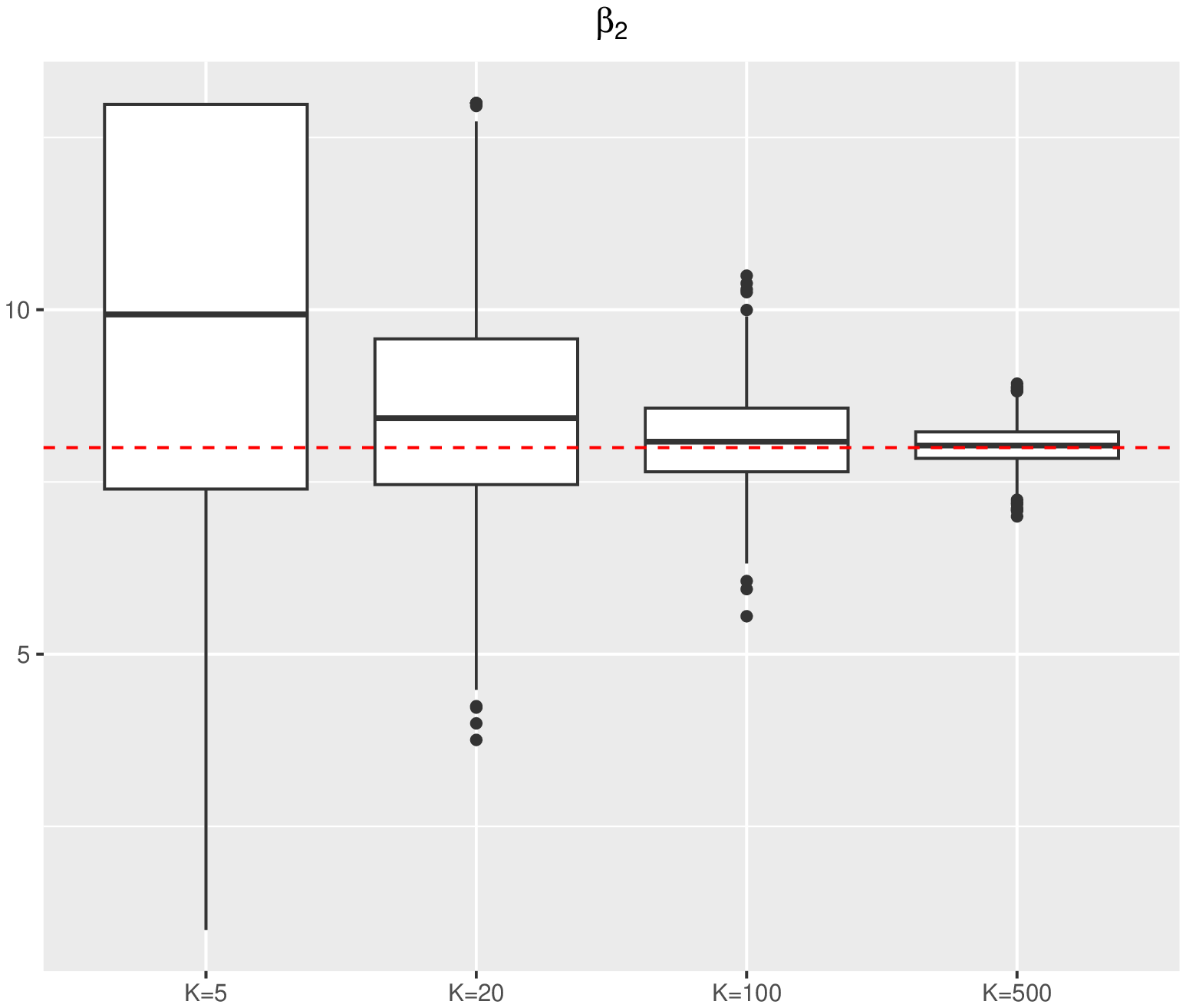}    \includegraphics[scale=0.3]{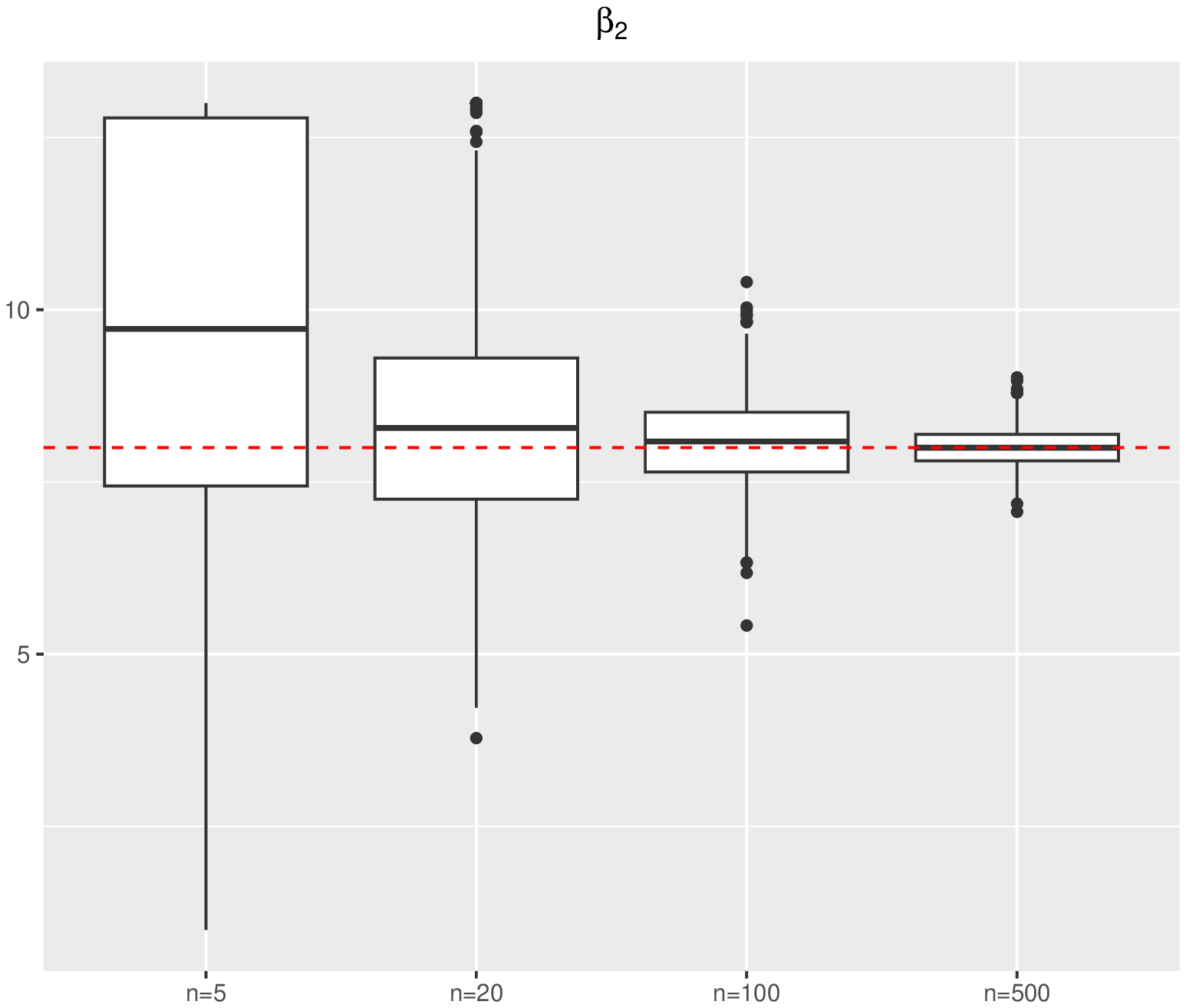}
 \includegraphics[scale=0.3]{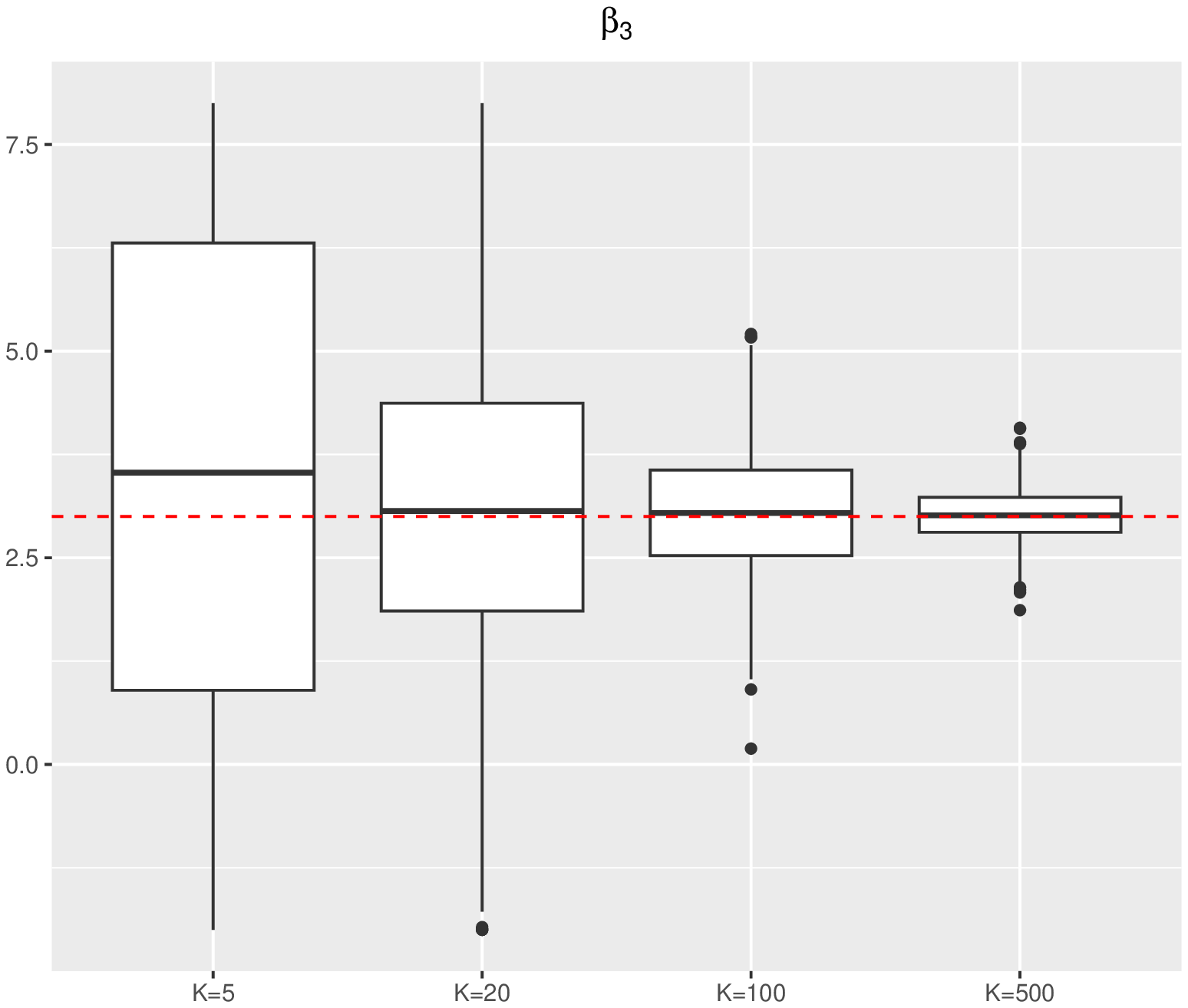}    \includegraphics[scale=0.3]{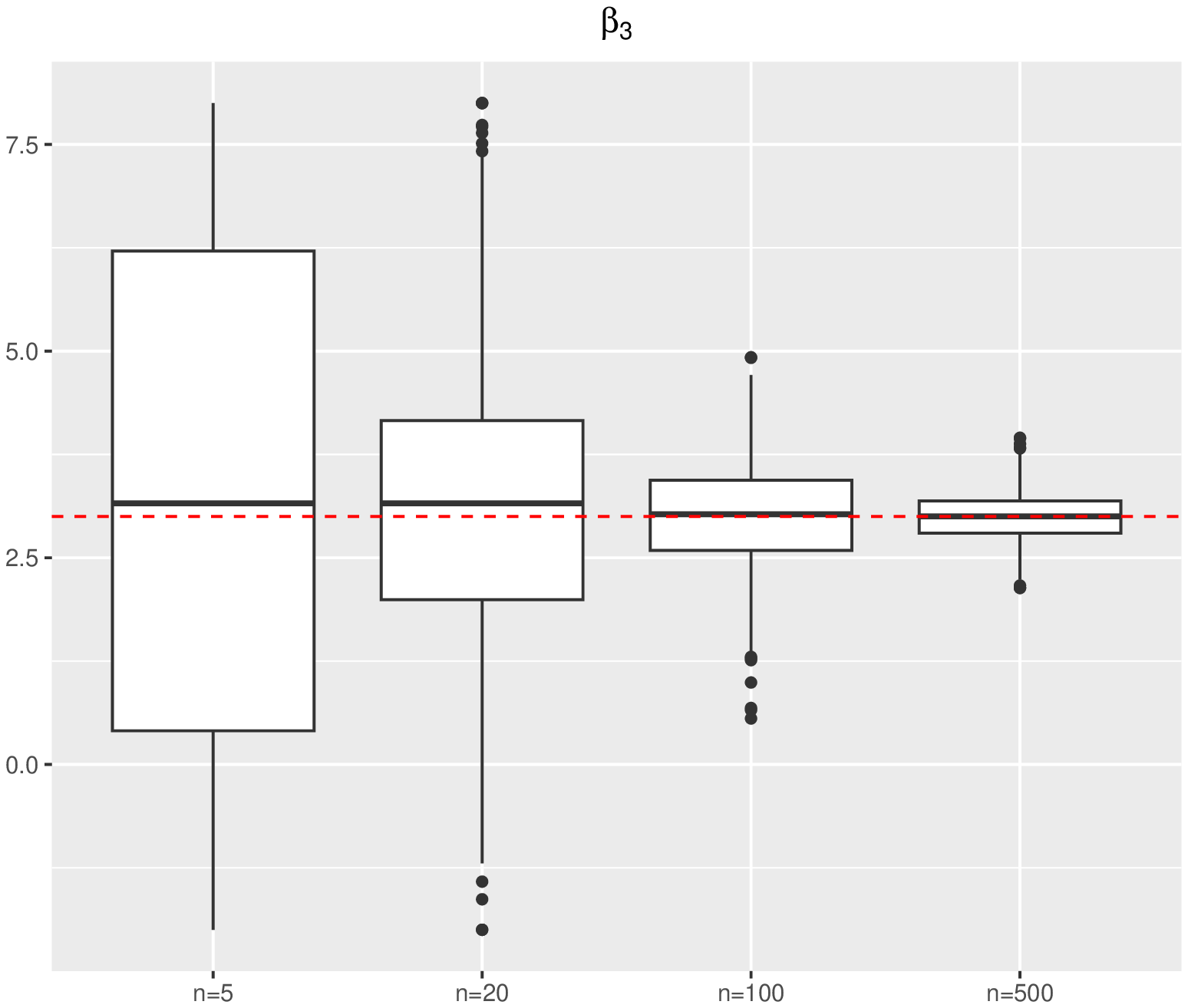}
     \caption{Exp6a: Boxplots of the estimation of $(\beta_1,\beta_2,\beta_3)$ for scenario 1 (left) and scenario 2 (right), based on 1000 samples from Frank bivariate copulas with parameters  $\phi(\bbbeta,\bX_{ki}) = 2+ 8Z_{1ki} + 3U_{2ki}$ and Gaussian margins with means $h(\balpha,\bX_{ki})= 5+5Z_{2ki}+3U_{4ki}$ and variance $\alpha_4=1$, $k\in\setK$, $i\in \{1,\ldots, n_k\}$.}
    \label{fig:franknor-cop}
\end{figure}
\begin{figure}[ht!]
    \centering
  \includegraphics[scale=0.3]{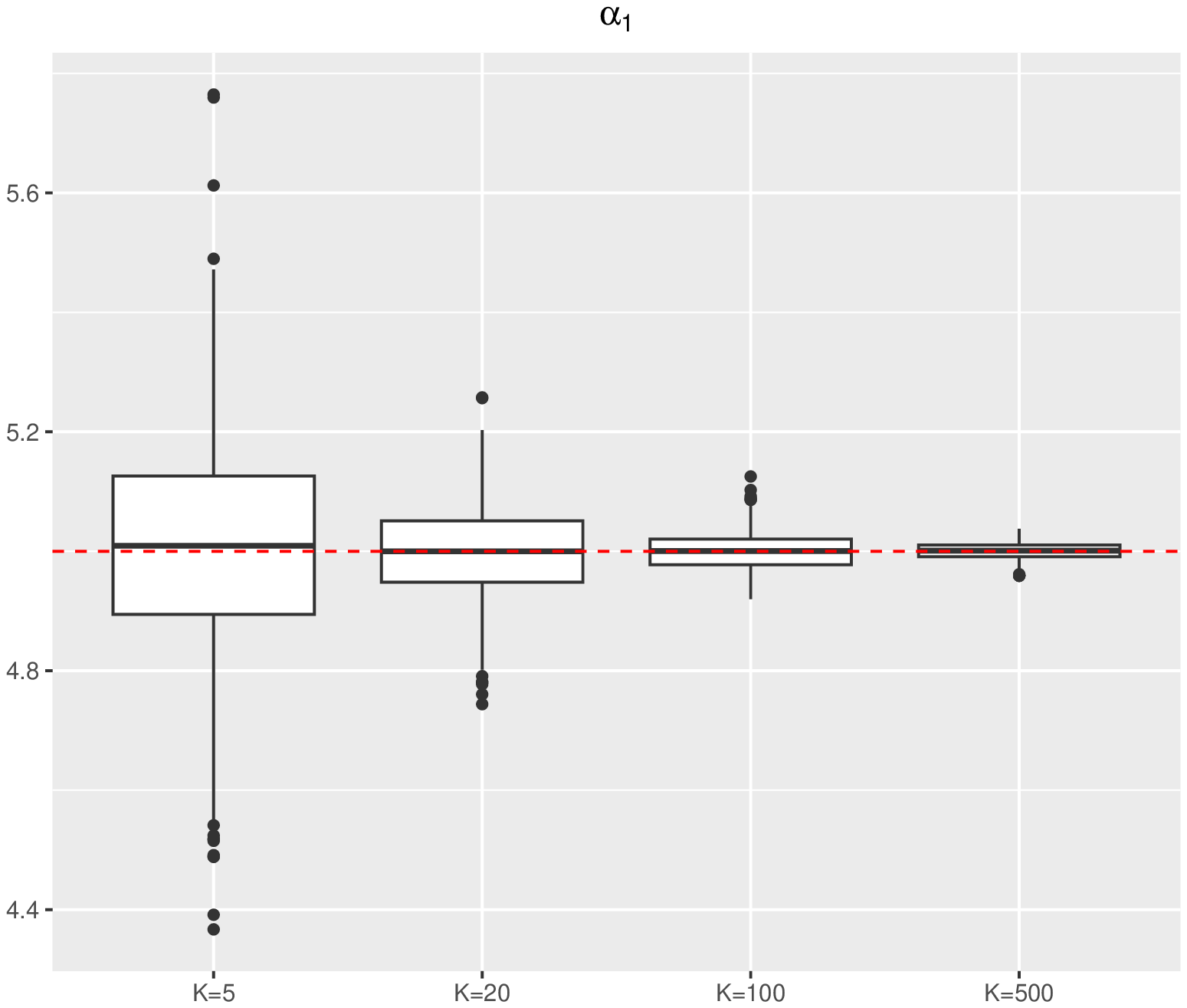}    \includegraphics[scale=0.3]{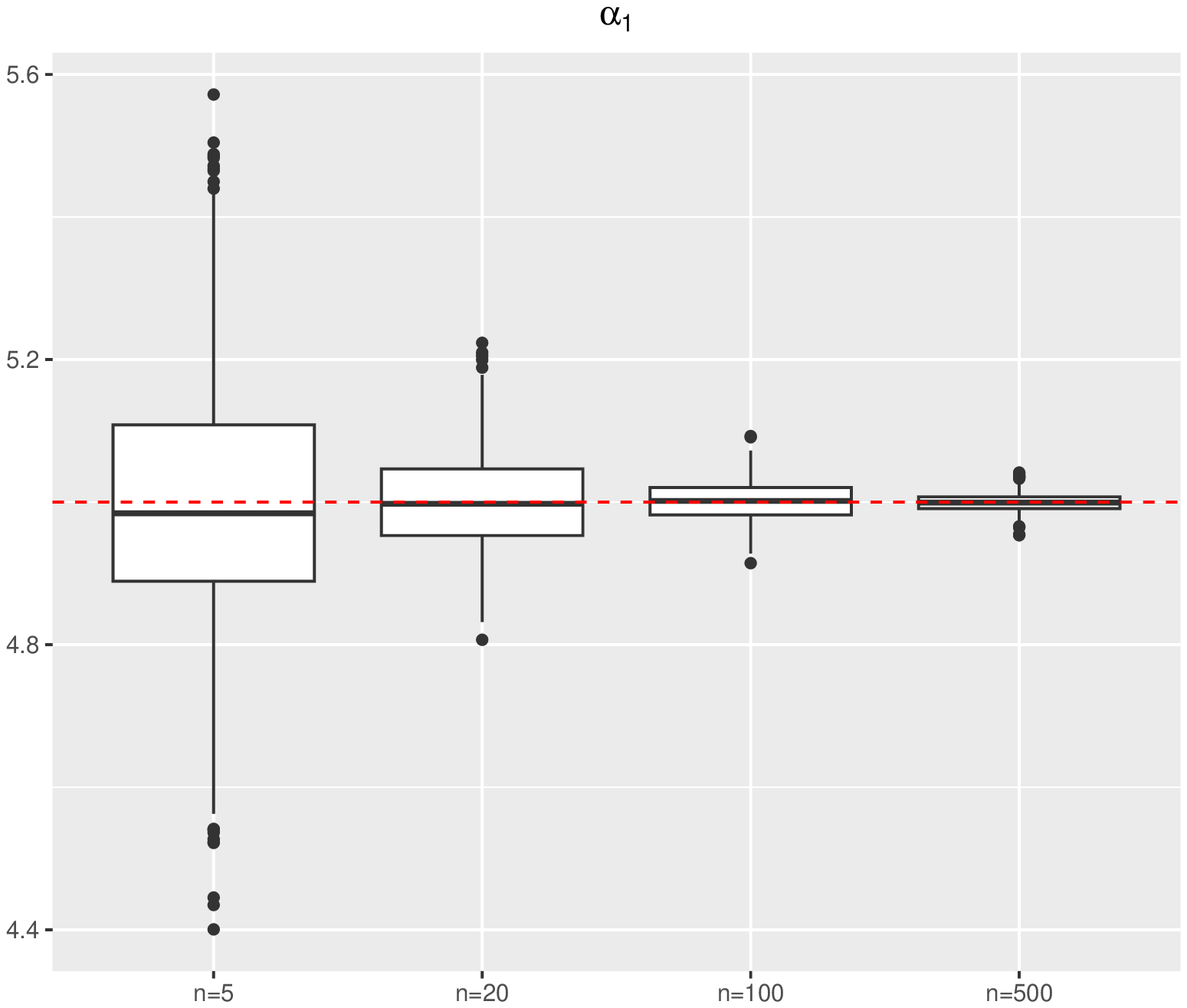}
 \includegraphics[scale=0.3]{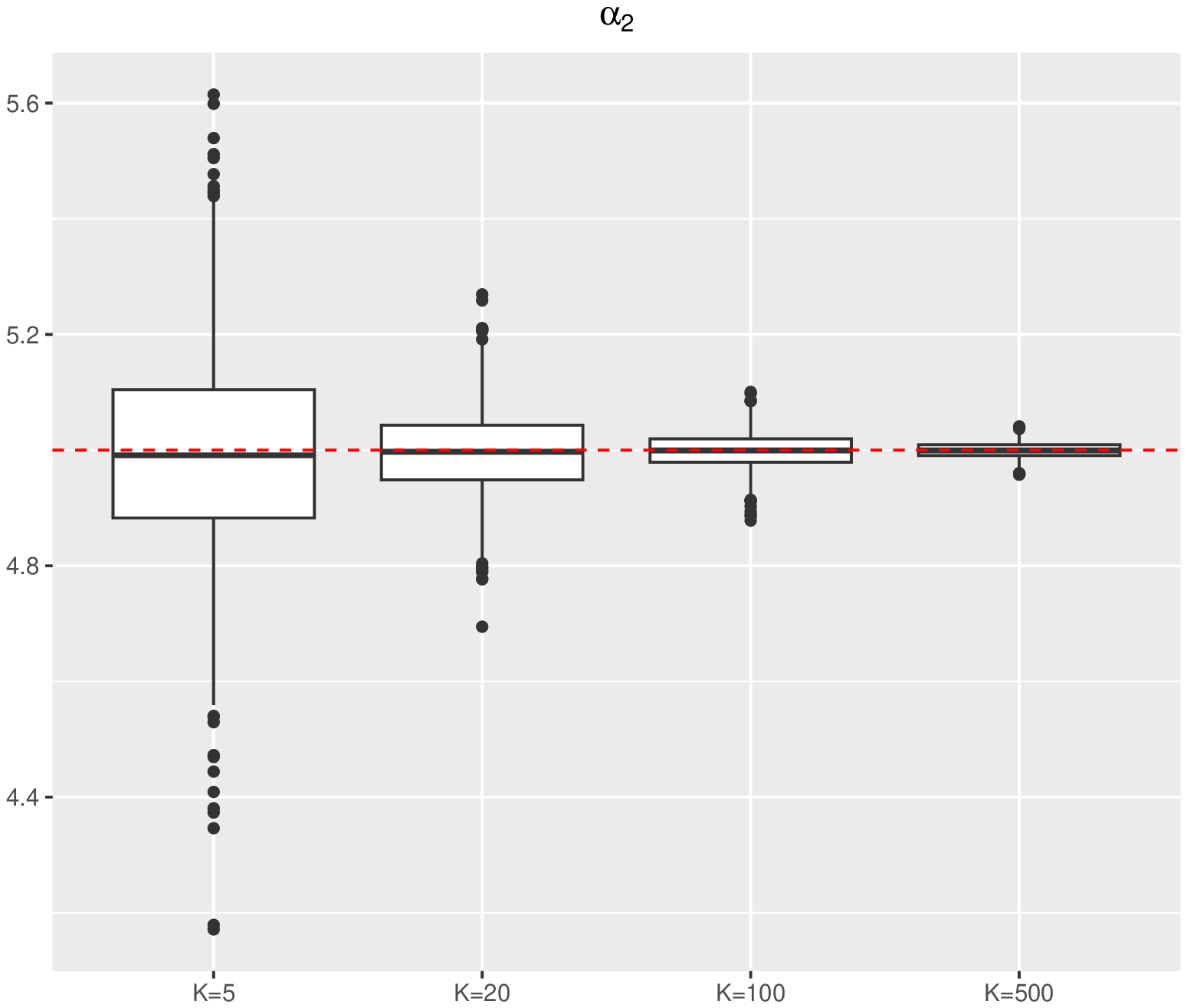}    \includegraphics[scale=0.3]{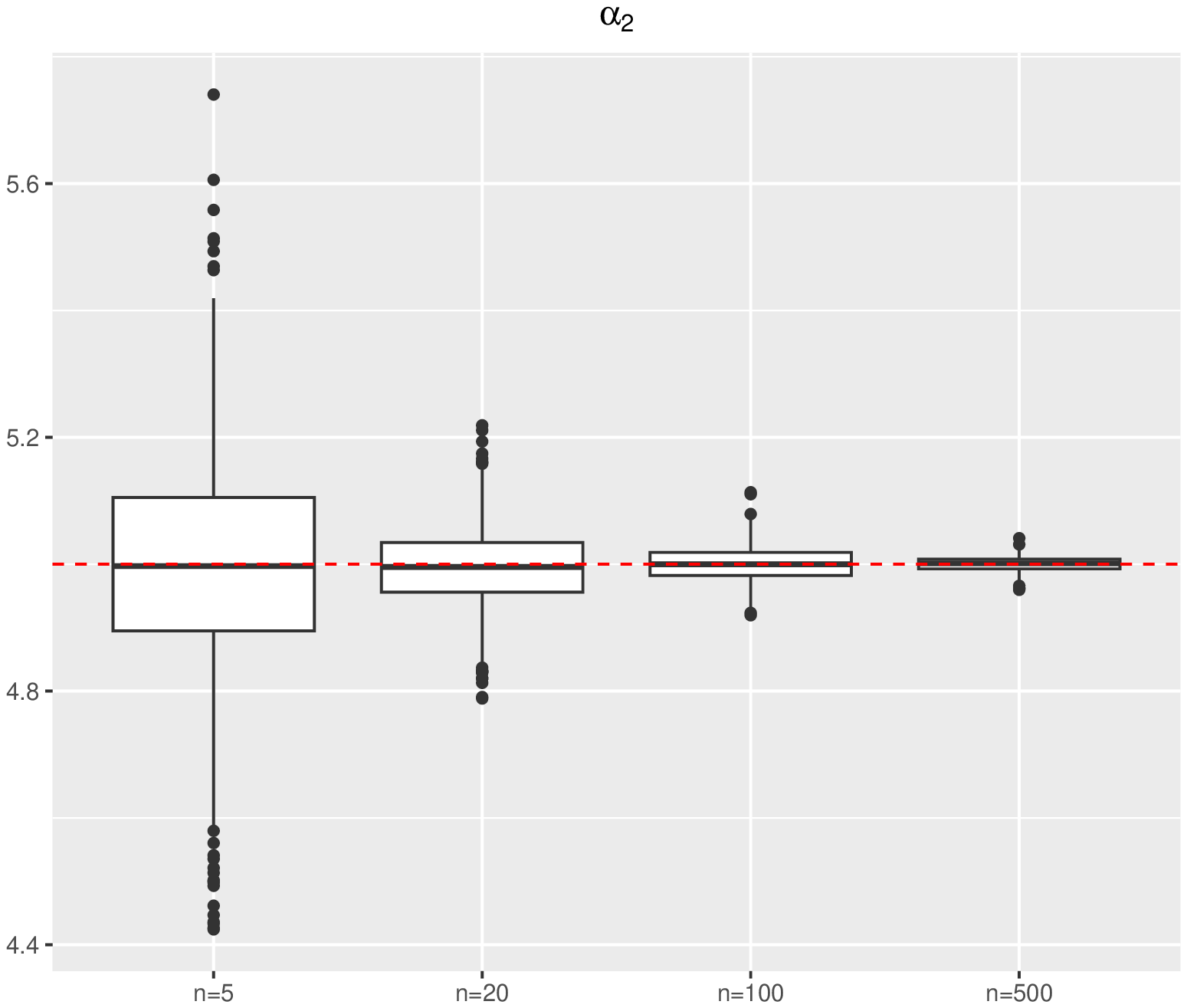}
 \includegraphics[scale=0.3]{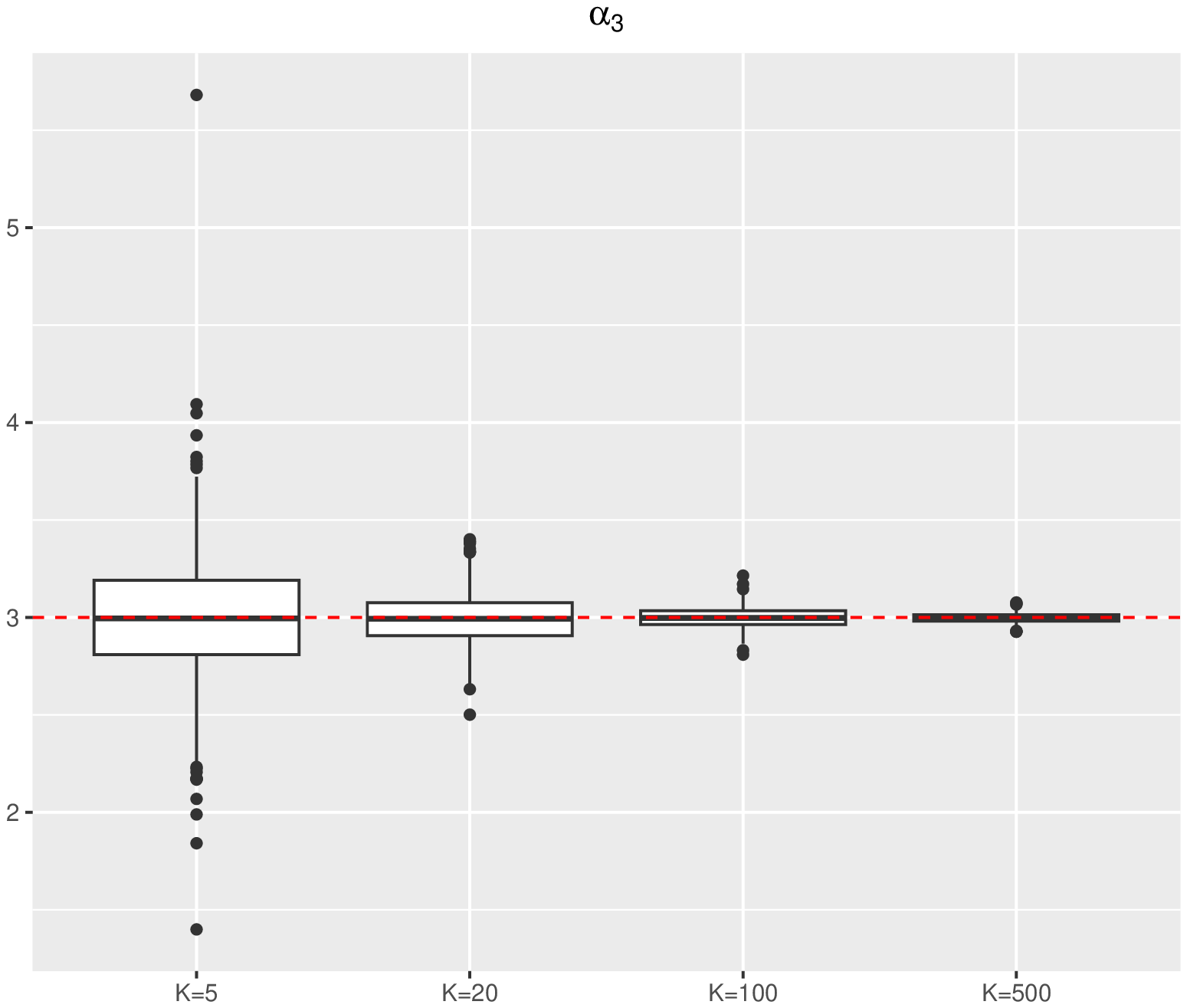}    \includegraphics[scale=0.3]{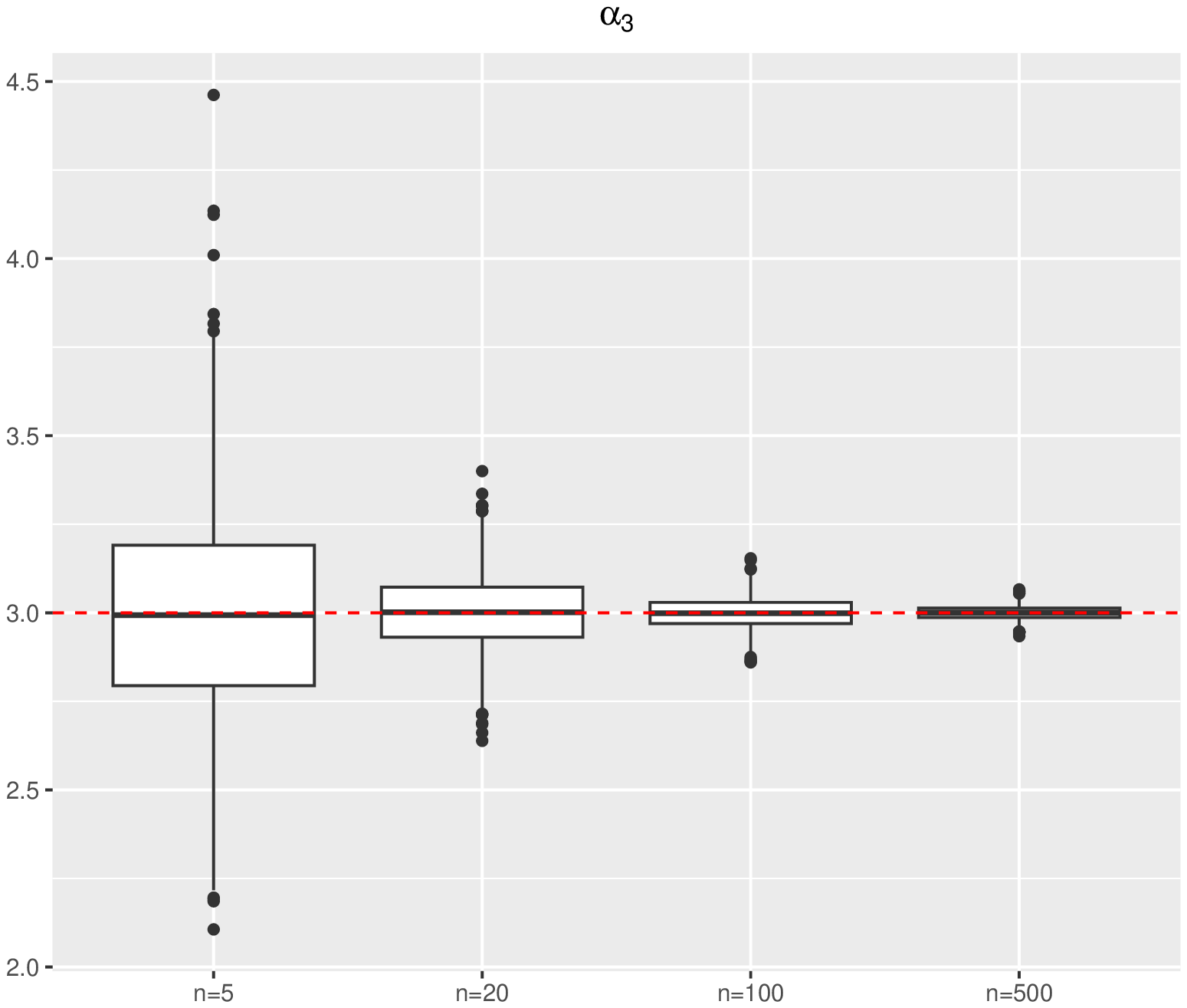}
 \includegraphics[scale=0.3]{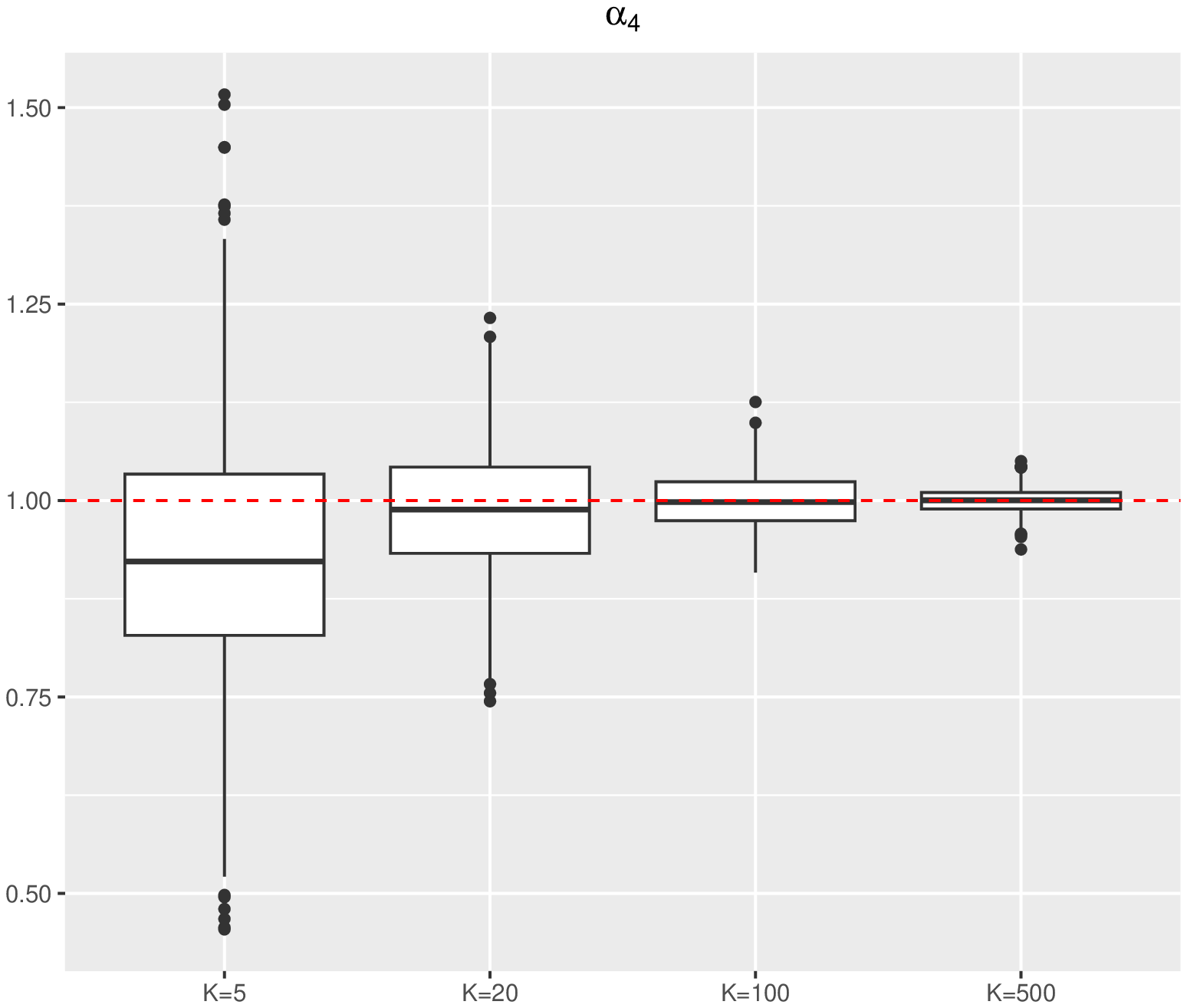}     \includegraphics[scale=0.3]{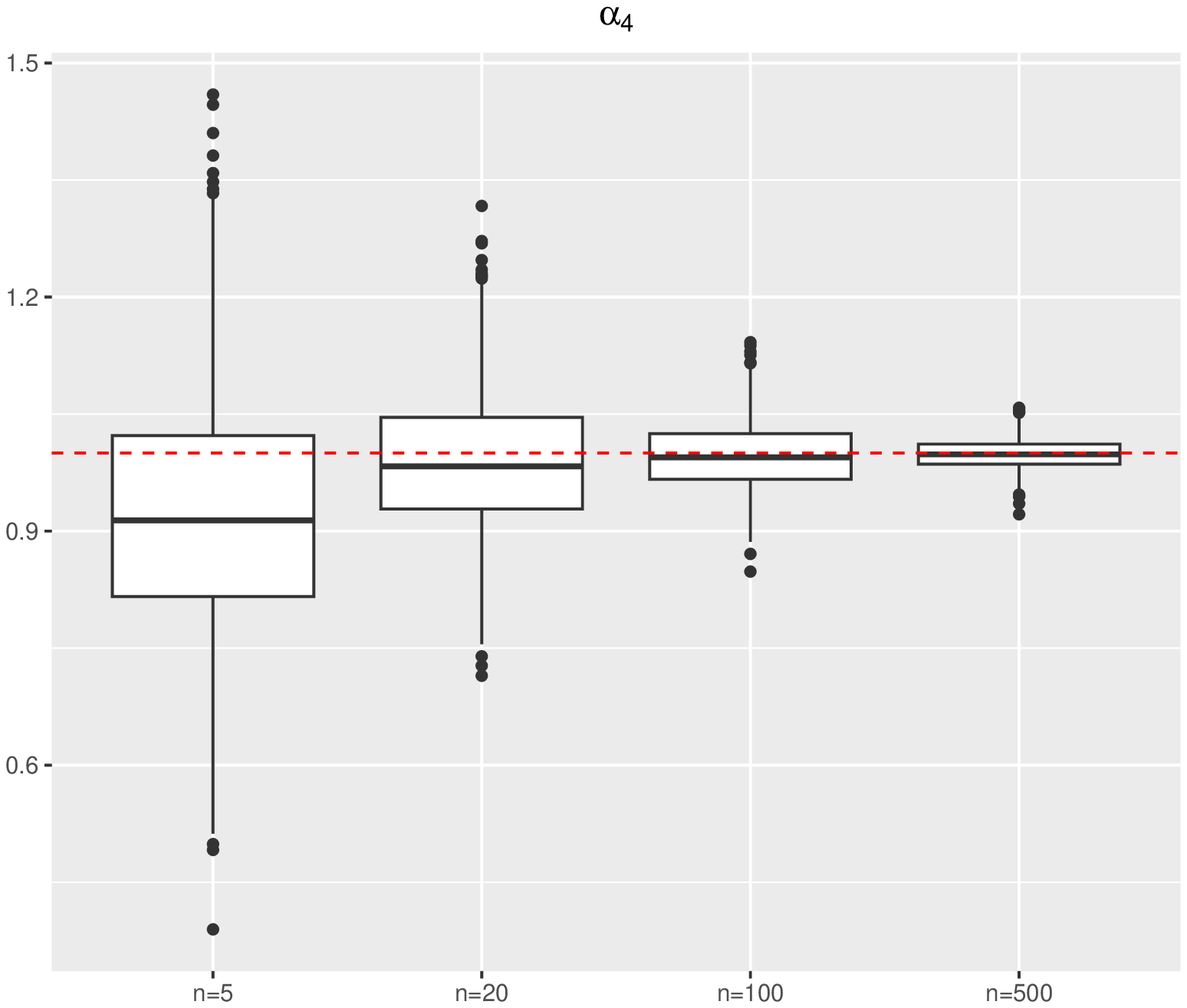}
    \caption{Exp6b: Boxplots of the estimation of $(\alpha_{1},\alpha_{2},\alpha_{3},\alpha_4)$ for scenario 1 (left) and scenario 2 (right), based on 1000 samples from Frank bivariate copulas with parameters $\phi(\bbbeta,\bX_{ki}) = 2+ 8Z_{1ki} + 3U_{2ki}$ and Gaussian margins with means $h(\balpha,\bX_{ki})= 5+5Z_{3ki}+3U_{4ki}$ and variance $\alpha_4=1$, $k\in\setK$, $i\in \{1,\ldots, n_k\}$.}
    \label{fig:franknor-mar}
\end{figure}

\begin{figure}[ht!]
    \centering
 \includegraphics[scale=0.5]{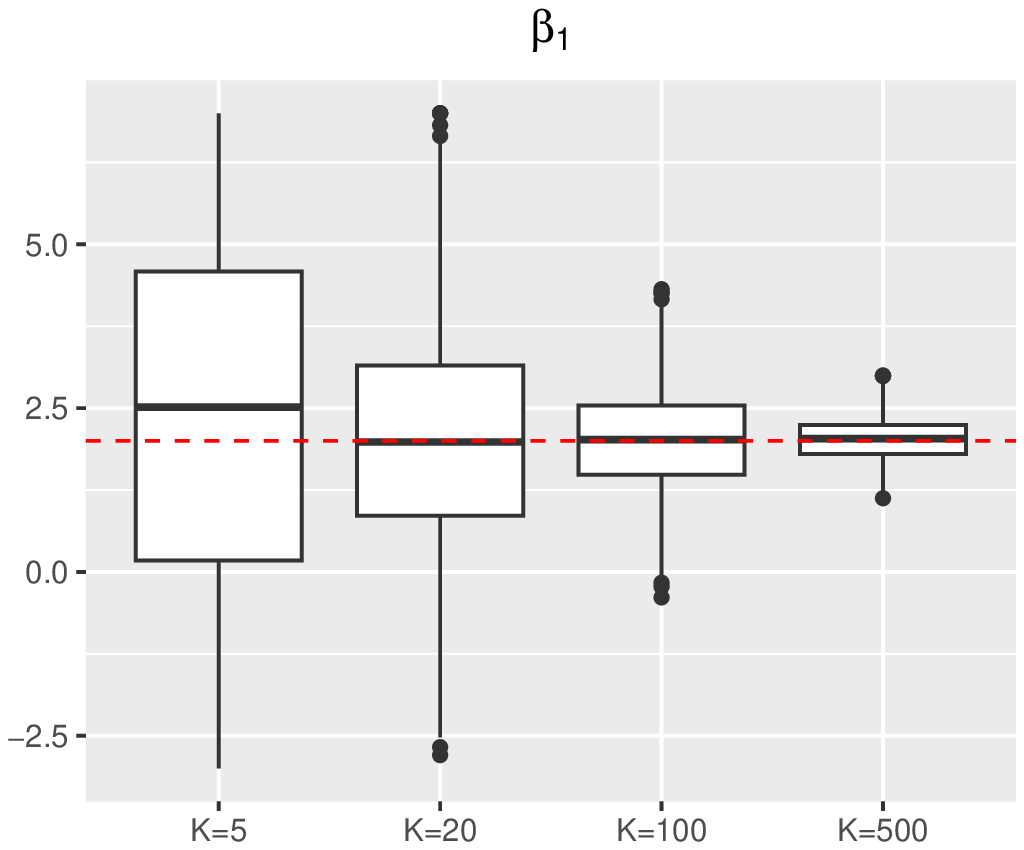}    \includegraphics[scale=0.5]{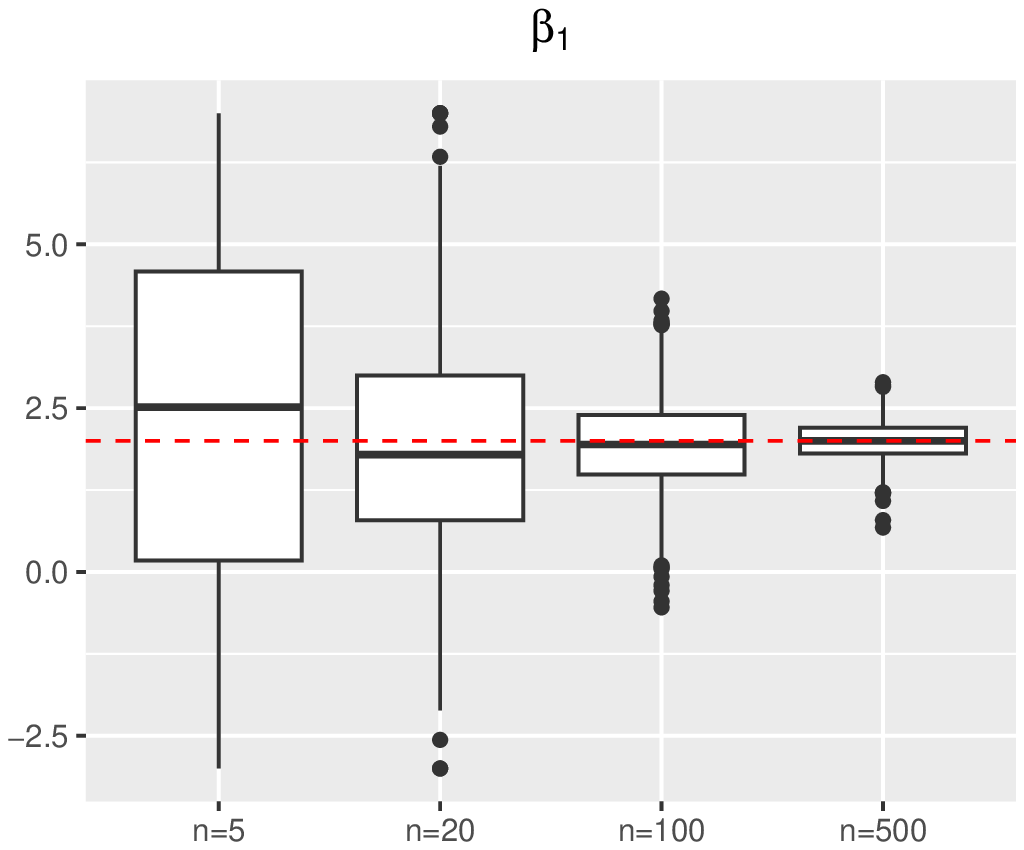}
 \includegraphics[scale=0.5]{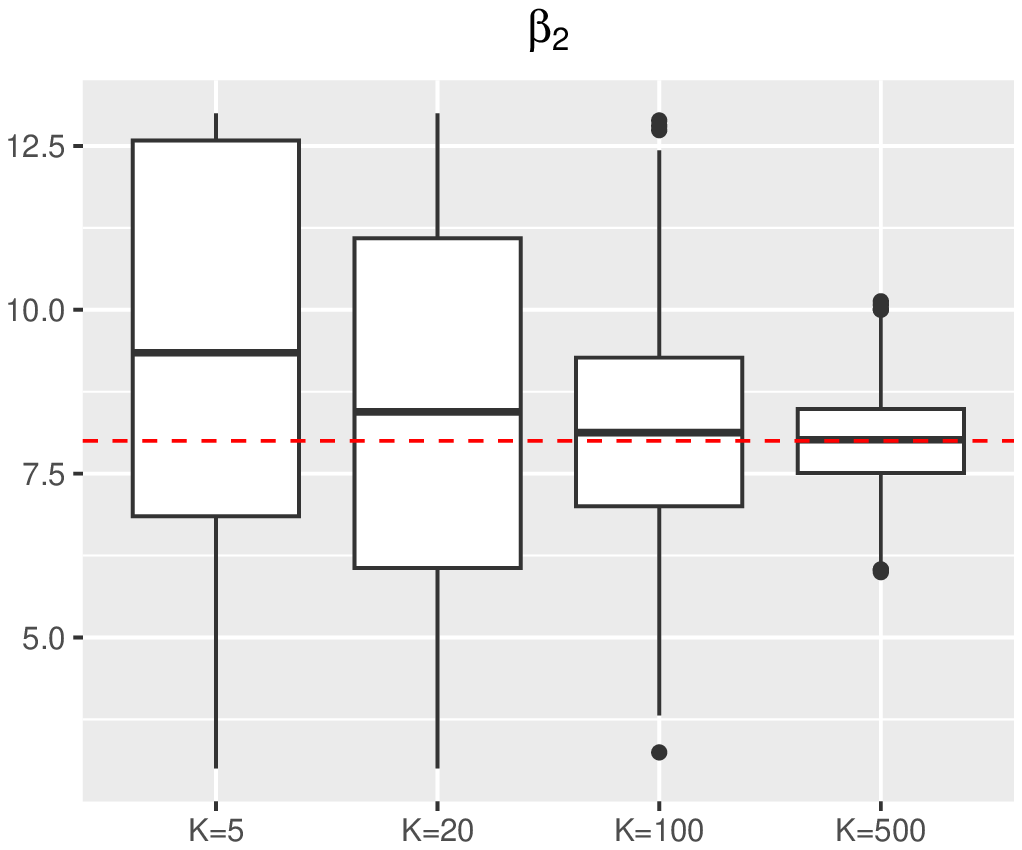}    \includegraphics[scale=0.5]{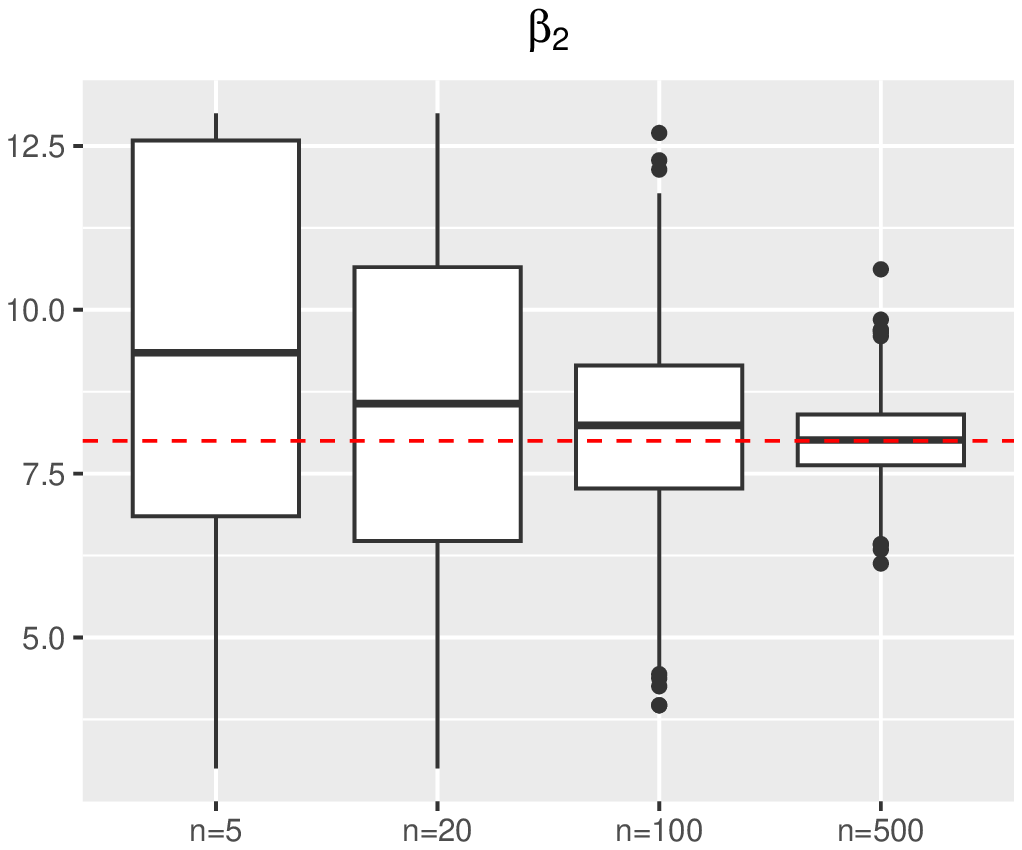}
 \includegraphics[scale=0.5]{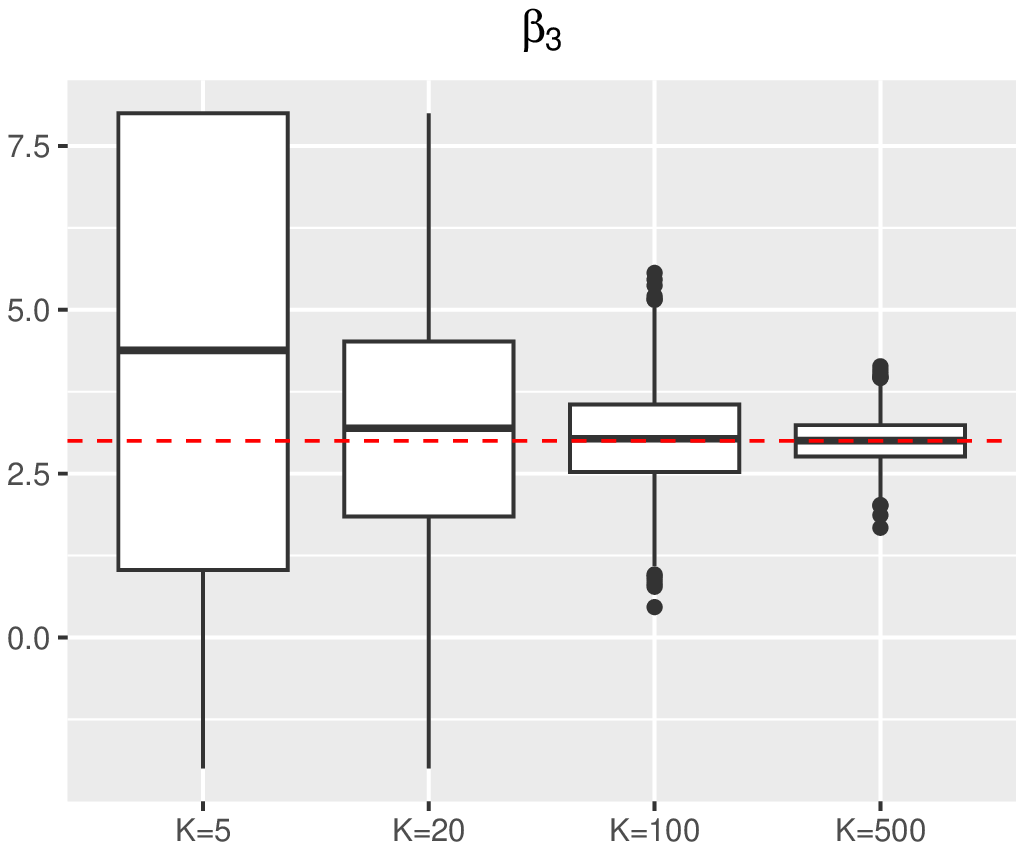}    \includegraphics[scale=0.5]{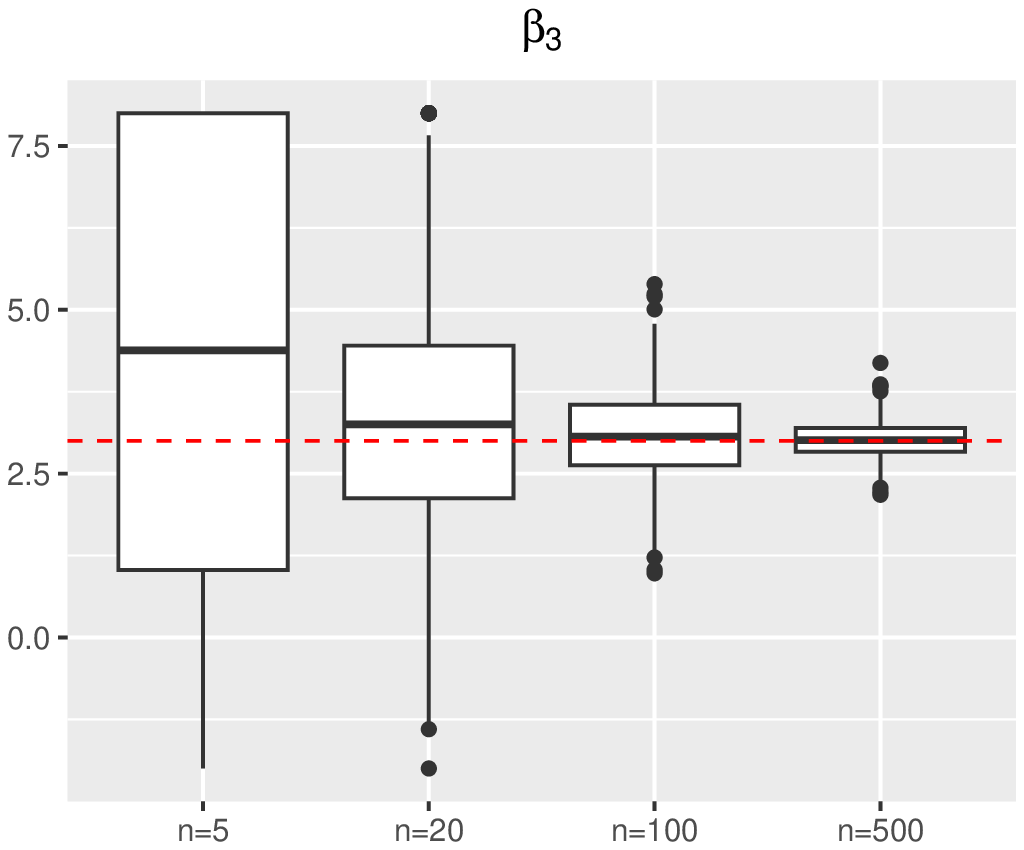}
     \caption{Exp7a: Boxplots of the estimation of $(\beta_1,\beta_2,\beta_3)$ for scenario 1 (left) and scenario 2 (right), based on 1000 samples from Frank bivariate copulas with parameters $\phi(\bbbeta,\bX_{ki}) = 2+ 8 U_{1,ki} + 3 U_{2,ki}$ and Poisson margins with means  $e^{h(\balpha,\bX_{ki})}$, where $h(\balpha,\bX_{ki})=3-U_{3,ki}-0.5U_{4,ki}$, $k\in\setK$, $i\in \{1,\ldots, n_k\}$. }
    \label{fig:frankpoi-cop}
\end{figure}

\begin{figure}[ht!]
    \centering
 \includegraphics[scale=0.5]{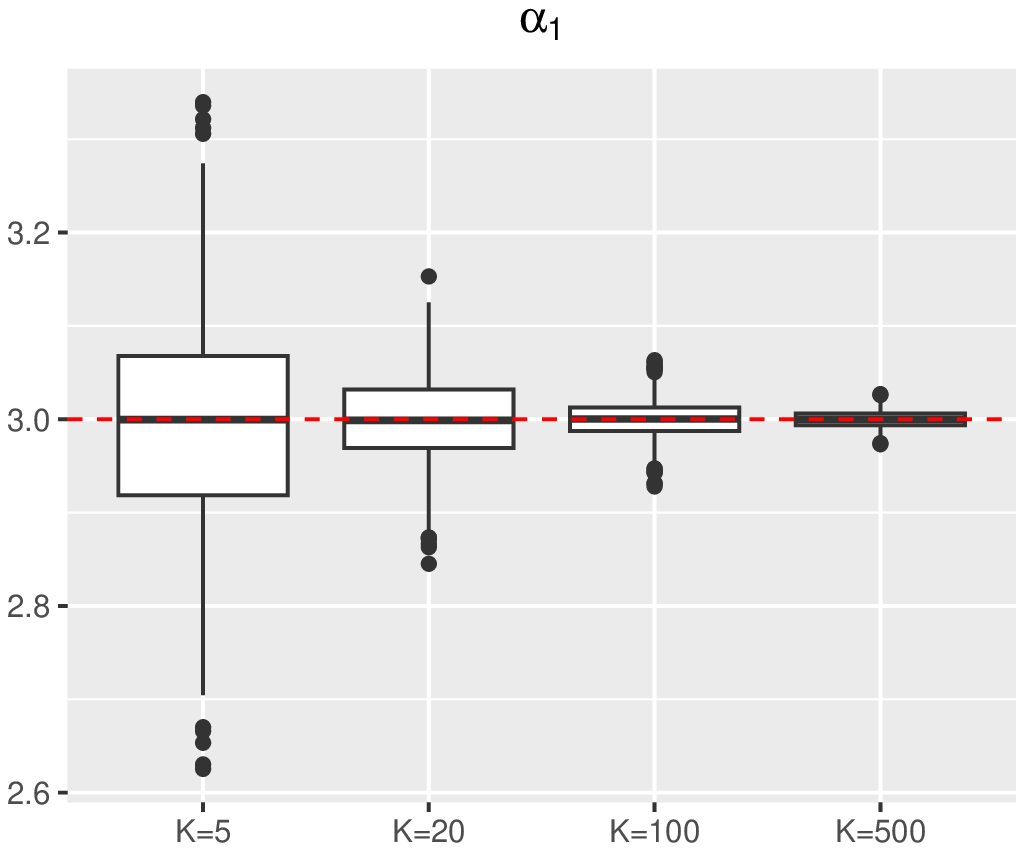}    \includegraphics[scale=0.5]{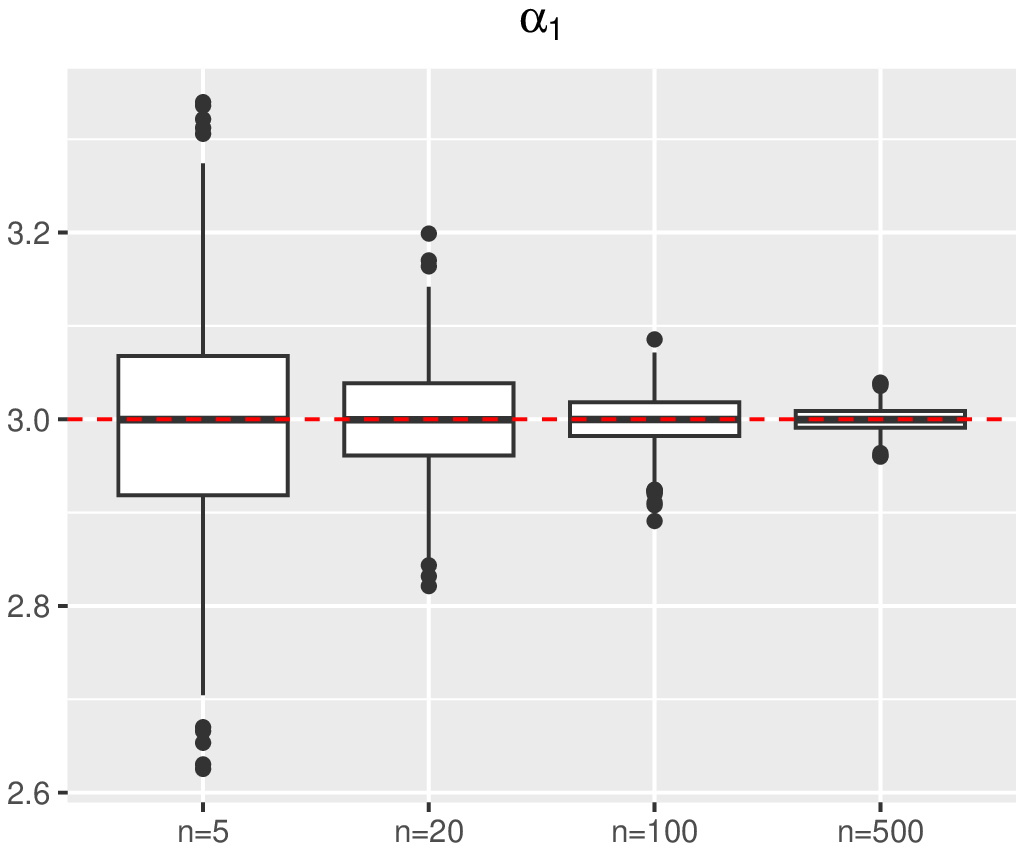}
 \includegraphics[scale=0.5]{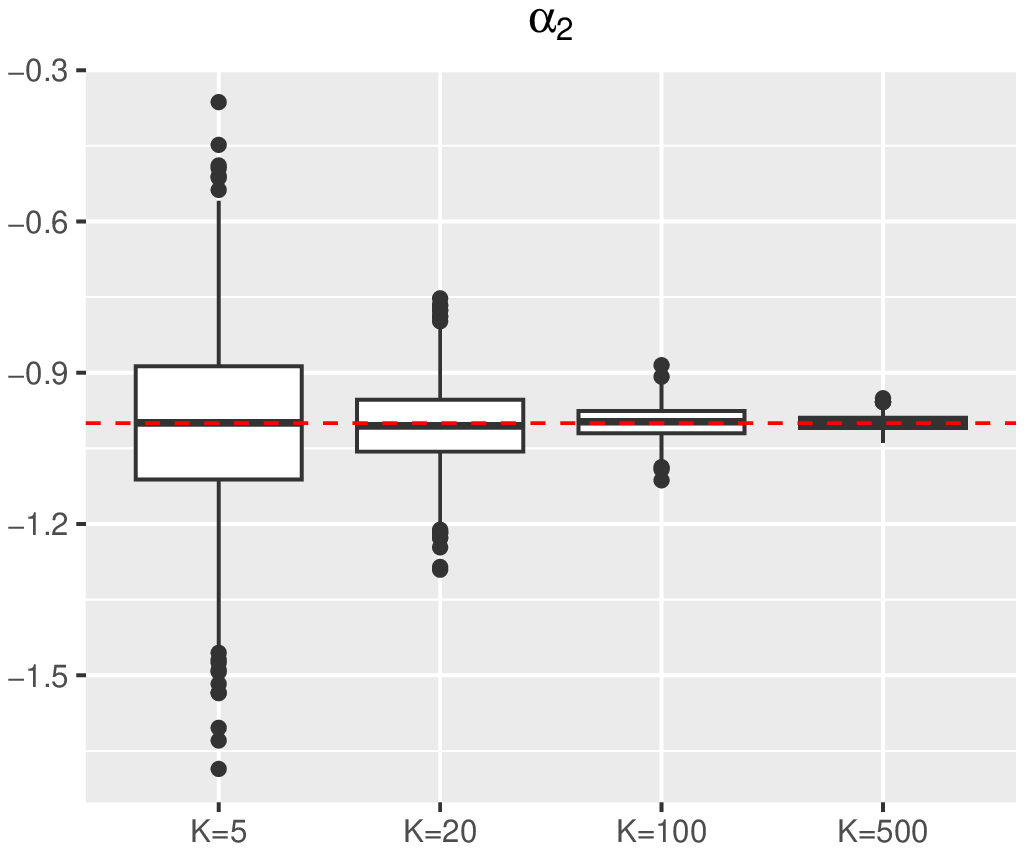}    \includegraphics[scale=0.5]{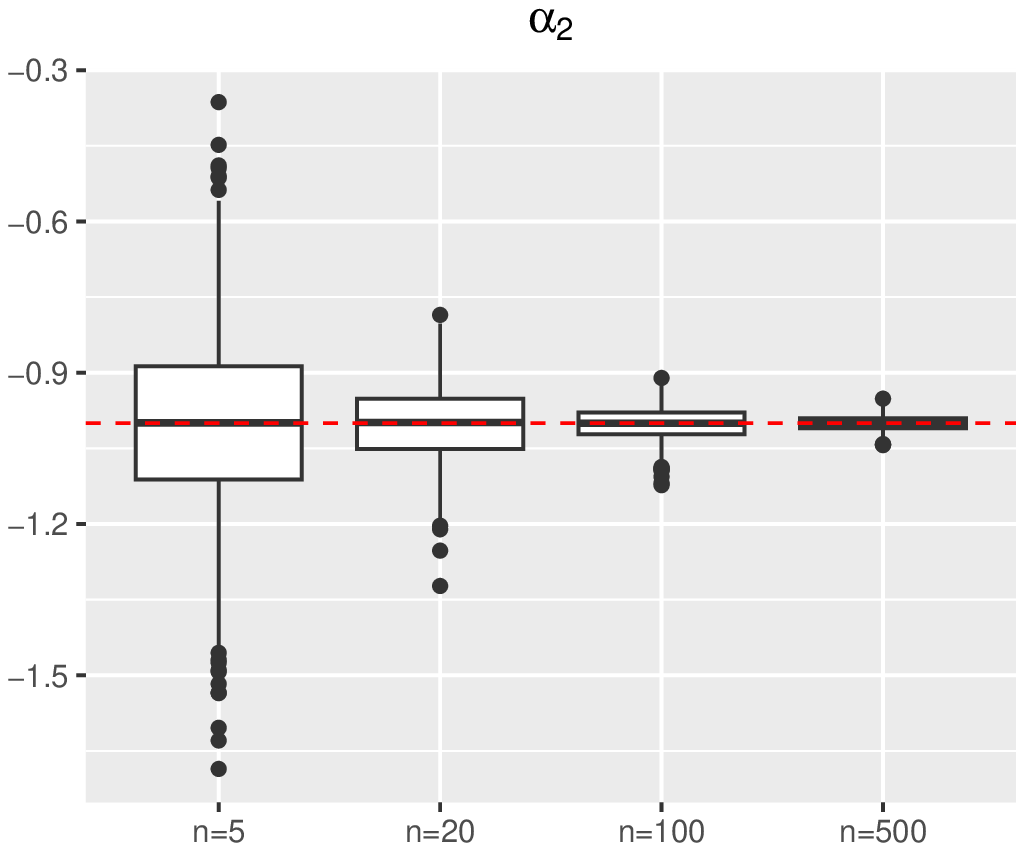}
 \includegraphics[scale=0.5]{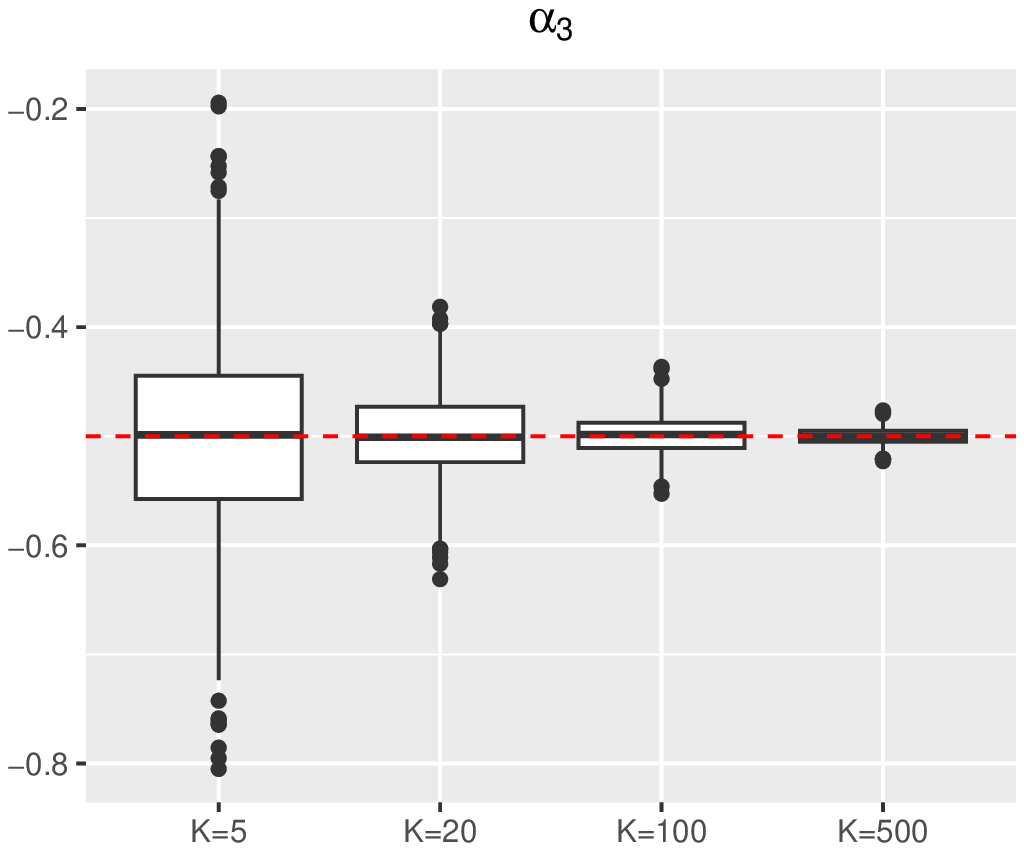}    \includegraphics[scale=0.5]{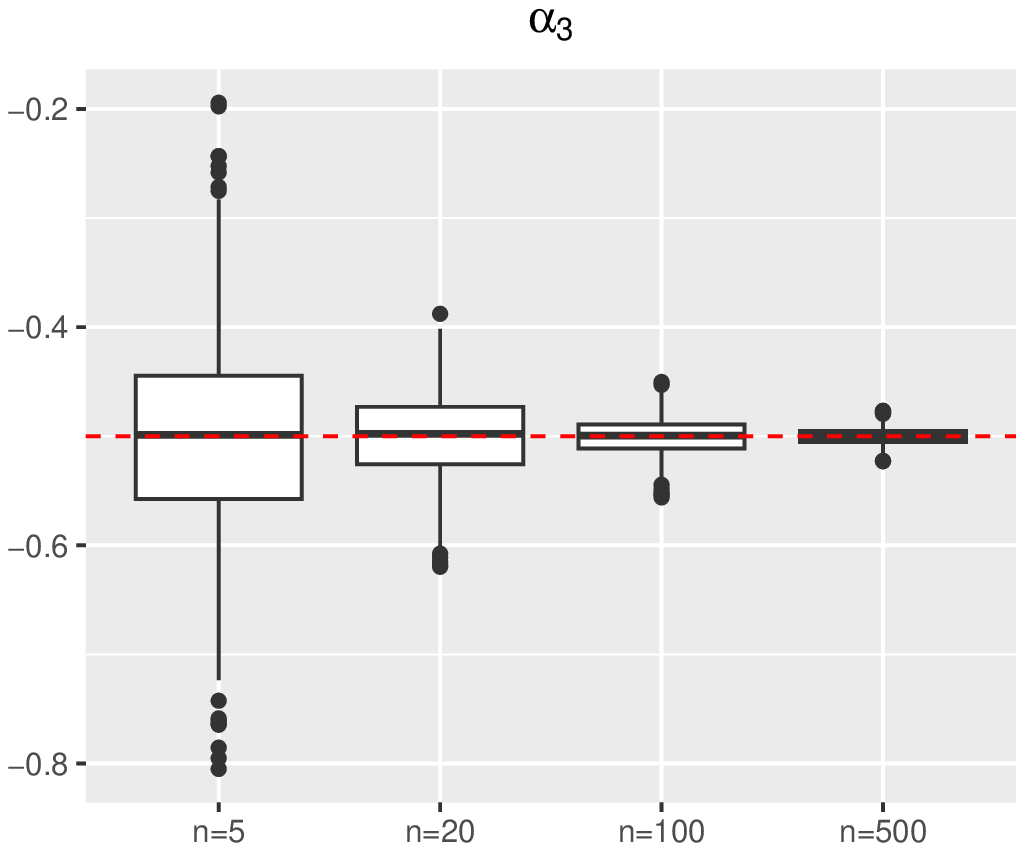}
     \caption{Exp7b: Boxplots of the estimation of $(\alpha_{1},\alpha_{2},\alpha_3)$ for scenario 1 (left) and scenario 2 (right), based on 1000 samples from Frank bivariate copulas with parameters $\phi(\bbbeta,\bX_{ki}) = 2+ 8 U_{1,ki} + 3 U_{2,ki}$ and Poisson margins with means  $e^{h(\balpha,\bX_{ki})}$, where $h(\balpha,\bX_{ki})=3-U_{3,ki}-0.5U_{4,ki}$, $k\in\setK$, $i\in \{1,\ldots, n_k\}$.}
    \label{fig:frankpoi-mar}
\end{figure}

\begin{figure}[ht!]
    \centering
 \includegraphics[scale=0.5]{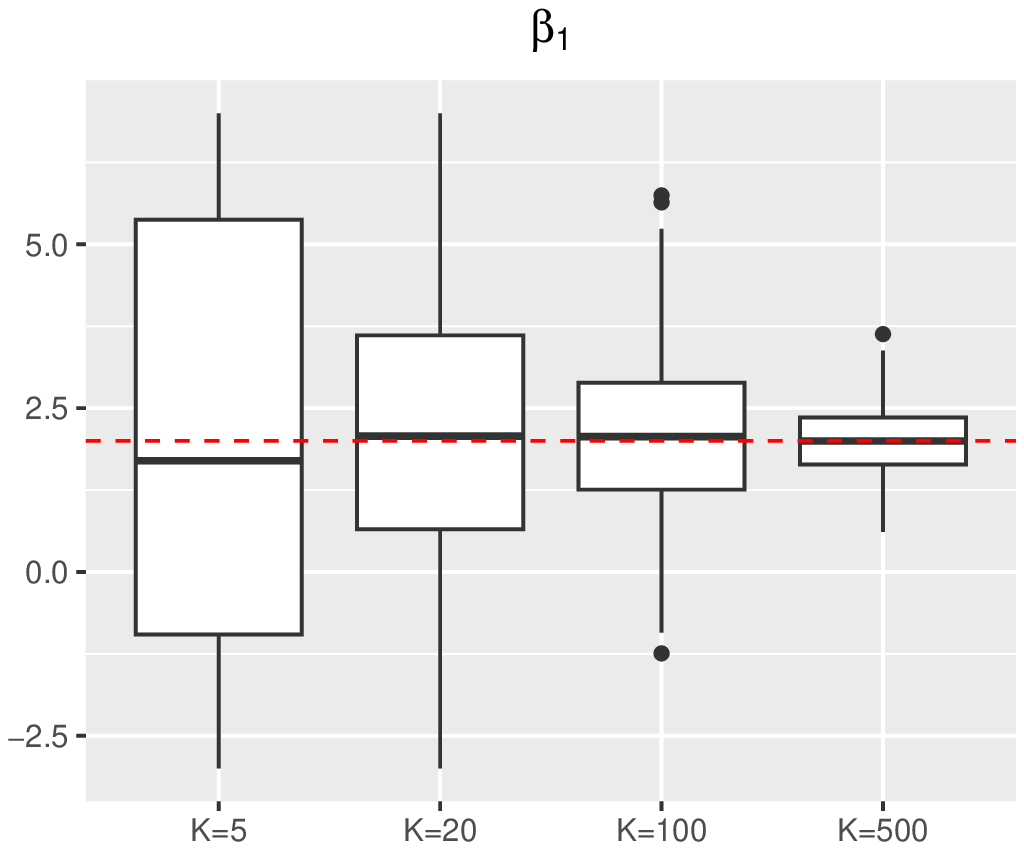}    \includegraphics[scale=0.5]{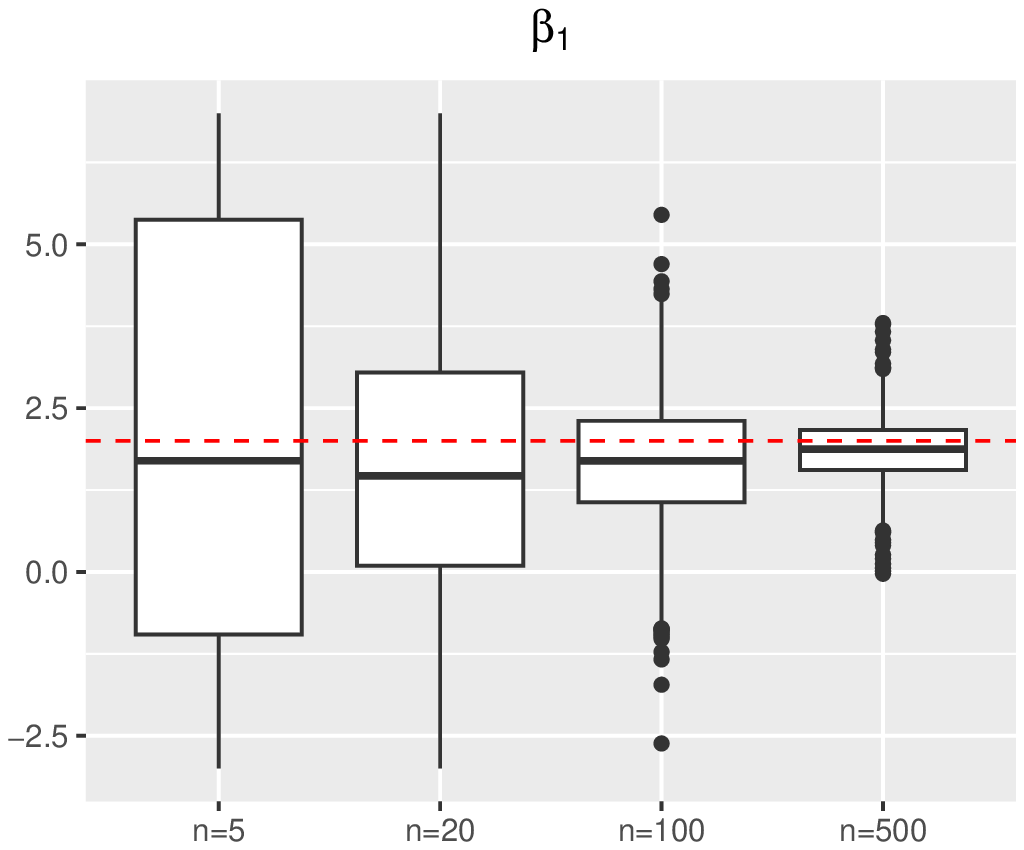}
 \includegraphics[scale=0.5]{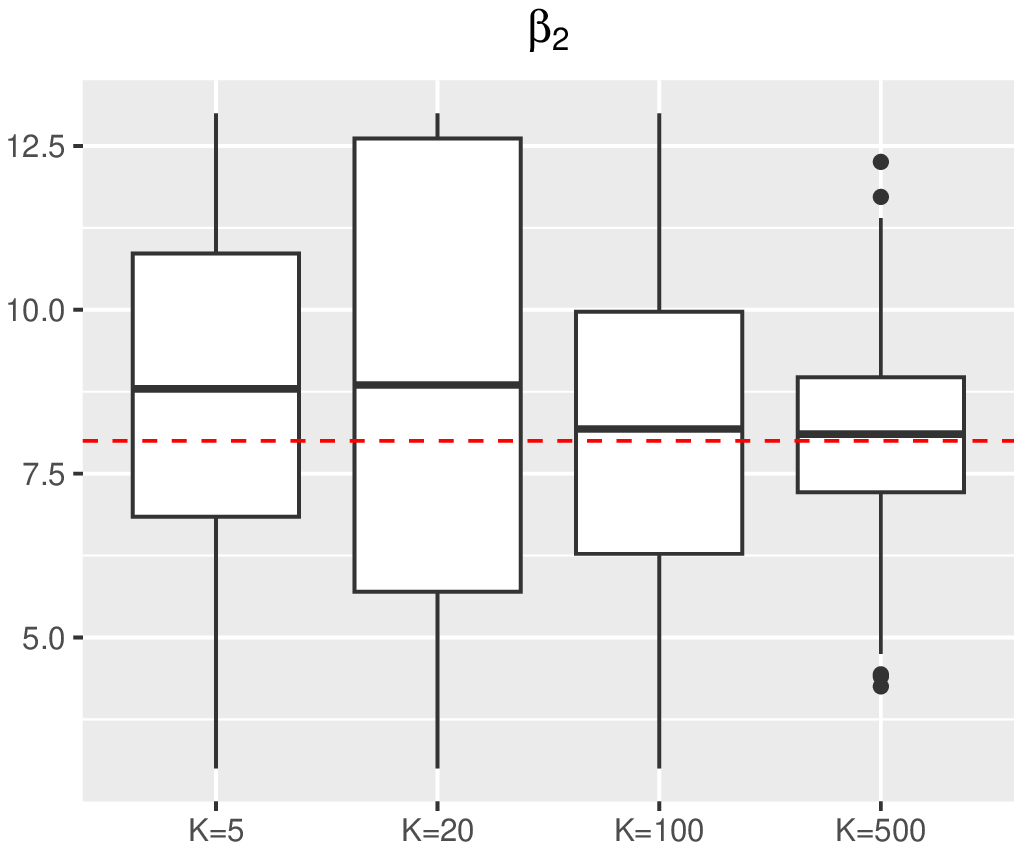}    \includegraphics[scale=0.5]{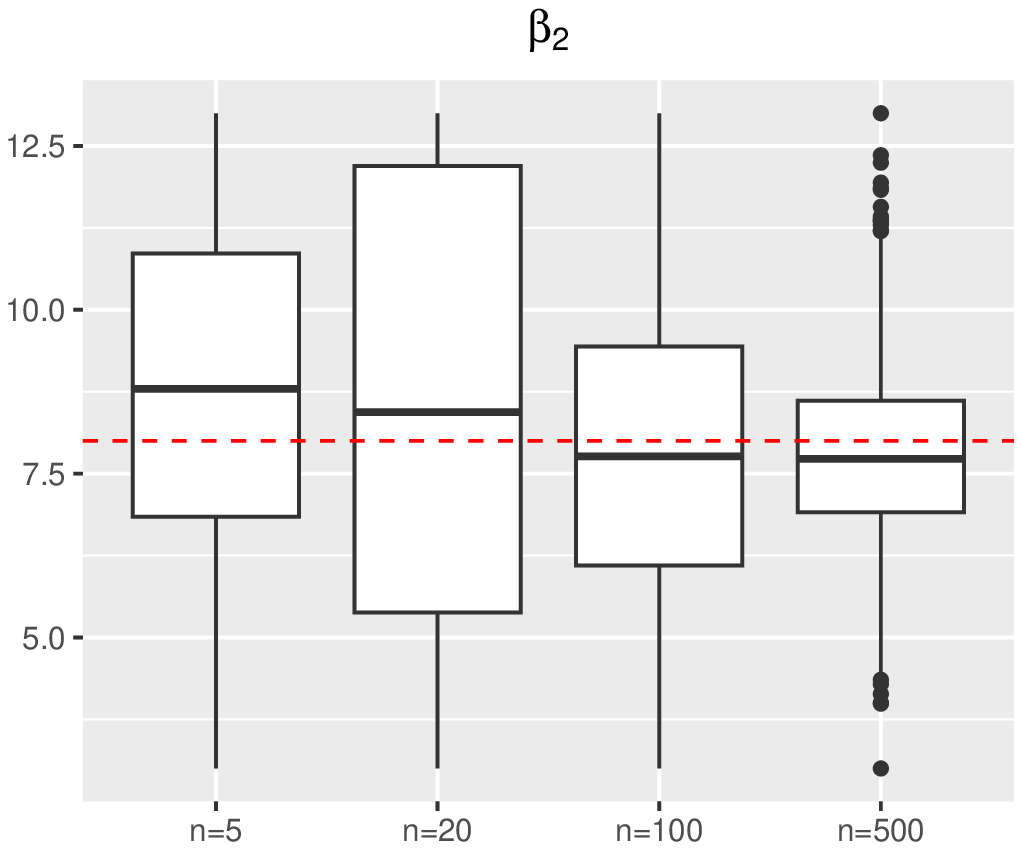}
 \includegraphics[scale=0.5]{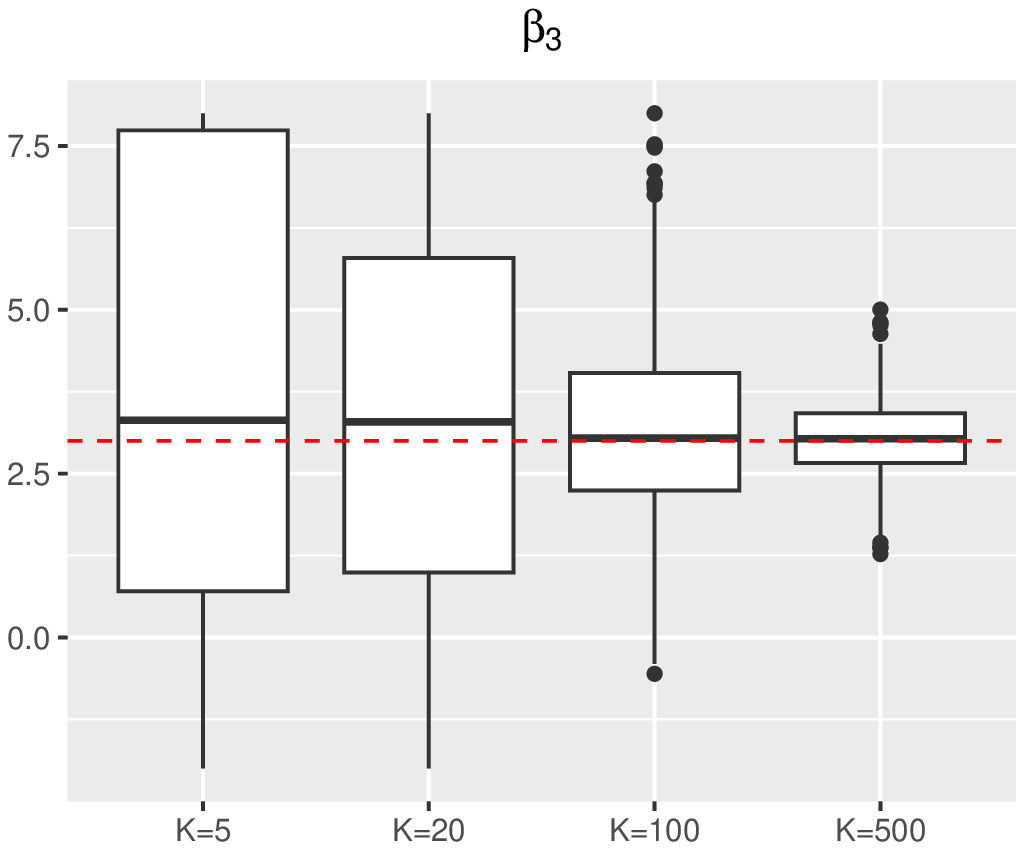}    \includegraphics[scale=0.5]{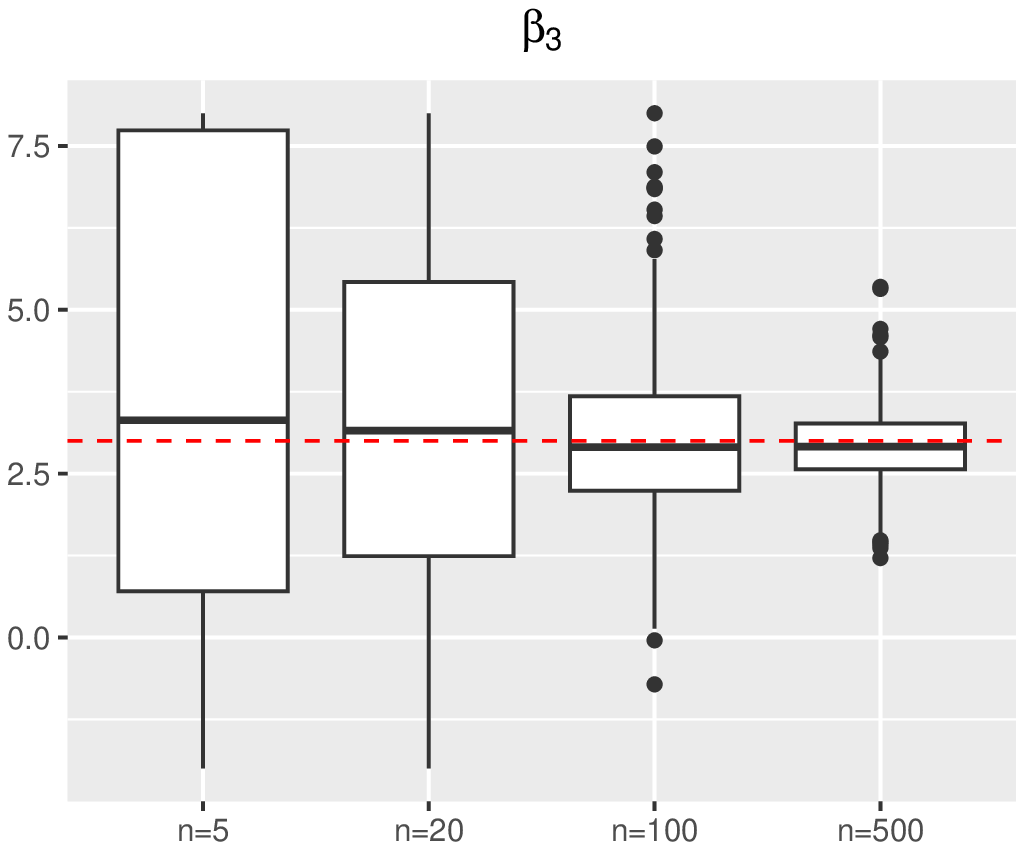}
     \caption{Exp8a: Boxplots of the estimation of $(\beta_1,\beta_2,\beta_3)$ for scenario 1 (left) and scenario 2 (right), based on 1000 samples from Frank bivariate copulas with parameters  $\phi(\bbbeta,\bX_{ki}) = 2+ 8U_{1ki} +3 U_{2ki}$ and Bernoulli  margins with parameters $ \left\{1+e^{-h(\balpha,\bX_{ki})}\right\}^{-1}$, where $h(\balpha,\bX_{ki})=1.5 -2U_{3ki} -.5U_{4ki}$, $k\in\setK$, $i\in \{1,\ldots, n_k\}$.}
    \label{fig:frankber-cop}
\end{figure}
\begin{figure}[ht!]
    \centering
 \includegraphics[scale=0.5]{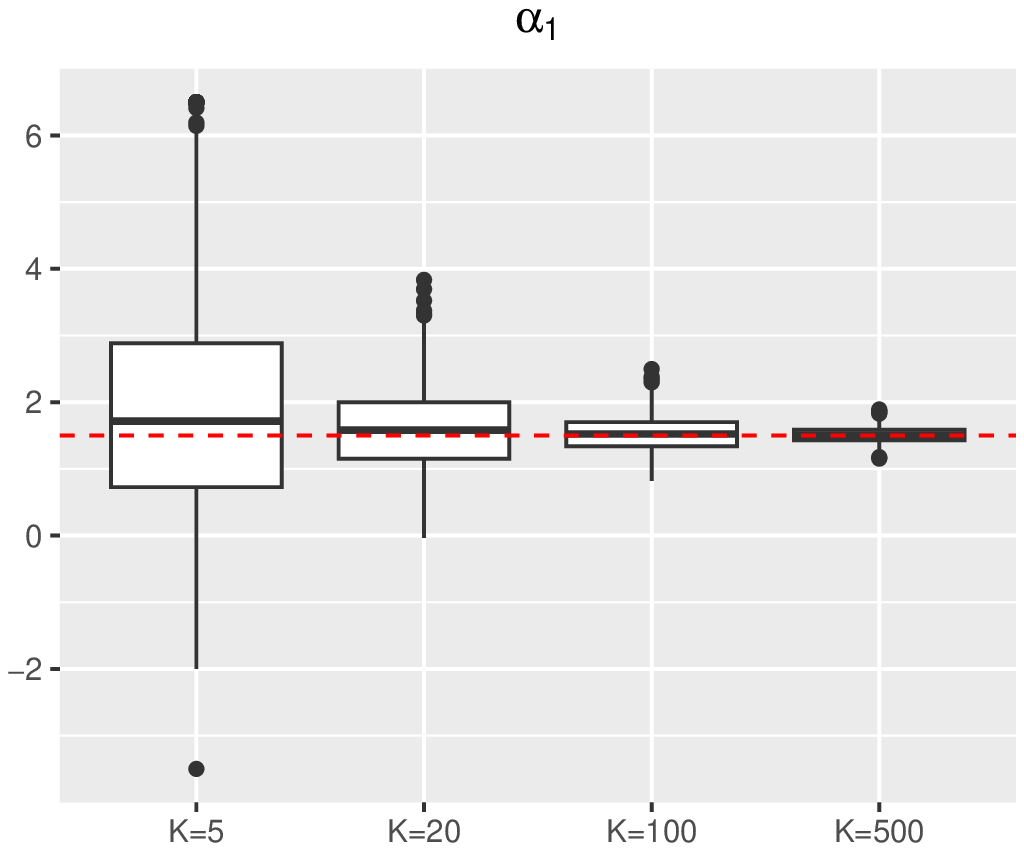}    \includegraphics[scale=0.5]{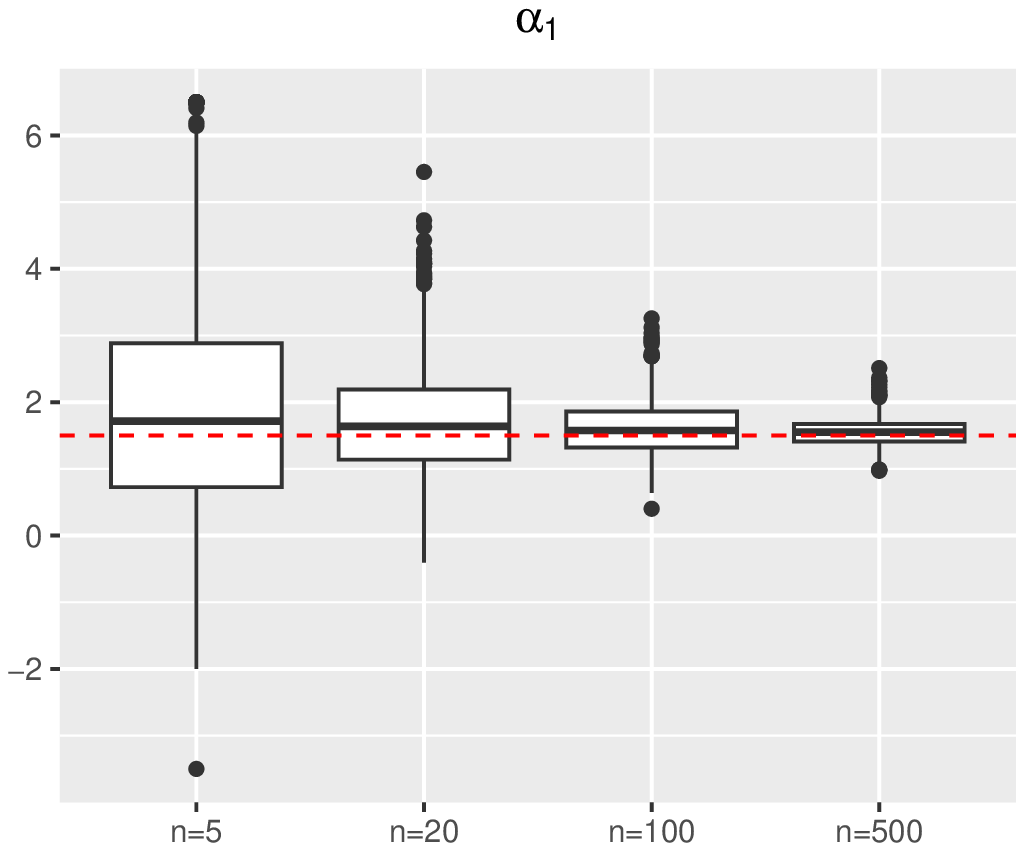}
 \includegraphics[scale=0.5]{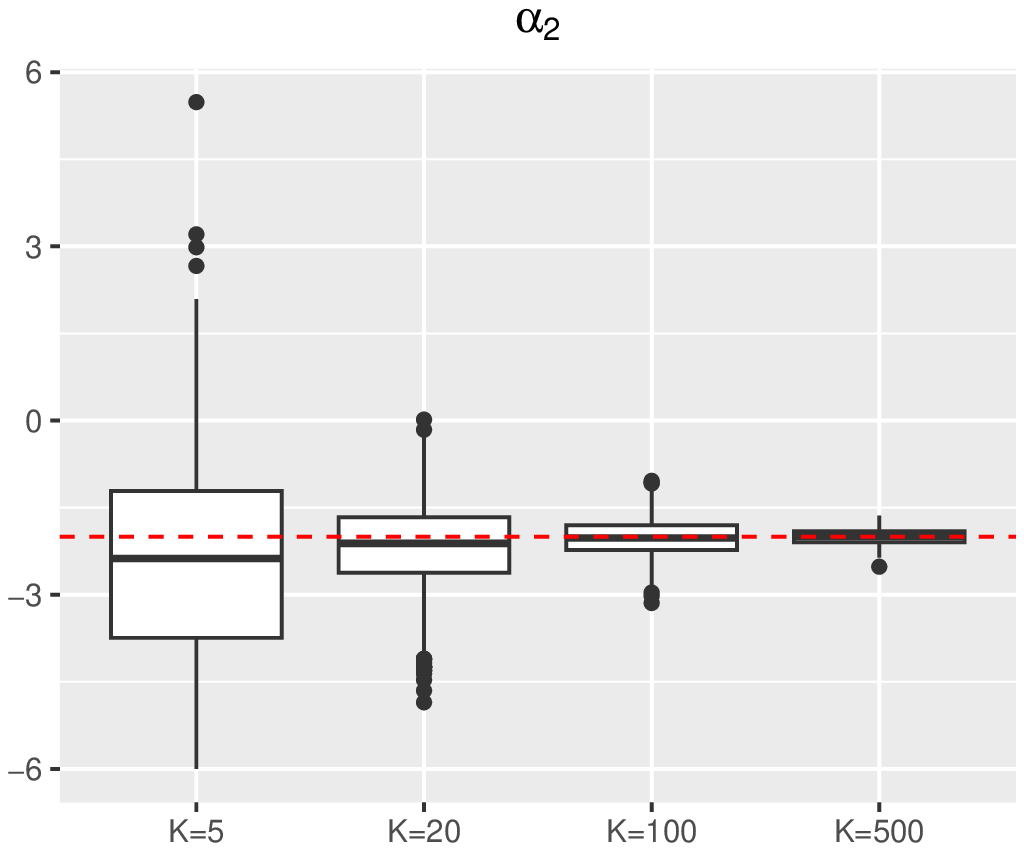}    \includegraphics[scale=0.5]{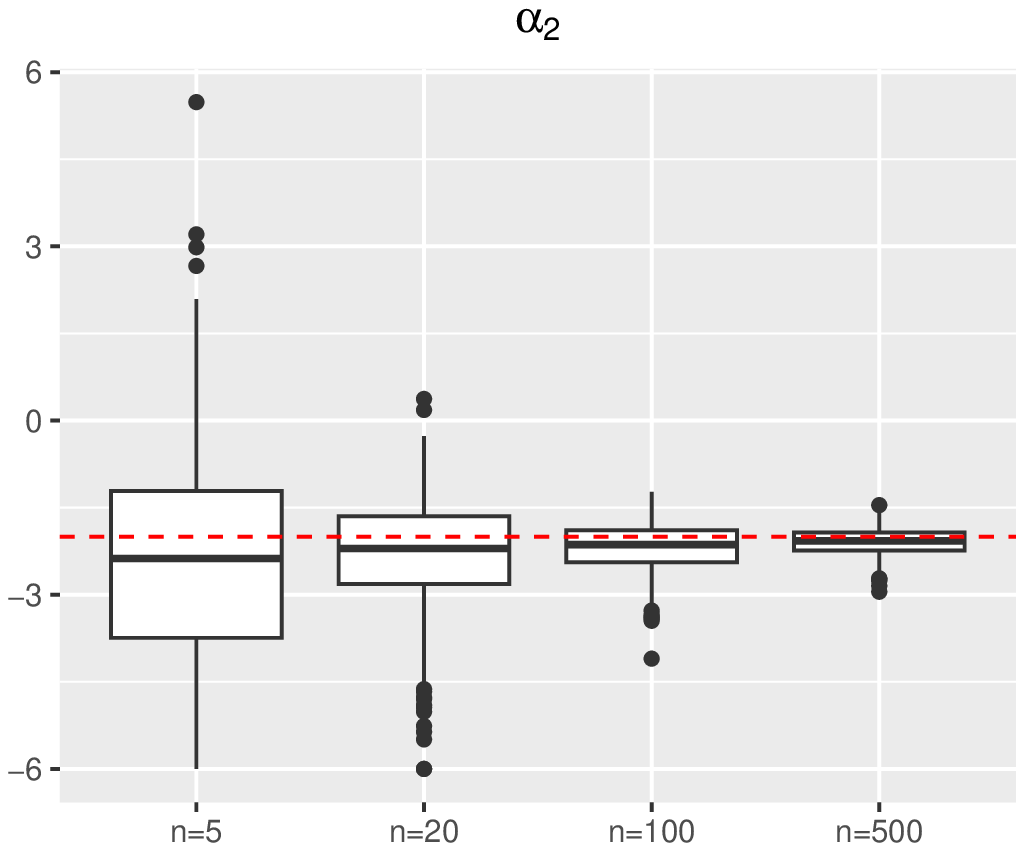}
 \includegraphics[scale=0.5]{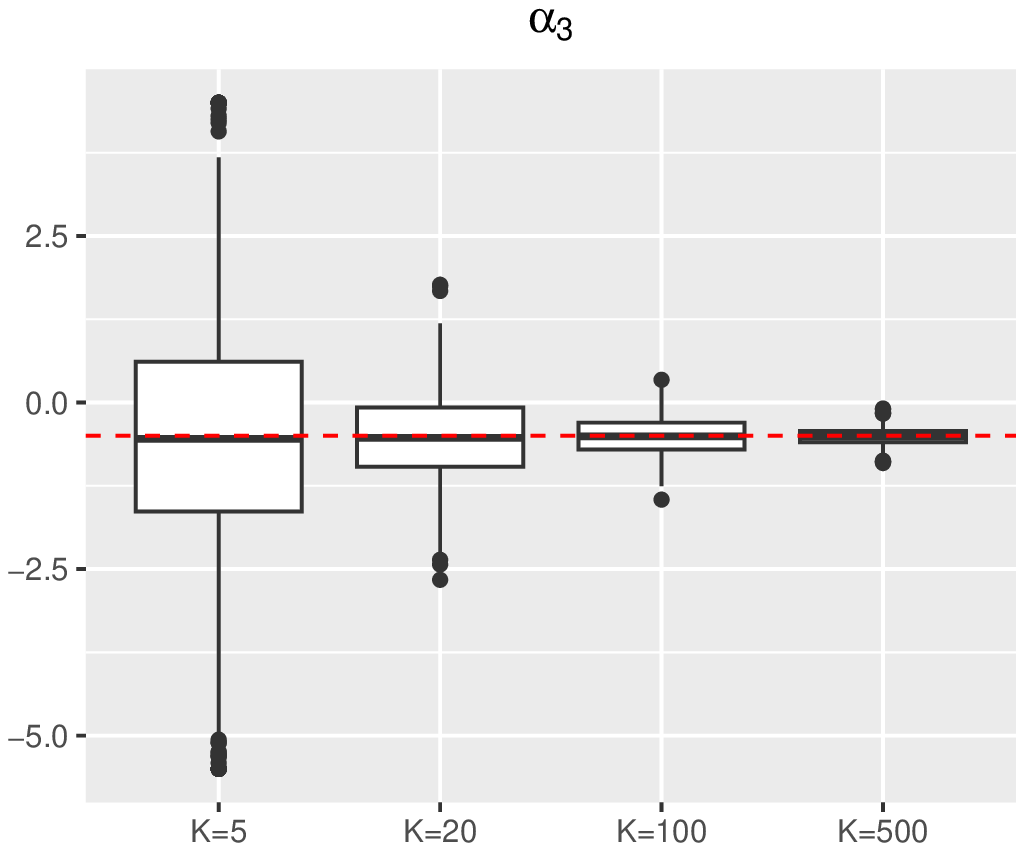}    \includegraphics[scale=0.5]{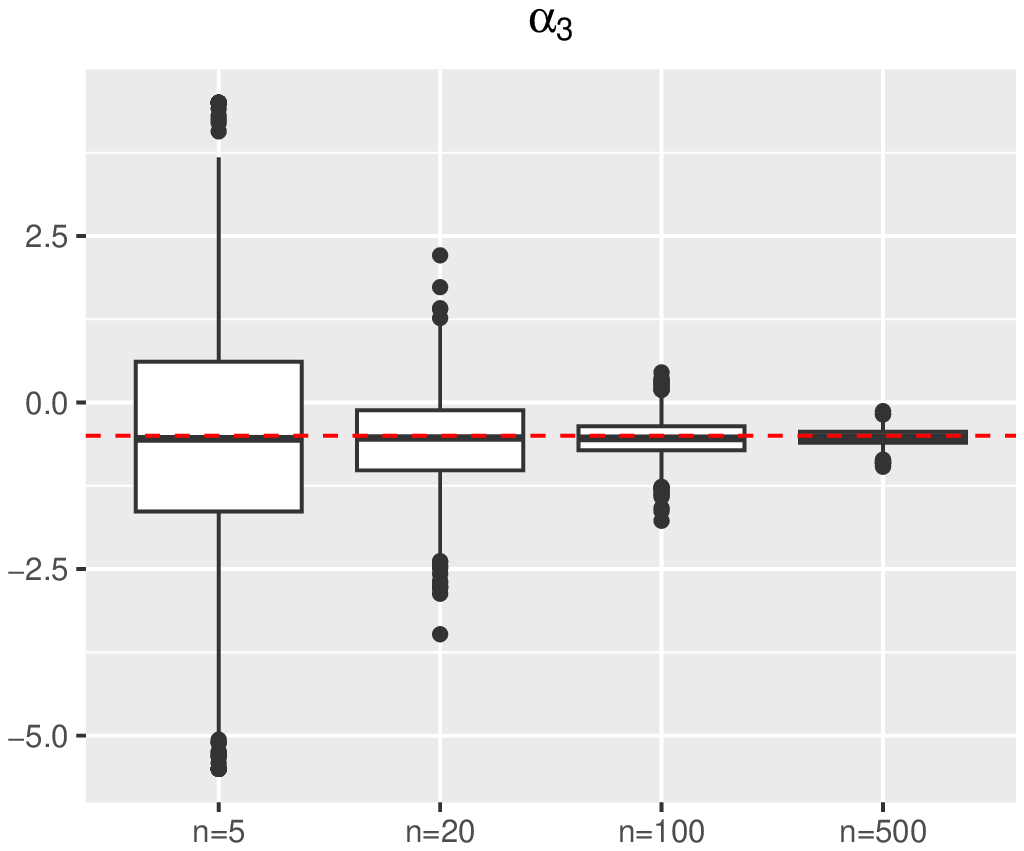}
    \caption{Exp8b: Boxplots of the estimation of $(\alpha_{1},\alpha_{2},\alpha_{3})$ for scenario 1 (left) and scenario 2 (right), based on 1000 samples from Frank bivariate copulas with parameters $\phi(\bbbeta,\bX_{ki}) = 2+ 8U_{1ki} +3 U_{2ki}$ and Bernoulli  margins with parameters $ \left\{1+e^{-h(\balpha,\bX_{ki})}\right\}^{-1}$, where $h(\balpha,\bX_{ki})=1.5 -2U_{3ki} -.5U_{4ki}$, $k\in\setK$, $i\in \{1,\ldots, n_k\}$.}
    \label{fig:frankber-mar}
\end{figure}
  Figure \ref{fig:exp1} illustrates that in the first scenario, when the number of clusters is large, parameter estimates converge to their true values, aligning with the expectations from Theorem \ref{thm:main}. In fact, the estimation of the parameters is good as soon as $K=20$. On the other hand, for the second scenario, while the parameter estimates seem asymptotically unbiased, they do not exhibit consistency when the number of clusters is fixed and the cluster size is increasing. This should not be surprising, as a similar behavior is observed in simple mixed linear models.
Next, according to Figures \ref{fig:exp2}--\ref{fig:exp5}, similar conclusions can be reached for the two scenarios: when the number of clusters is large, parameter estimates converge to their true values, while when the number of clusters is small as in the second scenario, parameter estimates seem  asymptotically unbiased, but they are still not consistent. These figures also demonstrate that when $K=20$, the estimation of $\beta_1,\beta_2$  can greatly vary, though the impact on the copula parameters $\phi(\bbbeta, \bx_{ki})$ might not be as significant. This is also true of the margin parameters, where their impact on the mean might not be significant.
Finally, when there are four covariates, as in experiments 6--8, the performance of the estimation is similar to the previous experiments. Once $K$ reaches a sufficiently large value, the estimations become highly accurate.

We also compared the precision of predictions by considering models with nine covariates for the margins. Table \ref{tab:pred-poisson} displays the results of 100 replications of 100 predictions for the (Poisson) GLMM  with log-link and copula-based models with
$\tau=0.5$, where the log-means for the Poisson margins are $\disp \alpha_0+\sum_{j=1}^9 \alpha_j X_{kij}$, and where at each iteration, the coefficients $\alpha_0,\ldots,\alpha_9$ are iid  uniform on $(0,1/2)$ and the covariates are iid
 standard Gaussian covariates. 11 coefficients were therefore estimated at each iteration, using $n=30$ observations per cluster. We computed the RMSE for the 100 predictions of new observations and
RMSE95, i.e., the RMSE for the 5 largest new observations. Table \ref{tab:pred-normal} displays similar results for a (Gaussian) GLMM, where the standard deviation of the Gaussian margins is uniformly distributed over $(3,10)$. In Table \ref{tab:pred-poisson}, where Poisson margins are employed, we see that the predictions consistently outperform those the copula-based models as they align with the true models. Turning to Table \ref{tab:pred-normal}, featuring Gaussian margins, the GLMM and copula-based models exhibit similar RMSE and RMSE95. However, for the Gumbel case, where the latter model yields superior predictions compared to the GLMM.

Next, we performed some numerical experiments to show the influence of the starting points on the estimation. Recall that since we are using the gradients, one can see if the algorithm converges or not. Specifically, in the cases of experiments 1, 5 and 7, we generated 1000 samples of different clusters and sample sizes, and we started the estimation from 6 or 7 sets of initial values, depending on the number of parameters. The RMSE for these experiments appears in Tables \ref{tab:sensExp1}--\ref{tab:sensExp7}. As exemplified in Table \ref{tab:sensExp1}, related to a Clayton copula and Gaussian margins,
for most starting points, the results are quite good. The RMSE decreases as $K$ increases, which is not the case as $n$ increases for $K=5$. The only problem is for the starting point $(0,5,1)$, where the marginal parameter is far from the true value. In this case, the algorithm does not converge, yielding NaNs and/or a large gradient.
The case where the initial values are labelled \textit{auto} corresponds to choosing the copula parameters all $0$ but the intercept with a value so that Kendall's tau is $0.4$, and the margin parameters are chosen as if the observations were independent. One can see that the results are
quite good, but for $K=5$. Next, in Table \ref{tab:sensExp5} related to Clayton copula with parameters $2e^{1-1.5U_{ki}}$ and Bernoulli margins with two parameters, the results are similar, with one starting point yielding poor convergence, one copula parameter being far from the true value. Here,
the case where the initial values are labelled \textit{auto} corresponds to choosing the copula parameters to be all $0$, which in this case yields a Kendall's tau of $0.5$, and the margin parameters are chosen as if the observations were independent, using glm with starting points all $0$ but for the intercept so that it yields the mean of the sample. Finally, from Table \ref{tab:sensExp7}, related to Frank's copula with parameters $2+8U_{1,ki}+3U_{2,ki}$ and Poisson margins with three parameters, there is also one starting point yielding large RMSE values for the copula parameters, while for the other starting points the results are correct. Also, for the margin parameters, there is another case where two parameters are not estimated correctly. The case where the initial values are labelled \textit{auto} corresponds to choosing the copula parameters all $0$, but the intercept, for which Kendall's tau is $0.5$, and the margin parameters are chosen as if the observations were independent using the R glm function and its automatic starting points. As a conclusion, one can see that the auto selection method provides results almost as good as if we were starting from the real values.
 %\begin{landscape}
    \begin{table}
    \caption{RMSE and RMSE95 for copula-based Clayton, Frank, Gaussian and Gumbel models with Poisson margins having 10 parameters.}\label{tab:pred-poisson}
    \medskip
    \begin{tabular}{cccccccccc}
    &       & \multicolumn{4}{c}{Clayton} & \multicolumn{4}{c}{Frank}  \\
    &       & \multicolumn{2}{c}{RMSE}    & \multicolumn{2}{c}{RMSE95}    & \multicolumn{2}{c}{RMSE}   &    \multicolumn{2}{c}{RMSE95}\\
%    $n$ & $k$   &    GLM    & Cop        &    GLM  &    Cop   &   GLM    & Cop        &   GLM    & Cop          \\
%10  &  25   &     &    \\
%10  &  50   &     &    \\
%10  &  100  &     &    \\
%10  &  200  &     &    \\
$n$ & $k$   &    GLM    & Cop        &    GLM  &    Cop   &   GLM    & Cop        &   GLM    & Cop          \\
%30  &  25   &   0.1208508& 0.1087412& 0.3626889 &0.2865841 &   0.1192355 & 0.1070883 & 0.3477460 & 0.2684118     \\
%30  &  50   &   0.1193500& 0.1079161& 0.3490397 &0.2802203 &   0.1176739 & 0.1054844 & 0.3435155 & 0.2666387     \\
%30  &  100  &   0.1185749& 0.1076588& 0.3416362 &0.2789213 &   0.1176117 & 0.1049100 & 0.3421376 & 0.2617973     \\
%30  &  200  &   0.1173908& 0.1066942& 0.3387607 &0.2763595 &   0.1184170 & 0.1060494 & 0.3396355 & 0.2609492     \\
30  &  25  & 0.1217 & 0.1093 & 0.3599 & 0.2859 & 0.1229 & 0.1083 & 0.3695 & 0.2734   \\
30  &  50  & 0.1215 & 0.1095 & 0.3545 & 0.2848 & 0.1213 & 0.1086 & 0.3528 & 0.2708   \\
30  &  100 & 0.1202 & 0.1090 & 0.3471 & 0.2824 & 0.1206 & 0.1079 & 0.3484 & 0.2668   \\
30  &  200 & 0.1216 & 0.1100 & 0.3530 & 0.2853 & 0.1221 & 0.1087 & 0.3561 & 0.2708   \\
%\hline
%$n$ & $k$   &   GLM     & Cop        &   GLM  &    Cop   &   GLM    & Cop & GLM    & Cop  \\
%50  &  25   &    & \\
%50  &  50   &    & \\
%50  &  100  &    & \\
%50  &  200  &    & \\
\hline\hline    &       & \multicolumn{4}{c}{Gaussian} & \multicolumn{4}{c}{Gumbel}  \\
    &       & \multicolumn{2}{c}{RMSE}    & \multicolumn{2}{c}{RMSE95}    & \multicolumn{2}{c}{RMSE}   &    \multicolumn{2}{c}{RMSE95}\\
$n$ & $k$   &    GLM   & Cop       &    GLM  &    Cop    &  GLM    & Cop       &   GLM    & Cop  \\
%30  &  25   &  0.1009098 & 0.06708482 & 0.3656714 & 0.1410162 &  0.1158039 & 0.09890406 & 0.3459674 & 0.2230448    \\
%30  &  50   &  0.1020648 & 0.06719854 & 0.3704850 & 0.1400605 &  0.1132010 & 0.09683375 & 0.3363025 & 0.2141434    \\
%30  &  100  &  0.1013978 & 0.06639708 & 0.3699036 & 0.1392087 &  0.1175835 & 0.09945145 & 0.3555506 & 0.2210433    \\
%30  &  200  &  0.1015010 & 0.06598054 & 0.3697901 & 0.1351788 &  0.1153246 & 0.09738777 & 0.3483018 & 0.2142679   \\
30  &  25  &  0.11369 & 0.10854 & 0.2427 & 0.2107& 0.1183 & 0.0990 & 0.3651 & 0.2228    \\
30  &  50  &  0.12555 & 0.10971 & 0.3358 & 0.2438& 0.1166 & 0.0997 & 0.3431 & 0.2210    \\
30  &  100 &  0.12904 & 0.10831 & 0.3975 & 0.2579& 0.1169 & 0.0910 & 0.3484 & 0.2171    \\
30  &  200 &  0.09232 & 0.08706 & 0.2285 & 0.1980& 0.1184 & 0.0998 & 0.3550 & 0.2209   \\
\hline
%$n$ & $k$   &    GLM    & Cop        &    GLM  &    Cop   & GLM    & Cop & GLM    & Cop  \\
%50  &  25   &      \\
%50  &  50   &      \\
%50  &  100  &      \\
%50  &  200  &      \\
    \end{tabular}
    \end{table}
% \end{landscape}

% \begin{landscape}
    \begin{table}
    \caption{RMSE and RMSE95 for copula-based Clayton, Frank, Gaussian and Gumbel models with Gaussian margins having 11 parameters.}\label{tab:pred-normal}
    \begin{tabular}{cccccccccc}
    &       & \multicolumn{4}{c}{Clayton} & \multicolumn{4}{c}{Frank}  \\
    &       & \multicolumn{2}{c}{RMSE}    & \multicolumn{2}{c}{RMSE95}    & \multicolumn{2}{c}{RMSE}   &    \multicolumn{2}{c}{RMSE95}\\
$n$ & $k$   &    GLM   & Cop       &   GLM  &    Cop   &   GLM   & Cop       &   GLM   & Cop  \\
30  &  25   &   0.4632 & 0.5326 & 1.054  & 1.0506  & 0.5300 & 0.5288 & 1.0203 & 1.0146 \\
30  &  50   &   0.4650 & 0.5195 & 1.0720 & 1.0696  & 0.5158 & 0.5145 & 0.9922 & 0.9890 \\
30  &  100  &   0.4684 & 0.5210 & 1.0798 & 1.0774  & 0.5127 & 0.5116 & 0.9891 & 0.9849 \\
30  &  200  &   0.4870 & 0.5256 & 1.1227 & 1.1205  & 0.5181 & 0.5170 & 0.9970 & 0.9924\\
\hline\hline
    &       & \multicolumn{4}{c}{Gaussian} & \multicolumn{4}{c}{Gumbel}  \\
    &       & \multicolumn{2}{c}{RMSE}    & \multicolumn{2}{c}{RMSE95}    & \multicolumn{2}{c}{RMSE}   &    \multicolumn{2}{c}{RMSE95}\\
$n$ & $k$   &    GLM   & Cop       &    GLM  &    Cop   &  GLM     & Cop       &   GLM   & Cop  \\
30  &  25   &  0.5104 & 0.5135 & 0.8494 & 0.8496  & 0.1174 & 0.0984 & 0.3551 & 0.2191  \\
30  &  50   &  0.4901 & 0.4939 & 0.8151 & 0.8151  & 0.1139 & 0.0967 & 0.3381 & 0.2092  \\
30  &  100  &  0.4674 & 0.4699 & 0.7749 & 0.7748  & 0.1164 & 0.0982 & 0.3500 & 0.2160  \\
30  &  200  &  0.4841 & 0.4848 & 0.7996 & 0.7993  & 0.1151 & 0.0979 & 0.3419 & 0.2128 \\
\hline
    \end{tabular}
    \end{table}
 %\end{landscape}

% \begin{landscape}
\begin{table}
\caption{Exp1: RMSE from different starting points, using $B=1000$ replications, from  Clayton copula with parameter $2$ and $N(10,1)$ margins, i.e.,  $\beta_1=0$, $\alpha_1=10$, and $\alpha_2=1$. }
\label{tab:sensExp1}
\begin{tabular}{cc|cccc||cccc}
$\btheta_0$ & Param & \multicolumn{4}{c}{$K$} & \multicolumn{4}{c}{$n$}\\
     &            & 5 & 20 & 100 & 500 &   5 & 20 & 100 & 500 \\
     \hline
0    & $\beta_1 $ & 0.8340 & 0.3050 & 0.1371 & 0.0868 & 0.3315 & 0.1954 & 0.0784 & 0.2143 \\
10   & $\alpha_1$ & 0.3412 & 0.1664 & 0.0817 & 0.0497 & 0.0720 & 0.1842 & 0.1904 & 0.1810 \\
1    & $\alpha_2$ & 0.1963 & 0.0944 & 0.0480 & 0.0324 & 0.0340 & 0.1206 & 0.0277 & 0.0650 \\
\hline
1    & $\beta_1 $ & 0.8496 & 0.3428 & 0.2078 & 0.1014 & 0.3315 & 0.1954 & 0.0784 & 0.2143 \\
10   & $\alpha_1$ & 0.3550 & 0.1929 & 0.1247 & 0.0573 & 0.0720 & 0.1842 & 0.1904 & 0.1810 \\
1    & $\alpha_2$ & 0.2121 & 0.1082 & 0.0758 & 0.0376 & 0.0340 & 0.1206 & 0.0277 & 0.0650 \\
\hline
0    & $\beta_1 $ & 1.2124 & 2.0000 & 1.1162 & 1.1095 & 1.1636 & 2.0000 & 2.0000 & 2.0000 \\
5    & $\alpha_1$ & 2.8720 & 2.8850 & 2.9307 & 2.9357 & 2.9013 & 3.3604 & 2.7316 & 2.8066 \\
1    & $\alpha_2$ & 3.9897 & 3.1358 & 4.0000 & 4.0000 & 4.0000 & 2.7714 & 3.8386 & 3.6526 \\
\hline
0    & $\beta_1 $ & 0.8498 & 0.3287 & 0.1687 & 0.0999 & 0.3315 & 0.1954& 0.0784& 0.2143 \\
10   & $\alpha_1$ & 0.3550 & 0.1832 & 0.0984 & 0.0568 & 0.0720 & 0.1842& 0.1904& 0.1810 \\
2.5  & $\alpha_2$ & 0.2016 & 0.1037 & 0.0593 & 0.0371 & 0.0340 & 0.1206& 0.0277& 0.0650 \\
\hline
0.3  & $\beta_1 $ & 0.8865 & 0.3673 & 0.1976 & 0.0969& 0.3315 & 0.1954 & 0.0784 & 0.2143 \\
8    & $\alpha_1$ & 0.3860 & 0.2166 & 0.1196 & 0.0546& 0.0720 & 0.1842 & 0.1904 & 0.1810 \\
1.5  & $\alpha_2$ & 0.3199 & 0.1201 & 0.0726 & 0.0361& 0.0340 & 0.1206 & 0.0277 & 0.0650 \\
\hline
\multirow{3}{*}{auto}
& $\beta_1 $ & 2.6541 & 0.3078 & 0.1538 & 0.0912 & 2.6541 & 1.1728 & 0.3297 & 0.2160 \\
& $\alpha_1$ & 0.3485 & 0.1695 & 0.0885 & 0.0517 & 0.3485 & 0.2841 & 0.2627 & 0.2056 \\
& $\alpha_2$ & 0.1993 & 0.0968 & 0.0551 & 0.0335 & 0.1993 & 0.1330 & 0.0837 & 0.0522 \\
\hline
    \end{tabular}
\end{table}

% \end{landscape}
\begin{table}
\caption{Exp5: RMSE from different starting points, using $B=1000$ replications,
from Clayton bivariate copulas with parameters $2e^{\phi(\bbbeta,\bX_{ki})}$, where $\phi(\bbbeta,\bX_{ki})=1 - 1.5 U_{ki}$,  and Bernoulli margins with parameters $ \left\{1+e^{-h(\balpha,\bX_{ki})}\right\}^{-1}$, where $h(\balpha,\bX_{ki})=2-3Z_{ki}$, i.e., $\beta_1=1$, $\beta_2=1.5$, $\alpha_1=2$, and $\alpha_2=-3$.}
\label{tab:sensExp5}
\begin{tabular}{cc|cccc||cccc}
$\btheta_0$ & Param & \multicolumn{4}{c}{$K$} & \multicolumn{4}{c}{$n$}\\
     &            & 5 & 20 & 100 & 500 &   5 & 20 & 100 & 500 \\
     \hline
1    & $\beta_1$  & 1.1445 & 0.8397& 0.5213& 0.2376& 1.1450& 0.8859& 0.5254&  0.2588  \\
1.5  & $\beta_2$  & 1.2688 & 1.3665& 0.8585& 0.3946& 1.2688& 1.2024& 0.6200&  0.2596  \\
2    & $\alpha_1$ & 1.2526 & 0.7173& 0.2772& 0.1150& 1.2526& 1.0284& 0.7029&  0.3980  \\
-3   & $\alpha_2$ & 1.4615 & 0.8526& 0.3229& 0.1296& 1.4615& 1.0342& 0.5948&  0.3089  \\
\hline
2    & $\beta_1$  &  1.1932& 0.8448& 0.5254& 0.2376& 1.1932& 0.8958& 0.5242&  0.2588  \\
-0.5 & $\beta_2$  &  1.2987& 1.3598& 0.8574& 0.3945& 1.2987& 1.2015& 0.6218&  0.2597  \\
2    & $\alpha_1$ &  1.0670& 0.7200& 0.2740& 0.1150& 1.0670& 1.0435& 0.6970&  0.3981  \\
-3   & $\alpha_2$ &  1.2320& 0.8447& 0.3191& 0.1298& 1.2320& 1.0288& 0.5947&  0.3085  \\
\hline
-0.5 & $\beta_1$  &   1.5214& 1.2072& 0.9113& 0.4534& 1.5214& 1.1287& 0.5951&  0.2588 \\
1    & $\beta_2$  &   2.2399& 1.9210& 1.4973& 0.7519& 2.2399& 1.6766& 0.7585&  0.2596 \\
2    & $\alpha_1$ &   0.6600& 0.5332& 0.2440& 0.1106& 0.6600& 0.8164& 0.7040&  0.3976 \\
-3   & $\alpha_2$ &   0.8199& 0.6596& 0.2891& 0.1252& 0.8199& 0.8704& 0.5999&  0.3095 \\
\hline
1   & $\beta_1$  &  1.1313& 0.8425& 0.5169& 0.2376& 1.1313& 0.8895& 0.5496&  0.4025 \\
-1.5& $\beta_2$  &  1.2544& 1.3700& 0.8644& 0.3945& 1.2544& 1.2111& 0.6083&  0.2425 \\
0.5 & $\alpha_1$ &  1.3392& 0.6952& 0.2707& 0.1150& 1.3392& 1.0222& 0.7205&  0.5598 \\
-1  & $\alpha_2$ &  1.5527& 0.8230& 0.3147& 0.1298& 1.5527& 0.9799& 0.6331&  0.5190 \\
\hline
1.5 & $\beta_1$  & 1.0315& 0.8397& 0.5217& 0.2376& 1.0315& 0.8863& 0.5353&  0.2618   \\
-2  & $\beta_2$  & 1.3022& 1.3584& 0.8624& 0.3945& 1.3022& 1.1915& 0.6262&  0.2604   \\
1   & $\alpha_1$ & 1.1848& 0.6834& 0.2707& 0.1150& 1.1848& 1.0027& 0.7006&  0.3986   \\
-2  & $\alpha_2$ & 1.3579& 0.8038& 0.3162& 0.1298& 1.3579& 0.9860& 0.5916&  0.3206  \\
\hline
-0.5 & $\beta_1$  &  1.5127& 1.3243& 1.3669& 1.4757& 1.5127& 1.2780& 1.1414&  1.0333  \\
1    & $\beta_2$  &  2.2817& 2.1785& 2.2874& 2.4604& 2.2817& 2.0738& 1.8723&  1.6982  \\
1    & $\alpha_1$ &  1.0125& 0.9008& 0.9067& 0.9836& 1.0125& 0.9594& 0.8335&  0.7398  \\
-2   & $\alpha_2$ &  1.1572& 0.9653& 0.9104& 0.9837& 1.1572& 1.0102& 0.7986&  0.7123  \\
\hline
\multirow{4}{*}{auto}
  & $\beta_1$  & 1.4557 & 1.0318 & 0.6616 & 0.3671 & 1.4557 & 1.0782 & 0.6005 & 0.5867     \\
  & $\beta_2$  & 1.8882 & 1.3665 & 0.7417 & 0.3531 & 1.8882 & 1.2154 & 0.4853 & 0.3918     \\
  & $\alpha_1$ & 2.9419 & 0.8087 & 0.3293 & 0.1436 & 2.9419 & 1.5352 & 0.6977 & 0.6900     \\
  & $\alpha_2$ & 4.2813 & 0.9686 & 0.3669 & 0.1652 & 4.2813 & 1.9871 & 0.9015 & 1.0416     \\
\hline
 \end{tabular}
\end{table} %Here, auto is with Pavel's EstDiscreteM with stat+-10000

%\begin{landscape}

% \end{landscape}

%\begin{landscape}
%\begin{landscape}
\begin{table}
\caption{Exp7: RMSE from different starting points, using $B=1000$ replications, from Frank bivariate copulas with parameters $\phi(\bbbeta,\bX_{ki}) = 2+ 8 U_{1,ki} + 3 U_{2,ki}$ and Poisson margins with means  $e^{h(\balpha,\bX_{ki})}$, where $h(\balpha,\bX_{ki})=3-U_{3,ki}-0.5U_{4,ki}$,
i.e., $\beta_1=2$, $\beta_2=8$, $\beta_3=3$, $\alpha_1=3$, $\alpha_2=-1$, and $\alpha_3=-0.5$.}
\label{tab:sensExp7}
\begin{tabular}{cc|cccc||cccc}
$\btheta_0$ & Param & \multicolumn{4}{c}{$K$} & \multicolumn{4}{c}{$n$}\\
     &            & 5 & 20 & 100 & 500 &   5 & 20 & 100 & 500 \\
     \hline
2    & $\beta_1$  &  2.8723 & 1.6921 & 0.7783 & 0.3352 & 2.8723 & 1.6446 & 0.7082   & 0.2990 \\
8    & $\beta_2$  &  3.4642 & 3.1500 & 1.6042 & 0.7137 & 3.4642 & 2.9746 & 1.3700   & 0.5896 \\
3    & $\beta_3$  &  3.7059 & 2.0429 & 0.8149 & 0.3552 & 3.7059 & 1.7521 & 0.6770   & 0.2821 \\
3    & $\alpha_1$ &  0.1135 & 0.0472 & 0.0201 & 0.0095 & 0.1135 & 0.0572 & 0.0270   & 0.0131 \\
-1   & $\alpha_2$ &  0.1793 & 0.0773 & 0.0325 & 0.0149 & 0.1793 & 0.0730 & 0.0327   & 0.0142 \\
-0.5 & $\alpha_3$ &  0.0882 & 0.0385 & 0.0169 & 0.0075 & 0.0882 & 0.0380 & 0.0168   & 0.0073 \\
\hline
5    & $\beta_1$  & 2.9705 & 1.7270 & 0.7783 & 0.3351 & 2.9705 & 1.6605 & 0.7081   & 0.2990  \\
5    & $\beta_2$  & 3.6805 & 3.1562 & 1.6041 & 0.7134 & 3.6805 & 2.9718 & 1.3699   & 0.5896  \\
1    & $\beta_3$  & 3.5037 & 2.0241 & 0.8149 & 0.3552 & 3.5037 & 1.7364 & 0.6770   & 0.2821  \\
3    & $\alpha_1$ & 0.1130 & 0.0472 & 0.0201 & 0.0095 & 0.1130 & 0.0576 & 0.0270   & 0.0131  \\
-1   & $\alpha_2$ & 0.1843 & 0.0772 & 0.0325 & 0.0149 & 0.1843 & 0.0733 & 0.0327   & 0.0142  \\
-0.5 & $\alpha_3$ & 0.0867 & 0.0383 & 0.0169 & 0.0075 & 0.0867 & 0.0383 & 0.0168   & 0.0073  \\
\hline
  -2 & $\beta_1$  &  3.7080 & 2.8451 & 1.4378 & 0.3763 & 3.7080 & 2.8603 & 1.3527   & 0.3470 \\
   4 & $\beta_2$  &  4.0057 & 3.6281 & 2.1447 & 0.8011 & 4.0057 & 3.5043 & 1.7140   & 0.6020 \\
  -1 & $\beta_3$  &  3.9501 & 2.9285 & 1.4783 & 0.4191 & 3.9501 & 2.6177 & 1.2234   & 0.4514 \\
   3 & $\alpha_1$ &  0.1126 & 0.0505 & 0.0234 & 0.0117 & 0.1126 & 0.0714 & 0.0399   & 0.0145 \\
  -1 & $\alpha_2$ &  0.1794 & 0.0883 & 0.0376 & 0.0191 & 0.1794 & 0.0984 & 0.0513   & 0.0148 \\
-0.5 & $\alpha_3$ &  0.0927 & 0.0443 & 0.0224 & 0.0083 & 0.0927 & 0.0473 & 0.0205   & 0.0077 \\
\hline
2   & $\beta_1$  &  1.6146 & 0.5055 & 0.4018 & 0.3830 & 1.6146 & 0.6623 & 0.7253   & 0.7912 \\
8   & $\beta_2$  &  1.6576 & 0.6478 & 0.5605 & 0.3904 & 1.6576 & 0.8056 & 0.8652   & 0.9576 \\
3   & $\beta_3$  &  1.7768 & 0.5872 & 0.4405 & 0.3149 & 1.7768 & 0.5913 & 0.7743   & 0.9051 \\
3   & $\alpha_1$ &  1.3132 & 0.8999 & 0.4428 & 0.1978 & 1.3132 & 1.0832 & 0.8329   & 0.9249 \\
2   & $\alpha_2$ &  2.0697 & 2.6770 & 2.8684 & 2.9738 & 2.0697 & 2.5397 & 2.4847   & 2.2434 \\
2.5 & $\alpha_3$ &  2.2203 & 2.7332 & 2.8407 & 2.9547 & 2.2203 & 2.6017 & 2.4353   & 2.1722 \\
\hline
1   & $\beta_1$  &  2.3891 & 1.7354 & 1.2465 & 1.0866 & 2.3891 & 1.6391 & 1.1206   & 1.03581   \\
6   & $\beta_2$  &  2.6610 & 2.3857 & 2.0817 & 2.0195 & 2.6610 & 2.2789 & 2.0667   & 2.02165   \\
1   & $\beta_3$  &  2.9733 & 2.1958 & 2.1113 & 2.0487 & 2.9733 & 2.1385 & 2.0344   & 2.00923   \\
1   & $\alpha_1$ &  1.2624 & 1.5249 & 1.7913 & 1.9476 & 1.2624 & 1.6226 & 1.9072   & 1.99194   \\
-2  & $\alpha_2$ &  0.9007 & 0.8935 & 0.9380 & 0.9888 & 0.9007 & 0.9079 & 0.9730   & 1.01428   \\
0.5 & $\alpha_3$ &  0.8161 & 0.8834 & 0.9984 & 1.0313 & 0.8161 & 0.9210 & 1.0025   & 1.01608   \\
\hline
1   & $\beta_1$  &  2.9378 & 1.6777 & 0.8375 & 0.4207 & 2.9378 & 1.6394 & 0.8354   & 0.3946   \\
6   & $\beta_2$  &  3.4256 & 3.1149 & 1.6852 & 0.8767 & 3.4256 & 2.9722 & 1.4659   & 0.7713   \\
1   & $\beta_3$  &  3.6971 & 2.0279 & 0.9608 & 0.5045 & 3.6971 & 1.7665 & 0.8526   & 0.5330   \\
2   & $\alpha_1$ &  0.1494 & 0.2131 & 0.2346 & 0.1321 & 0.1494 & 0.1938 & 0.2564   & 0.1904   \\
-2  & $\alpha_2$ &  0.2034 & 0.2323 & 0.2472 & 0.1388 & 0.2034 & 0.2040 & 0.2588   & 0.1949   \\
-1.5& $\alpha_3$ &  0.1313 & 0.2133 & 0.2421 & 0.1457 & 0.1313 & 0.1918 & 0.2618   & 0.1958   \\
\hline
\multirow{6}{*}{auto}
& $\beta_1$  &5.2858 & 1.8143 & 0.7613 & 0.3352 & 5.2858 & 1.6166 & 0.6742 & 0.2978\\
& $\beta_2$  &7.5362 & 2.8155 & 1.4361 & 0.7130 & 7.5362 & 2.4763 & 1.2516 & 0.5871\\
& $\beta_3$  &5.8312 & 2.1921 & 0.8100 & 0.3554 & 5.8312 & 1.8649 & 0.6865 & 0.2814\\
& $\alpha_1$ &0.1193 & 0.0474 & 0.0201 & 0.0095 & 0.1193 & 0.0604 & 0.0271 & 0.0130\\
& $\alpha_2$ &0.1885 & 0.0778 & 0.0327 & 0.0149 & 0.1885 & 0.0740 & 0.0328 & 0.0142\\
& $\alpha_3$ &0.0911 & 0.0383 & 0.0170 & 0.0075 & 0.0911 & 0.0390 & 0.0168 & 0.0073\\
\hline
\end{tabular}
\end{table} %Here, auto is with Bruno's EstDiscreNew with stat+-10
% \end{landscape}

\subsection{Comparisons of models}\label{ssec:comp}
In this section, we compare the performance of the copula-based models with GAMM competitors in terms of the selection percentage of the models with respect to AIC and BIC criteria. To do so, for each of the eight clustered Data Generating Processes described next, we simulated data $(Y_{ki},X_{ki})$, $i\in\{1,\ldots,30\}$ and $k\in\{1,\ldots,K\}$, with $K\in \{25, 50, 100, 200\}$. We then computed the percentage of time each model among GAMM1--GAMM5 and the five copula-based models were selected based on AIC and BIC criteria. The results are displayed in Table \ref{tab:comp}, where the results for the theoretical model are highlighted in bold.\\

We considered eight clustered Data Generating Processes, defined as follows:
for $i\in\{1,\ldots,30\}$, $k\in\{1,\ldots,K\}$, $X_{ki} $ are iid uniform, $s_1(x) = 10+0.5 x$,
$s_2(x) =  (2+x)\I\{x<0.5\}+ (4-3x)\I\{x\ge 0.5\}$, $s_7(x) = -1.69+3x$, $s_8(x) = (-2.7+10.6x)\I(x\le 0.5) + (4.6-4x)I(x > 0.5) $, and
\begin{itemize}
\item
DGP1: $Y_{ki} = \eta_k + s_1(X_{ki})  + \epsilon_{ki}$, where $\eta_k\sim N(0,1)$, $\epsilon_{ki}\sim N\left(0,1.5^2\right)$;
\item
DGP2: $Y_{ki} = \eta_k + s_2(X_{ki})+ \epsilon_{ki}$, where $\eta_k\sim N(0,1)$, $\epsilon_{ki}\sim N\left(0,1.5^2\right)$;
\item
DGP3: Copula-based model with Gumbel copula with parameter $2$ and Gaussian margins with mean $s_1(X_{ki}) $ and variance $1.5^2$.
\item
DGP4: Copula-based model with Gumbel copula with parameter $2$ and Gaussian margins with mean $s_2(X_{ki})$ and variance $1.5^2$.
\item DGP5: Copula-based model with Clayton copula with parameter $2$ and Poisson margins with mean $e^{1.5X_{ki}} $;
\item DGP6: Copula-based model with Clayton copula with parameter $2$ and Poisson margins with mean $e^{s_2(X_{ki})} $;
\item DGP7: Copula-based model with Frank  copula with parameter $6$ and Bernoulli margins with mean $p_{ki} = \left(1+ e^{-s_7(X_{ki})}\right)^{-1}$, i.e.,  $y_{ki}=\I\{U_{ki}> 1-p_{ki}\}$, where $(U_{ki},V_{k})$ are generated from Frank copula;
\item DGP8: Copula-based model with Frank copula with parameter $6$ and Bernoulli margins with mean $p_{ki} = \left(1+ e^{-s_8(X_{ki})}\right)^{-1}$.
\end{itemize}
%Note that for DGP7--DGP8, the range of the probabilities $p_{ki}$ is about $(0.2,0.8)$.
Next, consider the linear functions $h_1(\bb,x) = b_0+b_1 x$, $h_2(\bb,x) = b_0+b_1 x+b_2 x^2$,
$h_3(\bb,z)= b_0+b_1 sp_{31}(x)+b_2 sp_{32}(x)$, where $sp_{31}$ and $sp_{32}$ are the basis functions for B-splines of degree $1$ and one knot at $0.5$, $h_4(\bb,x)= b_0+b_1 sp_{41}(x)+b_2 sp_{42}(x)+b_3 sp_{43}(x)+b_4 sp_{44}(x)$, where $sp_{41},\ldots,sp_{44}$ are the basis functions for B-splines of degree $2$ and knots at $0.33,0.67$, and $h_5(\bb,x)= b_0+b_1 sp_{51}(x)+b_2 sp_{52}(x)+b_3 sp_{53}(x)+b_4 sp_{54}(x)+b_5 sp_{55}(x)$, where $sp_{51},\ldots,sp_{55}$ are the basis functions for B-splines of degree $2$ and knots at $0.25,0.5,0.75$.
We denote the corresponding GAMM models as GAMM1--GAMM5. For DGP1--DGP4, the link function is the identity and we implemented a Gaussian mixed linear regression, while for DGP5--DGP6, we implemented a Poisson mixed linear regression and log link function. Finally, for DGP7--DGP8, we implemented a binomial mixed linear regression with ``logit'' link function.
When implementing the copula-based models with constant Clayton, Gaussian, Gumbel, Frank, and Student(15) copulas,
we assumed the margins were Gaussian with mean $h_1(\bb,X_{ki}) $ and constant variance for DGP1 and DGP3, and
  Gaussian with a mean of $h_3(\bb,X_{ki})$ and a constant variance for DGP2, DGP4.
For DGP5 and DGP6, the margins are Poisson, with respective log-means of $h_1(\bb,X_{ki})$ and  $h_3(\bb,X_{ki})$.
  Finally, for DGP7--DGP8, the margins are Bernoulli with logit functions $h_1(\bb, X_{ki})$ and $h_3(\bb, Z_{ki})$ respectively.\\

For DGP1, both GAMM1 and the Gaussian copula-based model are equivalent, as shown in Appendix \ref{app:gaussian}. However, as can be seen from the results in Table \ref{tab:comp}, as $K$ increases, the copula-based model is mostly preferred to the linear mixed model. This also occurs in DGP2. This might be due to the fact that the coefficient of the copula (correlation) is easier to estimate than the variances of the Gaussian GAMM. For DGP3 to DGP6, the best model is the copula-based model, and the percentage of success is almost 100\%, even for small $K$. Finally, for DGP7 and DGP8, the correct copula-based model consistently outperforms other models, with a success rate increasing with $K$. As noted in Proposition \ref{prop:gamm}, there is an infinite number of copulas yielding the Logistic model, so it is not a surprise that the real underlying copula is not selected as often as in the previous cases. However, with $K=200$, the success rate is over 90\%.

\begin{landscape}
\begin{table}[ht!]
\caption{
% \hskip -.3 in
Percentage of time each model is chosen, according to the AIC and BIC criteria over 100 replications, for models DGP1--DGP8,
using GAMM1--GAMM5, and copula-based models with Clayton, Gaussian, Gumbel,
Frank, and Student(15) copulas. The scores for the theoretical %\\
%\hskip -6.4 in
model are in bold.}\label{tab:comp}
{\tiny \begin{tabular}{cccccccccccccccccccccc}
DGP  & K & \multicolumn{2}{c}{GAMM1} & \multicolumn{2}{c}{GAMM2}  & \multicolumn{2}{c}{GAMM3}  & \multicolumn{2}{c}{GAMM4}
& \multicolumn{2}{c}{GAMM5}  & \multicolumn{2}{c}{Clayton}  & \multicolumn{2}{c}{Gaussian} &  \multicolumn{2}{c}{Gumbel} &  \multicolumn{2}{c}{Frank}
&  \multicolumn{2}{c}{Student(15)}\\
&& AIC & BIC  & AIC & BIC & AIC & BIC & AIC & BIC & AIC & BIC & AIC & BIC & AIC & BIC & AIC & BIC & AIC & BIC & AIC & BIC \\
\hline
\multirow{4}{*}{1}
&  25  &  \textbf{33}  & \textbf{40}  &  7  &  0  &  7  &  0  &  1  &  0  &  2  &   0   &  0  &   0  &  \textbf{24}  &  \textbf{31} &    0  & 0   &  4  &   4  &  22  &  25\\
&  50  &  \textbf{31}  & \textbf{41}  &  8  &  1  &  6  &  1  &  9  &  0  &  2  &   0   &  0  &   0  &  \textbf{30} &  \textbf{42} &    0  & 0   &  1  &   1  &  13  &  14\\
&  100 &  \textbf{30}  & \textbf{42}  &  8  &  0  &  8  &  1  &  7  &  0  &  2  &   0   &  0  &   0  &  \textbf{41}  &  \textbf{51} &    0  & 0   &  0  &   0  &   4  &   6\\
&  200 &  \textbf{26}  & \textbf{36}  &  7  &  0  & 12  &  0  &  3  &  0  &  3  &   0   &  0  &   0  &  \textbf{48}  &  \textbf{63} &    0  & 0   &  0  &   0  &   1  &   1\\
\hline
\multirow{4}{*}{2}
&  25  &  0  &  0 &  21  & 22  & \textbf{25}  & \textbf{28}  &  2  &  0  &  3   &  0 &    0  &   0  &  \textbf{24}  &  \textbf{25} &    0  & 0   &  3   &  3  &  22  &  22   \\
&  50  &  0  &  0 &  16  & 17  & \textbf{33}  & \textbf{36}  &  7  &  0  &  3   &  0 &    0  &   0  &  \textbf{29}  &  \textbf{33} &    0  & 0   &  0   &  1  &  12  &  13   \\
&  100 &  0  &  0 &   6  &  7  & \textbf{31}  & \textbf{36}  &  9  &  0  &  2   &  0 &    0  &   0  &  \textbf{47}  &  \textbf{51} &    0  & 0   &  0   &  0  &   5  &   6    \\
&  200 &  0  &  0 &   1  &  1  & \textbf{30}  & \textbf{36}  &  4  &  0  &  6   &  0 &    0  &   0  &  \textbf{58}  &  \textbf{62} &    0  & 0   &  0   &  0  &   1  &   1   \\
\hline
\multirow{4}{*}{3}
&  25  & 0  &  0  &  1   & 0  &  0  &  0  &  1   & 0  &  0  &   0   &  0   &  0   &  0  &   0  & \textbf{ 98}  & \textbf{ 99}   &  0  &   1   &  0   &  0   \\
&  50  & 0  &  0  &  0   & 0  &  0  &  0  &  0   & 0  &  0  &   0   &  0   &  0   &  0  &   0  & \textbf{100}  & \textbf{100}   &  0  &   0   &  0   &  0   \\
&  100 & 0  &  0  &  0   & 0  &  0  &  0  &  0   & 0  &  0  &   0   &  0   &  0   &  0  &   0  & \textbf{100}  & \textbf{100}   &  0  &   0   &  0   &  0   \\
&  200 & 0  &  0  &  0   & 0  &  0  &  0  &  0   & 0  &  0  &   0   &  0   &  0   &  0  &   0  & \textbf{100}  & \textbf{100}   &  0  &   0   &  0   &  0   \\
\hline
\multirow{4}{*}{4}
&  25  & 0  &  0   & 0 &   0 &   0  &  0 &   0  &  0  &  0  &   0   &  0  &   0  &   0  &   0  & \textbf{100}  & \textbf{100}  &   0  &   0  &   0  &   0 \\
&  50  & 0  &  0   & 0 &   0 &   0  &  0 &   0  &  0  &  0  &   0   &  0  &   0  &   0  &   0  & \textbf{100}  & \textbf{100}  &   0  &   0  &   0  &   0 \\
&  100 & 0  &  0   & 0 &   0 &   0  &  0 &   0  &  0  &  0  &   0   &  0  &   0  &   0  &   0  & \textbf{100}  & \textbf{100}  &   0  &   0  &   0  &   0 \\
&  200 & 0  &  0   & 0 &   0 &   0  &  0 &   0  &  0  &  0  &   0   &  0  &   0  &   0  &   0  & \textbf{100}  & \textbf{100}  &   0  &   0  &   0  &   0 \\
\hline
%\multirow{4}{*}{5}
%&  25  & 0  &  0  &  0 &   0  &  0  &  0   & 0  &  0 &   0  &   0 &  \textbf{100} &  \textbf{100}   &  0   &  0   &  0 & 0   &  0   &  0  &   0  &   0  \\
%&  50  & 0  &  0  &  0 &   0  &  0  &  0   & 0  &  0 &   0  &   0 &  \textbf{100} &  \textbf{100}   &  0   &  0   &  0 & 0   &  0   &  0  &   0  &   0\\
%&  100 & 0  &  0  &  0 &   0  &  0  &  0   & 0  &  0 &   0  &   0 &  \textbf{100} &  \textbf{100}   &  0   &  0   &  0 & 0   &  0   &  0  &   0  &   0\\
%&  200 & 0  &  0  &  0 &   0  &  0  &  0   & 0  &  0 &   0  &   0 &  \textbf{100} &  \textbf{100}   &  0   &  0   &  0 & 0   &  0   &  0  &   0  &   0\\
\multirow{4}{*}{5}
&  25  &18   & 22 & 3 & 1 & 0& 0 & 3 & 0& 0 & 0 & \textbf{75}& \textbf{76}& 1  & 1  & 0  & 0   & 0  & 0 & 0  & 0  \\
&  50  & 9   & 11 & 0 & 0 & 1& 0 & 1 & 0& 0 & 0 & \textbf{89}& \textbf{89}& 0  & 0  & 0  & 0   & 0  & 0 & 0  & 0 \\
&  100 & 2   &  4 & 0 & 0 & 1& 0 & 1 & 0& 0 & 0 & \textbf{96}& \textbf{96}& 0  & 0  & 0  & 0   & 0  & 0 & 0  & 0 \\
&  200 & 1   &  1 & 0 & 0 & 0& 0 & 0 & 0& 0 & 0 & \textbf{99}& \textbf{99}& 0  & 0  & 0  & 0   & 0  & 0 & 0  & 0 \\   \hline
\multirow{4}{*}{6}
&  25  & 0  &  0  &  0 &   0  &  0  &  0   & 0  &  0 &   0  &   0 &  \textbf{100} &  \textbf{100}   &  0   &  0   &  0 & 0   &  0   &  0  &   0  &   0  \\
&  50  & 0  &  0  &  0 &   0  &  0  &  0   & 0  &  0 &   0  &   0 &  \textbf{100} &  \textbf{100}   &  0   &  0   &  0 & 0   &  0   &  0  &   0  &   0\\
&  100 & 0  &  0  &  0 &   0  &  0  &  0   & 0  &  0 &   0  &   0 &  \textbf{100} &  \textbf{100}   &  0   &  0   &  0 & 0   &  0   &  0  &   0  &   0\\
&  200 & 0  &  0  &  0 &   0  &  0  &  0   & 0  &  0 &   0  &   0 &  \textbf{100} &  \textbf{100}   &  0   &  0   &  0 & 0   &  0   &  0  &   0  &   0\\
\hline
\multirow{4}{*}{7}
%&  25  & 2  &  2  &  2 &   0  &  2  &  0  &  1  &  0  &  0   &  0  &  11  &  12  &   4  &   4  &   2  &  2  &  \textbf{69}  &  \textbf{72}   &  7   &  8  \\
%&  50  & 1  &  2  &  1 &   0  &  1  &  0  &  0  &  0  &  1   &  0  &   0  &   0  &   4  &   4  &   1  &  1  &  \textbf{83}  &  \textbf{84}   &  8   &  9  \\
%&  100 & 1  &  2  &  0 &   0  &  0  &  0  &  3  &  0  &  0   &  0  &   0  &   0  &   1  &   1  &   0  &  1  &  \textbf{90}  &  \textbf{90}   &  5   &  6  \\
%&  200 & 0  &  0  &  0 &   0  &  0  &  0  &  0  &  0  &  1   &  0  &   0  &   0  &   1  &   1  &   0  &  0  &  \textbf{97}  &  \textbf{97}   &  1   &  2 \\
&  25  & 2 & 2 & 2 & 0 & 2 & 0 & 1 & 0 & 0  & 0 & 11& 12& 4 & 4 & 2& 2  & \textbf{69} & \textbf{72}& 7 & 8\\
&  50  & 1 & 2 & 1 & 0 & 1 & 0 & 0 & 0 & 1  & 0 &  0&  0& 4 & 4 & 1& 1  & \textbf{83} & \textbf{84}& 8 & 9\\
&  100 & 1 & 2 & 0 & 0 & 0 & 0 & 3 & 0 & 0  & 0 &  0&  0& 1 & 1 & 0& 1  & \textbf{90} & \textbf{90}& 5 & 6\\
&  200 & 0 & 0 & 0 & 0 & 0 & 0 & 0 & 0 & 1  & 0 &  0&  0& 1 & 1 & 0& 0  & \textbf{97} & \textbf{97}& 1 & 2\\
\hline
\multirow{4}{*}{8}
&   25 & 0 &   0  &  1  &  1 &   6  &  7  &  2  &  0  &  2  &   0  &   3   &  3  &   8  &   9  & 6 & 6  &  \textbf{69}  &  \textbf{71}  &   3  &   3\\
&   50 & 0 &   0  &  0  &  0 &   2  &  2  &  0  &  0  &  2  &   0  &   0   &  0  &   4  &   4  & 1 & 1  &  \textbf{88}  &  \textbf{90}  &   3  &   3\\
&  100 & 0 &   0  &  0  &  0 &   3  &  3  &  1  &  0  &  0  &   0  &   0   &  0  &   3  &   3  & 0 & 0  &  \textbf{93}  &  \textbf{94}  &   0  &   0\\
&  200 & 0 &   0  &  0  &  0 &   0  &  1  &  1  &  0  &  0  &   0  &   0   &  0  &   0  &   0  & 0 & 0  &  \textbf{99}  &  \textbf{99}  &   0  &   0\\
\hline
\end{tabular} }
\end{table}
\end{landscape}
Beyond calculating the selection percentage for models DGP1--DGP8, each with no more than two covariates, we also determined the percentage of selection for models DGP9--DGP12, each having nine covariates. More precisely, we considered the following four models:
\begin{itemize}
\item
DGP9: Copula-based model with the Gaussian copula with $\tau=0.5$ and Gaussian margins with means $\disp \alpha_0+\sum_{j=1}^9 \alpha_j X_{kij}$ and a variance $1$; at each iteration, the coefficients $\alpha_0,\ldots,\alpha_9$ and the covariates are iid uniform variates on $(0,1)$.
\item
DGP10: Copula-based model with Gumbel copula with $ \tau=0.5$ and Gaussian margins with the same means an in DGP9, but with the standard deviation uniformly selected in $(3,10)$.
\item DGP11: Copula-based model with Clayton copula with $\tau=0.5$ and Poisson margins with log-means $\disp \beta_0+\sum_{j=1}^9 \beta_j X_{kij}$; at each iteration, the coefficients $\alpha_0,\ldots,\alpha_9$ are iid uniform variates on $(0,0.5)$ and $X_{k1},\ldots, X_{k9}$ are iid $N(0,1)$.
\item DGP12: Copula-based model with Frank copula with $\tau=0.5$ and Bernoulli margins with $\disp \log\left(\frac{p_{ki}}{1-p_{ki}}\right) = \alpha_0+\sum_{j=1}^9 \alpha_j X_{kij}$; at each iteration, the coefficients $\alpha_0,\ldots,\alpha_9$ are iid uniform variates on $(0,0.1)$ and $X_{ki1},\ldots, X_{ki9}$ are iid $N(0,1)$.
\end{itemize}
The corresponding GLMMs were used for comparison. For DGP9 and DGP10, we employed a Gaussian mixed linear regression with the identity link function. DGP11 was modeled using a Poisson GLMM with a log link function, and for DGP12, a GLMM with a logit link function was implemented.
 As can be seen from the results in Table \ref{tab:comp10}, even with nine covariates, we obtain results that are similar to those reported in Table \ref{tab:comp}. In particular, the copula-based model is mostly preferred to the corresponding GLMM. For DGP10 and DGP11, the best model is the copula-based model, and the percentage of success is almost 100\%, even for a small $K$. Finally, for DGP12, the correct copula-based model is always the best, with a large rate of success increasing with $K$. For the same models, we computed the RMSE for the 10 coefficients of the link function. The results, displayed in Table \ref{tab:RMSE10}, show that the real copula-based models are more precise for DGP10--DGP12. For DGP9, the RMSE of the GLMM and the Gaussian-copula model are obviously equivalent and better than the other models.

Finally, in order to address the mis-specification problem for the copula from the prediction angle, we performed two more experiments, generating observations from a given copula model, and computing the RMSE and RMSE95 for 100 new observations, according to the GLMM and five copula-based models: Clayton, Gaussian, Gumbel, Frank, and Student with 15 degrees of freedom. The cluster sizes we considered are $K\in \{25, 50, 100, 200\}$. The results for Poisson margins with ten parameters are displayed in Table \ref{tab:pred10Poisson}, while those for  Gaussian margins with 11 parameters appear in Table \ref{tab:pred10Gaussian}.
% , and the results for binomial margins with ten parameters are displayed in
% Table \ref{tab:pred10Binomial}.
The results in Table \ref{tab:pred10Poisson} show that for all copula but the Student, the RMSE and RMSE95 are significantly better for the true copula model, while for the Student copula, the results are very close to those obtained with the other copula models. For Gaussian margins, however, for all models but the Frank copula, the best results are not obtained with the true copula model, but the RMSE and RMSE95 are quite close for each model.% Similar results were obtained for binomial margins in Table \ref{tab:pred10Binomial}.
%\begin{landscape}
 \begin{table}[ht!]
 \caption{Percentage of time each model is chosen, according to the AIC and BIC criteria over 100 replications, for models DGP9--DGP12, using GLMM %\\
 %\hskip -.3in
 and copula-based models with Clayton, Gaussian, Gumbel,
 Frank, and Student(15) copulas. The scores for the theoretical model are in bold.}\label{tab:comp10}
 {\tiny
 \begin{tabular}{cccccccccccccc}
 DGP  & K & \multicolumn{2}{c}{GLMM} & \multicolumn{2}{c}{Clayton}  & \multicolumn{2}{c}{Gaussian} &  \multicolumn{2}{c}{Gumbel} &  \multicolumn{2}{c}{Frank} &  \multicolumn{2}{c}{Student(15)}\\
 && AIC & BIC  & AIC & BIC & AIC & BIC & AIC & BIC & AIC & BIC & AIC & BIC  \\
 \hline
\multirow{4}{*}{9}
&  25  &  \textbf{100} &  \textbf{100}  &   0   &   0  &  \textbf{  0}  &  \textbf{  0}   &   0   &   0   &   0   &    0    &   0   &    0  \\ &  50  &  \textbf{100} &  \textbf{100}  &   0   &   0  &  \textbf{  0}  &  \textbf{  0}   &   0   &   0   &   0   &    0    &   0   &    0  \\ &  100 &  \textbf{0  } &  \textbf{0  }  &   0   &   0  &  \textbf{100}  &  \textbf{100}   &   0   &   0   &   0   &    0    &   0   &    0  \\ &  200 &  \textbf{50 } &  \textbf{50 }  &   0   &   0  &  \textbf{ 50}  &  \textbf{ 50}   &   0   &   0   &   0   &    0    &   0   &    0  \\
\hline
\multirow{4}{*}{10}
&  25  & 2  &  2  &  0  &  0  &  0  &  0 & \textbf{ 97} & \textbf{ 97} &   0  &   0   &  1  &   1 \\
&  50  & 0  &  0  &  0  &  0  &  0  &  0 & \textbf{100} & \textbf{100} &   0  &   0   &  0  &   0\\
&  100 & 0  &  0  &  0  &  0  &  0  &  0 & \textbf{100} & \textbf{100} &   0  &   0   &  0  &   0\\
&  200 & 0  &  0  &  0  &  0  &  0  &  0 & \textbf{100} & \textbf{100} &   0  &   0   &  0  &   0\\
\hline
\multirow{4}{*}{11}
&  25  & 0 &   0 & \textbf{ 99} & \textbf{ 99} &   0  &  0  &  0  &  0  &  1  &   1  &   0   &  0\\
&  50  & 0 &   0 & \textbf{100} & \textbf{100} &   0  &  0  &  0  &  0  &  0  &   0  &   0   &  0\\
&  100 & 0 &   0 & \textbf{100} & \textbf{100} &   0  &  0  &  0  &  0  &  0  &   0  &   0   &  0\\
&  200 & 0 &   0 & \textbf{100} & \textbf{100} &   0  &  0  &  0  &  0  &  0  &   0  &   0   &  0\\
\hline
\multirow{4}{*}{12}
&  25  &  9  &  9  &  9  &  9  &  0  &  0  & 17 &  17 & \textbf{ 65}  & \textbf{ 65}   &  0   &  0  \\ &  50  &  6  &  6  &  0  &  0  &  0  &  0  & 16 &  16 & \textbf{ 78}  & \textbf{ 78}   &  0   &  0 \\
&  100 &  2  &  2  &  1  &  1  &  0  &  0  &  2 &   2 & \textbf{ 95}  & \textbf{ 95}   &  0   &  0 \\
&  200 &  0  &  0  &  0  &  0  &  0  &  0  &  0 &   0 & \textbf{100}  & \textbf{100}   &  0   &  0 \\
\hline
\end{tabular}
}
\end{table}
%\end{landscape}

%\begin{landscape}
 \begin{table}[ht!]
 \caption{
 %\hskip 0.7in
 RMSE for the 10 parameters of the mean over 100 replications, for models DGP9--DGP12, using GLMM and copula-based models %\\
% \hskip -0.1in
 with Clayton, Gaussian, Gumbel,  Frank, and Student(15) copulas. The scores for the theoretical model are in bold.}\label{tab:RMSE10}
 \begin{tabular}{cccccccc}
 DGP  & K & {GLMM} & {Clayton}  & {Gaussian} &  {Gumbel} &  {Frank} &  {Student(15)}\\
  \hline
\multirow{4}{*}{9}
&  25  &  \textbf{0.2731} & 0.2952 & \textbf{0.2708} & 0.3364 & 0.2587 & 0.2647  \\
&  50  &  \textbf{0.2488} & 0.3878 & \textbf{0.2472} & 0.3665 & 0.2679 & 0.2594  \\
&  100 &  \textbf{0.1827} & 0.3347 & \textbf{0.1703} & 0.3108 & 0.2433 & 0.1789  \\
&  200 &  \textbf{0.0987} & 0.2099 & \textbf{0.0989} & 0.2590 & 0.1402 & 0.1164  \\
\hline
\multirow{4}{*}{10}
&  25  &  2.3456 & 2.9170&  2.2735 & \textbf{1.9189}&  3.2548&  2.8671 \\
&  50  &  1.5162 & 2.3487&  1.4772 & \textbf{1.3489}&  2.9099&  1.6031\\
&  100 &  1.1422 & 2.1906&  1.2355 & \textbf{0.9984}&  3.0118&  1.2628\\
&  200 &  0.7470 & 2.0069&  0.7931 & \textbf{0.6865}&  2.8120&  0.9766\\
\hline
\multirow{4}{*}{11}
&  25  &0.2295 & \textbf{0.0955} & 0.1356&  0.2325&  0.2789 & 0.1312\\
&  50  &0.1863 & \textbf{0.0657} & 0.1031&  0.2244&  0.2578 & 0.1052\\
&  100 &0.1629 & \textbf{0.0492} & 0.0692&  0.2119&  0.2587 & 0.0725\\
&  200 &0.1494 & \textbf{0.0341} & 0.0507&  0.2056&  0.2471 & 0.0616\\
\hline
\multirow{4}{*}{12}
&  25  & 0.4896 & 0.3289 & 0.3331 & 0.3152 & \textbf{0.2751} & 0.3297  \\
&  50  & 0.3844 & 0.2982 & 0.2589 & 0.2415 & \textbf{0.2005} & 0.2631 \\
&  100 & 0.2623 & 0.2353 & 0.1673 & 0.1680 & \textbf{0.1358} & 0.1695 \\
&  200 & 0.2115 & 0.2213 & 0.1234 & 0.1308 & \textbf{0.0946} & 0.1252 \\
\hline
\end{tabular}
\end{table}

\begin{landscape}
\begin{table}[ht!]
\caption{RMSE  and RMSE95 for predictions of copula-based models with Poisson margins with 10 parameters (similar to DGP11), using GLMM and
 copula-based models with Clayton, Gaussian, Gumbel,  Frank, and Student(15) copulas. The scores for the theoretical model are in bold.}\label{tab:pred10Poisson}
\resizebox{19cm}{!}{
 \begin{tabular}{cccccccccccccc}
 True    & K & \multicolumn{6}{c}{RMSE} & \multicolumn{6}{c}{RMSE95}  \\
 Copula  & & GLMM & Clayton & Gaussian & Gumbel &  Frank & Student(15)& GLMM & Clayton & Gaussian & Gumbel &  Frank & Student(15)\\
  \hline
\multirow{4}{*}{Clayton}
&  25  &0.1217 & \textbf{0.1093} &0.1387 & 0.1106 & 0.1103 & 0.1234 & 0.3599 & \textbf{0.2859} & 0.3568 & 0.2931 & 0.2953 & 0.3194 \\
&  50  &0.1215 & \textbf{0.1095} &0.1404 & 0.1111 & 0.1106 & 0.1274 & 0.3545 & \textbf{0.2848} & 0.3575 & 0.2939 & 0.2940 & 0.3235 \\
&  100 &0.1202 & \textbf{0.1090} &0.1400 & 0.1102 & 0.1100 & 0.1348 & 0.3471 & \textbf{0.2824} & 0.3549 & 0.2896 & 0.2912 & 0.3374 \\
&  200 &0.1216 & \textbf{0.1100} &0.1417 & 0.1109 & 0.1108 & 0.1386 & 0.3530 & \textbf{0.2853} & 0.3592 & 0.2912 & 0.2943 & 0.3453 \\
\hline
\multirow{4}{*}{Gaussian}
&  25   & 0.1206 & 0.1067 & \textbf{0.1046} & 0.1048 &0.1052& 0.1046& 0.3645 & 0.2650 & \textbf{0.2535} & 0.2541 & 0.2572 & 0.2536\\
&  50   & 0.1188 & 0.1075 & \textbf{0.1048} & 0.1052 &0.1057& 0.1048& 0.3433 & 0.2661 & \textbf{0.2501} & 0.2523 & 0.2555 & 0.2504\\
&  100  & 0.1187 & 0.1071 & \textbf{0.1041} & 0.1044 &0.1052& 0.1041& 0.3445 & 0.2625 & \textbf{0.2459} & 0.2470 & 0.2524 & 0.2461\\
&  200  & 0.1201 & 0.1077 & \textbf{0.1049} & 0.1052 &0.1059& 0.1050& 0.3523 & 0.2653 & \textbf{0.2498} & 0.2509 & 0.2554 & 0.2500\\
\hline
\multirow{4}{*}{Gumbel}
&  25  & 0.1183 & 0.1036 & 0.1385 & \textbf{0.0991} & 0.0999 & 0.1125 & 0.3651 & 0.2498 & 0.3510 & \textbf{0.2233} & 0.2267 & 0.2800\\
&  50  & 0.1166 & 0.1052 & 0.1405 & \textbf{0.0998} & 0.1010 & 0.1127 & 0.3431 & 0.2536 & 0.3530 & \textbf{0.2213} & 0.2274 & 0.2738\\
&  100 & 0.1169 & 0.1051 & 0.1398 & \textbf{0.0991} & 0.1005 & 0.1093 & 0.3484 & 0.2521 & 0.3519 & \textbf{0.2174} & 0.2257 & 0.2595\\
&  200 & 0.1184 & 0.1056 & 0.1413 & \textbf{0.0999} & 0.1013 & 0.1098 & 0.3550 & 0.2543 & 0.3559 & \textbf{0.2213} & 0.2285 & 0.2605\\
\hline
\multirow{4}{*}{Frank}
&  25  & 0.1229 & 0.1092 & 0.1394 & 0.1087 & \textbf{0.1083} & 0.1313 & 0.3695 & 0.2784 & 0.3590& 0.2766 & \textbf{0.2734} & 0.3365\\
&  50  & 0.1214 & 0.1096 & 0.1405 & 0.1091 & \textbf{0.1086} & 0.1364 & 0.3528 & 0.2769 & 0.3572& 0.2749 & \textbf{0.2708} & 0.3462\\
&  100 & 0.1206 & 0.1087 & 0.1398 & 0.1082 & \textbf{0.1079} & 0.1387 & 0.3484 & 0.2717 & 0.3525& 0.2689 & \textbf{0.2668} & 0.3514\\
&  200 & 0.1221 & 0.1096 & 0.1416 & 0.1091 & \textbf{0.1087} & 0.1410 & 0.3561 & 0.2755 & 0.3595& 0.2729 & \textbf{0.2708} & 0.3602\\
\hline
 \multirow{4}{*}{Student(15)}
&  25  & 0.1404 & 0.1403 & 0.1402 & 0.1404 & 0.1404 & \textbf{0.1402} & 0.3638 & 0.3623 & 0.3633 & 0.3642 & 0.3636 & \textbf{0.3639}\\
&  50  & 0.1410 & 0.1410 & 0.1409 & 0.1409 & 0.1410 & \textbf{0.1409} & 0.3598 & 0.3583 & 0.3591 & 0.3599 & 0.3592 & \textbf{0.3598}\\
&  100 & 0.1406 & 0.1406 & 0.1405 & 0.1405 & 0.1405 & \textbf{0.1405} & 0.3567 & 0.3557 & 0.3565 & 0.3573 & 0.3565 & \textbf{0.3572}\\
&  200 & 0.1412 & 0.1414 & 0.1414 & 0.1414 & 0.1414 & \textbf{0.1414} & 0.3564 & 0.3577 & 0.3585 & 0.3593 & 0.3585 & \textbf{0.3592}\\
\hline
\end{tabular}
}
\end{table}
\end{landscape}

\begin{landscape}
\begin{table}[ht!]
\caption{
RMSE and RMSE95 for predictions of copula-based models with Gaussian margins with 11 parameters (similar to DGP12), using GLMM and
 copula-based models with Clayton, Gaussian, Gumbel,  Frank, and Student(15) copulas. The scores for the theoretical model are in bold.}\label{tab:pred10Gaussian}
\resizebox{19cm}{!}{
 \begin{tabular}{cccccccccccccc}
 True    & K & \multicolumn{6}{c}{RMSE} & \multicolumn{6}{c}{RMSE95}  \\
Copula  & & GLMM & Clayton & Gaussian & Gumbel &  Frank & Student(15)& GLMM & Clayton & Gaussian & Gumbel &  Frank & Student(15)\\
 \hline
\multirow{4}{*}{Clayton}
&  25    &  0.4660 & \textbf{0.5201} & 0.4776 & 0.4792 & 0.4674 & 0.5349 & 1.0597 &\textbf{1.0564} & 1.0684 & 1.0735 & 1.0664 & 1.0605\\
&  50    &  0.4660 & \textbf{0.5199} & 0.4825 & 0.4757 & 0.4665 & 0.5238 & 1.0639 &\textbf{1.0605} & 1.0689 & 1.0786 & 1.0726 & 1.0651\\
&  100   &  0.4641 & \textbf{0.4997} & 0.4784 & 0.4662 & 0.4651 & 0.4948 & 1.0667 &\textbf{1.0646} & 1.0827 & 1.0824 & 1.0664 & 1.0685\\
&  200   &  0.4624 & \textbf{0.4974} & 0.4844 & 0.4690 & 0.4632 & 0.4909 & 1.0632 &\textbf{1.0611} & 1.0923 & 1.0792 & 1.0608 & 1.0651\\
\hline
\multirow{4}{*}{Gaussian}
&  25   &  0.4762 & 0.5028 &  \textbf{0.4817} & 0.4815 & 0.4780 & 0.4899 & 0.7909 & 0.8774 & \textbf{0.8110} & 0.7949 & 0.8037 & 0.8719\\
&  50   &  0.4759 & 0.5017 &  \textbf{0.4884} & 0.4774 & 0.4782 & 0.4830 & 0.7897 & 0.8830 & \textbf{0.8198} & 0.7934 & 0.8038 & 0.8102\\
&  100  &  0.4746 & 0.4979 &  \textbf{0.4823} & 0.4759 & 0.4777 & 0.4795 & 0.7879 & 0.9011 & \textbf{0.8146} & 0.7939 & 0.8093 & 0.8183\\
&  200  &  0.4735 & 0.5029 &  \textbf{0.4887} & 0.4748 & 0.4770 & 0.4781 & 0.7842 & 0.9014 & \textbf{0.8373} & 0.7882 & 0.8066 & 0.8042\\
\hline
\multirow{4}{*}{Gumbel}
&  25  & 0.4740 & 0.5021 & 0.4860 & \textbf{0.4793} & 0.4748 & 0.5055 & 0.5928 & 0.7447 & 0.7305 & \textbf{0.6585} & 0.5788 & 0.9502\\
&  50  & 0.4735 & 0.5112 & 0.4875 & \textbf{0.4767} & 0.4746 & 0.4991 & 0.5809 & 0.7559 & 0.6937 & \textbf{0.6143} & 0.5628 & 0.8912\\
&  100 & 0.4724 & 0.5054 & 0.4884 & \textbf{0.4789} & 0.4739 & 0.4973 & 0.5804 & 0.8032 & 0.6940 & \textbf{0.6683} & 0.5642 & 0.8999\\
&  200 & 0.4713 & 0.5025 & 0.4862 & \textbf{0.4761} & 0.4729 & 0.4844 & 0.5681 & 0.8110 & 0.6673 & \textbf{0.6366} & 0.5475 & 0.7554\\
\hline
% \multirow{4}{*}{Gumbel180}
% &  25  & 0.4739 & 0.4929 & 0.4852 & \textbf{0.5116} & 0.4752 & 0.5123 & 0.9452 & 1.0015 & 0.9450 & \textbf{0.9492} & 0.9594 & 0.9412  \\
% &  50  & 0.4743 & 0.4973 & 0.4870 & \textbf{0.5100} & 0.4755 & 0.5076 & 0.9476 & 1.0045 & 0.9519 & \textbf{0.9508} & 0.9608 & 0.9428  \\
% &  100 & 0.4724 & 0.4961 & 0.4811 & \textbf{0.4986} & 0.4737 & 0.4942 & 0.9488 & 1.0129 & 0.9656 & \textbf{0.9538} & 0.9676 & 0.9453  \\
% &  200 & 0.4711 & 0.4965 & 0.4960 & \textbf{0.4953} & 0.4725 & 0.4877 & 0.9449 & 1.0052 & 0.9983 & \textbf{0.9496} & 0.9626 & 0.9403  \\
% \hline
\multirow{4}{*}{Frank}
&  25  & 0.4969 & 0.5122 & 0.4975 & 0.4969 & \textbf{0.4959} & 0.4966 & 0.9526 & 0.9842 & 0.9536 & 0.9577 & \textbf{0.9481} & 0.9515\\
&  50  & 0.4973 & 0.5112 & 0.4987 & 0.4972 & \textbf{0.4962} & 0.4969 & 0.9567 & 0.9866 & 0.9600 & 0.9617 & \textbf{0.9518} & 0.9552\\
&  100 & 0.4955 & 0.5063 & 0.5099 & 0.4953 & \textbf{0.4945} & 0.4952 & 0.9560 & 0.9891 & 0.9944 & 0.9616 & \textbf{0.9517} & 0.9549\\
&  200 & 0.4943 & 0.5073 & 0.5205 & 0.4942 & \textbf{0.4933} & 0.4940 & 0.9518 & 0.9865 & 1.0202 & 0.9572 & \textbf{0.9472} & 0.9506\\
\hline
 \multirow{4}{*}{Student(15)}
&  25  & 0.6628 & 0.6625 & 0.6624 & 0.6628 & 0.6630 & \textbf{0.6623} & 1.3651 & 1.3603 & 1.3649 & 1.3678 & 1.3655 & \textbf{1.3647}\\
&  50  & 0.6627 & 0.6626 & 0.6625 & 0.6627 & 0.6628 & \textbf{0.6625} & 1.3675 & 1.3669 & 1.3674 & 1.3702 & 1.3676 & \textbf{1.3674}\\
&  100 & 0.6604 & 0.6604 & 0.6603 & 0.6604 & 0.6605 & \textbf{0.6603} & 1.3766 & 1.3726 & 1.3765 & 1.3763 & 1.3766 & \textbf{1.3765}\\
&  200 & 0.6592 & 0.6593 & 0.6592 & 0.6592 & 0.6592 & \textbf{0.6592} & 1.3717 & 1.3678 & 1.3717 & 1.3750 & 1.3718 & \textbf{1.3717}\\
\hline
\end{tabular}
}
\end{table}
\end{landscape}

As a general conclusion for these numerical experiments, the proposed methodology works fine, but can depend on the starting points. However, from the computation of the gradients, we can check that the algorithm converges, and since the selection of the initial values of the margin parameters might be important, we propose to start from estimated values obtained by GLM or GAM, as if there were no clusters. This procedure as also been implemented in our R package \textit{CopulaGAMM}. As shown by the results of the last two experiments displayed in Tables \ref{tab:pred10Poisson}--\ref{tab:pred10Gaussian}, the choice of the copula is more important for small or large quantiles of the predicted values.

\section{Case study: COVID-19 vaccine hesitancy survey data from 23 countries}\label{sec:ex}
As an illustrative application, we analyze survey data on COVID-19 vaccine hesitancy collected from 23 countries. \citep{Lazarus/Wyka/White/Picchio/Rabin/Ratzan/ParsonsLeigh/Hu/El-Mohandes:2022}. Here, the responses of the individuals are the observations and each country represent the clusters. These data have been recently analyzed  by \cite{Awasthi/Nagori/Nasri:2023} using linear mixed models combined with a Bayesian network. The 23 countries considered in the study are: Brazil, Canada, China, Ecuador, France, Germany, Ghana, India, Italy, Kenya, Mexico,  Nigeria,  Peru, Poland, Russia, Singapore,  South Africa, South Korea,  Spain,  Sweden, T\"{u}rkiye, UK, and USA. This study aims to comprehend the main factors contributing to vaccine hesitancy and how these factors vary across the studied countries. Our variable of interest consists of responses to two questions "I will take the COVID-19 vaccine when it is available to me" (Q8 in the survey data) or "Have you received at least one dose of a COVID-19 vaccine?" (Q7 in the survey data). Q8 has four possible answers: Y=0 (Strongly disagree), Y=1 (Somewhat disagree or Unsure/No opinion), Y=2 (Somewhat agree), and Y=3 (Strongly agree or answered yes to Q7). We then defined a new variable of interest, namely  $Y_1 = \I(Y\ge 2)$, i.e., $Y_1=0$ if the person disagrees with taking the vaccine when available, and $Y_1=1$ if the person agrees.

Furthermore, we incorporated five covariates derived as the averages of scores from questions pertaining to 'Perception of risk' (Q1-Q4), 'Trust' (Q5, Q6, Q24-Q26), 'Restriction measures' (Q12-Q17), 'Financial status' (Q18, Q54, corresponding to 'Loss of Income' and 'Monthly House Income'), and 'age' (Q27). Our dataset contains 23135 observations. We considered a copula-based model with a binomial margin for $Y_1$ and the five covariates. We determined the optimal copula-based model by selecting the one with the lowest BIC from a set of five copula families (Frank, Gumbel, Clayton, Gaussian, Student) in combination with various covariates. The selection process remained consistent when employing the AIC criterion, leading to the same results.

%\BB{Non-binary covariates:}
We found that the Frank copula, incorporating covariates $X_1$ (Restriction measures score),  $X_2$ (Trust score), and $X_3$ (Perception of risk score) best fits the data, all taking values in $[0,3]$. It is noteworthy that ``Financial status'' and ``Age'' are not significant predictors for vaccine hesitancy.
The estimated Kendall's tau for Frank copula is $-0.178$ (corresponding to  a parameter $\beta_1 =-1.648949$), while the equation for the probability of the margin is the logit function evaluated at $-4.425428 + 1.447811 x_1 +1.026878 x_2 + 1.44122 x_3$, which means that the estimated probability of agreeing to be vaccinated is an increasing function of these three scores. This is illustrated in Figure \ref{fig:UsvsSouthAfrica}, where the estimated probability of agreeing to be vaccinated, given values of the three covariates, are displayed for South Africa  ($\hat V = 0.9506$), France ($\hat V = 0.5674$), India ($\hat V = 0.2428$), and  Spain ($\hat V =0.0481$), where the values of the estimated $V$  by country are displayed in Table \ref{tab:Vbin}. From Figure \ref{fig:UsvsSouthAfrica}, we can see that when the level of trust and the perception of risk are both high, the probability of being vaccinated is almost independent of the level of the restriction measures, especially if the estimation of the latent variable is large. However, this factor becomes more important when the combination of trust and the perception of risk is sufficiently high. This observation holds true across the studied countries with some variation in the estimated probability of agreeing to be vaccinated which depends on the latent variable $V$. It is worth noting that for Frank copula, the conditional expectation is a decreasing (resp. increasing) function of $V$ when Kendall's tau is negative (resp. positive). This explains the results displayed in Figure  \ref{fig:UsvsSouthAfrica}.
Here, we used the median of the conditional density given by Equation \eqref{eq:Vfivenobs} to estimate $V$ for each country, as implemented in the CopulaGAMM package \citep{Krupskii/Nasri/Remillard:2023b}. It is essential to note that latent variables, being introduced into the model through a copula, are challenging to interpret. However, in the present case, their values can be employed for cross-country comparisons.

Finally, for this example, we also computed the Logistic model with random effect, i.e., a GLMM with Bernoulli margins and logit link. This model produced very similar predictions, as displayed in Figure \ref{fig:compvaccine} of Appendix \ref{app:compvaccine}. In this case, it is worth noting that the logit function is $   -4.515  +  1.498 x_1  +    1.058 x_2      +  1.491 x_3 $.
% $ -2.55808        +       3.28842 x_1       +       1.30872  x_2      +      3.07116 x_3$.
Furthermore, for the GLMM model, the AIC is 9640.9 %9861.7 (binary case)
and the BIC is 9681.1, while for the copula-based model, the AIC is 9635.1 and the BIC is 9675.3. %binary case: 9901.9, while for the copula-based model, the AIC is 9842.7 and the BIC is 9882.9.
Therefore, opting for the model with the lowest AIC or BIC prompts the preference for the copula-based model over the GLMM. Furthermore, in terms of predictions, the RMSE across all observations is $8.5\times 10^{-7}$ for the copula-based model, while it is $1.1\times 10^{-5}$ for the GLMM.
% \BB{Note that this is also in agreement with the results displayed in Table \ref{tab:pred10Binomial}, where for most copula models and the GLMM, the RMSE of the predictions were quite close.}

%\BB{As a final remark, if one considers binary covariates and average them, the best copula-based model is a Clayton copula with a Kendall tau of $0.22$.}
\begin{figure}
    \centering
   \includegraphics[width=2in, height=2.1in]{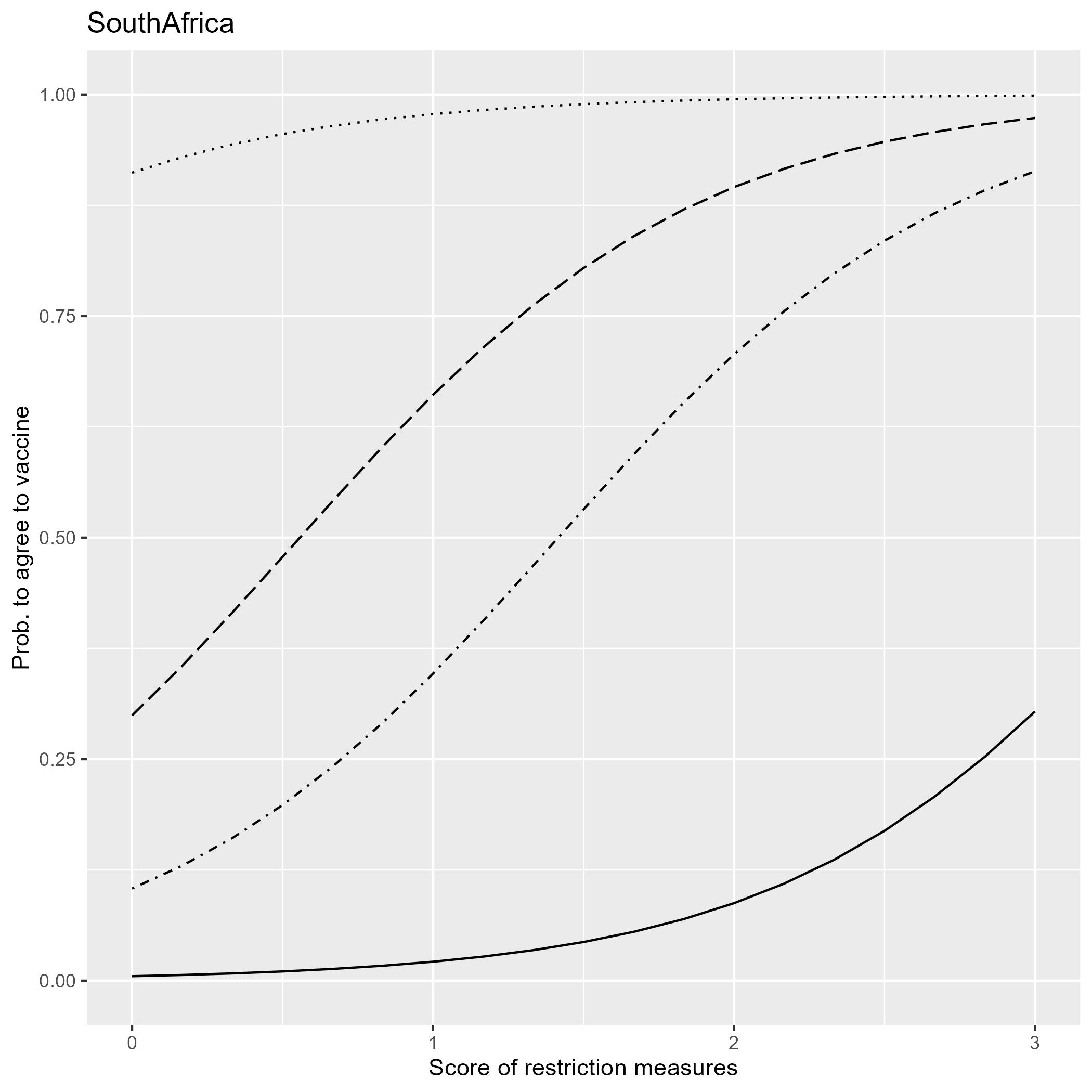}
   \includegraphics[width=2in, height=2.1in]{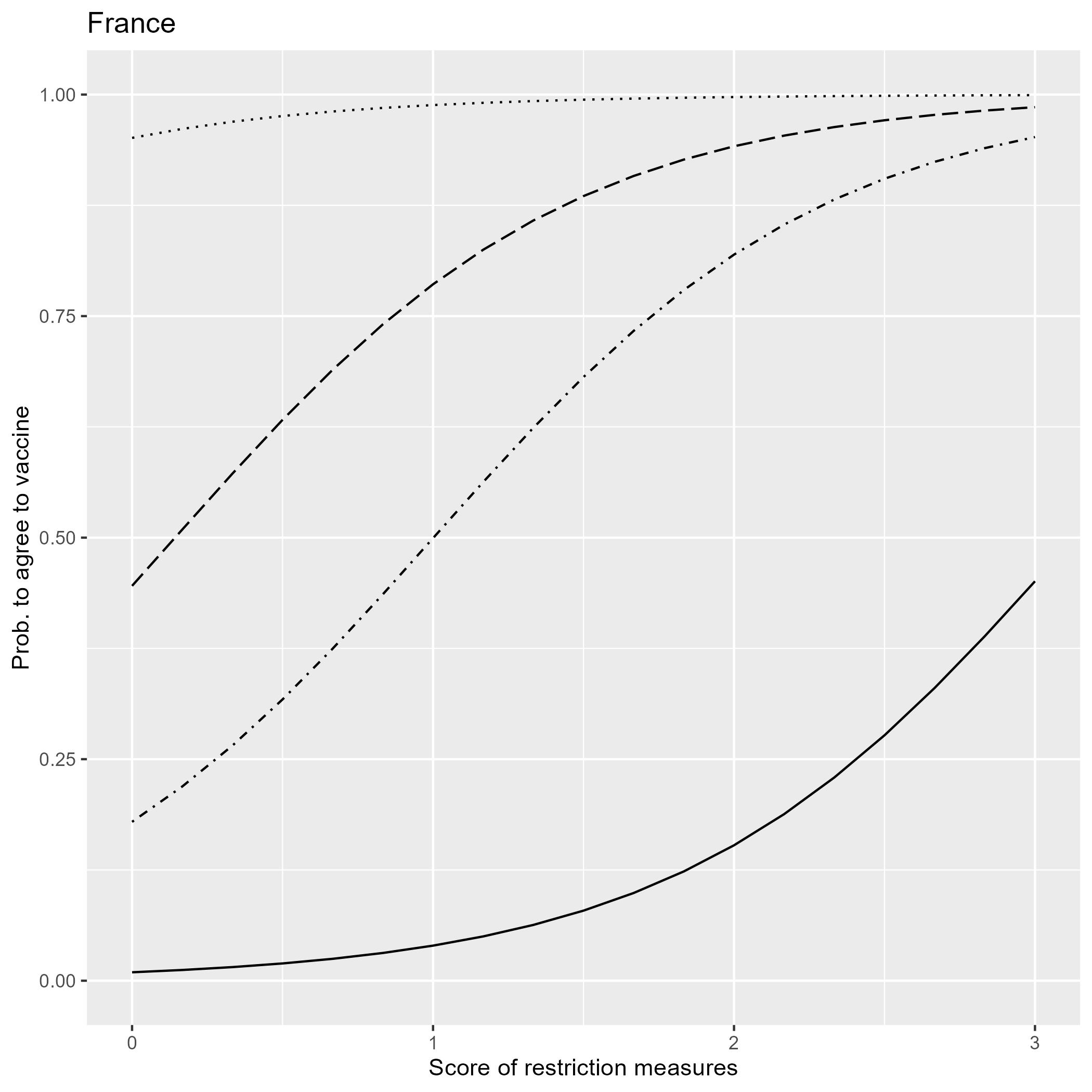}
   \includegraphics[width=2in, height=2.1in]{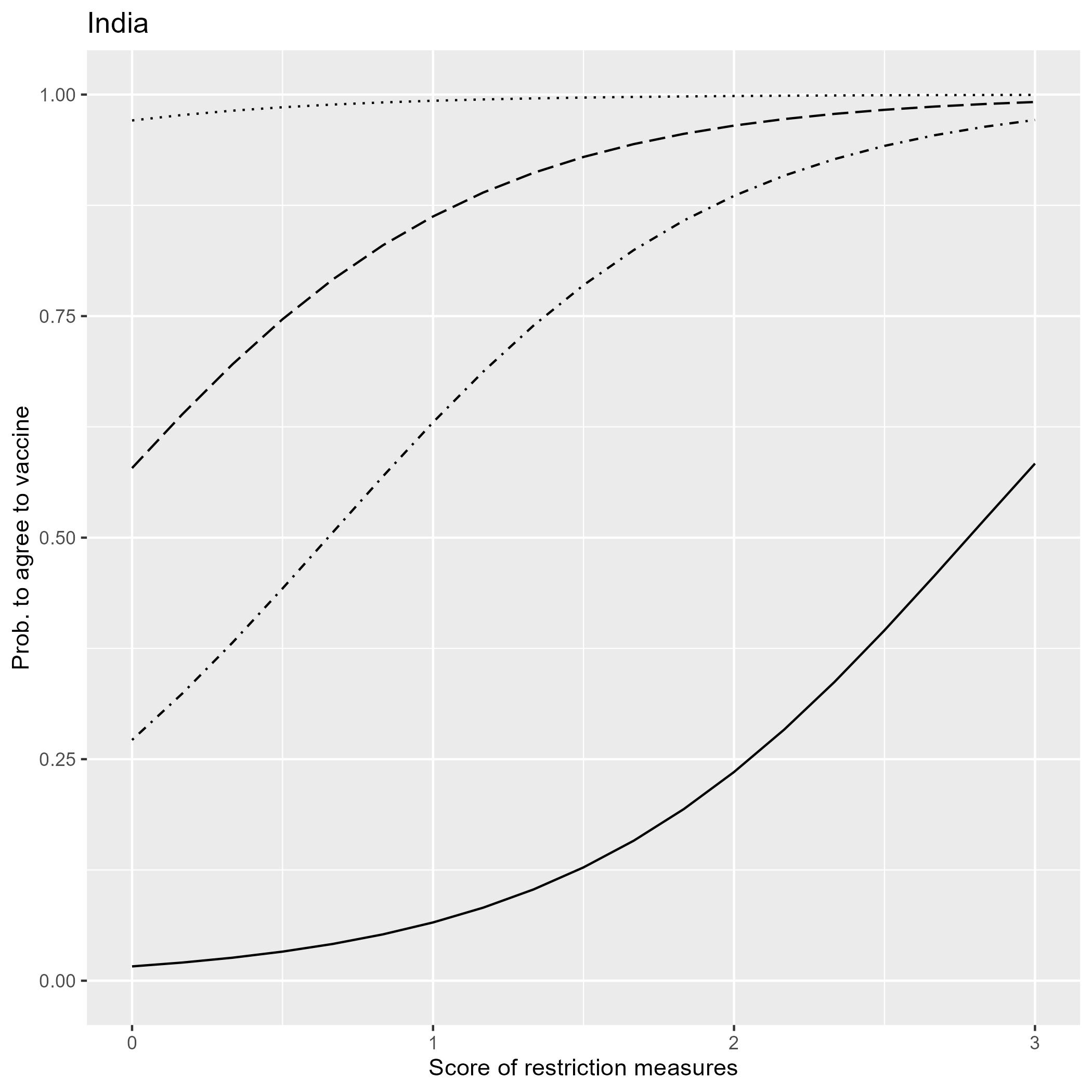}
   \includegraphics[width=2in, height=2.1in]{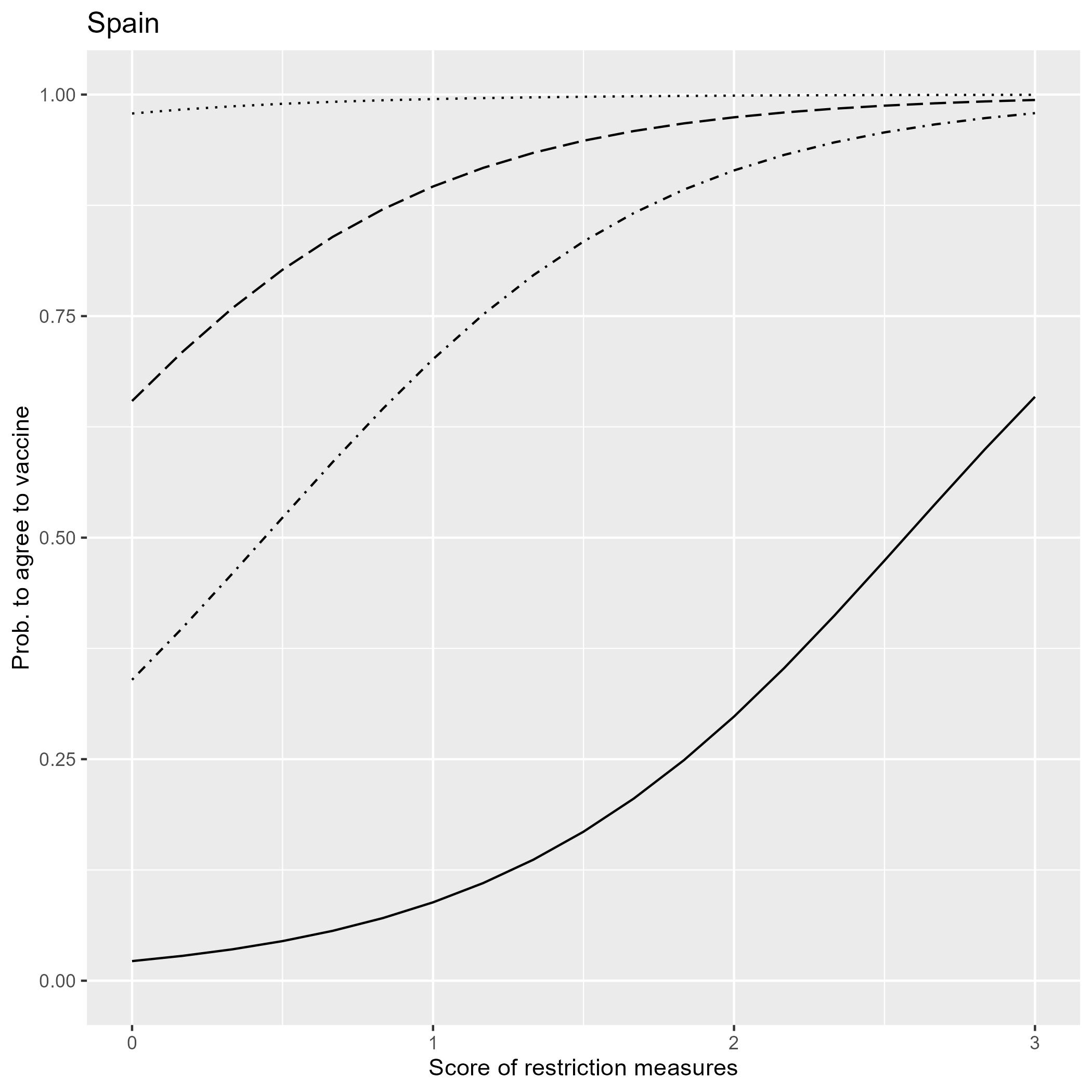}
    \caption{Estimated probabilities to agree to be vaccinated by scores of restriction measures ($X_1$) by values of trust ($X_2$) and perception of risk ($X_3$) for South Africa  ($\hat V = 0.9506$), France ($\hat V = 0.5674$), India ($\hat V = 0.2428$), and  Spain ($\hat V =0.0481$). The line types are ``solid'' for $(X_2,X_3)=(0,0)$, ``dotdash'' for $(X_2,X_3)=(3,0)$, ``longdash'' for $(X_2,X_3)=(0,3)$, and ``dotted'' for $(X_2,X_3)=(3,3)$.}
    \label{fig:UsvsSouthAfrica}
\end{figure}

\begin{table}[!ht]
\caption{Estimated value of the latent variable by country.}\label{tab:Vbin}
    \centering

    {\small
 \begin{tabular}{|c|c|c|c|c|c|c|}
    \hline
Country   & South Africa  & Kenya    & Ghana   & Nigeria & Russia  & Ecuador       \\
$1-V$     & 0.0494        & 0.0760   & 0.0786  & 0.0816  & 0.1041  & 0.1468           \\
\hline
Country   & South Korea & T\"{u}rkiye & Peru   & France & Poland & Sweden     \\
$1-V$     & 0.2025      & 0.3679           & 0.3869 & 0.4326 & 0.5228 & 0.6338     \\
\hline
Country   & India    & Mexico & Brazil & USA     & China  & Canada               \\
$1-V$     & 0.7572   & 0.7795 & 0.7820 & 0.7848  & 0.7900 & 0.8534              \\
\hline
Country       & Italy           & Singapore        & Germany  & UK      & Spain  &    \\
$1-V$         & 0.8892          &  0.8907          & 0.8954   & 0.8992 & 0.9519  &   \\
\hline
    \end{tabular}
    }
\end{table}

\section{Conclusion}
We demonstrate how factor copulas can be used to model the dependence for clustered data. Factor copulas provide a framework for predicting a continuous or a discrete response variable using a link function of covariates that are used to model the margins and the copula parameters. Our proposed models cover GAMMs and they generalize several existing copula-based approaches proposed in the literature. We also demonstrate the convergence of the estimators, a process that depends on the size and the number of clusters in the model.
In comparison to a traditional GAMM, our proposed models exhibit favorable performance. All computations proposed in this paper are implemented in the R package \textit{CopulaGAMM}, which is available through CRAN. This package facilitates the estimation of the conditional mean and quantiles for the proposed model using ten copula families and their rotations. It also supports various the common discrete and continuous margins for the variable of interest.

\appendix

\section{Recovering Mixed linear Gaussian model using the factor copula-based mixed model}\label{app:gaussian}

The  mixed linear Gaussian model defined by \eqref{eq:mixed-reg1}, with $\epsilon_{ki}\sim N(0,\sigma_\epsilon^2)$  and $\eta_k\sim N(0,\sigma_\eta^2)$,  corresponds to $\eta_k = \rho\sigma \Phi^{-1}(V_k)$, with a Gaussian cdf $G_{\bh(\balpha,\bx)}$ with mean  $h_1(\balpha,\bx) = \bb^\top \bx = \sum_{j=1}^d \alpha_{j}x_j$, a standard deviation of $\sigma = h_2(\balpha,\bx)= \sqrt{\sigma_\eta^2+\sigma_\epsilon^2} = \alpha_{d+1}$, and a normal copula with parameter $\rho = \frac{\sigma_\eta}{\sigma}= \bphi(\btheta,\bx) =  \tanh(\beta_1)= \left(1-e^{-2\beta_1}\right)\Big{/}\left(1+e^{-2\beta_1}\right)$, implying that $\bphi$ does not depend on $\balpha$. Here, the parameters are $(\bb,\sigma,\rho)$, which leads to $\sigma_\eta = \rho \sigma$ and $\sigma_\epsilon^2 = \sigma^2\left(1-\rho^2\right)$. As a result, setting with $\bar e_k = \bar y_k - \bb^\top \bar \bx_k$ and
$\lambda_k = \dfrac{n_k \rho^2}{1-\rho^2+n_k\rho^2}$, $k\in\setK$, the cdf of the posterior of $\eta_k$ given the observation in cluster $k$ in the linear mixed model is  Gaussian with mean $\lambda_k\bar e_k$ and variance $\sigma^2_\eta(1-\lambda_k)$, while in the copula-based model, the cdf of the posterior of $V_k$ given the observation in cluster $k$ is
$\Phi\left( \frac{ \Phi^{-1}(v) - \lambda_k \bar e_k /\sigma_\eta}{ \sqrt{1-\lambda_k}} \right)$.  Therefore, the posterior median $m_k$ of $V_k$ satisfies $\Phi^{-1}(m_k) = \dfrac{\lambda_k  \bar e_k}{\sigma_\eta}$.
Since the conditional distribution of $Y_{ki}$ given $V_k=v$ and $\bX_{ki}=\bx_{ki}$ is Gaussian with a mean of $\bb^\top \bx_{ki}+\sigma_\eta\Phi^{-1}(v)$ and a standard deviation of $\sigma_\epsilon$, it follows that the prediction of $Y_{ki}$ is
$\hat Y_{ki} = \bb^\top \bx_{ki}+ \lambda_k \bar e_k$,
which is an identical prediction as in the mixed linear model.
As for the popular model of mixed regression with random intercept and random slopes given by
\begin{equation}\label{eq:mixed-reg2}
Y_{ki} = \bb ^\top \bX_{ki}+ \bbeta_k^\top \bZ_{ki}+\ve_{ki},
\end{equation}
where $\bbeta_k$ is a centered Gaussian vector with covariance matrix $\Sigma_\bbeta$, independent of $\ve_{ki}$, and the first column of $\bZ_{ki}$ is identically $1$, it corresponds to the copula-based model with $\bbeta_k^\top \bz =\left(\bz^\top \Sigma_\bbeta \bz\right)^{1/2} \Phi^{-1}(V_k)$, where  $V_k$ is the associated latent variable, the cdf of $Y_{ki}$ given $\bX_{ki}=\bx,\bZ_{ki}=\bz$ is Gaussian with a mean of $h_1(\balpha,\bx) = \bb^\top \bx$, a standard deviation of $h_2(\balpha,\bz) = \left(\sigma_\ve^2 +\bz^\top \Sigma_\bbeta \bz \right)^{1/2} = \left(\bz^\top \Sigma \bz \right)^{1/2}$, with $\Sigma = \Sigma_\bbeta+\sigma_\epsilon^2 \be_1\be_1^\top$,
and the copula is a normal copula with parameter
$\rho(\bz) = \bphi(\btheta,\bz)  = \left(\dfrac{\bz^\top \Sigma_\bbeta \bz}{\sigma_\ve^2+ \bz^\top \Sigma_\bbeta \bz}\right)^{1/2}$. One can see that $\balpha$ is used to estimate $b$ and $\Sigma$, while $\bbbeta=\beta_1$ is used to estimate $ r = \tanh(\beta_1) =\dfrac{\sigma_\bbeta}{\sigma}$, viz. $\rho^2(\bz) = 1- \dfrac{\sigma^2\left(1-r^2\right)}{h_2^2(\balpha,\bz)} $, where $\sigma_\bbeta^2 = (\Sigma_\bbeta)_{11}$,
$\sigma^2 = \sigma_\bbeta^2+\sigma_\epsilon^2 = \Sigma_{11}$. In this case, it is observed that the copula parameter really depends on both $\balpha$ and $\bbbeta$, not just $\bbbeta$. In addition, one gets $\sigma_\bbeta=r\sigma$, $\sigma_\epsilon^2=\sigma^2\left(1-r^2\right)$, implying that $\Sigma_\bbeta = \Sigma - \sigma_\epsilon^2 \be_1\be_1^\top $.
Note that for the mixed regression model described by Equation \eqref{eq:mixed-reg2}, the conditional distribution of $\bbeta_k$ given $(y_{ki},\bx_{ki},\bz_{ki})$, $i\in \{1,\ldots, n_k\}$, is a Gaussian distribution with a mean of
$\mu_k = A_k \dfrac{1}{\sigma_\epsilon^2}\sum_{i=1}^{n_k} z_{ki}(y_{ki}-b^\top x_{ki})$ and covariance matrix $A_k = \left(\disp \Sigma_\bbeta^{-1}+ \dfrac{1}{\sigma_\epsilon^2}\sum_{i=1}^{n_k} z_{ki}z_{ki}^\top \right)^{-1}$.

\section{Graphical illustrations of the difference between GAMM and factor copula-based approach}\label{difference}

On the one hand, we can consider a logistic model with one covariate, where the link prediction of the GAMM model satisfies $\disp
\log\left\{\frac{P(Y=1|X=x,\eta)}{P(Y=0|X=x,\eta)}\right\} = s(x)+\eta $, where $s(x)$ is a spline function.
 In this case, for different values of $\eta$, the link predictions are random translations of each other.
On the other hand, the link prediction of the copula-based model satisfies
 $\disp
 \log\left\{\frac{P(Y=1|X=x,V=v)}{P(Y=0|X=x,V=v)}\right\} =
 \log\left[\frac{1-\cC\{1-p(x),v\}}{\cC\{1-p(x),v\}}\right]$,
with $p(x) = P(Y=1|X=x)$, $\log\left\{\frac{p(x)}{1-p(x)}\right\} = s(x)$, and
$\cC(u,v) = \partial_v C(u,v)$, where $C$ is the associated copula. In Figure \ref{fig:pred}, the link predictions for $V\in \{0.1,0.5.0.9\}$, for a normal copula with parameter $\rho=0.9$ and a non-central Gaussian copula \citep{Nasri:2020} with parameters $\rho=0.9$, $a_1=0.1$, $a_2=2.9$ are displayed for the linear case $s_1(x) = -1+0.5x$ and the non-linear case $s_2(x) =  \min(x,1)+2\max(0,x-1)$. For $s_1$, we can see that for each copula, the three curves do not have constant slopes and they are not parallel. For $s_2$, we can see that the three curves are not parallel. Additionally, changing the copula family (Gaussian copula vs non-central squared copula) modifies the shape of the curves.

\begin{figure}[ht!]
    \centering
   a) \includegraphics[scale=0.45]{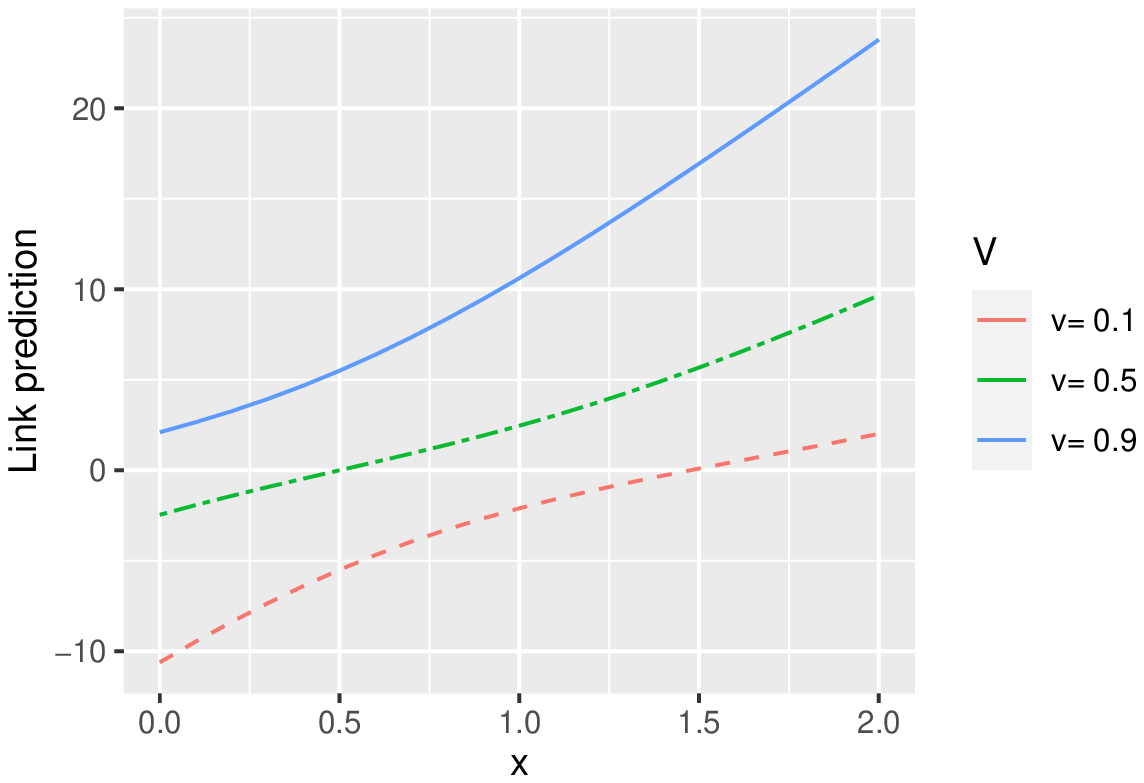}
   b) \includegraphics[scale=0.45]{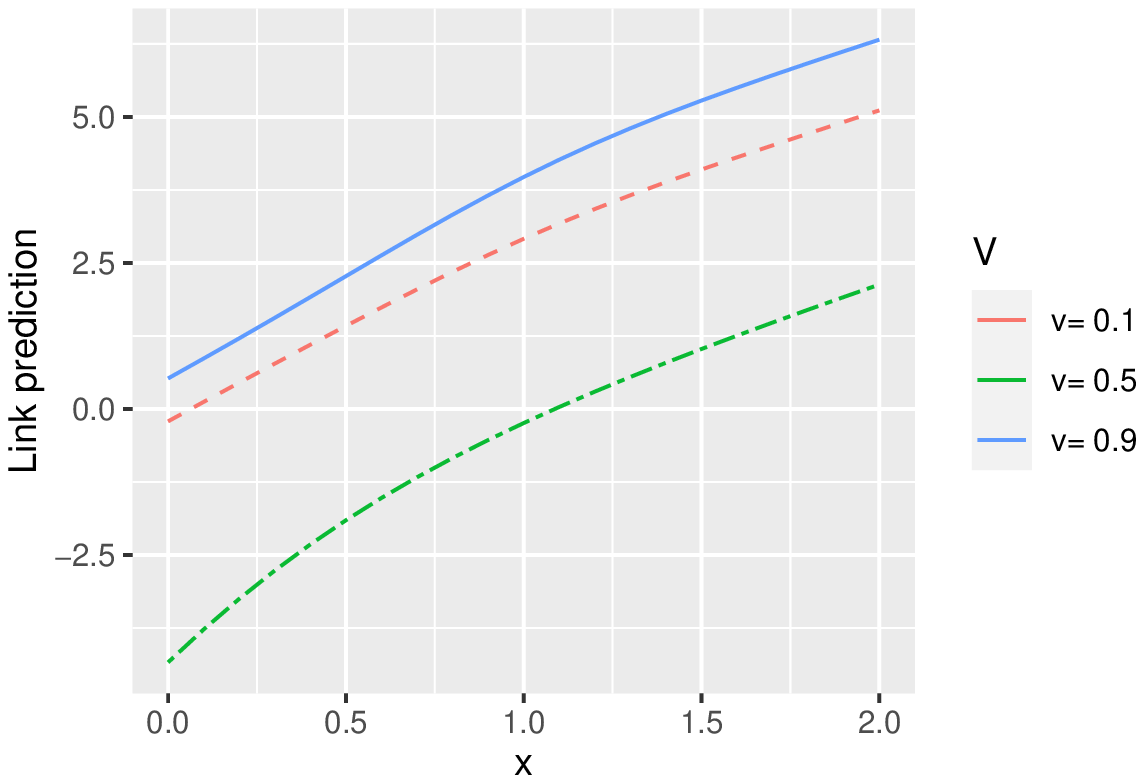}

   c) \includegraphics[scale=0.45]{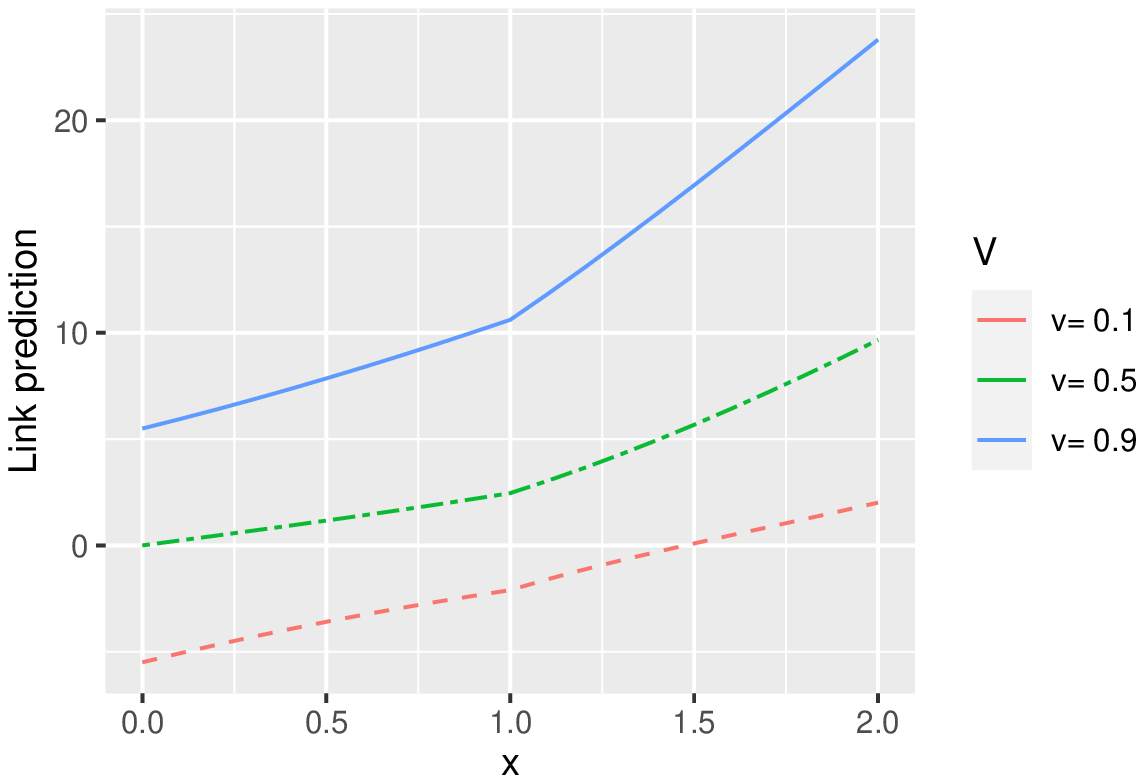}
   d) \includegraphics[scale=0.45]{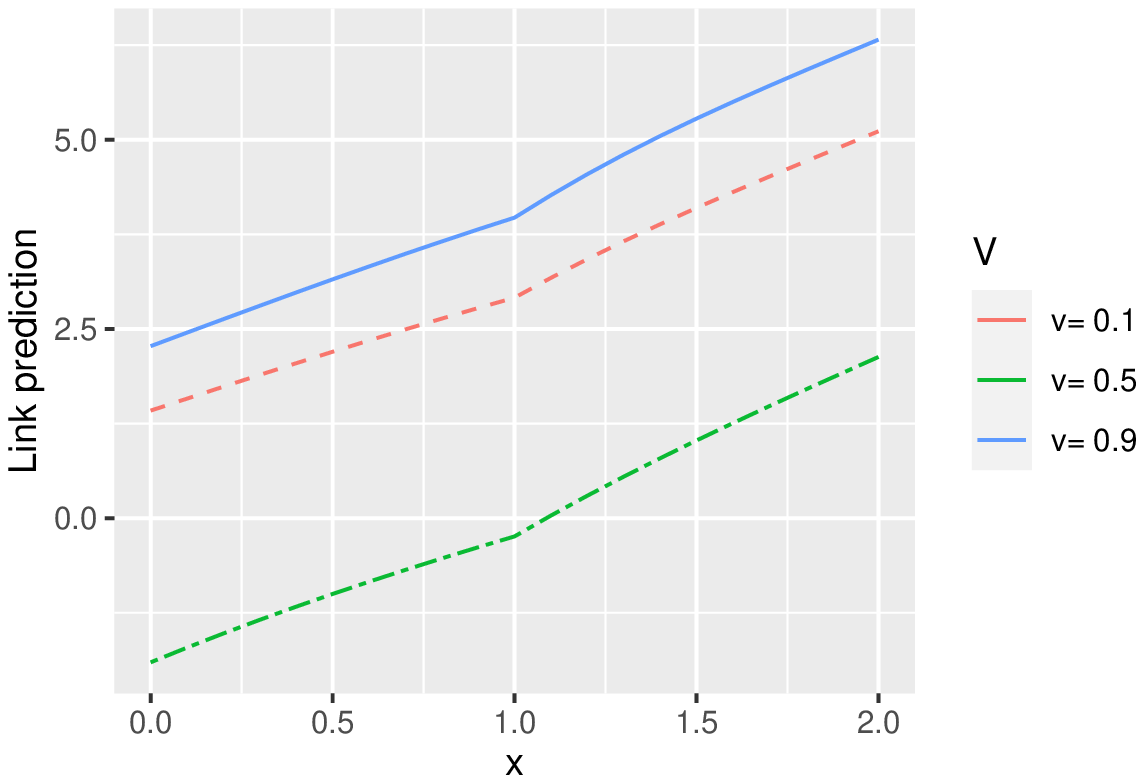}
     \caption{Graphs of the link prediction for $s_1(x) = \beta_0 +\beta_1 x = -1+0.5x$ (panels a,b) and $s_2(x) = \min(x,1)+2\max(0,x-1) $(panels c, d), for $v=0.1$ (dashed red line), $v=0.5$ (two-dashed green line) and $v=0.9$ (solid blue line). The left panels (a,c) are computed with the Gaussian copula, while the right panels (b,d) are computed with the non-central Gaussian copula  with parameters $\rho=0.9$, $a_1=0.1$, $a_2=2.9$. }
    \label{fig:pred}
\end{figure}

% Even if the previous models differ, one can see that GAMMs are particular cases of our models. In what follows, we only consider Gaussian latent variables, but similar calculations can be done with other continuous latent variables. For the Logistic GAMM  with $P(Y=1|\bX=\bx,V=v) =
% \left(1+e^{-\balpha^\top\bs(\bx)-\sigma\Phi^{-1}(v)}\right)^{-1}$, where one has
% $
% P(Y=1|\bX=\bx)= H\{\balpha^\top \bs(\bx),\sigma\}$, where
% $$
% H(a,\sigma) =\int_{-\infty}^\infty \frac{e^{-\frac{z^2}{2\sigma^2}}}{\sqrt{2\pi}\sigma}\frac{1}{1+e^{-a-z}}dz,
% $$
% so $1-p = H(-a,\sigma)$ if $p=H(a,\sigma)$. As a result, $\disp \cC\{H(-a,\sigma),v\} = \frac{e^{-a-\sigma\Phi^{-1}(v)} }{1+e^{-a-\sigma\Phi^{-1}(v)} }$, so
% $\disp
% \cC(u,v) = \left\{1+e^{-H^{-1}(u,\sigma)+\sigma\Phi^{-1}(v)}\right\}^{-1}$.  In this case, one sees that the underlying copula is not the Gaussian copula.

\section{Central limit theorem for a copula-based mixed model}\label{app:clt}

\subsection{Auxiliary results}\label{app:aux}
\begin{thm}[Lindeberg's condition]\label{thm:CLT-Lindeberg}
Suppose that $\bxi_1,\ldots, \bxi_K$ are independent, and as $K\to\infty$, $N_K\to\infty$,
\begin{equation}
    \label{eq:cov-lim}
   \frac{1}N_K \sum_{k=1}^K \var(\bxi_k)\to \Sigma,
\end{equation}
and
\begin{equation}\label{eq:Lindeberg}
\frac{1}N_K \sum_{k=1}^K E\left\{\|\bxi_{k}\|^2 \I\left(\left\|\bxi_k\right\|> \epsilon\sqrt{N_K}\right)\right\} \to 0,\quad \text{  for any $\epsilon>0$ }.
\end{equation}
Then $\frac{1}{N_K^{1/2}}\sum_{k=1}^K \bxi_k $ converges in distribution to a centered Gaussian random vector with covariance matrix $\Sigma$.
\end{thm}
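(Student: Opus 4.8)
The plan is to deduce Theorem~\ref{thm:CLT-Lindeberg} from the classical one--dimensional Lindeberg--Feller central limit theorem for triangular arrays, via the Cram\'er--Wold device. Since the $\bxi_k$ are centered with finite second moments (so that $\var(\bxi_k)$ is well defined and the limit is a centered Gaussian), write $S_K = N_K^{-1/2}\sum_{k=1}^K \bxi_k$; it suffices to show that $\ba^\top S_K$ converges in distribution to $N\!\left(0,\ba^\top\Sigma\ba\right)$ for every fixed vector $\ba$ of the same dimension as the $\bxi_k$. Fix such an $\ba$. If $\ba=0$ there is nothing to prove, so assume $\ba\neq 0$ and consider, for each $K$, the scalar triangular array $X_{K,k} = N_K^{-1/2}\,\ba^\top\bxi_k$, $1\le k\le K$. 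For every $K$ the variables $X_{K,1},\ldots,X_{K,K}$ are independent and centered, and $\ba^\top S_K = \sum_{k=1}^K X_{K,k}$.

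I would then verify the two hypotheses of the scalar Lindeberg--Feller theorem for this array. For the variances,
\[
\sum_{k=1}^K \var(X_{K,k}) \;=\; \ba^\top\!\left(\frac{1}{N_K}\sum_{k=1}^K \var(\bxi_k)\right)\!\ba \;\longrightarrow\; \ba^\top\Sigma\,\ba ,
\]
by \eqref{eq:cov-lim} together with the continuity of the quadratic form. For the Lindeberg condition, the Cauchy--Schwarz inequality yields $|\ba^\top\bxi_k|\le\|\ba\|\,\|\bxi_k\|$, whence $\{|X_{K,k}|>\epsilon\}\subseteq\{\|\bxi_k\|>(\epsilon/\|\ba\|)\sqrt{N_K}\}$, so that for every $\epsilon>0$
\[
\sum_{k=1}^K E\!\left[X_{K,k}^2\,\I\!\left(|X_{K,k}|>\epsilon\right)\right]
\;\le\; \frac{\|\ba\|^2}{N_K}\sum_{k=1}^K E\!\left[\|\bxi_k\|^2\,\I\!\left(\|\bxi_k\|>\tfrac{\epsilon}{\|\ba\|}\sqrt{N_K}\right)\right]\;\longrightarrow\;0 ,
\]
the convergence being \eqref{eq:Lindeberg} with $\epsilon$ replaced by $\epsilon/\|\ba\|>0$ (recall $N_K\to\infty$).

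The classical Lindeberg--Feller theorem then gives that $\ba^\top S_K=\sum_{k=1}^K X_{K,k}$ converges in distribution to $N\!\left(0,\ba^\top\Sigma\ba\right)$; when $\ba^\top\Sigma\ba=0$ this limit is degenerate, which also follows directly since $\var(\ba^\top S_K)\to0$ forces $\ba^\top S_K\to0$ in $L^2$ and hence in distribution. As $\ba$ was arbitrary, the Cram\'er--Wold device gives that $S_K$ converges in distribution to a centered Gaussian vector with covariance matrix $\Sigma$, which is the assertion. I do not expect a genuine obstacle here: the result is essentially the multivariate Lindeberg--Feller CLT in disguise, and the only points requiring some care are the passage of the truncation level through the Cauchy--Schwarz bound (the factor $\|\ba\|$ appearing inside the indicator) and the handling of directions $\ba$ in the kernel of $\Sigma$ when $\Sigma$ is singular.
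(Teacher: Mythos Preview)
Your argument is correct and is the standard reduction of the multivariate Lindeberg--Feller CLT to the scalar case via the Cram\'er--Wold device; the handling of the truncation level through Cauchy--Schwarz and of degenerate directions is fine. Note, however, that the paper does not actually supply a proof of this statement: Theorem~\ref{thm:CLT-Lindeberg} is listed as an auxiliary result (the classical Lindeberg CLT for triangular arrays) and is used without proof in the derivation of Theorem~\ref{thm:main}, so there is nothing in the paper to compare your proposal against beyond observing that yours is precisely the textbook justification.
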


\begin{cor}\label{rem:lindeberg} If $\bxi_k = \sum_{i=1}^{n_k}\bxi_{ki}$, where $\{\bxi_{ki}: \; 1\le i\le n_k\}$ are exchangeable and square integrable, and $\lambda_K = \frac{1}{N_K}\sum_{k=1}^K n_k^2\to \lambda \in [1,\infty)$, then \eqref{eq:cov-lim} holds with $\Sigma = \Sigma_{11}+(\lambda-1)\Sigma_{12}$, and either of the following assumptions implies \eqref{eq:Lindeberg}:
\begin{eqnarray}\label{eq:condLind0}
 \sum_{k=1}^K \frac{n_k^2}{N_K}E\left\{\|\xi_{k1}\|^2 \I\left(\left\|\bxi_k\right\|> \epsilon\sqrt{N_K}\right)\right\}
 &\to& 0,\\
\label{eq:condLind1}
\max_{1\le k \le K}E\left\{\|\bxi_{k1}\|^2 \I\left(\left\|\bxi_k\right\|> \epsilon\sqrt{N_K}\right)\right\} &\to& 0.
\end{eqnarray}
If, in addition, $n_k\equiv n$, then \eqref{eq:Lindeberg} holds since $\bxi_1$ is square integrable and $\lambda=n$. In particular, $\frac{1}{N_K^{1/2}}\sum_{k=1}^K \bxi_k $ converges in distribution to a centered Gaussian random vector with covariance matrix $\Sigma$.
\end{cor}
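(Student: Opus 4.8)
The strategy is to apply Theorem~\ref{thm:CLT-Lindeberg} directly: since the $\bxi_k$ are independent, I only need to verify \eqref{eq:cov-lim} with the stated limit $\Sigma$ and, under either extra hypothesis, the Lindeberg condition \eqref{eq:Lindeberg}. The first step is a plain covariance computation. Writing $\bxi_k=\sum_{i=1}^{n_k}\bxi_{ki}$ and expanding the variance, exchangeability within the cluster makes all $n_k$ diagonal terms equal to $\Sigma_{11}=\var(\bxi_{k1})$ and all $n_k(n_k-1)$ cross terms equal to $\Sigma_{12}=\cov(\bxi_{k1},\bxi_{k2})$, so $\var(\bxi_k)=n_k\Sigma_{11}+n_k(n_k-1)\Sigma_{12}$. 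Summing over $k$, dividing by $N_K$, and using $\sum_{k=1}^K n_k=N_K$ together with $\sum_{k=1}^K n_k^2=N_K\lambda_K$ gives $\frac1{N_K}\sum_{k=1}^K\var(\bxi_k)=\Sigma_{11}+(\lambda_K-1)\Sigma_{12}$, which tends to $\Sigma_{11}+(\lambda-1)\Sigma_{12}$ since $\lambda_K\to\lambda$; this is \eqref{eq:cov-lim}.

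For the Lindeberg condition the plan is to bound the truncated second moment of a cluster sum by that of a single summand. By Cauchy--Schwarz, $\|\bxi_k\|^2\le n_k\sum_{i=1}^{n_k}\|\bxi_{ki}\|^2$, while the event $\{\|\bxi_k\|>\epsilon\sqrt{N_K}\}$ is a symmetric function of $(\bxi_{k1},\ldots,\bxi_{kn_k})$; hence exchangeability gives $E\{\|\bxi_{ki}\|^2\I(\|\bxi_k\|>\epsilon\sqrt{N_K})\}=E\{\|\bxi_{k1}\|^2\I(\|\bxi_k\|>\epsilon\sqrt{N_K})\}$ for every $i$, and combining the two facts yields $E\{\|\bxi_k\|^2\I(\|\bxi_k\|>\epsilon\sqrt{N_K})\}\le n_k^2\,E\{\|\bxi_{k1}\|^2\I(\|\bxi_k\|>\epsilon\sqrt{N_K})\}$. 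Dividing by $N_K$ and summing over $k$ shows that \eqref{eq:condLind0} implies \eqref{eq:Lindeberg}; and \eqref{eq:condLind1} implies \eqref{eq:condLind0} because the left side of \eqref{eq:condLind0} is at most $\lambda_K\max_{1\le k\le K}E\{\|\bxi_{k1}\|^2\I(\|\bxi_k\|>\epsilon\sqrt{N_K})\}$, with $\lambda_K$ bounded and the maximum tending to $0$.

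When $n_k\equiv n$ the $\bxi_k$ are i.i.d., $N_K=nK$ and $\lambda_K\equiv n=\lambda$, so the Lindeberg sum collapses to $\frac1n E\{\|\bxi_1\|^2\I(\|\bxi_1\|>\epsilon\sqrt{nK})\}$; since $\|\bxi_1\|^2$ is integrable and $\I(\|\bxi_1\|>\epsilon\sqrt{nK})\to 0$ almost surely as $K\to\infty$, dominated convergence drives this to $0$, so \eqref{eq:Lindeberg} holds without invoking \eqref{eq:condLind0} or \eqref{eq:condLind1}. In each case the conclusion then follows from Theorem~\ref{thm:CLT-Lindeberg}.

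I do not expect a genuine obstacle here; the only points needing care are the permutation-invariance argument that symmetrizes the truncated expectations (and collapses the mixed cross terms in the variance), and the minor bookkeeping that $\var(\bxi_{k1})$ and $\cov(\bxi_{k1},\bxi_{k2})$ are the same for all clusters, which is inherited from the consistency assumption on the cluster marginals.
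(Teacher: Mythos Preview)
The paper does not supply a separate proof of this corollary; it is stated as an auxiliary result and used implicitly in the proof of Theorem~\ref{thm:main}. Your argument is correct and is exactly the computation the authors have in mind: the exchangeability-based variance expansion $\var(\bxi_k)=n_k\Sigma_{11}+n_k(n_k-1)\Sigma_{12}$ is the same one appearing in the proof of Theorem~\ref{thm:main} (where $E(A_K)=\Sigma_{11}+(\lambda_K-1)\Sigma_{12}$ is written out), and your bound $E\{\|\bxi_k\|^2\I(\|\bxi_k\|>\epsilon\sqrt{N_K})\}\le n_k^2\,E\{\|\bxi_{k1}\|^2\I(\|\bxi_k\|>\epsilon\sqrt{N_K})\}$ via the Cauchy--Schwarz inequality plus the symmetry of the truncation event is the intended reduction behind \eqref{eq:condLind0}--\eqref{eq:condLind1}. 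The i.i.d.\ case $n_k\equiv n$ is handled exactly as you do, by dominated convergence applied to a single square-integrable $\bxi_1$.
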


\begin{rem}\label{rem:lp}
 Statistics such as $\sum_{k=1}^K \bxi_k = \sum_{k=1}^K \sum_{i=1}^{n_k}\bxi_{ki}$ are referred to as multisample statistics in \cite{Vaillancourt:1995}. It is worth noting that \eqref{eq:Lindeberg} holds if for some $p > 2$, $\|\bxi_{ki}\|^p$ is integrable and $\frac{1}{N_K^{p/2}}\sum_{k=1}^K n_k^{p}\to 0$. This condition is only useful when the $n_k$'s are not equal.
\end{rem}

% \subsection{Additional assumptions}\label{app:cond}

% \begin{hyp}\label{hyp:mu}
% Set $\disp \bbeta_{\btheta,ki}= \frac{\int_0^1 \left\{\prod_{j\neq i} f_{\btheta,kj}(Y_{kj},v)\right\}\dot f_{\btheta,ki}(Y_{ki},v)\; dv}{f_{\btheta,k}(\bY_k)}$. In some bounded neighborhood $\cN$ of $\btheta_0$, $\disp \bmu(\btheta) = E_{\btheta_0} \left[ \bbeta_{\btheta,ki}\right] =0$ iff $\btheta=\btheta_0$, its Jacobian $\dot \bmu(\btheta)$ exists, is continuous, and is positive definite or negative definite at $\btheta_0$.
% Additionally,
% %\begin{equation}\label{eq:l2bound}
% $\disp \limsup_{K\to\infty} \max_{1\le k\le K} \max_{1\le i\le n_k}  E_{\btheta_0}\left(\sup_{\btheta\in \cN}\|\bbeta_{\btheta,ki}\|^2\right) <\infty$,
%  for some neighborhood $\cN$ of $\btheta_0$.
% \end{hyp}

% The following condition might also be needed.
% \begin{hyp}\label{hyp:l4bound}
% $\disp
% \limsup_{K\to\infty} \max_{1\le k\le K} \max_{1\le i\le n_k} E_{\btheta_0}\left(
% \sup_{\btheta\in \cN}\|\bbeta_{\btheta,ki}\|^4\right) <\infty$,  for some neighborhood $\cN$ of $\btheta_0$.
% \end{hyp}

\subsection{Proof of Theorem \ref{thm:main}}\label{app:pf-main}

We recall that the log-likelihood $L_K(\btheta)$ can be written as
$\disp
L(\btheta) = \sum_{k=1}^K \log f_{\btheta,k}(\by_k)$, where
$\disp f_{\btheta,k}(\by_k) = \int_0^1 \left\{\prod_{j=1}^{n_k} f_{\btheta,kj}(y_{kj},v)\right\}dv$,
and for $v\in (0,1)$,
$f_{\btheta,kj}(y_{kj},v)$ is a density with respect to the reference measure $\nu$. As a result, because of Assumption \ref{hyp:mu}, we can differentiate under the integral sign, implying
$\dot L(\btheta_0) = \disp \left. \nabla_{\btheta}L(\btheta)\right|_{\btheta=\btheta_0} = \sum_{k=1}^K\sum_{i=1}^{n_k} \bbeta_{ki}$,
where
$$
\bbeta_{ki} = \frac{1}{f_{\btheta_0,k}(\by_k)}\int \left\{\prod_{j\neq i} f_{\btheta_0,kj}(y_{kj},v)\right\}\left. \nabla_\btheta f_{\btheta,ki}(y_{ki},v)\right|_{\btheta=\btheta_0}dv.
 $$
Now, for any $i\in\{1,\ldots, n_k\}$, $\disp \int_{\dR} \dot f_{\btheta_0, ki}(y,v)\nu(dy) = \left. \nabla_\btheta \left\{ \int_{\dR} \dot f_{\btheta, ki}(y,v)\nu(dy)\right\} \right|_{\btheta=\btheta_0} =  \left. \nabla_\btheta \;  \{1\}  \right|_{\btheta=\btheta_0}=0$. Next, for bounded measurable functions
 $\Psi_j$, $j\neq i$, we obtain
\begin{multline*}
  E\left[\bbeta_{ki}\prod_{j\neq i}\Psi_j(Y_{kj}) \right] \\
  =
  \int \int_0^1 \left[ \prod_{j\neq i} \left\{ f_{\btheta_0,kj}(y_j,v)\Psi_j (y_j)\right\}\right]
 \dot  f_{\btheta_0,ki}(y_i,v) dv \; \nu(dy_1) \cdots \nu(dy_{n_k})=0.
\end{multline*}
 As a result, $E\left(\left. \bbeta_{ki}\right|Y_{kl}, \; l\neq i\right) = 0$. In particular, $E(\bbeta_{ki})=0$. We note that this does not imply that $E\left(\bbeta_{ki}\bbeta_{kj}^\top\right) = 0$, $i\neq j$, since $\bbeta_{kj}$ is not measurable with respect to $\{Y_{kl}: l\neq i\}$! However, $\bbeta_{k,1},\ldots, \bbeta_{k,n_k}$ are exchangeable with a mean of $0$.  We further set $\Sigma_{11} = E\left(\bbeta_{k1}\bbeta_{k1}^\top \right)$,
$\Sigma_{12} = E\left(\bbeta_{k1}\bbeta_{k2}^\top \right)$.
We then set,
$$
\ddot L(\btheta_0) = \sum_{k=1}^K \frac{\ddot f_{\btheta_0,k}(\by_k)}{f_{\btheta_0,k}(\by_k)} -
\sum_{k=1}^K \frac{\dot f_{\btheta_0,k}(\by_k)\dot f_{\btheta_0,k}(\by_k)^\top }{f_{\btheta_0,k}^2(\by_k)},
$$
where
\begin{eqnarray*}
\ddot f_{\btheta_0,k}(\by_k) &=& \sum_{i=1}^{n_k}\int_0^1 \left\{\prod_{j\neq i} f_{\btheta_0,kj}(y_{kj},v)\right\}\ddot f_{\btheta_0,ki}(y_{ki},v) dv \\
&& \qquad +
\sum_{i\neq j}\int_0^1 \left\{\prod_{l\neq i,j} f_{\btheta_0,kl}(y_{kl},v)\right\}\dot f_{\btheta_0,ki}(y_{ki},v) \dot f_{\btheta_0,kj}(y_{kj},v)^\top dv\\
&=& \sum_{i=1}^{n_k}\bzeta_{ki}+\sum_{i\neq j}\bzeta_{kij}.
\end{eqnarray*}
As seen previously, we obtain that $E(\bzeta_{ki}|Y_{kl},l\neq i)=0$ and  $E(\bzeta_{kij}|Y_{kl},l\neq i)=0$
Set $A_K = \frac{1}{N_K}\sum_{k-1}^K \bbeta_k\bbeta_k^\top$. Then, as $K\to \infty$,  $E(A_K)= \Sigma_{11}+(\lambda_K-1)\Sigma_{12}\to \Sigma $. If $n_k\equiv n$, then $A_K\to \Sigma$ by the strong law of large numbers, assuming only that $\bbeta_{ki}$ is square integrable. Otherwise,  $A_K\stackrel{Pr}{\longrightarrow}\Sigma$ if $\var(A_K)=\frac{1}{N_K^2}\sum_{k=1}^K \var\left(\bbeta_k\bbeta_k^\top\right)\to 0$ which is equivalent to
\begin{multline*}
  \frac{1}{N_K^2}\sum_{k=1}^K E\left(\bbeta_k\bbeta_k^\top \bbeta_k\bbeta_k^\top \right) -
\frac{1}{N_K^2}\sum_{k=1}^K\left\{ E\left(\bbeta_k\bbeta_k^\top  \right) \right\}^2  \\
=
\frac{1}{N_K^2}\sum_{k=1}^K E\left(\bbeta_k\bbeta_k^\top \bbeta_k\bbeta_k^\top \right) - \frac{1}{N_K^2}\sum_{k=1}^{n_k}\left\{n_k\Sigma_{11}+n_k(n_k-1)\Sigma_{12}\right\}^2\to 0,
\end{multline*}
which is sufficient to demonstrate that
$\disp
\frac{1}{N_K^2}\sum_{k=1}^K E\left(\bbeta_k\bbeta_k^\top \bbeta_k\bbeta_k^\top \right) \to 0$. Since $E\left(\|\bbeta_{ki}\|^4\right)=V<\infty$, we obtain
\begin{multline*}
\left\|\frac{1}{N_K^2}\sum_{k=1}^K E\left(\bbeta_k\bbeta_k^\top \bbeta_k\bbeta_k^\top \right) \right\|=
\frac{1}{N_K^2}\left\|\sum_{k=1}^K \sum_{i=1}^{n_k}\sum_{j=1}^{n_k}\sum_{l=1}^{n_k}\sum_{m=1}^{n_k}
E\left(\bbeta_{ki}\bbeta_{kj}^\top \bbeta_{kl}\bbeta_{km}^\top\right)\right\| \\
\le
\frac{V}{N_K^2}\sum_{k=1}^{n_k} n_k^4\to 0.
\end{multline*}
This leads to Remark \ref{rem:lp} with $p=4$ that the conditions of Theorem \ref{thm:CLT-Lindeberg} are met. In addition,
$\dT_K(\btheta) = N_K^{-1/2}\{\dot L_K(\btheta)-\bmu(\btheta)\}$ converges in $C(\bar \cN)$ to $\dT$, where $\dT$ is a continuous centered Gaussian process. This further stems from the fact that the finite dimensional distributions of $\dT_{N_K}$ converge to those of $\dT$, and from moment conditions in Assumptions in \ref{hyp:mu} and \ref{hyp:l4bound}, $E\left[\left\{T_{N_k}(\btheta)-
T_{N_k}(\btheta')\right\}^2\right] \le C_1 \|\btheta-\btheta'\|^2$, for some constant $C_1$, for all $\btheta,\btheta'\in\bar\cN$. As a result, the sequence $\dT_{N_K}$ is narrow in $C(\bar \cN)$ and thus $\dT_K(\btheta) $ converges in $C(\bar \cN)$ to $\dT$. Next,
By Skorokhod's theorem, see, e.g.,  \citet{Ethier/Kurtz:1986}, we can consider a new probability space so that the stochastic processes $(\tilde \dT_{K},\tilde \dT)$ have the same law as $(\dT_{K},\dT)$ and $\disp \sup_{\btheta\in\cN}
\left|\tilde \dT_{N_K}(\btheta)-\tilde \dT(\btheta)\right|\to 0$, as $K\to\infty$. As a result, since $\bar\cN$ can be chosen compact, there is at least one converging subsequence. If $\btheta_{K_l}$ is any converging subsequence such that $\tilde L_{K_l}(\btheta_{K_l})=0$, and $\btheta_{K_l}\to \btheta^\star$, then $\bmu\left(\btheta^\star\right)=0$ , since $\tilde \dT_{K_l}(\btheta_{K_l})= -N_{K_l}^{1/2}\bmu(\btheta_{K_l})\to \tilde \dT(\btheta^\star)$. By hypothesis, one must have $\btheta^\star=\btheta_0$. It then follows that $\tilde \btheta_K \to \btheta_0$. Finally,
$0 = N_K^{-1/2} \tilde {\dot L}_K(\tilde \btheta_K) = \tilde \dT_K(\tilde \btheta_K) +  N_K^{-1/2}\bmu(\tilde \btheta_K) =  \tilde \dT_K(\btheta_0)
+  \dot \bmu(\check\btheta_K)\tilde \bTheta_k $, for some $\check \btheta_K$ converging to $\btheta_0$. Since
$\tilde \dT_K(\tilde\btheta_K) \to \tilde \dT(\btheta_0)\sim N(0,\Sigma)$ and $\dot \bmu(\tilde\btheta_K)\to \dot\bmu(\btheta_0)=-\Sigma$, it follows that $\tilde \bTheta_K\to \tilde \bTheta$, so $\bTheta_K$ converges in distribution  to $\Sigma^{-1}\dT(\btheta_0)\sim N\left(0,\Sigma^{-1}\right)$. In the case $n_k\equiv n$, note that the $\bbeta_{\btheta,k}$ are iid.
\qed

\section{Comparison of predictions}\label{app:compvaccine}

\begin{figure}
    \centering

   \includegraphics[scale=0.055]{predSouthAfrica.jpg}
   \includegraphics[scale=0.055]{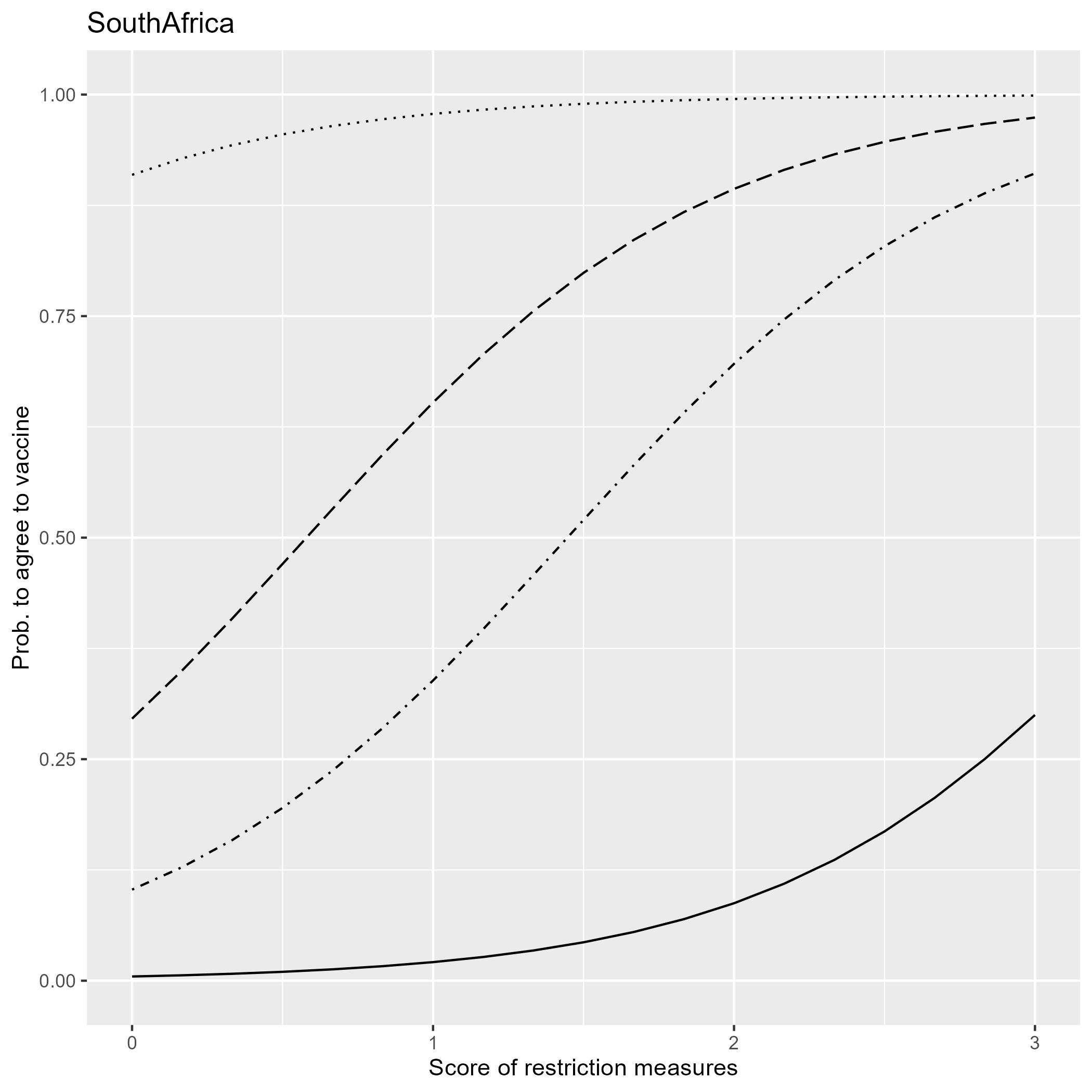}
   \includegraphics[scale=0.055]{predFrance.jpg}
   \includegraphics[scale=0.055]{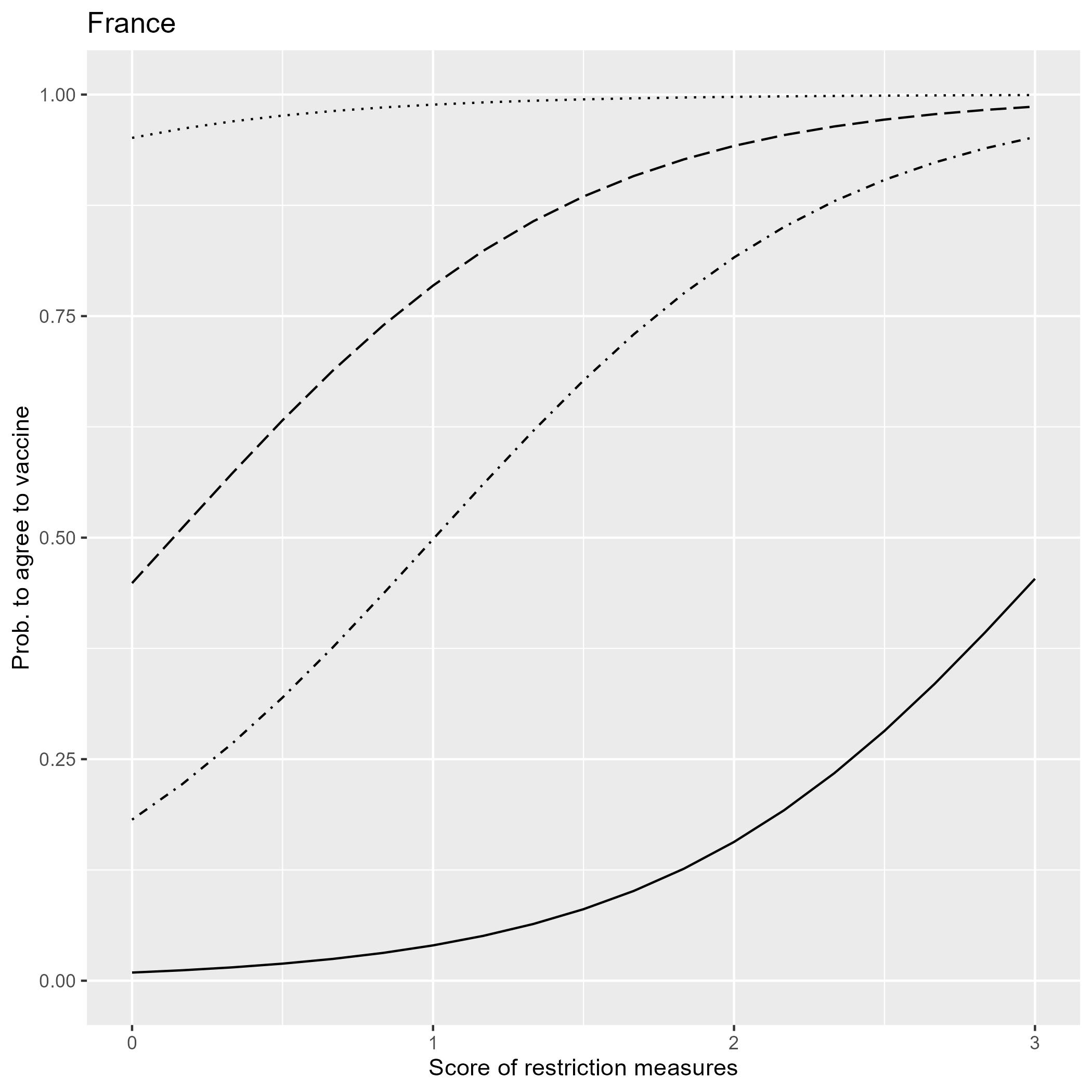}
   \includegraphics[scale=0.055]{predIndia.jpg}
   \includegraphics[scale=0.055]{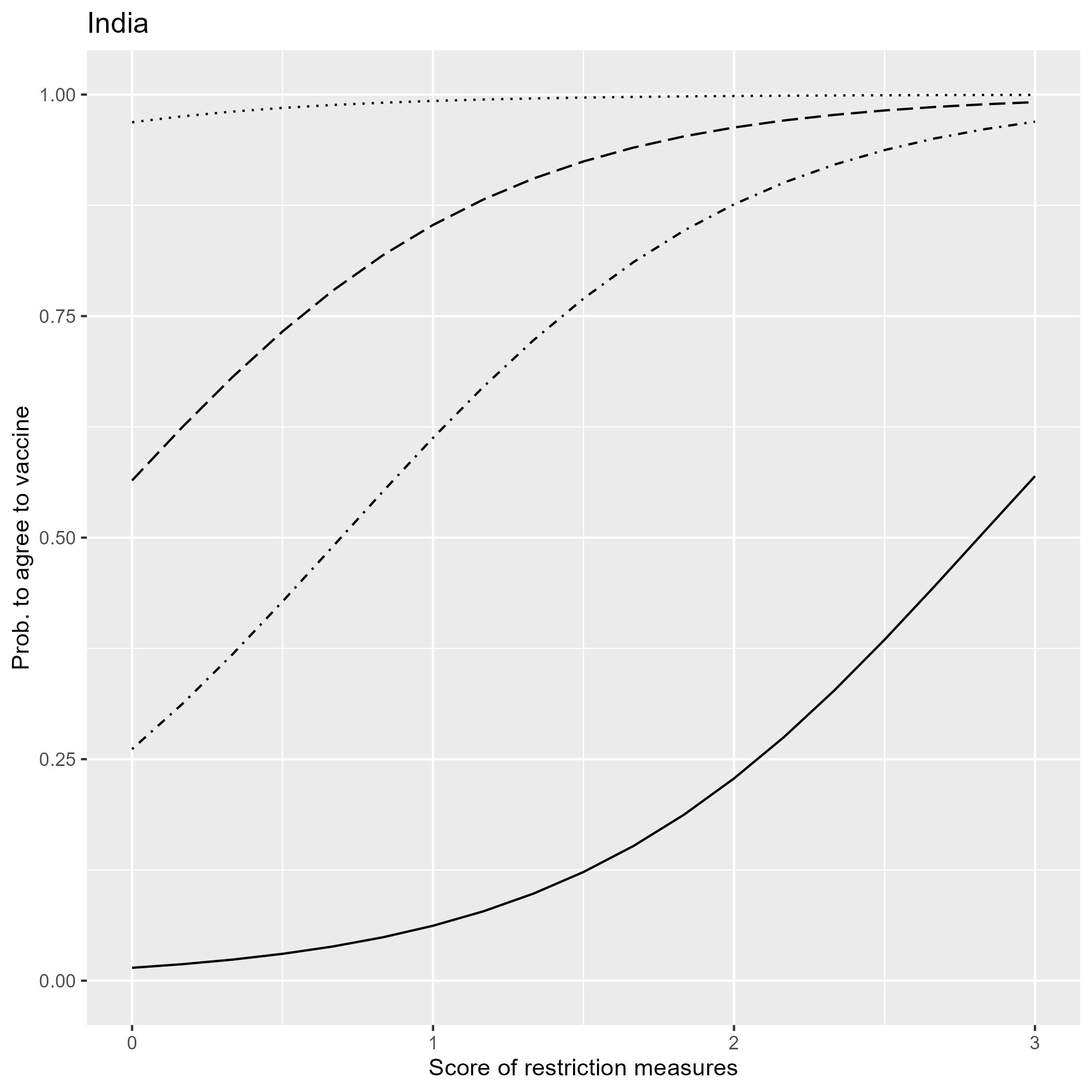}
   \includegraphics[scale=0.055]{predSpain.jpg}
   \includegraphics[scale=0.055]{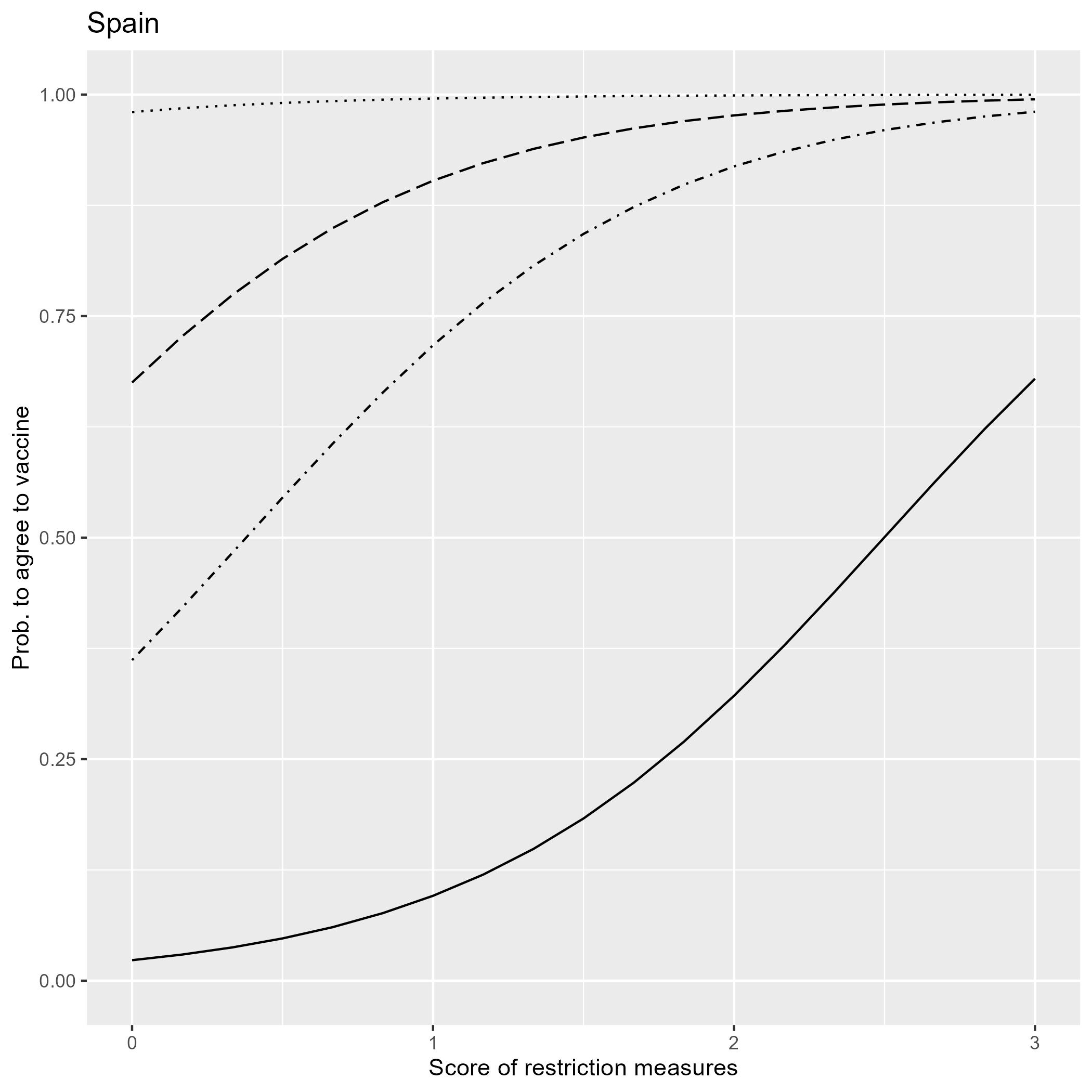}
 \caption{Estimated probabilities to agree to be vaccinated by scores of restriction measures ($X_1$) by values of trust ($X_2$) and perception of risk ($X_3$) for South Africa  ($\hat V = 0.9506$), France ($\hat V = 0.5674$), India ($\hat V = 0.2428$), and  Spain ($\hat V =0.0481$). The line types are ``solid'' for $(X_2,X_3)=(0,0)$, ``dotdash'' for $(X_2,X_3)=(3,0)$, ``longdash'' for $(X_2,X_3)=(0,3)$, and ``dotted'' for $(X_2,X_3)=(3,3)$. The graphs on the left panel are computed with the copula-based model, while the graphs on the right panel are computed with the Logistic model with random effect.}
    \label{fig:compvaccine}
\end{figure}
\newpage
\bibliographystyle{apalike}

% \BB{
% \begin{rem}[Bayesian models vs factor copula models]\label{rem:Bayes}
% First, suppose that $C_\btheta$ is a $d$-dimension copula. Suppose $\btheta$ is a $p$-dimensional random parameter with density $f$ (with respect to Lebesgue's measure). Then, if $\cR^{-1}$ is the inverse of the Rosenblatt's transform of the joint cdf $F$ of $\bTheta$, it follows that
% $$
% H(\bu_1,\ldots,\bu_n) = \int \left\{\prod_{i=1}^n C_\btheta(\bu_i) \right\}f(\btheta) d\btheta= \int_{(0,1)^p} \left\{\prod_{i=1}^n C_{\cR^{-1}(\bv)}(\bu_i) \right\} d\bv.
% $$

% Next, set $\disp \tilde C(\bu,\bv)=\int_{(0,\bv]} C_{\cR^{-1}(\bs)}(\bu_i) d\bs$. Then $\tilde C$ is the copula of $(\bU,\bV)$ on $(0,1)^{p+d}$, $\bV$ is uniformly distributed over $(0,1)^p$, and $P(\bU\le \bu|\bV=\bv) = C_{\cR^{-1}(\bv)}(\bu)$.  As a result, a Bayesian model for $\bU$ can be seen as a special case of a $p$-factor copula model.
% \end{rem}

% }

\end{document}